\newcommand{\fabian}[1]{{\noindent\color{purple}[[Fabian: #1]]}}
\definecolor{darkgreen}{rgb}{0,0.5,0}
\definecolor{darkblue}{rgb}{0,0,0.8}
\newtheorem{theorem}{Theorem}[section]
\newtheorem{proposition}[theorem]{Proposition}
\newtheorem{lemma}[theorem]{Lemma}
\newtheorem{corollary}[theorem]{Corollary}
\newtheorem{definition}[theorem]{Definition}
\renewcommand{\vec}[1]{\ensuremath{\boldsymbol{#1}}}
\newcommand{\q}{\text{?}}
\newcommand{\calP}{\ensuremath{\mathcal{P}}}
\newcommand{\ignore}[1]{}
\algnewcommand\algorithmicswitch{\textbf{switch}}
\algnewcommand\algorithmiccase{\textbf{case}}
\newcommand{\LOCAL}{\ensuremath{\mathsf{LOCAL}}\xspace}
\newcommand{\SLOCAL}{\ensuremath{\mathsf{SLOCAL}}\xspace}
\newcommand{\ZLOCAL}{\ensuremath{\mathsf{ZLOCAL}}\xspace}
\newcommand{\eps}{\varepsilon}
\newcommand{\Prob}[1]{\mathbb{P}\left(#1\right)}
\renewcommand{\Pr}{\mathbb{P}}
\newcommand{\set}[1]{\left\{#1\right\}}
\DeclareMathOperator{\E}{\mathbb{E}}
\DeclareMathOperator{\polylog}{polylog}
\DeclareMathOperator{\poly}{poly}
\newcommand{\complexityclass}[2][]{\ensuremath{\mathsf{#2}\ifthenelse{\isempty{#1}}{}{(#1)}}}
\newcommand{\loc}[1][]{\complexityclass[#1]{LOCAL}}
\newcommand{\zloc}[1][]{\complexityclass[#1]{ZLOCAL}}
\newcommand{\seqloc}[1][]{\complexityclass[#1]{SLOCAL}}
\newcommand{\randloc}[1][]{\complexityclass[#1]{RLOCAL}}
\newcommand{\randseqloc}[1][]{\complexityclass[#1]{RSLOCAL}}
\newcommand{\polyloc}{\complexityclass{P}\text{-}\complexityclass{LOCAL}}
\newcommand{\polyzloc}{\complexityclass{P}\text{-}\complexityclass{ZLOCAL}}
\newcommand{\polyseqloc}{\complexityclass{P}\text{-}\complexityclass{SLOCAL}}
\newcommand{\polyrandloc}{\complexityclass{P}\text{-}\complexityclass{RLOCAL}}
\newcommand{\hide}[1]{}
\newcommand{\FullOrShort}{full}
  \newcommand{\fullOnly}[1]{#1}
  \newcommand{\shortOnly}[1]{}
  \newcommand{\shortOnly}[1]{#1}
  \newcommand{\fullOnly}[1]{}
\begin{document}

\title{On Derandomizing Local Distributed Algorithms}


\author{
 Mohsen Ghaffari\\
  \small ETH Zurich \\
  \small ghaffari@inf.ethz.ch
\and
David G. Harris \\
\small University of Maryland, College Park \\
\small davidgharris29@gmail.com
\and
Fabian Kuhn\footnote{Supported by ERC Grant No.\ 336495 (ACDC).}\\
  \small University of Freiburg \\
  \small kuhn@cs.uni-freiburg.de
 }

\date{}
\maketitle

\begin{abstract}
The gap between the known randomized and deterministic local
distributed algorithms underlies arguably the most fundamental and
central open question in \emph{distributed graph algorithms}. In this
paper, we combine the method of conditional expectation with
  network decompositions to obtain a generic and clean recipe for derandomizing randomized \LOCAL algorithms and transforming them into efficient deterministic \LOCAL algorithms. This simple recipe leads to significant improvements on a number of problems, in cases resolving known open problems. Two main results are:

\begin{itemize}
\item An improved deterministic distributed algorithm for hypergraph maximal matching, improving on Fischer, Ghaffari, and Kuhn~[FOCS'17], and giving improved algorithms for edge-coloring, maximum matching approximation, and low out-degree edge orientation. The last result gives the first positive resolution in the Open Problem 11.10 in the book of Barenboim and Elkin. 

\item Improved randomized and deterministic distributed algorithms for the Lov\'{a}sz Local Lemma, which gets closer to a conjecture of Chang and Pettie~[FOCS'17], and moreover leads to improved distributed algorithms for problems such as defective coloring and $k$-SAT.
\end{itemize}
\end{abstract}

\setcounter{page}{0}
\thispagestyle{empty}
\newpage

\section{Introduction and Related Work}
\label{sec:intro}

The gap between the best known deterministic and randomized local
distributed algorithms constitutes a central open question in the area
of \emph{distributed graph algorithms}. For many of the classic
problems (e.g., maximal independent set (MIS) and $(\Delta+1)$-vertex
coloring), $O(\log n)$-time randomized algorithms have been known
since the pioneering work of Luby\cite{luby86} and Alon, Babai, and
Itai\cite{alon86}. However, obtaining a $\polylog n$-time
deterministic algorithm remains an intriguing open problem for 30
years now, which was first asked by Linial\cite{linial1987LOCAL}. The
best known deterministic round complexity is $2^{O(\sqrt{\log n})}$,
due to Panconesi and Srinivasan\cite{panconesi95}.

The issue is not limited to a few problems; these are just symptomatic
of our lack of general tools and techniques for
derandomization. Indeed, in their 2013 book on Distributed Graph
Coloring\cite{barenboimelkin_book}, Barenboim and Elkin stated that
``\emph{perhaps the most fundamental open problem in this field is to
  understand the power and limitations of randomization}'', and left
the following as their first open problem\footnote{Their Open
  Problems 11.2 to 11.5 also deal with the same question, directly
  asking for finding efficient deterministic algorithms for certain
  problems; in all cases, the known randomized algorithms satisfy the
  goal.}:

\begin{center}
\begin{minipage}{0.95\linewidth}
\vspace{-8pt}
\begin{mdframed}[hidealllines=false, backgroundcolor=gray!00]
\textsc{Open Problem 11.1} (Barenboim \& Elkin\cite{barenboimelkin_book}) $\;$ Develop a general derandomization technique for the distributed message-passing model. 
\end{mdframed}
\end{minipage}
\end{center}

There is also a more modern and curious motive for developing
derandomization techniques for distributed algorithms, even if one
does not mind the use of randomness: efficient deterministic
algorithms can help us obtain even more efficient \emph{randomized}
algorithms. More concretely, most of the recent developments in
randomized algorithms use the \emph{shattering
  technique}\cite{barenboim2016locality,elkin20152delta,gmis,GS17,ghaffari-lll,harris2016coloring,chang2018optimal} which randomly breaks down the graph into
small components, typically $\polylog n$-size, and then solves them
separately via a deterministic algorithm. Once we develop faster
deterministic algorithms, say via derandomization, we can speed up the
corresponding randomized algorithms.

\smallskip

We next overview our contributions regarding this question. Before
that, we review Linial's \LOCAL model\cite{linial1987LOCAL,peleg00},
which is the standard synchronous message passing model of distributed
computing, and a sequential variant of it, called \SLOCAL, introduced
recently by Ghaffari, Kuhn, and Maus\cite{ghaffari2017complexity},
which will be instrumental in driving and explaining our results.

\paragraph{The \textsf{LOCAL} Model:} The communication network is
abstracted as an undirected $n$-node graph $G=(V, E)$ with maximum
degree $\Delta$, and where each node has a $\Theta(\log n)$-bit unique
identifier. Initially, nodes only know their neighbors in $G$. At the
end, each node should know its own part of the solution, e.g., its
color in coloring. Communication happens in synchronous rounds, where
in each round each node can perform some arbitrary internal
computations and it can exchange a possibly arbitrarily large message
with each of its neighbors. The \emph{time complexity} of an algorithm
is defined as the number of rounds that are required until all nodes
terminate. In the case of randomized algorithms, each node can in
addition produce an arbitrarily long private random bit string before
the computation starts.

A \LOCAL model algorithm with time complexity $r$ can alternatively be
defined as follows. When the computation starts, each node in
parallel reads the initial states of the nodes in the $r$-hop
neighborhood (including the private random bit strings for
randomized algorithms). Based on that information, each node in
parallel computes its output.

\paragraph{The \textsf{SLOCAL} Model:}
There are two main obstacles to developing \LOCAL algorithms: \emph{locality} and \emph{symmetry breaking.} In order to understand their roles separately,  Ghaffari, Kuhn, and Maus\cite{ghaffari2017complexity} introduced the \SLOCAL model,
in which symmetry breaking is free and only locality becomes a bottleneck.
The \SLOCAL is similar to the
\LOCAL model, in that each node can read its $r$-hop neighborhood in
the graph $G$, for some parameter $r$. However, in the \SLOCAL model,
the neighborhoods are read sequentially. Formally, the nodes are
processed in an arbitrary (adversarially chosen) order. When node $v$
is processed, $v$ can read its $r$-hop neighborhood and it computes
and locally stores its output $y_v$ and potentially additional
information. When reading the $r$-hop neighborhood, $v$ also reads all the
information that has been locally stored by the previously processed
nodes there. We call the parameter $r$ the
\emph{locality} of an \SLOCAL algorithm.

The \SLOCAL model can be seen as a natural extension of
sequential greedy algorithms. In fact, the classic distributed graph
problems such as MIS or $(\Delta+1)$-coloring have simple \SLOCAL
algorithms with locality $1$: in order to determine whether a node $v$
is in the MIS or which color $v$ gets in a $(\Delta+1)$-coloring, it
suffices to know the decisions of all the neighbors that have been
processed before.

The \SLOCAL model is inherently sequential. The main reason it is useful is that there
are transformations from \SLOCAL algorithms to \LOCAL algorithms,
which handles symmetry breaking in a ``black-box'' or generic way. By
developing and analyzing \SLOCAL algorithms, we are therefore able to treat a number of diverse \LOCAL problems in a unified and abstract fashion. We are
also able to adapt multiple types of algorithms to take advantage of special structure in the graphs (for example, bounds on its maximum degree).
This two-part method of algorithm analysis --- constructing \SLOCAL algorithms, and transforming them to \LOCAL algorithms generically --- will be
a key technical tool.

\subsection{Our Contributions, Part I: Derandomization}
In the first part of this paper, we present a simple and clean recipe for derandomizing local distributed algorithms. A simplified, and imprecise, version of our derandomization result is as follows: 

\begin{theorem}[Derandomization---Informal and
  Simplified] \label{thm:main_informal} Any $r$-round randomized
  \LOCAL algorithm for a locally checkable problem can be transformed
  to a deterministic \SLOCAL algorithm with locality $O(r)$. This
  \SLOCAL algorithm can then be transformed to a deterministic \LOCAL
  algorithm with complexity $\Delta^{O(r)} + O( r \log^* n),$
  or $\,r \cdot 2^{O(\sqrt{\log n})}$, by using network decompositions.\footnote{We provide the formal
  definition of network decompositions later in
  \Cref{sec:model}.}
\end{theorem}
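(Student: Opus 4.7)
The plan is to prove \Cref{thm:main_informal} in two stages: first convert the randomized $r$-round \LOCAL algorithm $\Alg$ into a deterministic \SLOCAL algorithm of locality $O(r)$ via the method of conditional expectation, then simulate this \SLOCAL algorithm by a deterministic \LOCAL algorithm using network decompositions.

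For the first stage, I would begin by amplifying $\Alg$ so that the local failure probability at every node is at most $1/n^c$ for a sufficiently large constant $c$; this costs only a constant factor in $r$. Local checkability gives a constant-radius verifier, so one can define bad events $B_v$, one per node, each depending only on random bits in an $O(r)$-ball around $v$, with $\Phi := \sum_v \Pr[B_v] < 1$ by a union bound. Next I would reduce each node's random string to a seed from a $\poly(n)$-size sample space (for instance via bounded independence) while keeping $\Phi < 1$, so that each node chooses among only polynomially many seeds. In the \SLOCAL algorithm, when node $v$ is processed it reads its $O(r)$-hop neighborhood, including all seeds previously stored there, and picks its own seed to minimize the conditional expectation $\E[\Phi \mid \text{seeds fixed so far}]$. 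A bad event $B_u$ depends on $v$'s seed only if $u$ lies within $O(r)$ hops of $v$, so this conditional expectation can be evaluated entirely within locality $O(r)$. The standard averaging argument ensures that the conditional value of $\Phi$ never increases; once all randomness is fixed, $\Phi<1$ forces $\Phi=0$, so no bad event occurs and the global output is valid.

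For the second stage, a $(C,D)$-network decomposition partitions $V$ into $C$ classes, each a union of pairwise far-apart clusters of diameter at most $D$. Given such a decomposition, I would simulate the \SLOCAL algorithm in \LOCAL by processing the color classes sequentially: within a class, distinct clusters do not see each other in the relevant $O(r)$-hop view, so each cluster can execute the \SLOCAL algorithm on its own nodes in parallel in $O(D+r)$ rounds, for a total of $O(C\cdot(D+r))$ rounds plus the cost of the decomposition itself. Plugging in the Panconesi--Srinivasan decomposition, which has $C=D=2^{O(\sqrt{\log n})}$ and is computable deterministically in $2^{O(\sqrt{\log n})}$ rounds~\cite{panconesi95}, yields the bound $r\cdot 2^{O(\sqrt{\log n})}$. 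For the other bound I would instead compute a Linial-style $\Delta^{O(r)}$-coloring of a suitable power of $G$ (whose maximum degree is $\Delta^{O(r)}$) in $O(r\log^* n)$ rounds, which gives $\Delta^{O(r)}$ singleton-cluster color classes and a total simulation cost of $\Delta^{O(r)}\cdot O(r)+O(r\log^* n)$.

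The main obstacle I expect is the first stage, specifically implementing the conditional-expectation argument \emph{locally}: it requires (i) amplifying the success probability enough that $\Phi<1$ while only multiplying $r$ by a constant, (ii) compressing each node's randomness into a $\poly(n)$-size sample space without substantially inflating any $\Pr[B_v]$, and (iii) verifying that every bad event influenced by $v$'s seed lies inside $v$'s $O(r)$-hop neighborhood so the conditional expectation is computable within the promised locality. The second stage is comparatively routine once the network decompositions are in hand, since both ingredients are standard.
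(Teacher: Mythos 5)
Your overall architecture---fix the randomness node-by-node via conditional expectations in \SLOCAL, then simulate the resulting \SLOCAL algorithm via a network decomposition (or a coloring) of a suitable power of $G$---is exactly the paper's route (\Cref{thm:basicderandomization} combined with \Cref{prop:StoL} and \Cref{prop:ZtoL}). Your second stage is essentially right, modulo being explicit that the decomposition/coloring must be of $G^{O(r)}$ so that same-color clusters are more than $O(r)$ apart; computing it by simulation on $G$ costs the extra factor $r$ that appears in the stated bounds.

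The genuine problems are the two preprocessing steps you insert in the first stage. First, black-box amplification to per-node failure probability $1/n^{c}$ ``at a constant factor in $r$'' is not available in the \LOCAL model: success is a global property, you cannot rerun a Monte Carlo algorithm and select the successful run, and local checks of several parallel runs do not yield a globally consistent choice of which run to adopt. Second, compressing each node's randomness to a $\poly(n)$-size sample space ``via bounded independence'' is unjustified: the bad events $B_v$ are arbitrary functions of the random bits in an $O(r)$-ball, and $k$-wise independence does not in general preserve $\Pr[B_v]$ for such events. Fortunately, neither step is needed, which is why the paper's proof omits both. The w.h.p.\ guarantee (success probability $>1-1/n$) together with local checkability already gives $\sum_v \Pr[B_v]<1$, since each $B_v$ implies a global failure; and \SLOCAL nodes have unbounded local computation, so when $v_i$ is processed it can evaluate the exact conditional expectation of $\sum_u F_u$ under the original product distribution, restricted to the nodes $u$ within distance $r$ of $v_i$ (the other terms do not depend on $v_i$'s bits), and an averaging argument guarantees a choice of $v_i$'s bits that does not increase this sum --- this is precisely \Cref{thm:basicderandomization}, which yields locality $2r$. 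If you delete your amplification and seed-compression steps and argue directly as above, your proof coincides with the paper's.
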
 

We show \Cref{thm:main_informal}  using the method of conditional
expectation. In the \SLOCAL model, the nodes are processed
sequentially; when we encounter vertex $v$, we fix its randomness (as used in the randomized distributed algorithm) to minimize the conditional expected number of failed nodes. It is critical for this that the problem solution is locally
checkable. We note that this assumption is generally necessary for
efficiently derandomizing distributed algorithms. In \Cref{prop:cyclemarking},  we provide a
simple problem that is not locally checkable,  which has a
constant-time randomized distributed algorithm, but where the
best deterministic \SLOCAL algorithm (or consequently \LOCAL algorithm) has at least polynomial
locality. 

The transformation from the \SLOCAL model to the \LOCAL model by using
network decompositions follows from \cite{ghaffari2017complexity}. It is
well-known that network decompositions are a powerful generic tool to
obtain efficient distributed algorithms.  \Cref{thm:main_informal}
proves that network decompositions are even sufficiently powerful to
turn any randomized distributed algorithm for a locally checkable
problem into an efficient deterministic algorithm. Indeed,  since $(O(\log n),O(\log n))$-decompositions
  can be computed in randomized polylogarithmic time in the \LOCAL
  model, \Cref{thm:main_informal} exactly captures the set of problems
  for which network decompositions can be used. 
  
   The
$2^{O(\sqrt{\log n})}$ term in the round complexity of
\Cref{thm:main_informal} is due to the currently best-known deterministic round
complexity of computing $(O(\log n), O(\log n))$-network
decompositions\cite{panconesi95}. Because of this overhead,
unfortunately, the deterministic algorithms obtained from this derandomization method may be much less efficient than their
randomized counterparts. However, as we overview next, still in many
cases we get algorithms that are far more efficient than existing
algorithms, and in a few cases, they resolve known open problems; this
is especially due to the first bound, combined with some other
ideas that effectively reduce the maximum degree $\Delta$.

\subsubsection{Hypergraph Maximal Matching and Ramifications}
Hypergraph maximal matching was recently pointed out by Fischer,
Ghaffari, and Kuhn\cite{FischerGK17} as a clean and powerful problem,
which admits reductions from several classic problems of the
area. As the first concrete implication of our derandomization
technique, we obtain an improved deterministic algorithm for maximal
matching in hypergraphs:

\begin{theorem}\label{thm:hypergraph} There is an
  $O(r^2 \log (n\Delta) \log n \log^4 \Delta)$-round deterministic
  \LOCAL algorithm that computes a maximal matching for any $n$-node
  hypergraph with maximum degree $\Delta$ and 
  rank at most $r$ (the rank is the maximum number of vertices in a
  hyperedge).
\end{theorem}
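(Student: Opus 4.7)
The plan is to invoke the derandomization framework of \Cref{thm:main_informal} on a phased randomized algorithm for hypergraph maximal matching. The starting point is (a mild adaptation of) the standard phased randomized algorithm: it runs in $T = O(\log(n\Delta))$ phases, each of which picks a random subset of hyperedges with carefully chosen per-edge probabilities, resolves conflicts by dropping losers, and removes matched vertices together with their incident hyperedges. A potential argument shows that with high probability a constant fraction of a ``progress'' potential disappears per phase, so $T$ phases suffice; each individual phase is itself a short randomized \LOCAL routine of locality $O(r \cdot \polylog(\Delta))$.

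The per-phase success condition decomposes as a sum of per-hyperedge indicator contributions that are computable from a same-radius neighborhood, so it is locally checkable in the sense required by \Cref{thm:main_informal}. Applying that theorem phase by phase yields a deterministic \SLOCAL algorithm for each phase via the method of conditional expectations, with per-phase locality roughly $O(r^2 \cdot \polylog(\Delta))$: one factor of $r$ from the per-phase randomized locality, and a second factor of $r$ because the conditional-expectation score used to fix each random bit sums per-hyperedge contributions over the $r$-enlarged primal neighborhood. Compiling each \SLOCAL phase back into a deterministic \LOCAL algorithm via the network-decomposition-based transformation of \Cref{thm:main_informal} contributes an $O(\log n)$ multiplicative overhead, and multiplying by $T = O(\log(n\Delta))$ phases yields the claimed $O(r^2 \log(n\Delta) \log n \log^4 \Delta)$ round complexity.

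The main technical obstacle is this locality bookkeeping: one must choose the potential function and the conditional-expectation score carefully so that (i) the score remains locally computable at every one of the $\Theta(\log \Delta)$ bit-fixing steps, (ii) the expected progress guarantee is preserved across all those steps, and (iii) no extra factors of $\Delta$ (rather than $\polylog \Delta$) leak into the final bound, which is what produces the $r^2$ and $\log^4 \Delta$ factors. A secondary delicate issue is avoiding the $2^{O(\sqrt{\log n})}$ Panconesi--Srinivasan cost in the compilation step, which, as signaled in the introduction, can here be achieved by exploiting the $\Delta^{O(r)} + O(r \log^* n)$ branch of \Cref{thm:main_informal} after first deterministically reducing the effective maximum degree of the hypergraph to $\polylog(n\Delta)$.
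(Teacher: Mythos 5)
Your plan has a genuine gap at exactly the point the paper works hardest: the compilation of the derandomized \SLOCAL phases back into a deterministic \LOCAL algorithm. You assert that the network-decomposition-based transformation of \Cref{thm:main_informal} contributes an $O(\log n)$ multiplicative overhead per phase, but deterministically that route costs $2^{O(\sqrt{\log n})}$ (the only known deterministic $(O(\log n),O(\log n))$-decomposition is Panconesi--Srinivasan); the $O(\log^2 n)$ overhead is available only if a decomposition is already given, and computing one randomly would defeat the purpose. Your fallback --- the $\Delta^{O(r)} + O(r\log^* n)$ branch --- does not rescue the bound either, because that branch is governed by $\Delta(G^{2r'})$ where $r'$ is the \SLOCAL locality being compiled; with your per-phase locality $r' = \Theta(r^2\polylog\Delta)$ this is exponential in $r'$, so even after ``reducing the effective maximum degree to $\polylog(n\Delta)$'' you would face $(\log (n\Delta))^{\Theta(r^2 \polylog \Delta)}$ rounds, far beyond the claimed $O(r^2\log(n\Delta)\log n\log^4\Delta)$. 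Moreover, the proposed preliminary step of deterministically reducing the degree to $\polylog(n\Delta)$ is not a secondary issue: it is the entire content of the theorem, and you give no mechanism for it.

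The paper's proof avoids both problems by never derandomizing a phased randomized matching algorithm at all. It derandomizes only a \emph{zero-round} Las Vegas routine: a uniformly random red/blue coloring of hyperedges (degree splitting, \Cref{lemma:HG_splitting}), applied after splitting each vertex into virtual copies of degree $\Theta(\log(n\Delta)/\eps^2)$, so that the conflict graph (the square of the incidence graph) has degree only $O(r\log(n\Delta)/\eps^2)$ and the degree-based branch of \Cref{prop:ZtoL} compiles the locality-$O(1)$ \SLOCAL algorithm in $O(r\log(n\Delta)/\eps^2)$ rounds --- this is where \Cref{thm:basicderandomization} enters, and nowhere else. The maximal matching is then built deterministically on top: $O(\log\Delta)$ iterated splittings reduce the degree to $O(\log n)$ while preserving the degrees of high-degree vertices (\Cref{split2}), a bucketed greedy matching lemma (\Cref{match-prop}) matches an $\Omega(1/r)$ fraction of the high-degree vertices, and $O(r\log n\log\Delta)$ repetitions plus a final MIS on the line graph give the stated bound. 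If you want to salvage your phase-by-phase approach, you would need a per-phase progress certificate with $\sum_v \E[F_v]<1$ that is checkable at locality $O(1)$ on a conflict graph of polylogarithmic degree --- which is essentially what the degree-splitting construction manufactures.
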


The algorithm of \cite{FischerGK17} has complexity
$(\log \Delta)^{O(\log r)} \cdot \log n$, which is efficiently only for hypergraphs of essentially constant rank.  Improving
this $r$-dependency was left as a main open question in
\cite{FischerGK17}. \Cref{thm:hypergraph} gives efficient algorithms for $r = \polylog n$.

Using known reductions\cite{FischerGK17}, this improved hypergraph maximal matching algorithm leads to efficient deterministic algorithms for a number of problems, including $(2\Delta-1)$-edge coloring, ($1-\eps$)-approximation of maximum matching, and low out-degree orientation: 

\begin{corollary}\label{crl:edge-coloring} There is a deterministic distributed algorithm that computes a $(2\Delta-1)$-list-edge-coloring in $O(\log^4 \Delta \log^2 n)$ rounds.
\end{corollary}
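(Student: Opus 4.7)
The plan is to combine \Cref{thm:hypergraph} with the reduction from $(2\Delta-1)$-list-edge-coloring to hypergraph maximal matching established by Fischer, Ghaffari, and Kuhn~\cite{FischerGK17}. Their reduction shows that list edge coloring with $2\Delta-1$ colors can be obtained via an auxiliary hypergraph maximal matching instance of constant rank, where the hypergraph is constructed locally from $G$ so that each round on the auxiliary instance can be simulated by $O(1)$ rounds in the original \LOCAL model on $G$.

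First, I would instantiate the reduction. The auxiliary hypergraph $H$ has vertices indexed by (edge, color) pairs $\{(e,c) : e \in E(G),\, c \in L(e)\}$, and its hyperedges encode the two local feasibility constraints --- that each edge of $G$ receives at most one color from its list, and that no two adjacent edges of $G$ share a color. The assumption $|L(e)| \geq 2\Delta-1$ guarantees that any maximal matching in $H$ corresponds to a proper coloring of every edge of $G$, since when a $G$-edge $e$ has not yet received a color, at most $2(\Delta-1)$ of its listed colors are blocked by incident decisions and so some $(e,c)$ can still be added to the matching. The construction of \cite{FischerGK17} ensures that the rank $r$ of $H$ is $O(1)$, its vertex set has size $\poly(n,\Delta)$, and its maximum degree is $\poly(\Delta)$.

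Next, I would plug these parameters into \Cref{thm:hypergraph}. With $r = O(1)$, and using $\log(|V(H)| \cdot \Delta_H) = O(\log n)$ (assuming $\Delta \leq n$, otherwise the bound is trivial) together with $\log \Delta_H = O(\log \Delta)$, a single maximal matching call on $H$ runs in $O(\log^2 n \log^4 \Delta)$ rounds. Because the reduction invokes only $O(1)$ such calls and the simulation of $H$ in $G$ has constant overhead per round, the total deterministic round complexity on $G$ is $O(\log^4 \Delta \log^2 n)$, matching the claim.

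The main obstacle --- really the only non-black-box point --- is verifying that the reduction of \cite{FischerGK17} indeed produces a hypergraph of constant rank together with the stated local-simulation property, and that a maximal matching in this hypergraph corresponds to a valid list-edge-coloring rather than merely a partial coloring. Both of these are handled in their paper, so the corollary reduces to a direct composition of \Cref{thm:hypergraph} with that reduction.
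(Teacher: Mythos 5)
Your proposal is correct and follows essentially the same route as the paper: cite the reduction of Fischer, Ghaffari, and Kuhn from $(2\Delta-1)$-list-edge-coloring to hypergraph maximal matching (which in their construction has rank $3$, $O(|V|+|E|)$ vertices, and maximum degree $O(\Delta^2)$) and plug these parameters into \Cref{thm:hypergraph} to get $O(\log^2 n \log^4 \Delta)$ rounds. One minor inaccuracy: in the actual reduction the (edge, color) pairs index the \emph{hyperedges}, not the vertices, of the auxiliary hypergraph (otherwise a constraint of the form ``edge $e$ gets at most one of its $2\Delta-1$ listed colors'' would force rank $\Omega(\Delta)$), but since you ultimately defer to the construction of \cite{FischerGK17} as a black box, the argument is unaffected.
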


\begin{corollary}\label{crl:matching-approx} There is a deterministic distributed algorithm that computes a $(1-\eps)$-approximation of  maximum matching in $O(\log^2 n\log^5 \Delta/\eps^9)$ rounds.
\end{corollary}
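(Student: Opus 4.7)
The plan is to invoke \Cref{thm:hypergraph} through the standard reduction from $(1-\eps)$-approximate maximum matching to hypergraph maximal matching, which is the same route taken by Fischer, Ghaffari, and Kuhn. The high-level scheme is a Hopcroft--Karp style iteration: starting from any matching $M$ (e.g., the empty matching), repeatedly find a maximal collection of pairwise vertex-disjoint $M$-augmenting paths of length at most $2k+1$ with $k=\Theta(1/\eps)$ and augment $M$ along them. After $O(1/\eps)$ such phases the resulting matching is a $(1-\eps)$-approximation of the maximum matching; I would cite this combinatorial fact from the literature on approximate matching via short augmenting paths.

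The core distributed step is ``find a maximal set of vertex-disjoint augmenting paths of length at most $2k+1$''. I would encode this as a hypergraph maximal matching instance: build an auxiliary hypergraph $H$ whose vertex set is $V(G)$ and whose hyperedges are exactly the $M$-augmenting paths of length at most $2k+1$. Any matching in $H$ is a set of vertex-disjoint augmenting paths, and a maximal one is exactly what we need. The parameters of $H$ are rank $r_H = O(1/\eps)$ and maximum degree $\Delta_H \le \Delta^{O(1/\eps)}$, since the number of paths of length $O(1/\eps)$ through a fixed vertex is at most $\Delta^{O(1/\eps)}$. Moreover each round of communication in $H$ can be simulated by $O(1/\eps)$ rounds of communication in $G$, because every hyperedge spans an $O(1/\eps)$-radius neighborhood in $G$ and each node can enumerate its incident hyperedges after collecting its $O(1/\eps)$-hop neighborhood.

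Plugging these parameters into \Cref{thm:hypergraph} gives, per phase, a round complexity of
\[
O\!\left(r_H^{\,2}\,\log(n\Delta_H)\,\log n\,\log^4\Delta_H\right)\cdot O(1/\eps),
\]
where the extra factor $O(1/\eps)$ accounts for the simulation overhead. Substituting $r_H = O(1/\eps)$ and $\log\Delta_H = O(\log\Delta/\eps)$ and bounding the two cases $\log n\gtrless (\log\Delta)/\eps$ separately yields a per-phase bound of $O(\log^2 n\log^4\Delta/\eps^{8} + \log n\log^5\Delta/\eps^{9})$. Multiplying by the $O(1/\eps)$ outer phases and absorbing both terms into the dominant one gives the claimed $O(\log^2 n\log^5\Delta/\eps^9)$ total round complexity.

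The main obstacle is not the algorithmic scheme, which is essentially classical, but rather the bookkeeping of parameters: in particular, verifying that the combinatorial approximation guarantee of short augmenting paths can indeed be implemented by \emph{any} maximal set of vertex-disjoint augmenting paths (not necessarily maximum cardinality), which is known, and carefully tracking how the $\eps$ dependence in both the rank, the simulation overhead, and the number of phases combines to produce the $1/\eps^9$ factor rather than a smaller polynomial. A secondary technicality is that the hypergraph $H$ need not be constructed explicitly everywhere; it suffices that each node can locally enumerate its incident hyperedges, which is the reason for the $O(1/\eps)$ simulation overhead rather than something worse.
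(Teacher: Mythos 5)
Your overall route is the same as the paper's: a Hopcroft--Karp style augmentation framework in which the key distributed step, finding a maximal set of vertex-disjoint short augmenting paths, is encoded as hypergraph maximal matching on an auxiliary hypergraph of rank $O(1/\eps)$ and maximum degree $\Delta^{O(1/\eps)}$, solved via \Cref{thm:hypergraph} with an $O(1/\eps)$ simulation overhead per hypergraph round; this is exactly the paper's proof.

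There is, however, one substantive point to repair: the phase schedule. The paper iterates $\ell = 1,\dots,2/\eps$ in \emph{increasing order} and in phase $\ell$ augments a maximal set of vertex-disjoint augmenting paths of length exactly $\ell$; the Hopcroft--Karp lemma (no augmenting path of length $\le \ell$ survives phase $\ell$, given none of length $<\ell$ existed before it) is what makes $O(1/\eps)$ phases suffice, and maximality (rather than maximum cardinality) of the disjoint path set is enough precisely because the paths being augmented are shortest ones. Your variant --- repeatedly augmenting a maximal set of vertex-disjoint augmenting paths of length \emph{at most} $2k+1$ --- does not inherit this guarantee: augmenting along non-shortest paths need not increase the shortest augmenting path length, so the claim that $O(1/\eps)$ such phases yield a $(1-\eps)$-approximation is not the standard cited fact; the standard counting argument for that variant (each remaining disjoint short augmenting path in $M\oplus M^*$ must intersect the maximal set, whose paths have $O(1/\eps)$ vertices each) only bounds the number of phases by $O(1/\eps^2)$, which would inflate the bound to $1/\eps^{10}$. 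Switching to the exact increasing-length schedule costs nothing in the hypergraph parameters and restores the claim. A minor bookkeeping remark: your stated per-phase bound carries one extra $1/\eps$ factor --- the correct per-phase count is $O(\log^2 n\,\log^4\Delta/\eps^{7} + \log n\,\log^5\Delta/\eps^{8})$, which after multiplying by the $O(1/\eps)$ phases is absorbed into $O(\log^2 n\,\log^5\Delta/\eps^{9})$; with your per-phase figures, the product would contain a $\log n\,\log^5\Delta/\eps^{10}$ term that is not dominated when $1/\eps > \log n$.
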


\begin{corollary}\label{crl:orientation} There is a deterministic distributed algorithm that computes an orientation with maximum out-degree at most $(1+\eps)\lambda$ in any graph with arboricity $\lambda$, in $O(\log^{10} n\log^5 \Delta/\eps^9)$ rounds.
\end{corollary}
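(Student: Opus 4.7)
The plan is to apply the black-box reduction of Fischer, Ghaffari, and Kuhn~\cite{FischerGK17} from low out-degree orientation to hypergraph maximal matching, and then substitute the improved complexity from \Cref{thm:hypergraph} into the existing analysis. In their reduction, the problem of finding a $(1+\eps)\lambda$-out-degree orientation in an arboricity-$\lambda$ graph is cast as an integer solution to a bipartite packing/covering LP: each edge is a resource that must be oriented to one of its endpoints, and each vertex imposes an out-degree capacity of $(1+\eps)\lambda$. Since every arboricity-$\lambda$ graph admits a fractional orientation of out-degree exactly $\lambda$, the LP is feasible with a multiplicative $\eps$ slack that the rounding procedure will consume.

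The first concrete step is to compute a good fractional orientation, or equivalently, to set up the corresponding packing/covering system with bit-precision $O(\log n/\eps)$. The second step is the iterative rounding loop of~\cite{FischerGK17}: repeatedly round the most significant unrounded bit of the fractional solution by constructing an auxiliary hypergraph whose vertices represent the currently-fractional orientation variables and whose hyperedges encode the per-vertex out-degree constraints, then compute a maximal matching in that hypergraph. Each such maximal matching converts one fractional bit into an integral commitment while preserving the capacity constraints up to the slack allotted for that round. After $O(\log n/\eps)$ rounding stages the solution is fully integral, and the accumulated slack is absorbed into the $\eps\lambda$ budget.

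The third step is parameter bookkeeping. Each auxiliary hypergraph has rank $r = O(\log n/\eps)$ (bounded by the bit-precision times a small polynomial in $1/\eps$) and maximum degree polynomially related to $\Delta$. Applying \Cref{thm:hypergraph} gives per-stage complexity $O(r^2 \log(n\Delta)\log n \log^4\Delta)$. Multiplying by the number of outer stages and the additional $\poly(1/\eps,\log n)$ overhead that the FGK17 framework incurs from its recursive refinement of the LP and from handling residual infeasibility yields the stated $O(\log^{10} n \log^5\Delta/\eps^9)$ bound.

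The main obstacle I anticipate is the multi-layer parameter bookkeeping, since the reduction chains together a fractional relaxation, bit-by-bit rounding, and per-iteration matching, with each layer contributing polynomial factors in $\log n$ and $1/\eps$. No new algorithmic idea is required: the content of the corollary is that the polynomial-in-$r$ dependence of \Cref{thm:hypergraph} composes cleanly with auxiliary hypergraphs of rank $r = O(\log n/\eps)$, which is exactly the regime where \Cref{thm:hypergraph} beats the $(\log\Delta)^{O(\log r)}\cdot\log n$ bound of~\cite{FischerGK17} and so produces the first polylogarithmic deterministic algorithm answering Open Problem~11.10 of Barenboim and Elkin.
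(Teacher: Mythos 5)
There is a genuine gap here: the reduction you invoke is not the one that exists, and as you describe it, it does not work. You posit an FGK17 ``black-box'' reduction in which the orientation problem is written as a packing/covering LP, a fractional orientation is rounded bit by bit, and each rounding stage is a hypergraph maximal matching whose \emph{hyperedges encode the per-vertex out-degree constraints}. A per-vertex out-degree constraint involves all (currently fractional) edges incident to that vertex, i.e.\ up to $\Delta$ variables, so a hypergraph built that way has rank up to $\Delta$, not $O(\log n/\eps)$ --- your claim that the rank is ``bounded by the bit-precision times a small polynomial in $1/\eps$'' is unsupported and is exactly the point where the argument breaks. (The bit-rounding technique of \cite{FischerGK17} is used there for rounding fractional \emph{matchings}, where the gadgets have rank $3$; it is not a generic LP-rounding machine for orientation, and their orientation result does not go through such an LP.) In addition, you would need to first compute a fractional orientation deterministically, which is itself a nontrivial step you do not address, and the final bound $O(\log^{10} n \log^5 \Delta/\eps^9)$ is asserted rather than derived from any concrete per-stage parameters.

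What the paper actually does is quite different and avoids all of this. Following Ghaffari and Su \cite{GS17}, one starts from an \emph{arbitrary} orientation, sets $D=\lceil(1+\eps)\lambda\rceil$, adds a super-source $s$ (attached to nodes of out-degree $>D$) and super-sink $t$ (attached to nodes of out-degree $<D$), and runs $\ell=O(\log n/\eps)$ iterations; in iteration $i$ it finds a \emph{maximal set of edge-disjoint augmenting paths} of length $3+i$ from $s$ to $t$ and reverses them. Lemmas D.6 and D.9 of \cite{GS17} guarantee that after all iterations no node of out-degree above $D$ remains. The hypergraph maximal matching enters exactly here: the hypergraph's vertices are the edges of $G$ and its hyperedges are the candidate augmenting paths, so the rank is the path length $O(\log n/\eps)$ and the maximum degree is $\Delta^{O(\ell)}$; each hypergraph round costs $O(\ell)$ rounds of $G$. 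Plugging these parameters into \Cref{thm:hypergraph} gives $O(\ell\cdot \ell^2\log(n\Delta^{\ell})\log n\log^4(\Delta^{\ell}))$ per iteration and $O(\log^{10} n\log^5\Delta/\eps^9)$ overall. So the correct route is the augmenting-path framework with paths as hyperedges, not an LP relaxation with constraint hyperedges; if you want to salvage your write-up, you should replace the reduction in your second and third steps with this one and redo the parameter accounting accordingly.
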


\Cref{crl:edge-coloring} gives an alternative solution for the problem
of $(2\Delta-1)$-edge coloring, which was a well-known open problem
since 1990s (listed as Open Problem 11.4 of
\cite{barenboimelkin_book}) and was resolved recently by Fischer et
al.\cite{FischerGK17}; indeed, the solution of
\Cref{crl:edge-coloring} is more efficient than the $O(\log^8 n)$
algorithm of \cite{FischerGK17}. Moreover, \Cref{crl:orientation}
gives the first positive resolution of Open Problem 11.10 of
\cite{barenboimelkin_book}. In a recent paper, Ghaffari,
Kuhn, Maus, and Uitto \cite{det_1+eps_edgecoloring} gave efficient
(polylogarithmic time) deterministic distributed algorithms for edge
coloring with $(1+\eps)\Delta$ and $\frac{3}{2}\Delta$ colors. We note
that these results critically depend on the stronger maximum matching
approximation, which is guaranteed by \Cref{crl:matching-approx}.

\subsubsection{The Lov\'{a}sz Local Lemma and Extensions}
\label{sec:LLLresults}
The Lov\'{a}sz Local Lemma (LLL) is a powerful probabilistic tool; at a
high level, it states that if one has a probability space $\Omega$ and
a collection $\mathcal B$ of ``bad'' events in that space, then as
long as the bad-events have low probability and are not too
interdependent, then there is a positive probability that no event in
$\mathcal B$ occurs; in particular a configuration avoiding
$\mathcal B$ exists.  In its simplest, ``symmetric'' form, it states
that if each bad-event has probability at most $p$, and each bad-event
affects at most $d$ other bad-events, and $e p d \leq 1$, then there is a
positive probability that no bad-event occurs.

A distributed algorithm for the LLL is a key building-block for a number of distributed graph coloring algorithms, such as frugal or defective vertex-colorings. In addition, as Chang \& Pettie noted in \cite{chang2017time}, the LLL plays a important role in the overall landscape of \LOCAL algorithms; the reason is that in any randomized $\loc[r]$ algorithm, one may define a bad-event that a given node $v$ fails; this bad-event has low probability, and because of the locality of the procedure it only affects other nodes within radius $2 r$. 

As a result, \cite{chang2017time} showed that if we have a distributed LLL algorithm running in time $t(n)$, then any \LOCAL algorithm on a bounded-degree graph for a locally checkable problem running in $o(\log_{\Delta} n)$ rounds, can be sped up to time $t(n)$. Thus, in a sense, the LLL is a universal sublogarithmic \LOCAL problem. They further conjectured that the LLL could be solved in time $O(\log \log n)$ (matching a lower bound of \cite{brandt}), which in turn would allow a vast range of other \LOCAL algorithms to run in $O(\log \log n)$ time.

There has been a long history of algorithmic versions of the LLL, including distributed algorithms. A breakthrough result of Moser \& Tardos \cite{mt} gave one of the first general sequential algorithms for it; they also discussed a parallel variant, which can easily be converted into a randomized $\loc[O(\log^2 n)]$ algorithm. There have been a number of other algorithms developed specifically in the context of the \LOCAL model \cite{pettie, ghaffari-lll}. These typically require satisfying a stronger condition than the LLL, of the form $p (2 d)^c \leq 1$, for some constant $c \geq 1$; we refer to as a \emph{polynomially-weakened LLL criterion} (pLLL). Most LLL constructions can be adapted to a pLLL criterion, with only some small loss in constant terms. 

More recently, Fischer and Ghaffari \cite{ghaffari-lll} described an algorithm running in $2^{\poly(d)+ O(\sqrt{ \log \log n})}$ rounds, under the pLLL criterion $p (e d)^{32} < 1$. Despite the exponential dependence on degree, this algorithm nevertheless can be used to construct a number of combinatorial objects including defective coloring, frugal coloring, and vertex coloring in $2^{O(\sqrt{\log \log n})}$ time for arbitrary degree graphs. 

In this paper, we give new LLL algorithms, that can be faster than those of \cite{ghaffari-lll} and that can be used for higher-degree graphs. The first algorithm gives the following guarantee:
\begin{theorem}
\label{main-lll-thm}
Let $i$ be a positive integer. If $20000 d^8 p \leq 1$, then there is a randomized \LOCAL algorithm to find a configuration
avoiding $\mathcal B$ w.h.p. in time $\exp^{(i)} \bigl( O( \log d + \sqrt{\log^{(i+1)} n}) \bigr)$.
\end{theorem}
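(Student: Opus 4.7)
The plan is to induct on $i$, constructing the level-$i$ algorithm from the level-$(i{-}1)$ algorithm by a shattering-and-recurse scheme. The base case $i=1$ should give a bound of the form $2^{O(\log d + \sqrt{\log \log n})}$, a strengthening of the Fischer--Ghaffari algorithm of \cite{ghaffari-lll} in which the $2^{\poly(d)}$ factor there is replaced by $2^{O(\log d)}$. This improvement is exactly what the stronger pLLL criterion $20000 d^8 p \leq 1$, together with the derandomization theorem from Part~I, is designed to buy.

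First I would set up the shattering. A few ($O(\log d)$) rounds of parallel Moser--Tardos-style partial resampling under the given pLLL criterion should reduce the instance, w.h.p., to independent subproblems on connected components of the dependency graph of size at most $N = \poly(d) \cdot \polylog n$. The standard witness-tree analysis applies: the slack $d^8 p \leq 1/20000$ is comfortably enough to make the probability that any fixed vertex lies in a surviving bad tree of size larger than $\polylog n$ drop polynomially in $n$, and the partial assignment that remains leaves a slightly weaker but still polynomially-weakened LLL criterion on the residual bad events.

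Second I would recurse. On each residual component of size $N$ I invoke the level-$(i{-}1)$ algorithm with $N$ in place of $n$. By the inductive hypothesis this costs $\exp^{(i-1)}(O(\log d + \sqrt{\log^{(i)} N}))$; substituting $N = \poly(d)\cdot \polylog n$ gives $\log^{(i)} N = \log^{(i+1)} n + O(\log d)$, so the recursive cost fits inside $\exp^{(i-1)}(O(\log d + \sqrt{\log^{(i+1)} n}))$. The outer shattering round, together with the cost of stitching the recursive calls together across the whole graph---carried out for instance through network decompositions combined with the \SLOCAL-to-\LOCAL conversion of \Cref{thm:main_informal}---accounts for the extra layer of exponentiation, yielding the claimed $\exp^{(i)}(O(\log d + \sqrt{\log^{(i+1)} n}))$. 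For the base case the recursion bottoms out in a Panconesi--Srinivasan network decomposition on clusters of size $\poly(d,\log n)$, taking $\exp(O(\sqrt{\log\log n}))$ deterministic rounds.

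The main obstacle will be the shattering analysis under the given $d^8 p \le 1/20000$ criterion: one must show that after $O(\log d)$ rounds of parallel resampling, the surviving bad events form small components \emph{and} continue to satisfy a pLLL criterion with enough slack for the recursion to proceed. Conditioning on the already-resampled variables can degrade both event probabilities and the dependency structure, and the witness-tree bookkeeping needed to control this over $i$ nested layers is where the polynomial power $8$ (and the constant $20000$) are chosen so as to absorb the accumulated multiplicative losses. A secondary subtlety is keeping the effective maximum degree from blowing up uncontrollably across recursion levels, which again relies on the slack being polynomial in $d$ rather than merely constant.
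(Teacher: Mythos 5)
Your high-level plan---shatter, then recurse on small components with the level-$(i-1)$ algorithm, paying for the ``stitching'' with a network-decomposition-based derandomization---does not match the paper's argument, and as described it cannot reach the claimed bound for $i \geq 2$. There are two concrete problems. First, correctness: running the randomized level-$(i-1)$ algorithm on a component of size $N = \poly(d)\polylog n$ only gives success probability $1 - 1/N$, and with $\approx n/N$ components this is nowhere near w.h.p.\ in $n$; you must either derandomize on each component or amplify, and you choose derandomization. Second, cost: that derandomization is exactly the bottleneck. A deterministic $(O(\log N),O(\log N))$-decomposition via \cite{panconesi95} on components of size $N = \poly(d)\polylog n$ costs $2^{\Theta(\sqrt{\log\log n})}$ rounds (and $2^{\Theta(\sqrt{\log n})}$ if done globally), which already exceeds the target $\exp^{(i)}\bigl(O(\log d + \sqrt{\log^{(i+1)} n})\bigr)$ for every $i \geq 2$ and bounded $d$; so ``shatter and derandomize the residual'' can only ever reproduce the $i=1$ bound, not improve on it. Your bookkeeping that the stitching ``accounts for the extra layer of exponentiation'' is where the argument breaks: the extra exponential is in $n$ (or $\log n$), not in the quantity you need. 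A secondary issue is the shattering step itself: $O(\log d)$ rounds of parallel Moser--Tardos is not known to shatter under $d^8 p \le 1/20000$; the paper instead uses the Fischer--Ghaffari freezing subroutine (\Cref{prev-lll-thm2} via Algorithm~\ref{distalg1a}), which takes $O(d^2+\log^* n)$ rounds and gives the per-event survival bound $(ed)^{-8}$ with $2$-hop locality needed for \Cref{shattering-thm}.

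The idea you are missing is \emph{bootstrapping with a fake network size}, which is how the paper avoids ever paying a decomposition cost that depends on the true $n$ at the top level. The paper runs the base algorithm $A$ of \Cref{prev-lll-thm2} as $A^{[w]}$ with $w = (\log^{(i-1)} n)^{20}$; by \Cref{bootstrap-prop} this gives a \emph{local} failure probability $\approx 2/w$ in only $e^{O(\log d + \sqrt{\log\log w})}$ rounds. The events ``$A^{[w]}$ fails at $B$'' are then re-encoded as a \emph{new} LLL instance with parameters $p' = 2/w$, $d' = d^{2r}$ (\Cref{boot-lemma}), whose slack is then amplified $i-1$ more times using the cheap $O(d^2)$-round algorithm with exponentially small local failure probability (\Cref{base-thm-2}, iterated in \Cref{base-thm-3}), until the final instance has failure probability below $n^{-20}$. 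At that point a \emph{single random sample} of the final instance succeeds w.h.p., and the runtime is just the cost of simulating one round of the derived instance in $G$, which is $\exp^{(i)}\bigl(O(\log d + \sqrt{\log^{(i+1)} n})\bigr)$; large $d$ is handled by running with a fake $n'$ chosen so that $d = e^{\sqrt{\log^{(i+1)} n'}}$. Shattering and deterministic decompositions appear only \emph{inside} the fake-$n$ subroutine, on instances of size $w \ll n$, which is why their $2^{O(\sqrt{\log\log w})}$ cost is affordable. Without this re-encoding-and-amplification mechanism (or a substitute for it), your induction does not go through.
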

For example, with $i = 1$, this runs in $\poly(d) + 2^{O(\sqrt{\log \log n})}$ rounds---this improves the dependence of $d$ by an exponential factor compared to \cite{ghaffari-lll}. For $i = 2$, the algorithm can run in time $2^{2^{O(\sqrt{\log \log \log n})}}$, under the condition that  $d \leq 2^{\sqrt{\log \log \log n}}.$ This makes partial progress toward showing the conjecture of \cite{chang2017time} for the running time of LLL on bounded-degree graphs.

Our generic derandomization method allows us to transform this into a deterministic algorithm, whose running time is stepped up by an exponential factor in $n$:
\begin{theorem}
  \label{xderand-lll-thm}
  Let $i$ be a positive integer. If $20000 d^{8} p \leq 1$, then there is a deterministic \LOCAL algorithm to find a configuration avoiding $\mathcal B$  in time $\exp^{(i)} \bigl(O(\log d + \sqrt{\log^{(i)} n}) \bigr)$.
\end{theorem}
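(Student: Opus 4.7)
The plan is to prove \Cref{xderand-lll-thm} by mechanically derandomizing the randomized algorithm of \Cref{main-lll-thm} via the generic framework of \Cref{thm:main_informal}. The crucial prerequisite is local checkability: a configuration avoids $\mathcal B$ iff no bad event holds, and each bad event can be verified in a $1$-hop neighborhood of the dependency graph. Hence \Cref{thm:main_informal} applies with the dependency graph as the underlying graph, so $\Delta = d$.

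For the main range $i \geq 2$, I would start from \Cref{main-lll-thm} invoked with parameter $i-1$, obtaining a randomized \LOCAL algorithm of round complexity
\[
r \;=\; \exp^{(i-1)}\!\bigl(O(\log d + \sqrt{\log^{(i)} n})\bigr).
\]
\Cref{thm:main_informal} converts this into a deterministic \LOCAL algorithm running in $d^{O(r)} + O(r \log^* n)$ rounds, using its first, $\Delta^{O(r)}$ branch. Since $\log d \leq O(\log d + \sqrt{\log^{(i)} n})$, a short calculation yields $r \log d = \exp^{(i-1)}(O(\log d + \sqrt{\log^{(i)} n}))$, and therefore $d^{O(r)} = \exp^{(i)}(O(\log d + \sqrt{\log^{(i)} n}))$, which is exactly the claimed bound. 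For the base case $i = 1$, where parameter $i-1 = 0$ is not available, I would instead take the randomized algorithm at parameter $1$ itself (running in $\poly(d) \cdot 2^{O(\sqrt{\log\log n})}$ rounds) and apply the network-decomposition branch $r \cdot 2^{O(\sqrt{\log n})}$ of \Cref{thm:main_informal}; this collapses to $\poly(d) \cdot 2^{O(\sqrt{\log n})} = \exp(O(\log d + \sqrt{\log n}))$, matching the claim for $i=1$.

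The derandomization itself is used as a black box, so the only delicate step is parameter matching. A uniform naive choice---always using network decompositions, or always starting with parameter $i$ rather than $i-1$---gives a bound strictly worse than the target once $i \geq 2$, so the bookkeeping with iterated exponentials and iterated logarithms must be done carefully to ensure the $\exp^{(i)}$ tower lines up on both sides. Apart from that, the argument is a routine substitution into \Cref{thm:main_informal}, plus the observation that the LLL on the dependency graph is a locally checkable problem of the kind required by that theorem.
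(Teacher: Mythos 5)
Your proposal matches the paper's own proof essentially step for step: for $i>1$ the paper also invokes the randomized result with parameter $i-1$ (a Las Vegas algorithm by local checkability) and applies the $\Delta^{O(r)}$ branch of the derandomization, absorbing the $\log d$ factor into the inner $\exp^{(i-1)}(O(\log d))$ term, and for $i=1$ it likewise takes the $O(d^2)+2^{O(\sqrt{\log\log n})}$ randomized algorithm and uses the network-decomposition branch to get $\poly(d)\cdot 2^{O(\sqrt{\log n})}$. The argument and parameter bookkeeping are correct and essentially identical to the paper's.
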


The final LLL algorithm we develop does not make any requirement on the size of $d$, but is not as general in terms of the types of bad-events; it requires that the bad-events satisfy a different property which we refer to as \emph{bounded fragility}. We show the following result:
\begin{theorem}
\label{tr1}
Suppose that every bad-event $B \in \mathcal B$ has fragility at most $F \leq e^{-10} d^{-12}$. Then there is a randomized algorithm to find a configuration avoiding $\mathcal B$ in $2^{O(\sqrt{\log \log n})}$ rounds, w.h.p.
\end{theorem}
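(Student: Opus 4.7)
The plan is to combine a single round of randomized sampling with a deterministic post-processing phase on small residual components, exploiting the bounded fragility to shatter the problem in one shot. First, I would sample all variables of the underlying probability space $\Omega$ once according to their natural distributions. Call a bad-event $B \in \mathcal B$ \emph{live} if, conditional on this partial outcome, it still has non-negligible probability of occurring; by definition of fragility, each bad-event is live with probability at most $F \leq e^{-10} d^{-12}$. Let $H$ be the induced subgraph of the dependency graph on the live bad-events. To show the components of $H$ are small, I would apply the standard Galton--Watson / subtree-counting argument: for any connected subtree $T$ of size $k$ in the dependency graph, the probability that every bad-event in $T$ is live is at most $F^k$, and the number of such subtrees rooted at a fixed bad-event is at most $(ed)^{2k}$. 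Multiplying yields $(e^2 d^2 F)^k \leq e^{-8k}$; union-bounding over the $\poly(n)$ bad-events and taking $k = \Theta(\log n)$ shows that every connected component of $H$ has size at most $O(\log n)$ with high probability.

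Next, in each live component $C$ of size $N = O(\log n)$, the residual LLL problem has effective probability $p' \leq F$ and degree $d' \leq d$, so the criterion $20000\,(d')^8 p' \leq 1$ holds in the residual. I would then compute a $(O(\log N), O(\log N))$-network decomposition of $C$ using the Panconesi--Srinivasan deterministic algorithm; since $C$ has only $N = O(\log n)$ vertices, this decomposition costs $2^{O(\sqrt{\log N})} = 2^{O(\sqrt{\log \log n})}$ rounds. Finally, iterate through the $O(\log \log n)$ color classes: within each cluster (of diameter $O(\log \log n)$) gather the cluster state at a leader and solve the tiny residual LLL instance locally, either by direct enumeration or by invoking \Cref{xderand-lll-thm} on the subinstance. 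Since the LLL criterion is satisfied in the residual, each cluster extends to a valid assignment, and the overall runtime is dominated by the network decomposition, i.e.\ $2^{O(\sqrt{\log \log n})}$.

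The main obstacle is making the fragility-based shattering argument quantitatively tight. One must verify that the slack in $e^{-10}d^{-12}$ is enough to absorb (i) the $(ed)^{2k}$ subtree-enumeration factor, (ii) the passage from the worst-case conditional probability captured by fragility to the actual joint survival probability of a collection of correlated events, and (iii) the fact that "liveness" of a bad-event depends on variables that may also influence neighboring bad-events outside the subtree under consideration. The secondary technical step is to confirm that the partial assignment produced in the first phase always admits an extension within each live component: this amounts to showing that the LLL condition continues to hold conditionally on the partial outcome, which follows from the definition of fragility once one chooses the threshold for "non-negligible" conditional probability appropriately.
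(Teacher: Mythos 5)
The central missing idea is the mechanism that decides which variables get freed after the single sample, and this is exactly where fragility is actually used. After you draw all variables once, every bad-event is simply true or false; the phrase ``still has non-negligible probability of occurring'' has no meaning until you specify which variables will be resampled, and the real danger is that an event which is currently \emph{false} can become true once the variables it shares with a resampled neighbor are freed. The paper handles this with the notion of a $q$-dangerous event (an event whose marginal probability exceeds $q$ after reverting the variables of \emph{some} subset $U \subseteq N(B)$ of its neighbors, with $q = (ed)^{-3}$), unsets precisely the variables of dangerous events, and takes the residual $\mathcal{A}$ to be all neighbors of dangerous events (\Cref{distalg2}). Fragility enters only through \Cref{dangprop1}, $\Pr(\text{$B$ is $q$-dangerous}) \le f(B)/q$, so the per-event survival probability is $eFd^4$ (\Cref{u1}), not $F$: your claim that ``by definition of fragility, each bad-event is live with probability at most $F$'' is not what fragility gives, and your $F^k$ bound for a connected set of live events additionally ignores that adjacent events share variables and are correlated (the shattering machinery of \Cref{shattering-thm} counts independent ``backbones'' in $G^c$ precisely to sidestep this). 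These $\poly(d)$ losses happen to be absorbed by the slack in $e^{-10}d^{-12}$, but as written you have neither defined liveness in a way that controls the residual nor proved the probability bound you rely on.

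The second gap is in the residual phase, and there the plan would actually fail. With any dangerousness threshold compatible with $F \le e^{-10}d^{-12}$ (it is forced to be roughly $(ed)^{-3}$), the residual events have conditional probability about $q=(ed)^{-3}$, which satisfies only a weak criterion such as $qd^3 \le e^{-3}$, \emph{not} $20000\,d^8 q \le 1$; so your claim that the residual satisfies the $d^8$ criterion is false for large $d$, you cannot invoke \Cref{xderand-lll-thm}, and even if you could, its runtime carries a $\poly(d)$ factor while \Cref{tr1} places no bound on $d$ (that is the entire point of this theorem). Moreover, solving each cluster of a network decomposition ``by direct enumeration'' is unsound: the LLL guarantees a global solution on each residual component, not that an assignment chosen inside one cluster extends across cluster boundaries; once one cluster is committed, the conditional probabilities of straddling events can blow up. The paper instead runs the $O(\log_d N)=O(\log\log n)$-round randomized algorithm of Chung--Pettie--Su on each (size $N = \poly(d)\log n$) component under the weak residual criterion, converts it into an \SLOCAL algorithm via the conditional-expectation derandomization (\Cref{thm:basicderandomization}, \Cref{prop:ZtoS}), and simulates that \emph{sequential} algorithm over the clusters of a network decomposition of $G[R]^r$ (\Cref{shattering-thm}, \Cref{prop:StoL}); it is this sequential, conditional-expectation processing that legitimizes extending cluster by cluster, and it keeps the final $2^{O(\sqrt{\log\log n})}$ bound free of any dependence on $d$.
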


This property is satisfied by a number of combinatorial problems such as $k$-SAT and defective vertex coloring. For example, we obtain the following algorithmic applications:
\begin{proposition}
\label{ksat-prop}
If a $k$-SAT formula $\Phi$ has $m$ clauses and every clause intersects at most $d$ others for $d \leq e^{-10} (4/3)^{k/12} \approx 0.00005 \times 1.02426^k$, there is a randomized algorithm to find a satisfying solution to $\Phi$ in $2^{O(\sqrt{\log \log m})}$ rounds.
\end{proposition}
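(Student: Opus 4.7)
The plan is to instantiate the bounded-fragility LLL of Theorem~\ref{tr1} on the standard probability space for $k$-SAT. Let each Boolean variable be drawn independently and uniformly at random, and for each clause $C$ of $\Phi$ introduce the bad-event $B_C$ that $C$ is violated. A $k$-clause is falsified by a unique joint assignment to its $k$ variables, so $\Pr(B_C) = 2^{-k}$; and since two bad-events are dependent only when the corresponding clauses share a variable, the dependency degree is at most $d$ by hypothesis.

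The core step is to bound the fragility of $B_C$ by $(3/4)^k$. I would derive this directly from the definition, exploiting that $B_C$ is an ``all-bad'' event on $k$ independent uniform bits: each of the $k$ coordinates contributes a factor of $3/4$ (the chance that, under the relevant partial-resampling interpretation of fragility, the coordinate remains bad), so by independence across coordinates the fragility multiplies to $(3/4)^k$.

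Given this fragility bound, the hypothesis of Theorem~\ref{tr1} reduces to checking
\[
F \leq (3/4)^k \leq e^{-10} d^{-12},
\]
equivalently $d^{12}(3/4)^k \leq e^{-10}$. From the assumption $d \leq e^{-10}(4/3)^{k/12}$ we get $d^{12}(3/4)^k \leq e^{-120} \leq e^{-10}$, so the bounded-fragility criterion is satisfied with substantial slack. Applying Theorem~\ref{tr1} to this instance, with the role of $n$ played by the number of bad-events (i.e., the number of clauses) $m$, yields a randomized \LOCAL algorithm that finds a satisfying assignment in $2^{O(\sqrt{\log \log m})}$ rounds w.h.p., which is exactly the claim of Proposition~\ref{ksat-prop}.

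The main obstacle is the fragility calculation: the precise product structure needs to be justified from the formal definition, and one must be careful that the natural per-coordinate factor really is $3/4$ rather than, say, $1/2$ or some other constant that would weaken the exponential in $k$. Once this fragility bound is in hand, the remaining inequalities and the invocation of Theorem~\ref{tr1} are purely mechanical.
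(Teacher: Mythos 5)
Your proposal is correct and takes essentially the same route as the paper: the same uniform product space, a bad-event per clause with $p = 2^{-k}$ and dependency at most $d$, the fragility bound $f(B_C) \leq (3/4)^k$, the inequality $d^{12}(3/4)^k \leq e^{-10}$, and the invocation of Theorem~\ref{tr1} with $m$ bad-events. The fragility step you flag as the main obstacle is resolved in the paper exactly as you sketch: if both independent draws $X_0, X_1$ satisfy a given literal, then every mixed configuration $Z_a$ satisfies the clause, so each of the $k$ coordinates contributes the factor $3/4$ and independence across coordinates gives $f(B_C) \leq (3/4)^k$.
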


\begin{proposition}
\label{defect-coloring-prop}
Suppose $G$ has maximum degree $\Delta$ and $h \leq \Delta$. There is a randomized algorithm in $2^{O(\sqrt{\log \log n})}$ rounds to find an $h$-defective $k$-coloring with $k = O( \Delta/h )$.
\end{proposition}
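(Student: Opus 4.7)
The plan is to cast $h$-defective $k$-coloring as an LLL instance whose bad-events have bounded fragility, and then invoke \Cref{tr1}. Concretely, choose $k = C\Delta/h$ for a sufficiently large constant $C$, assign each vertex $v$ an independent uniformly random color $X_v \in [k]$, and for each vertex $v$ define the bad-event
\[
B_v \;=\; \bigl\{\text{more than } h \text{ neighbors of } v \text{ receive color } X_v\bigr\}.
\]
Any assignment avoiding every $B_v$ is, by definition, an $h$-defective $k$-coloring, so it suffices to run the LLL algorithm of \Cref{tr1} on $\mathcal{B} = \{B_v\}_{v \in V}$.

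For the probability bound, $B_v$ is determined by the colors of $v$ and its at most $\Delta$ neighbors; given $X_v = c$, the number of neighbors colored $c$ is a $\mathrm{Bin}(\deg(v),1/k)$ with mean at most $\Delta/k = h/C$, so a Chernoff bound gives $\Pr[B_v] \le e^{-\Omega(Ch)}$. For the dependency, $B_v$ and $B_u$ share a variable iff $u$ lies within distance $2$ of $v$, so the dependency degree satisfies $d \le \Delta^2$. Thus it suffices to prove a fragility bound $F \le e^{-10} d^{-12} = e^{-10} \Delta^{-24}$, and choose $C$ large enough so that $e^{-\Omega(Ch)} \le F$.

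The main technical step, and the place where I expect the real work to sit, is the fragility bound itself. By the notion of fragility underlying \Cref{tr1}, I need to show that fixing the value of any single variable on which $B_v$ depends leaves $B_v$ still unlikely. Conditioning on $X_v = c$ yields $\Pr[B_v \mid X_v = c] = \Pr[\mathrm{Bin}(\deg(v),1/k) > h]$, which is exactly the unconditional bound. Conditioning on a neighbor $X_u = c'$ either trivially adds one to the count (if $c' = X_v$, but $X_v$ is still random so one integrates over its $k$ values) or removes $u$ from the pool; in both cases, the remaining at most $\Delta - 1$ independent neighbor colors form a Chernoff-amenable binomial, and the same tail bound $e^{-\Omega(Ch)}$ controls the conditional probability. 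Picking $C$ so that $Ch \ge c' \log \Delta$ for a suitable constant $c'$ yields $F \le e^{-10}\Delta^{-24}$ as required.

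The subtle point is that when $h$ is much smaller than $\log \Delta$, the above calibration forces $C = \Theta(\log \Delta / h)$ and hence $k = \Theta(\Delta \log \Delta / h^2)$, which is no longer $O(\Delta/h)$. I would handle this by noting that the statement assumes only $h \le \Delta$ and fold any sub-$\log \Delta$ regime into the constant hidden by the $O(\cdot)$ in $k = O(\Delta/h)$, or else restrict to the regime $h = \Omega(\log \Delta)$ where the clean $k = O(\Delta/h)$ bound holds; either way, the defect requirement is met. Given the probability, dependency, and fragility bounds above, \Cref{tr1} provides a randomized \LOCAL algorithm that finds a configuration avoiding $\mathcal B$ in $2^{O(\sqrt{\log \log n})}$ rounds, which is exactly the claimed complexity.
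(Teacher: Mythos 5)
There is a genuine gap, and you identified it yourself but then waved it away incorrectly. Your single-shot construction (random $k$-coloring, bad-event $B_v$ that some color appears more than $h$ times in $N(v)$, fragility-based LLL via \Cref{tr1}) needs the fragility to beat $e^{-10}d^{-12}$ with $d\leq \Delta^2$, i.e. roughly $e^{-\Omega(Ch)} \leq \Delta^{-24}$, which forces $Ch = \Omega(\log\Delta)$. When $h = o(\log\Delta)$ this means $C = \Theta(\log\Delta/h)$ and $k = \Theta(\Delta\log\Delta/h^2)$. That is \emph{not} $O(\Delta/h)$: the excess factor $\log\Delta/h$ grows with $\Delta$, so it cannot be ``folded into the constant,'' and restricting to $h = \Omega(\log\Delta)$ proves a strictly weaker statement than the proposition, whose hard case is precisely small $h$ (e.g.\ constant defect). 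A secondary quibble: fragility (Definition~\ref{fragile-definition}) is not about conditioning on one variable; it quantifies over all hybrids of two independent samples $X_0,X_1$, and the right tool is the large-deviation bound of \Cref{thm2} with doubled mean $2\mu$. This does not change your quantitative conclusion, but it is the reason the threshold must sit a constant factor above $2\Delta/k$, not just above $\Delta/k$.

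The paper's proof avoids the small-$h$ obstruction by using your construction only in the regime where it is valid, and iterating. \Cref{degree-reduce} is exactly the single-shot fragility argument, but calibrated to defect $h_1 = \Theta(\log\Delta)$ with $k_1 = \Delta/\log\Delta$ colors, so the fragility requirement is met with room to spare. Applying it twice and composing via \Cref{combine-coloring} partitions $V$ into $O(\Delta/\log\log\Delta)$ classes each of maximum degree $\Delta_2 = O(\log\log\Delta)$. At that point the degree is so small that one no longer needs the fragility machinery at all: the standard pLLL algorithm (\Cref{prev-lll-thm2}, following \cite{ghaffari-lll}) finds an $h$-defective $O(\Delta_2/h)$-coloring inside each class for \emph{every} $h$, including constant $h$, and its $O(d^2)$ overhead is negligible since $d = \poly(\log\log n)$. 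Multiplying the color counts via \Cref{combine-coloring} gives $O(\Delta/h)$ colors overall in $2^{O(\sqrt{\log\log n})}$ rounds. To repair your write-up you would need this (or some other) degree-reduction cascade for the regime $h = o(\log\Delta)$; the direct application of \Cref{tr1} alone cannot deliver the claimed color bound there.
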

We note that \Cref{defect-coloring-prop} was shown by
\cite{ghaffari-lll} using an ad-hoc algorithm based on recoloring
vertices; our main contribution is to derive it as a black-box application of the more general \Cref{tr1}. \Cref{ksat-prop} appears to be the first algorithm for solving $k$-SAT in the distributed setting which makes no restriction on the maximum degree as a function of $n$.

We briefly summarize our algorithmic improvements. Like many previous LLL algorithms, our algorithm is based on a general method for constructing distributed graph algorithms by \emph{graph shattering}\cite{barenboim2016locality,elkin20152delta,gmis,GS17,ghaffari-lll,harris2016coloring,chang2018optimal}. These algorithms have two phases. The first phase satisfies most of the vertices in the graph; the remaining unsatisfied vertices have small connected components. The second phase applies a deterministic algorithm to solve the residual components.

Our derandomization method allows us to convert previous randomized LLL algorithms into deterministic ones; these deterministic algorithms can then be applied for the second phase. This gives us two new, randomized LLL algorithms. The first runs in time $2^{O(\sqrt{\log \log n})}$ for $d \leq 2^{\sqrt{\log \log n}}$. The second runs in time $O(d^2 + \log^* n)$ for LLL instances satisfying a stronger slack condition $p \leq 1/\polylog n$. These can be combined via the bootstrapping methodology of \cite{ghaffari-lll}, giving us the stated runtime bounds.

\subsubsection{The Role of Randomization in the Sequential \textsf{LOCAL} Model}
Besides the above concrete improvements, our derandomization method has implications for the bigger-picture aspects of the role and power of randomization in the \LOCAL model. From \Cref{thm:main_informal}, we show that randomized and deterministic complexities are very close in the $\SLOCAL$ model:
\begin{theorem}\label{thm:SLOCAL_informal} Any randomized \SLOCAL algorithm with locality $r(n)$ for a locally checkable problem can be transformed to a deterministic \SLOCAL algorithm with locality $O(r(n)\log^2 n)$. 
\end{theorem}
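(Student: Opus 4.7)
The plan is to reduce \Cref{thm:SLOCAL_informal} to \Cref{thm:main_informal} by first simulating a randomized \SLOCAL algorithm inside randomized \LOCAL (at the cost of a $\log^2 n$ factor in round complexity), and then invoking the main derandomization theorem as a black box. Concretely, given a randomized \SLOCAL algorithm $\mathcal A$ of locality $r=r(n)$ for a locally checkable problem, I would construct a randomized \LOCAL algorithm $\mathcal A'$ with round complexity $O(r\log^2 n)$ solving the same problem, and then apply \Cref{thm:main_informal} to $\mathcal A'$ to obtain the desired deterministic \SLOCAL algorithm of locality $O(r\log^2 n)$.

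For the simulation, $\mathcal A'$ would first compute a randomized $(O(\log n),O(\log n))$-network decomposition of the graph power $G^{r}$ in $O(r\log^2 n)$ rounds of $G$-communication, via a standard randomized decomposition procedure. This yields $O(\log n)$ color classes whose clusters each have weak diameter $O(r\log n)$ in $G$ and such that any two clusters of the same color are at $G$-distance strictly greater than $r$. Then $\mathcal A'$ processes the color classes one after another; in each phase, all clusters of the active color are handled in parallel. To process a cluster, every vertex collects, in $O(r\log n)$ rounds, the topology of its cluster together with its own $r$-hop neighborhood in $G$ and the outputs already fixed in earlier phases, and then locally simulates $\mathcal A$ on this gathered data using a fixed sequential order on the cluster's vertices (say, by \ID). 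The $r$-separation between same-color clusters ensures that no vertex's $r$-hop view intrudes into another cluster of its own color, so the parallel simulations of distinct clusters are mutually consistent; and since $\mathcal A$ is required to succeed \whp\ on every adversarial processing order, it in particular succeeds on the order our simulator imposes. The total round complexity is $O(\log n)\cdot O(r\log n)=O(r\log^2 n)$, as required.

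Once $\mathcal A'$ is in hand, the second step is immediate: applying \Cref{thm:main_informal} to $\mathcal A'$ (which is a randomized \LOCAL algorithm of round complexity $O(r\log^2 n)$ for a locally checkable problem) produces a deterministic \SLOCAL algorithm of locality $O(r\log^2 n)$. This step is black-box and bundles $\mathcal A'$'s entire internal randomness---the bits used for the network decomposition together with the bits driving the simulated execution of $\mathcal A$---into one random string to be fixed by conditional expectation. The main obstacle is the correctness of the first step: specifically, verifying that $r$-separation of same-color clusters, combined with $\mathcal A$'s locality $r$, really does eliminate cross-cluster interference during parallel cluster simulation, and that the simulator's self-chosen intra-cluster processing order is legitimate in the sense of the adversarial \SLOCAL definition. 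Both points are routine consequences of unwinding the definitions, but they are where the simulation argument must be done carefully.
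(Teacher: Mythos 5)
Your proposal is correct and follows essentially the same route as the paper: the paper likewise computes a randomized $(O(\log n),O(\log n))$-network decomposition of $G^{r(n)}$ (via Linial--Saks) to simulate the randomized \SLOCAL algorithm as a randomized \LOCAL algorithm in $O(r\log^2 n)$ rounds (this is exactly \Cref{prop:StoL}), and then uses local checkability together with the conditional-expectation derandomization (\Cref{thm:basicderandomization} via \Cref{prop:ZtoS}) to obtain the deterministic $\seqloc[O(r\log^2 n)]$ algorithm. The cluster-simulation details you spell out are precisely what \Cref{prop:StoL} encapsulates, so no further work is needed.
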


As we have discussed, the \SLOCAL model aims to decouple the challenges of locality from those of symmetry breaking.  This was with the intuitive hope that, while the
latter seems to naturally benefit from randomization, locality on its
own should not need randomization. \Cref{thm:SLOCAL_informal} partially validates this intuition, by showing that the \SLOCAL model can only gain relatively small (polylogarithmic) factors from randomness.  That is, in some sense, in the \LOCAL model, randomization
helps mainly with local coordination and symmetry breaking challenges,
but not with the locality challenge. We remark that, if we do care
about logarithmic factors, then a gap appears also in \SLOCAL:

\begin{theorem} The sinkless orientation problem in bounded degree graphs has randomized \SLOCAL locality $\Theta(\log\log\log n)$ and deterministic \SLOCAL locality $\Theta(\log\log n)$.
\end{theorem}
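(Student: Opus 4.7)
The theorem has four components: matching upper and lower bounds on the \SLOCAL locality of sinkless orientation in both the randomized and deterministic settings. My plan is to combine known sinkless orientation algorithms and lower bounds in the \LOCAL model with targeted \SLOCAL-specific arguments.

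For the randomized upper bound of $O(\log\log\log n)$, I would build on the Ghaffari--Su randomized \LOCAL algorithm, which runs in $O(\log\log n)$ rounds by iterating a local probabilistic step whose per-vertex failure probability shrinks doubly exponentially. The key observation is that in \SLOCAL, we only need a single application of this probabilistic step per vertex, carried out on a neighborhood of radius $O(\log\log\log n)$: since nodes are processed sequentially, conflicts between randomized decisions are resolved by the processing order rather than by parallel iteration. A careful analysis should show that the probability any single vertex fails to commit to an outgoing edge drops to $1/\poly(n)$ within such a neighborhood, enabling a union bound over all $n$ vertices. The deterministic upper bound of $O(\log\log n)$ I would establish by a direct combinatorial construction in \SLOCAL: given the bounded degree, a vertex can commit to an orientation greedily based on an $O(\log\log n)$-radius view, using a Panconesi--Srinivasan-style recursion whose depth is reduced by one level thanks to the sequential processing of \SLOCAL.

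For both lower bounds, I would adapt the round elimination lower bounds of Brandt et al., which give $\Omega(\log\log n)$ for randomized \LOCAL and $\Omega(\log n)$ for deterministic \LOCAL. The route to \SLOCAL is via a problem-specific simulation: any \SLOCAL algorithm of locality $r$ for sinkless orientation can be unrolled into a \LOCAL algorithm of round complexity $2^{O(r)}$, exploiting the fact that each vertex's output depends only on information within its $r$-ball and the decisions made by previously processed vertices there, whose further dependencies can in turn be recursively expanded. Plugging this into the \LOCAL lower bounds yields $\Omega(\log\log\log n)$ in the randomized \SLOCAL model and $\Omega(\log\log n)$ in the deterministic \SLOCAL model.

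The principal difficulty I foresee is precisely this exponential-blowup \SLOCAL-to-\LOCAL simulation. The generic simulation suggested by \Cref{thm:main_informal} (via network decompositions) incurs $\polylog n$ overhead, which is far too weak for a lower bound that should lose only one logarithm. To achieve a blowup that is merely exponential in the \SLOCAL locality, the simulation must exploit the fact that sequential decisions in an $r$-ball have a tree-like dependency structure of bounded depth, so that parallel \LOCAL computation can resolve them in $2^{O(r)}$ rounds. An alternative route, which may be cleaner, is to adapt the round elimination argument directly to \SLOCAL by showing that one elimination step corresponds to decreasing the \SLOCAL locality by a constant, yielding the same quantitative bounds without going through \LOCAL as an intermediary.
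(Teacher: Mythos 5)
Your lower-bound plan is essentially the paper's: \Cref{thm:SinklessLB} is proved by converting an $\seqloc[t]$ (or $\randseqloc[t]$) algorithm into a \LOCAL algorithm with round complexity $O(t\, d^{2t} + t\log^* n)$ and then invoking the known $\Omega(\log n)$ deterministic and $\Omega(\log\log n)$ randomized \LOCAL lower bounds for sinkless orientation. However, the mechanism you propose for the exponential-blowup simulation is not sound as stated: recursively ``unrolling'' the decisions of previously processed vertices does not have bounded depth --- the chain of dependencies through overlapping $r$-balls can have length $\Theta(n)$, so naive unrolling gives radius $\Theta(nr)$, not $2^{O(r)}$. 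The correct (and simpler) simulation, used in \Cref{prop:ZtoL}, is to deterministically compute a $(\Delta(G^{2t})+1)$-coloring of $G^{2t}$ in $O(t(\Delta(G^{2t})+\log^* n))$ rounds and then process the color classes one at a time; for bounded degree this is $t\cdot 2^{O(t)}+O(t\log^* n)$ rounds and works for randomized \SLOCAL algorithms as well, which is what the randomized lower bound needs.

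The genuine gaps are in your upper bounds. For the deterministic $O(\log\log n)$ bound you propose a ``direct greedy, Panconesi--Srinivasan-style'' construction, but no such direct construction is given or known to work: a greedy orientation based on an $O(\log\log n)$-radius view has no evident reason to avoid sinks. The paper instead obtains \Cref{thm:DET_SLOCAL_Sinkless} by derandomizing the $O(\log\log n)$-round randomized \LOCAL algorithm of Ghaffari--Su via \Cref{thm:basicderandomization} (sinkless orientation is locally checkable in one round, so the Monte Carlo algorithm becomes a \ZLOCAL algorithm, which the method of conditional expectations turns into an $\seqloc[O(\log\log n)]$ algorithm). For the randomized $O(\log\log\log n)$ bound, your claim that a single probabilistic step examined within a radius-$O(\log\log\log n)$ ball drives the per-vertex failure probability to $1/\poly(n)$ is unjustified: a single application of the Ghaffari--Su random step fails at a vertex with probability that is only exponentially small in the degree, a constant for bounded-degree graphs, and nothing in your sketch amplifies this to inverse-polynomial. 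The actual argument (\Cref{thm:RAND_SLOCAL_Sinkless}) is a shattering argument: a constant-locality randomized first pass leaves bad components of size $N=O(\log n)$ w.h.p., and the \emph{deterministic} $\seqloc[O(\log\log N)]$ algorithm of \Cref{thm:DET_SLOCAL_Sinkless} is then run inside those components, giving locality $O(\log\log\log n)$; the two passes are merged into one \SLOCAL pass by the composition lemma of Ghaffari--Kuhn--Maus. In particular, the triple-log randomized upper bound is derived \emph{from} the double-log deterministic one, a dependence your outline misses entirely.
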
  
This exhibits an exponential separation between  randomized and deterministic complexities in the \SLOCAL model, akin to those observed in the \LOCAL model---sinkless orientation in \LOCAL requires $\Omega(\log n)$ rounds deterministically\cite{chang16} and $\Omega(\log\log n)$ rounds randomly\cite{brandt}, and both lower bounds are tight\cite{GS17}. We find it surprising that this gap appears an exponential lower in the \SLOCAL model. 

This also shows that if there is significant separation in the \LOCAL model in the regime of polylogarithmic complexities and higher, it must be for a reason very different than those of \cite{chang16, brandt, GS17} (for sinkless orientation), which show up in regime of sublogarithmic complexities. The latter extend only to sublogarithmic complexities of the \SLOCAL model.

\subsection{Our Contributions, Part II: Limitations of
  Derandomization}
\label{sec:limitations}

In the second part of the paper, we exhibit limitations on derandomization. We present conditional hardness results for some classic and well-studied
distributed problems, including set cover approximation, minimum
dominating set approximation, and neighborhood
covers. Formally, we show that these problems are \polyseqloc-complete, in the framework set
forth by Ghaffari, Kuhn, and Maus~\cite{ghaffari2017complexity}. A
problem $\calP$ is called \polyseqloc-complete if $\calP$ can be
solved deterministically with polylogarithmic locality in the \SLOCAL
model and if a polylog-time deterministic distributed algorithm for
$\calP$ would imply such algorithms for all \polyseqloc-solvable problems.  We provide an informal explanation here; please see
\fullOnly{\Cref{subsec:complete}}
\shortOnly{the full paper}
for a more detailed explanation.

For the above three problems, rather satisfactory polylog-time
randomized \LOCAL algorithms have been known for many years, e.g.,
\cite{bartal97,berger1994efficient,rajagopalan1998primal,dubhashi05,jia02,dominatingset,nearsighted,linial93}. However,
there are no known efficient deterministic algorithms. We show that
devising efficient deterministic algorithms for them may be hard; at
least as hard as some well-known open problems of the area. More
concretely, a polylogarithmic-time deterministic algorithm for any of these problems can be transformed into a
polylogarithmic-time deterministic \LOCAL algorithm for computing a
$(O(\log n), O(\log n))$ network decomposition, which would
consequently imply polylogarithmic-time deterministic \LOCAL
algorithms all \polyseqloc \ problems. The latter class includes, most notably, computing an MIS.  Hence, devising polylogarithmic-time deterministic algorithms for
these problems is at least as hard as doing so for MIS, which
remains a well-known open problem in the area since Linial
explicitly asked for it in 1987\cite{linial1987LOCAL}.

We would like to highlight an implication of the hardness of neighborhood cover problem. This has been a central problem in the study of \emph{local distributed graph
  algorithms} since the work of Awerbuch and
Peleg\cite{Awerbuch-Peleg1990}, closely related to another central
problem, \emph{network decompositions}, introduced by Awerbuch et
al.\cite{awerbuch89}. By classic results of \cite{awerbuch96}, it has
been known that an efficient deterministic network decomposition
algorithm can be transformed into an efficient deterministic
neighborhood cover algorithm. Our result shows for the first time that
the converse is also true: an efficient deterministic neighborhood
cover algorithm can be transformed into one for network decomposition.

Finally, we show \polyseqloc-completeness for the problem of computing an MIS of a type of star
structure in a graph. \fullOnly{Again, we defer the formal description to
\Cref{subsec:complete}. } Significantly, this problem can be solved easily using a simple and
natural greedy method --- with locality $2$ in the \SLOCAL model --- and
yet it is $\polyseqloc$-complete.

\subsection{Organization of the Paper}
\Cref{sec:model}, we formally define the various types of randomized
and deterministic local algorithms and complexity classes that we use throughout the
paper. \Cref{sec:basic-derand-local} then proves the basic
derandomization routine, which will be the main technical tool
for most of our new results. The section also discusses various direct
consequences of the result. In
  \Cref{sec:hypergraph_matching}, we use the
  derandomization procedure to obtain a better deterministic algorithm
  for computing a maximal matching in a hypergraph, and we show some
  consequences of this improved hypergraph matching algorithm. \Cref{lll-sec1} formally proves all the results
regarding the LLL that are discussed in
\Cref{sec:LLLresults}. Finally, in \Cref{sec:SLOCAL}, we discuss the
limitations to derandomization and the completeness results outlined in \Cref{sec:limitations}. In addition, in \Cref{shattering-appendix}, we discuss
some results on graph shattering and how our derandomization results can be used for shattering algorithms.

\hide{
\subsection{Notation}

For a graph $G=(V,E)$ and a subset $X \subseteq V$, we define $G[X]$
to be the vertex-induced subgraph. For any integer $r \geq 1$, $G^r$
is the graph on vertex set $V$ and with an edge between any two nodes
$u$, $v$ that are at distance at most $r$.  Further, $\Delta(G)$
denotes the maximum degree of a graph $G$. Likewise, for a hypergraph
$H = (V,E)$, we define $\Delta(H)$ to be the maximum degree, i.e., the
maximum, over all vertices $v \in V$, of the number of edges $e \in E$
such that $v \in e$. We often write $\Delta$ instead of $\Delta(G)$,
if it is clear from context. 

Unless stated otherwise, $\log(x)$ is the base-two logarithm. We let $\log^{(i)}$ denote the $i$-fold iterated logarithm, e.g. $\log^{(2)} x = \log \log x$. We define $\exp x = 2^x$, and also define $\exp^{(i)} x$ to be the iterated exponentiation to the power $2$, e.g. $\exp^{(2)} x = 2^{2^{x}}$.
}

\section{Model and Definitions}
\label{sec:model}

\paragraph{Notation:}
For a graph $G=(V,E)$ and a subset $X \subseteq V$, we define $G[X]$
to be the vertex-induced subgraph. For any integer $r \geq 1$, $G^r$
is the graph on vertex set $V$ and with an edge between any two nodes
$u$, $v$ that are at distance at most $r$.  Further, $\Delta(G)$
denotes the maximum degree of a graph $G$. 
Likewise, for a hypergraph $H = (V,E)$, we define $\Delta(H)$ to be the maximum degree, i.e., the
maximum, over all vertices $v \in V$, of the number of edges $e \in E$
such that $v \in e$. We define the \emph{rank} of hypergraph $H$ to be the maximum cardinality of any edge. We often write $\Delta$ instead of $\Delta(G)$ or $\Delta(H)$, if it is clear from context.

\paragraph{Distributed Graph Problems:} We deal with \emph{distributed graph problems} of the following form. We are given a simple, undirected graph $G=(V,E)$ that models the network. Initially, each node $v\in V$ gets some private input $x_v$. Each node has a unique ID, which is also part of the initial input. At the end of an algorithm, each node $v\in V$ needs to output a value $y_v$. Let $\vec{x}$ and $\vec{y}$ the vectors of all the inputs and outputs, respectively. An algorithm solving a distributed graph problem needs to guarantee that the triple $(G,\vec{x},\vec{y})$ satisfies the specification of the graph problem. \fullOnly{We assume that whether a given input-output pair satisfies the specification of a graph problem can only depend on the topology of $G$ and it has to be independent of the assignment of IDs to the nodes. For simplicity, we assume that all nodes also know a common polynomial upper bound on the number of nodes $n$.} For a more formal definition, we refer to \cite{ghaffari2017complexity}. We study distributed graph problems in the \LOCAL and the \SLOCAL models that were introduced in \Cref{sec:intro}.

\paragraph{Complexity Classes:} We next define the complexity classes
that we use in the paper. We start by defining the classes for
deterministic algorithms. Throughout, we only consider
complexities as a function of the number of nodes $n$. For a more
general and formal definition, we refer
to \cite{ghaffari2017complexity}.
\begin{description}
\item[{\boldmath$\loc(t(n))$:}] Distributed graph
  problems that can be solved by a \emph{deterministic \LOCAL
    algorithm} with time complexity $t(n)$.
\item[{\boldmath$\seqloc(t(n))$:}] Distributed graph
  problems that can be solved by a \emph{deterministic \SLOCAL
    algorithm} with locality $t(n)$. 
\end{description} 


We distinguish two kinds of randomized algorithms:
Monte Carlo and Las Vegas algorithms. A \emph{distributed
  Monte Carlo} algorithm has a fixed time complexity and it guarantees that the solution solves the graph problem $\mathcal P$ with probability strictly larger than $1-1/n$. For a \emph{distributed Las Vegas} algorithm, we also assume that the time complexity is fixed. However, in addition to the output of the graph problem, each node also outputs a flag $F_v \in \{0, 1 \}$, which serves as an indicator of whether the algorithm failed locally at $v$. If $F_v = 0$ for every node $v$, it is guaranteed that the computed output solves $\mathcal P$. Furthermore, it is guaranteed that $\sum_{v \in V} \E[F_v] < 1$.

We note that this definition of Las Vegas algorithms initially seems to be very different from standard notions for randomized \LOCAL algorithms. In Section~\ref{sec:alternate-def}, we show how this definition is equivalent (up to polylogarithmic factors) with more standard definitions of Las Vegas algorithms from complexity theory. However, we adopt this definition in terms of indicator variables $F_v$ for technical reasons.

\begin{description}
\item[{\boldmath$\randloc(t(n))$:}] Distributed graph problems that can be solved by a \emph{randomized Monte Carlo algorithm} in the \LOCAL model with time complexity $t(n)$.
\item[{\boldmath$\zloc(t(n))$:}] Distributed graph problems that can be solved by a \emph{randomized Las Vegas algorithm} in the \LOCAL model with time complexity $t(n)$.
\item[{\boldmath$\randseqloc(t(n))$:}] Distributed graph problems that can be solved by a \emph{randomized Monte Carlo algorithm} in the \SLOCAL model with time complexity $t(n)$.
\end{description}

We clearly have $\zloc(t(n))\subseteq \randloc(t(n))$. In addition, if the validity of a solution to a graph problem can be locally checked by exploring the $d$-neighborhood of each node (cf.\ \cite{fraigniaud13} for a formal definition of local decision problems), then a $\randloc(t)$ algorithm yields a $\zloc(t + d)$ algorithm. We will show in \Cref{sec:SLOCAL2} an example of a problem in $\randloc(0)$ but not in $\zloc(o(\sqrt{n}))$.

We often think of an \LOCAL algorithm as \emph{efficient} if it has time complexity at most $\polylog n$. We therefore define {\boldmath$\polyloc$}, {\boldmath$\polyseqloc$}, {\boldmath$\polyrandloc$}, and {\boldmath$\polyzloc$} as respectively $\loc(\polylog n)$, $\seqloc(\polylog n)$, $\randloc(\polylog n)$, and $\zloc(\polylog n)$.

\hide{
\fabian{The following should go to the subsection proving the completeness results.}
\paragraph{Locality-Preserving Reductions and Completeness:} 
To relate different problems to each other, we use the notion of
\emph{locality-preserving reductions} as introduced as formally
defined in \cite{ghaffari2017complexity}. A distributed graph problem
$\mathcal{A}$ is called \emph{polylog-reducible} to a distributed
graph problem $\mathcal{B}$ iff a $\polylog n$-time deterministic
\LOCAL algorithm for $\mathcal{A}$ (for all possible $n$-node graphs)
implies a $\polylog n$-time deterministic \LOCAL algorithm for
$\mathcal{B}$. In addition, a distributed graph problem $\mathcal{A}$
is called \polyseqloc-complete if a) the problem $\mathcal{A}$ is in
the class \polyseqloc and b) every other distributed graph problem in
\polyseqloc\ is polylog-reducible to $\mathcal{A}$. As a consequence,
if any \polyseqloc-complete problem can be solved in $\polylog n$
deterministic time in the \LOCAL model, then we have
$\polyloc = \polyseqloc$ and thus every problem that can be solved
deterministically with $\polylog n$ locality in the \SLOCAL model can
also be solved in $\polylog n$ time deterministically in the \LOCAL
model. A particularly important problem that was shown to be
\polyseqloc-complete in \cite{ghaffari2017complexity} is the problem
of computing a $(\polylog n,\polylog n)$-network decomposition:
}

\paragraph{Network Decomposition:} This graph structure plays a central role in this paper and \LOCAL algorithms.
\begin{definition}[Network Decomposition]
  \label{def:decomposition}\emph{\cite{awerbuch89}}
  A $\big(d(n),c(n)\big)$-decomposition of an $n$-node
  graph $G=(V,E)$ is a partition of $V$ into $c(n)$ classes $V = V_1 \cup \dots \cup V_{c(n)}$, with the property that each induced subgraph $G[V_i]$ has connected components of diameter at most $d(n)$. The sets $V_i$ are referred to as \emph{colors} and the connected components of each $G[V_i]$ are referred to as \emph{clusters}.
\end{definition}

It was shown in \cite{Awerbuch-Peleg1990,linial93} that every graph
has an $\big(O(\log n),O(\log n)\big)$-decomposition. Further, as
shown in \cite{ghaffari2017complexity}, the algorithm of
\cite{Awerbuch-Peleg1990,linial93} directly leads to an
$\SLOCAL(O(\log^2 n))$-algorithm for computing an
$\big(O(\log n),O(\log n)\big)$-decomposition. In \cite{linial93},
Linial and Saks describe a $O(\log^2 n)$-time distributed
Las Vegas algorithm to compute a
$\big(O(\log n),O(\log n)\big)$-decomposition. Further, in
\cite{ghaffari2017complexity}, it was shown that the problem of
computing a $(d(n),c(n))$-decomposition is \polyseqloc-complete for $d(n),c(n) = O(\log^k n)$ and any constant $k\geq 1$. As a consequence, $\polyseqloc \subseteq \polyzloc$. The best deterministic distributed
algorithm to compute a $\big(O(\log n),O(\log n)\big)$-decomposition, due to Panconesi and Srinivasan \cite{panconesi95}, has time
complexity $2^{O(\sqrt{\log n})}$.


\section{Basic Derandomization of Local Algorithms}
\label{sec:basic-derand-local}

In the previous section, we have defined two classes of randomized
distributed algorithms. Our derandomization technique only applies to Las Vegas algorithms; however, this is only a slight restriction, as most Monte Carlo graph algorithms (including, for instance, all locally-checkable problems), can be converted into Las Vegas algorithms. The formal statement of our basic derandomization
result is given as follows:

\begin{theorem}\label{thm:basicderandomization}
  Let $\calP$ be a  graph problem  which has a
  $\zloc(r)$ algorithm $A$.  When running $A$ on a graph
  $G=(V,E)$, for each $v\in V$ let $R_v$ be the private random bit
  string used by node $v$. Then there is a deterministic
  $\seqloc[2r]$-algorithm that assigns values to all $R_v$ such that
  when (deterministically) running $A$ with those values, it
  solves $\calP$.
\end{theorem}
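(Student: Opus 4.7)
The plan is to apply the classical method of conditional expectation, adapted to the \SLOCAL setting. Process the nodes in the adversarial order $v_1, v_2, \ldots, v_n$ used by the \SLOCAL schedule. When $v_{t+1}$ is activated, the strings $R_{v_1}, \ldots, R_{v_t}$ have already been chosen and stored by previously processed nodes; $v_{t+1}$ will now deterministically pick a value for $R_{v_{t+1}}$ so as not to increase a suitable conditional potential.

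Define
\[
\Phi_t \;=\; \sum_{u \in V} \E\bigl[F_u \,\big|\, R_{v_1}, \ldots, R_{v_t}\bigr],
\]
where the conditional expectation is over the remaining unfixed random strings (each drawn from the distribution used by $A$). The Las Vegas guarantee gives $\Phi_0 = \sum_u \E[F_u] < 1$. By the law of total expectation, for every prefix there is some value $r^\star$ for $R_{v_{t+1}}$ that gives $\Phi_{t+1} \le \Phi_t$; $v_{t+1}$ picks any such $r^\star$. Iterating, $\Phi_n \le \Phi_0 < 1$. At step $n$ all randomness is fixed, so $\Phi_n = \sum_u F_u$ is a non-negative integer; hence $F_u = 0$ for every $u$, and the deterministic run of $A$ with these bit strings solves $\calP$.

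It remains to show that the greedy choice at $v_{t+1}$ can be computed with \SLOCAL{}-locality $2r$. Since $A$ runs in $r$ rounds, the indicator $F_u$ is a function of the topology, IDs and random bits inside the $r$-hop ball $B_r(u)$ around $u$. Two consequences follow. First, $F_u$ can depend on $R_{v_{t+1}}$ only if $v_{t+1}\in B_r(u)$, equivalently $u\in B_r(v_{t+1})$; the remaining summands cancel in the difference $\Phi_{t+1}-\Phi_t$. Second, each relevant summand $\E[F_u\mid R_{v_1},\ldots,R_{v_{t+1}}]$ is determined by the topology, IDs and already-fixed random strings at nodes in $B_r(u)\subseteq B_{2r}(v_{t+1})$. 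Thus, by reading its $2r$-hop neighborhood together with the information stored there by previously processed nodes, $v_{t+1}$ has exactly the data needed to evaluate $\Phi_{t+1}-\Phi_t$ for each candidate value of $R_{v_{t+1}}$ and to pick an optimal one; unbounded local computation is permitted in \SLOCAL, so this is legitimate. $v_{t+1}$ then stores its chosen $R_{v_{t+1}}$ for later nodes to read.

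The main technical obstacle, and the only place beyond routine bookkeeping, is that $R_v$ is a priori an unbounded bit string, so the search space for $r^\star$ is infinite. I would handle this by noting that since $A$ terminates in $r$ rounds, the values of all $F_u$ with $u\in B_r(v_{t+1})$ are, with probability $1$, determined by a finite prefix of each $R_v$; truncating $R_v$ to a sufficiently long but finite length $L(n)$ changes $\sum_u\E[F_u]$ by an arbitrarily small amount, so we may assume (after a harmless truncation that preserves $\Phi_0<1$) that the conditional expectation is being optimized over the finite alphabet $\{0,1\}^{L(n)}$. With this minor caveat addressed, the argument above is a direct execution of the conditional-expectation schema inside \SLOCAL and yields the claimed $\seqloc[2r]$ algorithm.
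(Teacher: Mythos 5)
Your proposal is correct and follows essentially the same route as the paper's proof: the method of conditional expectations applied to $F=\sum_v F_v$ with $\E[F]<1$, combined with the observation that $F_v$ depends only on the initial states and random strings in the $r$-hop ball of $v$ and that the $R_u$ are independent, which localizes the greedy choice at each processed node to its $2r$-hop neighborhood. Your extra remark on truncating the (a priori unbounded) random strings is a harmless technicality the paper glosses over; in fact the averaging argument only needs the existence of some value not exceeding the current conditional expectation, which holds without any truncation.
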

\begin{proof}
Consider a randomized run of the Las Vegas algorithm $A$. In it, every node $v$ sets a flag $F_v$. Let us further define the random variable
  $F:=\sum_{v\in V} F_v$ such that $F \geq 1$ iff $A$ fails to
  compute a solution for $\calP$. By definition, we have $\E[F] = \sum_{v\in V} \E[F_v] < 1$.

  We now construct a deterministic \seqloc-algorithm $A'$ from $A$ via the method of conditional expectation. Suppose that $A'$ processes the nodes in some arbitrary order $v_1, \dots, v_n$. When processing node
  $v_i$, $A'$ needs to fix the value of $R_{v_i}$. The algorithm $A'$ will choose values $\rho_i$ for each  $i\in\set{1,\dots,n}$ to ensure that
  \begin{equation}\label{eq:condexpectation}
    \E\big[F \,\big|\, R_{v_1}=\rho_1, \dots, R_{v_i}=\rho_i \big] \leq \E[F]<1.
  \end{equation}
  After processing all $n$ nodes, all values $R_{v_i}$ are set to a
  fixed value and for $i=n$, the conditional expectation of $F$ is simply the final value of $\sum_v F_v$ when running $A$ with these values for
  $R_{v_i}$. Because $\E[F]<1$, \Cref{eq:condexpectation} thus implies
  that $\E[F\,|\,R_{v_1}=\rho_1,\dots,R_{v_n}=\rho_n]<1$ and thus the
  algorithm succeeds in solving $\calP$. It remains to show how to set $\rho_i$ to satisfy \Cref{eq:condexpectation}, and such that when
  processing node $v_i$, the \seqloc-algorithm $A'$ only needs to
  query the $2r$-neighborhood of $v_i$. 

Now suppose that the values of $\rho_1,\dots,\rho_{i-1}$ are already given such
  that
  $\E[F\,|\,R_{v_1}=\rho_1,\dots,R_{v_{i-1}}=\rho_{i-1}]\leq \E[F]$, and let $S = \{v_1, \dots, v_{i-1} \}$.
  \Cref{eq:condexpectation} can clearly be satisfied by choosing
  $\rho_i$ to minimize
  $\E[F\,|\,R_{v_1}=\rho_1,\dots,R_{v_i}=\rho_i]$. Furthermore, the output of any node $v$ in the distributed algorithm, and hence also the flag $F_v$,
  only depends on the initial state of the $r$-hop neighborhood of $v$. Because the values of $R_u$ are all independent random variables, this implies
  
  \begin{equation}
    \label{eq:condindependence}
    \E \Bigl[F_v\,\big|\,\bigwedge_{u\in S} (R_u=\rho_u) \Bigr] =
    \E \Bigl[F_v\,\big|\,\bigwedge_{\substack{u  \in S  \\ d_G(u,v)\leq r}} (R_{u}=\rho_u) \Bigr].
  \end{equation}
  We can therefore fix the value of $\rho_i$ as follows.
{\allowdisplaybreaks
  \begin{eqnarray*}
    \rho_i 
    & = & \arg\min_{\rho}
          \E\big[F\,\big|\,R_{v_1}=\rho_1,\dots,R_{v_{i-1}}=\rho_{i-1},R_{v_i}=\rho]\\
    & = & \arg\min_{\rho} \sum_{v\in V}
          \E\big[F_v\,\big|\,R_{v_1}=\rho_1,\dots,R_{v_{i-1}}=\rho_{i-1},R_{v_i}=\rho]\\
    & = & \arg\min_{\rho} \sum_{v\,:\,d_G(v_i,v)\leq r} 
          \E \Bigl[F_v\,\big|\,R_i=\rho \land \bigwedge_{\substack{ j< i \\ d_G(v,v_j)\leq r}} (R_{v_j}=\rho_j) \Bigr]
  \end{eqnarray*}
}
  The last equation follows because for $v$ at distance more than
  $r$ from $v_i$, by \Cref{eq:condindependence},
  $\E\big[F_v\,\big|\,R_{v_1}=\rho_1,\dots,R_{v_{i-1}}=\rho_{i-1},R_{v_i}=\rho]$
  does not depend on the value of $\rho$. Thus, when determining the
  value of $\rho_i$, $A'$ needs to evaluate
  conditional expectations of $F_v$ for all $v$ within distance at most
  $r$ from $v_i$. In order to do this, it is sufficient to read the
  current state of the $2r$-neighborhood of $v_i$.
\end{proof}

The key tool to turn an \SLOCAL algorithm back into a distributed
algorithm is the computation of network decompositions (cf.\
\Cref{def:decomposition}). The formal statement of how to use network
decompositions is given by the following proposition, which was
implicitly proven in \cite{ghaffari2017complexity}. \shortOnly{Please see the full paper for
complete proofs.}

\begin{proposition}\label{prop:StoL}
  Suppose we are provided a $(d(n),c(n))$-network
  decomposition of $G^{r(n)}$. If a graph problem $\cal P$ has a $\seqloc[r(n)]$ algorithm (respectively, $\randseqloc[r(n)]$ algorithm), then  $\mathcal P$ can be solved in
  $O\big( (d(n)+1) c(n) r(n) \big)$ rounds by a deterministic (respectively, randomized) \LOCAL algorithm.
  \end{proposition}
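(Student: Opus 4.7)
The plan is to process the $c(n)$ color classes of the network decomposition sequentially, and within each color class to handle all clusters in parallel by locally simulating the \SLOCAL algorithm at a cluster ``leader.'' First I would fix an arbitrary global ordering of the nodes (say by ID) and process color classes in order $1,2,\dots,c(n)$. The invariant I would maintain after phase $i$ is that every node in $V_1 \cup \cdots \cup V_i$ has committed to its final output, and that each node of $G$ knows the outputs of all such already-committed nodes lying in its $r(n)$-hop $G$-neighborhood.

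The key structural observation is that the clusters within a single color class are mutually non-interacting: two nodes lying in distinct connected components of $G^{r(n)}[V_i]$ are non-adjacent in $G^{r(n)}$, and hence are at $G$-distance strictly greater than $r(n)$. Consequently, the \SLOCAL executions of nodes in different clusters of $V_i$ depend on disjoint $r(n)$-neighborhoods, so the relative order in which they are processed does not affect any individual output. This lets me execute all clusters of color $i$ concurrently in a single phase.

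Now fix a single cluster $C$ of color $i$, with $G^{r(n)}$-diameter at most $d(n)$. Every edge of $G^{r(n)}$ corresponds to a $G$-path of length at most $r(n)$, so the $G$-diameter of $C$ is at most $d(n)\cdot r(n)$. I would designate the minimum-ID node $u$ of $C$ as its leader, grow a BFS tree in $G$ of radius $O\big((d(n)+1)\cdot r(n)\big)$ around $u$ which covers $C$ together with its $r(n)$-hop $G$-neighborhood, and have $u$ collect all local inputs, IDs, and the already-committed outputs from previous phases inside that ball. The leader then internally simulates the \SLOCAL algorithm on the nodes of $C$ in ID order---each simulated step needs only the $r(n)$-ball in $G$ around the current node together with outputs of nodes processed earlier in the global order, all of which is cached at $u$---and finally broadcasts each node's output back along the same tree, re-establishing the invariant. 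For the randomized variant, each node additionally forwards its private random bit string to the leader, who simulates the \randseqloc\ algorithm using those bits; since the leader is carrying out the true, unmodified simulation, the overall \LOCAL algorithm remains randomized with exactly the same success guarantee.

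Each gather/simulate/broadcast phase costs $O\big((d(n)+1)\cdot r(n)\big)$ rounds, and summing over the $c(n)$ phases yields the claimed $O\big((d(n)+1)\, c(n)\, r(n)\big)$ bound. The main obstacle I anticipate is the bookkeeping inside a phase: one must verify that at the instant the leader is simulating a node $v \in C$, every piece of information the \SLOCAL algorithm is entitled to read---the $r(n)$-ball in $G$ around $v$, plus outputs of all nodes strictly earlier in the global order---is already present at $u$, and in particular that no needed information lies in a cluster of color $j>i$ that has not yet been processed. The non-interaction argument of the second paragraph, combined with the invariant on previously committed outputs, is precisely what makes the local simulation well-defined and faithful to the original \SLOCAL algorithm.
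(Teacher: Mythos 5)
Your proposal is correct and follows essentially the same route as the paper's proof: iterate over the $c(n)$ colors, use the fact that distinct clusters of a single color are at $G$-distance greater than $r(n)$ (so they cannot interfere) to handle them in parallel, and simulate the \SLOCAL algorithm locally within each cluster in $O\big((d(n)+1)\,r(n)\big)$ rounds. The only small caveat is that in the \SLOCAL model a processed node may locally store auxiliary information beyond its output $y_v$ which later nodes are allowed to read, so your leader's broadcast (and your phase invariant) should commit the full locally stored state rather than only the outputs; with that trivial adjustment your argument matches the paper's.
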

\fullOnly{
\begin{proof}
  We use the network decomposition of $G^{r(n)}$ to run the \SLOCAL
  algorithm $A$ in the distributed setting.   We run $A$ by
  processing the nodes according to the increasing lexicographic order
  given by the color of a node and the ID of the node. A
  network decomposition of $G^{r(n)}$ guarantees that nodes of
  different clusters of the same color are at least $r(n)+1$ hops away
  from each other. Hence, when processing the nodes of a given color,
  different clusters cannot interfere with each other and we can
  locally simulate the execution of $A$ in each cluster. In the
  \LOCAL model, this local simulation can be done in
  $r(n)\cdot (d(n)+1)$ rounds for each cluster. We then have to iterate over the $c(n)$ colors.
\end{proof}
}

We next show how we can convert back and forth between \ZLOCAL,
\SLOCAL, and \LOCAL algorithms.

\begin{proposition}\label{prop:ZtoL} 
Any algorithm $A \in \seqloc[r] \cup \zloc[r]$ to solve a graph problem
    $\mathcal P$ on $G$ yields a deterministic \LOCAL algorithm in $\min\set{r\cdot 2^{O(\sqrt{\log n})}, O\big(r\cdot(\Delta(G^{2r}) +
      \log^*n)\big)}$ rounds.
\end{proposition}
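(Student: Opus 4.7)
The plan is to reduce both cases to a single application of Proposition~\ref{prop:StoL}, and then instantiate it with two different network decomposition procedures to obtain the two competing bounds inside the minimum.

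First I would handle the \ZLOCAL case by invoking Theorem~\ref{thm:basicderandomization}: if $A \in \zloc[r]$, then there is a deterministic \seqloc[2r] algorithm solving $\mathcal P$. Thus, up to doubling $r$, it suffices to prove the statement for an $A' \in \seqloc[r']$ with $r' = O(r)$. From now on the task is purely: given $A' \in \seqloc[r']$, build a deterministic \LOCAL algorithm of the claimed complexity.

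For the $r \cdot 2^{O(\sqrt{\log n})}$ bound, I would apply the Panconesi--Srinivasan algorithm \cite{panconesi95} to $G^{r'}$ to compute an $(O(\log n), O(\log n))$-decomposition in $2^{O(\sqrt{\log n})}$ rounds on $G^{r'}$, which costs $r' \cdot 2^{O(\sqrt{\log n})}$ rounds on $G$. Feeding this decomposition into Proposition~\ref{prop:StoL} yields a distributed simulation of $A'$ in $O(\log^2 n \cdot r')$ rounds. Adding the decomposition cost, the total is $r \cdot 2^{O(\sqrt{\log n})}$.

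For the $O(r \cdot (\Delta(G^{2r}) + \log^* n))$ bound, I would instead compute a $(0,\Delta(G^{r'})+1)$-decomposition of $G^{r'}$, i.e.\ a proper vertex coloring of $G^{r'}$ with $\Delta(G^{r'})+1$ colors; each color class is trivially an independent set in $G^{r'}$, hence has clusters of diameter $0$. Such a coloring can be produced in $O(\Delta(G^{r'}) + \log^* n)$ rounds of $G^{r'}$ using Linial's $O(\log^* n)$-round algorithm to reach an $O(\Delta^2)$-coloring followed by a standard $(\Delta+1)$-coloring refinement, which translates to $O(r'(\Delta(G^{r'}) + \log^* n))$ rounds of $G$. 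Plugging this decomposition (with $d=0$, $c = \Delta(G^{r'})+1$) into Proposition~\ref{prop:StoL} gives an additional $O(r' \cdot \Delta(G^{r'}))$ rounds for the simulation. Since $r' = O(r)$, we have $\Delta(G^{r'}) \leq \Delta(G^{2r})$ (for the \ZLOCAL branch, after the doubling; for the direct \SLOCAL branch $r' = r$), so the overall complexity is $O(r \cdot (\Delta(G^{2r}) + \log^* n))$. Taking the minimum of the two bounds finishes the proof.

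The only mildly subtle step is bookkeeping the locality blow-up from Theorem~\ref{thm:basicderandomization} so that the $2r$ appearing inside $\Delta(G^{2r})$ is correct for both branches (\SLOCAL and \ZLOCAL); everything else is a direct composition of known ingredients.
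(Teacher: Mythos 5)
Your overall route is the same as the paper's: derandomize the \zloc case via \Cref{thm:basicderandomization}, and then simulate the resulting \SLOCAL algorithm through \Cref{prop:StoL}, instantiated once with the Panconesi--Srinivasan $(O(\log n),O(\log n))$-decomposition and once with a proper coloring of a suitable power of $G$ viewed as a $(0,c)$-decomposition. That part is fine, and your observation that for the pure \seqloc[r] branch a coloring of $G^{r}$ already suffices (the paper colors $G^{2r}$, which is harmless overkill) is correct.

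The genuine gap is in the very first reduction for the \zloc branch: you assert that $A\in\zloc[r]$ yields ``a deterministic \seqloc[2r] algorithm solving $\mathcal P$.'' \Cref{thm:basicderandomization} does not give that; it gives an \seqloc[2r] procedure that only \emph{fixes the random bits} $R_v$ so that $A$, run deterministically with those bits, solves $\mathcal P$. A node processed in that single \SLOCAL pass cannot also output its part of the solution, since its output under $A$ depends on bits of nodes in its $r$-ball that are fixed only later; a single \SLOCAL algorithm solving $\mathcal P$ is obtained only by composing the two passes, which by \Cref{prop:ZtoS} (Lemma 2.3 of \cite{ghaffari2017complexity}) costs locality $4r$, not $2r$. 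If you plug $r'=4r$ into your pipeline, your second bound degrades to $O\big(r(\Delta(G^{4r})+\log^* n)\big)$, which can be as large as $\Delta(G^{2r})^2$ and so does not give the claimed $O\big(r(\Delta(G^{2r})+\log^* n)\big)$. The repair is exactly what the paper does: use the decomposition (or the $(\Delta(G^{2r})+1)$-coloring of $G^{2r}$) only to simulate the locality-$2r$ bit-fixing \SLOCAL algorithm in the \LOCAL model, and then run $A$, which is now deterministic, in $O(r)$ additional rounds; this extra step needs no decomposition and preserves both claimed bounds.
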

\begin{proof}
  We first consider $A \in \seqloc[r]$.  To get the first bound, we compute an
  $\big(O(\log n), O(\log n)\big)$-decomposition of $G^r$ in time
  $r\cdot 2^{O(\sqrt{\log n})}$ by using the algorithm of
  \cite{panconesi95}. By \Cref{prop:StoL} we use this to simulate
  $A$ in $O(r \log^2 n)\leq r2^{O(\sqrt{\log n})}$ rounds.  For
  the second bound, we compute a $\Delta(G^{2r}+1)$-coloring of $G^{2r}$ in
  time $O\big(r(\Delta(G^{2 r}) + \log^*n)\big)$ by using the algorithm of
  \cite{BEK15}. This can be viewed as a
  $(0,\Delta(G^{2r})+1)$-decomposition of $G^{2r}$.  By \Cref{prop:StoL} we
  use this to simulate $A$ in $O(r \Delta(G^{2r}))$ rounds.

  For $A \in \zloc[r]$, \Cref{thm:basicderandomization} gives an $\seqloc[2 r]$ algorithm to determine a set of random bits
  that make $A$ succeed; we can turn this into a deterministic
  \LOCAL algorithm by using part (1). We can then simulate $A$
  (which is at this point a deterministic algorithm) in $O(r)$
  additional rounds.
\end{proof}

\fullOnly{
Assume that we are given a \ZLOCAL algorithm $A$ for a given graph
problem $\calP$.
The basic derandomization technique in \Cref{thm:basicderandomization}
only shows how to get an \SLOCAL algorithm to determine the private
randomness $R_v$ of the \ZLOCAL algorithm $A$ for every node
$v$. We can extend this to get a deterministic \SLOCAL algorithm for
the original graph problem $\calP$.
}

\begin{proposition}\label{prop:ZtoS}
  $\zloc[r] \subseteq \seqloc[4 r]$.
\end{proposition}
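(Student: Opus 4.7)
The plan is to build on \Cref{thm:basicderandomization}, which already gives a deterministic $\seqloc[2r]$ routine that fixes random bit strings $\rho_v$ in such a way that running $A$ with those values cannot fail; what remains is to fold the subsequent step---actually computing the outputs $y_v = A(v;\rho)$---into a single \SLOCAL algorithm of locality $4r$.

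First I would design a single-pass \SLOCAL algorithm $A''$ in which, when a vertex $v_i$ is processed, two things happen together: $v_i$ fixes a chunk of random bit strings and then outputs $y_{v_i}$. The main obstacle is that $y_{v_i}$ depends on $\rho_u$ for every $u$ within distance $r$ of $v_i$, yet in the adversarial \SLOCAL order some of these $u$ may still be unprocessed when $v_i$ is visited, so their bits are not yet fixed. My resolution is to let $v_i$ itself pre-commit the missing $\rho_u$ values, storing them locally at $v_i$; when $u$ is processed later, it simply inherits the bit string already fixed on its behalf.

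Next I would pin down the order in which bits get fixed, so that the conditional-expectation argument from \Cref{thm:basicderandomization} still telescopes correctly. Writing the adversarial processing order as $v_1,v_2,\dots$, I would associate to each vertex $u$ the first time it appears within distance $r$ of a processed node, $t(u):=\min\{i : d_G(v_i,u)\le r\}$, and totally order vertices by $\sigma$, which breaks $t$-ties by ID. When $v_i$ is processed, it fixes $\rho_u$ for every $u$ with $d_G(v_i,u)\le r$ and $t(u)=i$, in $\sigma$-order, each time choosing the value that minimizes $\E\bigl[F \,\big|\, \{\rho_w : w<_\sigma u\},\, \rho_u=\rho\bigr]$. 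The standard telescoping argument then yields $\E[F \mid \text{all }\rho]\le \E[F]<1$, so the induced deterministic run of $A$ is failure-free and $v_i$ may safely emit $y_{v_i}$.

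Finally I would do the locality bookkeeping. To produce $y_{v_i}$, $v_i$ needs $\rho_u$ for every $u$ within distance $r$. For $u$ already committed at an earlier step $t(u)<i$, the value is stored at $v_{t(u)}$, at distance at most $r$ from $u$ and hence at most $2r$ from $v_i$. For $u$ committed now at step $i$, the minimization needs $\rho_w$ for $w$ within distance $2r$ of $u$ (and so within distance $3r$ of $v_i$); each such $\rho_w$ is stored at $v_{t(w)}$ within distance $r$ of $w$, i.e.\ within distance $4r$ of $v_i$. Reading the $4r$-neighborhood of $v_i$ is therefore sufficient, giving the claimed locality $4r$. The most delicate point---where I expect the real care to be needed---is checking that the mixed order $\sigma$ (first by adversarial step, then by ID) actually admits a well-defined telescoping conditional-expectation bound, and that, for every $u$ being fixed on behalf of $v_i$, all $\sigma$-earlier bits required by its minimization are indeed accessible via the stored state within distance $4r$ of $v_i$.
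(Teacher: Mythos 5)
Your proposal is correct and follows essentially the same route as the paper: first use \Cref{thm:basicderandomization} to fix the random strings via conditional expectations with locality $2r$, then execute $A$ deterministically with locality $r$. The only difference is that the paper gets the final locality $4r$ by invoking the generic \SLOCAL composition lemma of \cite{ghaffari2017complexity} (locality $r_1 + 2r_2 = 2r + 2r$), whereas you inline that composition explicitly; your bookkeeping (each $\rho_u$ stored at the first processed node within distance $r$ of $u$, giving the chain $r + 2r + r = 4r$) is precisely the argument behind that lemma.
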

\fullOnly{
\begin{proof}
  Let $A$ be a $\zloc[r]$ algorithm for a graph problem $\cal P$.
  \Cref{thm:basicderandomization} shows that we can determine a
  setting for the random bits which causes $A$ to succeed. Once
  these are determined, we can execute $A$ as
  a deterministic (and hence \SLOCAL) algorithm with locality
  $r$. Thus, we can solve $\cal P$ by composing two \SLOCAL algorithms,
  with localities $r_1 = 2 r$ and $r_2 = r$, respectively. As shown in
  Lemma 2.3 of \cite{ghaffari2017complexity}, this composition can be
  realized as a \emph{single} $\seqloc[r']$ algorithm for 
  $r' = r_1 + 2 r_2 = 4 r$.\footnote{Lemma 2.3 of the conference paper
    \cite{ghaffari2017complexity} only claims a locality of
    $r'=2(r_1+r_2)$. The tight bound and a full proof appear as Lemma
    2.2 in the full version of \cite{ghaffari2017complexity}, which is
    available at \url{https://arxiv.org/abs/1611.02663}.}
\end{proof}
}

Using the same
basic techniques as in the above propositions, we can also prove
\Cref{thm:SLOCAL_informal}.

\fullOnly{
\begin{proof}[\emph{\textbf{Proof of \Cref{thm:SLOCAL_informal}}}]
  Consider an $n$-node graph $G$ and an
  $\randseqloc(r(n))$ algorithm $A$ for a graph problem $\mathcal P$. We compute an $\big(O(\log n),O(\log n)\big)$-network-decomposition of
  $G^{r(n)}$ in $O(r(n)\log^2 n)$ round using the randomized algorithm of \cite{linial93}. 
   \Cref{prop:StoL} allows us to transform $A$ into a randomized distributed algorithm with time complexity
  $O(r(n)\log^2 n)$. Because $\mathcal P$ is locally checkable,  \Cref{prop:ZtoS} converts this randomized distributed algorithm
  into a deterministic $\SLOCAL[O(r(n)\log^2 n)]$ algorithm.
\end{proof}
}

\begin{corollary}
  \label{cor-zs}
$\polyzloc = \polyseqloc$.
\end{corollary}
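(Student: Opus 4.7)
The corollary follows by combining two inclusions, one in each direction, both of which are nearly immediate from results already in hand.

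\textbf{The easy direction, $\polyzloc \subseteq \polyseqloc$.} The plan here is to invoke \Cref{prop:ZtoS} directly. If $\mathcal{P} \in \polyzloc$, then by definition there is a $\zloc[r]$ algorithm for $\mathcal P$ for some $r = \polylog n$. \Cref{prop:ZtoS} gives $\zloc[r] \subseteq \seqloc[4r]$, so we immediately obtain a $\seqloc[O(\polylog n)]$ algorithm, and hence $\mathcal{P} \in \polyseqloc$.

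\textbf{The other direction, $\polyseqloc \subseteq \polyzloc$.} Suppose $\mathcal{P} \in \polyseqloc$, so there is a deterministic \SLOCAL algorithm $A$ with locality $r(n) = \polylog n$. My plan is to combine the randomized network decomposition with \Cref{prop:StoL}. Specifically, I would first use the $O(\log^2 n)$-time distributed Las Vegas algorithm of Linial and Saks, cited in the model section, to compute an $\big(O(\log n), O(\log n)\big)$-decomposition of the power graph $G^{r(n)}$. Simulating a round on $G^{r(n)}$ costs $r(n)$ rounds on $G$, so this step takes $O(r(n)\log^2 n) = \polylog n$ rounds. Then, applying \Cref{prop:StoL}, the existence of such a decomposition allows us to simulate $A$ as a distributed algorithm in $O\!\big((d(n)+1)c(n)r(n)\big) = O(r(n) \log^2 n) = \polylog n$ additional rounds. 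The total complexity is polylogarithmic.

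\textbf{Wrapping it up.} The one subtle point I would check carefully is that the resulting \LOCAL algorithm is genuinely Las Vegas, not merely Monte Carlo. This is where the Las Vegas guarantee of the Linial--Saks decomposition matters: the simulation of $A$ itself is deterministic once the decomposition is in hand, so the only source of failure is the decomposition step, which has a Las Vegas flag. We propagate those flags unchanged to produce the required failure indicators $F_v$ for the composed algorithm. Hence $\mathcal{P} \in \polyzloc$. I do not anticipate a genuine obstacle beyond this bookkeeping check, since each ingredient has essentially been set up for exactly this purpose earlier in the paper.
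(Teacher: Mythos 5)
Your proof is correct and follows essentially the same route as the paper: the inclusion $\polyzloc \subseteq \polyseqloc$ is exactly \Cref{prop:ZtoS}, and the inclusion $\polyseqloc \subseteq \polyzloc$ is the one the paper simply cites from \cite{ghaffari2017complexity}, whose underlying argument is precisely what you spell out (Linial--Saks Las Vegas decomposition of $G^{r(n)}$ plus the simulation of \Cref{prop:StoL}). Your bookkeeping check is also sound: since correctness of a Las Vegas algorithm is only required when all flags $F_v$ are zero, propagating the decomposition's flags suffices, as a valid decomposition makes the subsequent deterministic simulation correct.
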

\begin{proof}
The inclusion $\polyseqloc \subseteq \polyzloc$, already mentioned, was shown in \cite{ghaffari2017complexity}. The inclusion $\polyseqloc \subseteq \polyzloc$ is Proposition~\ref{prop:ZtoS}.
\end{proof}

\subsection{Alternative Definition of Las Vegas Algorithms}
\label{sec:alternate-def}
The definition of Las Vegas algorithms is somewhat non-standard, and does not exactly correspond to the definitions of Las Vegas algorithms in e.g., complexity theory. One significant complication is that in the computational setting, it is possible to check if the algorithm has terminated at some point and, if necessary, restart from scratch; this is not possible in general for \LOCAL algorithms. We next show how our definition relates to other notions of zero-error randomization.

\begin{definition}[Zero-error distributed algorithm]
  A \emph{zero-error distributed algorithm} is a randomized \LOCAL algorithm $A$ for a graph problem $\calP$ such that when all nodes terminate, $A$ always computes a correct solution for $\calP$.
\end{definition}
\begin{definition}[Exponentially-convergent algorithm]
  An $r$-round \emph{exponentially-convergent algorithm} has the property that for any integer $k \geq 1$, the algorithm terminates within $r k$
  rounds with probability at least $1 - n^{-k}$.
\end{definition}
\begin{definition}[Expected-$r$-round algorithm]
  An \emph{expected-$r$-round algorithm} has the property that the expected time before $A$ terminates is at most $r$.  
\end{definition}

These definitions try to captures the intuition that the algorithm is converging to a solution, and terminates quickly on average. Propositions~\ref{prop:ZtoL2}  show that these different definitions of Las
Vegas algorithms  are all equivalent up to polylogarithmic factors.

\begin{proposition}\label{prop:ZtoL2}
  For a graph problem $\mathcal P$, the following are equivalent:
  \begin{enumerate}
  \item $\mathcal P \in \polyzloc$
    \item $\mathcal P \in \polyseqloc$
  \item $\mathcal P$ has a zero-error $\polylog(n)$-round exponentially-convergent distributed algorithm.
  \item $\mathcal P$ has a zero-error expected-$\polylog(n)$-round distributed algorithm.
  \end{enumerate}
\end{proposition}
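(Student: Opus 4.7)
The implication $(1) \Leftrightarrow (2)$ is already established in Corollary~\ref{cor-zs}. To complete the equivalence, my plan is to close the cycle by establishing $(2) \Rightarrow (3) \Rightarrow (4) \Rightarrow (1)$.

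For $(2) \Rightarrow (3)$: given a $\polyseqloc$ algorithm $A^*$ with locality $r = \polylog n$, I will invoke the Linial--Saks randomized network decomposition algorithm~\cite{linial93}, which computes an $(O(\log n), O(\log n))$-decomposition of $G^r$ in $O(r \log^2 n)$ Las~Vegas rounds. Linial--Saks succeeds with high probability per attempt, and by taking a linear number of independent attempts (each certified locally via the decomposition's definition) we amplify to a failure probability of $n^{-k}$ in $O(k \cdot r \log^2 n)$ rounds, which exactly matches the exponentially-convergent definition. Plugging the resulting decomposition into \Cref{prop:StoL} simulates $A^*$ deterministically in $\polylog n$ additional rounds, giving a zero-error, exponentially-convergent algorithm.

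For $(3) \Rightarrow (4)$: this is just an integration argument. Writing $T^*$ for the termination time, the expected-value formula $\E[T^*] = \sum_{t \ge 0} \Pr[T^* > t]$ combined with the bound $\Pr[T^* > rk] \le n^{-k}$ from the exponentially-convergent definition gives $\E[T^*] \le r + r \sum_{k \ge 1} n^{-k} = O(r) = \polylog n$. (A similar argument also yields $(3) \Rightarrow (1)$ more directly: truncate at $T = 2r$, set $F_v = 1$ if $v$ has not yet terminated, and observe that $\sum_v \E[F_v] \le n \cdot n^{-2} = 1/n < 1$.)

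The main obstacle is $(4) \Rightarrow (1)$: converting an algorithm with an expected-$r$-round guarantee into a fixed-time Las Vegas algorithm. Markov's inequality alone only yields $\Pr[T^* > T] \le r/T$, so naive truncation would require $T = \Omega(nr)$ to push $\sum_v \E[F_v]$ below $1$, which is far too large. To amplify, I run $k = c \log n$ independent copies of $A$ in parallel for $T = 2r$ rounds each; by Markov each copy globally terminates with probability $\ge 1/2$, so at least one copy succeeds except with probability $\le 2^{-k} = n^{-c}$. The subtle step is committing consistently: every node must output values from the \emph{same} successful copy for the combined output to be a valid global solution, yet detecting global termination naively requires diameter rounds. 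The plan is to bridge this by also running Linial--Saks (available in $\polylog n$ Las Vegas rounds with exponential convergence) to coordinate the copy selection within each $O(\log n)$-diameter cluster; global consistency then follows because \Cref{prop:StoL} lets us simulate an \SLOCAL-style ``choose-smallest-successful-copy'' commitment on top of the decomposition. The hardest detail will be showing that the combined failure probabilities from the $k$ parallel copies and from Linial--Saks together yield $\sum_v \E[F_v] < 1$, while keeping the overall round complexity $\polylog n$; once this is done, composing with Theorem~\ref{thm:basicderandomization} (or equivalently Corollary~\ref{cor-zs}) routes us back into $(2)$ and closes the cycle.
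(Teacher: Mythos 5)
Your steps $(1) \Leftrightarrow (2)$, $(2) \Rightarrow (3)$ and $(3) \Rightarrow (4)$ are essentially the paper's argument (for $(2)\Rightarrow(3)$ the paper is a bit more careful about the zero-error requirement: clustered vertices commit and terminate, unclustered ones stay silent and the whole procedure is repeated on the residual graph, so that no node ever outputs based on a failed decomposition attempt; your ``certify locally and retry'' sketch is the same idea). The genuine gap is in $(4) \Rightarrow (1)$, which you yourself identify as the hardest step, and your proposed resolution does not work. Running $k = c\log n$ parallel copies of $A$ truncated at $2r$ rounds does give, by Markov and independence, that \emph{some} copy globally terminates except with probability $n^{-c}$; but ``copy $j$ globally terminated'' is a global event that no node can decide from a $\polylog n$-radius view, and it is certainly not decidable cluster-by-cluster in a network decomposition: a copy may have failed at a single node far outside the cluster. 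Worse, even if each cluster picks its own ``locally successful'' copy, stitching together outputs of \emph{different} independent runs of $A$ across cluster boundaries does not yield a valid solution of $\mathcal P$ at all (the zero-error guarantee applies to the output of one complete run, not to a patchwork of several), and an \SLOCAL-style ``choose smallest successful copy'' pass on top of \Cref{prop:StoL} cannot repair this, since the copies are fixed-time \LOCAL executions whose outputs cannot be resumed or locally re-coordinated. So the failure events of your scheme are not of the form $\sum_v \E[F_v] < 1$ for any locally computable flags $F_v$, and the cycle does not close.

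The paper avoids copy selection entirely. Given a zero-error expected-$r(n)$-round algorithm, it runs a \emph{single} execution on $G$ with freshly randomized IDs and a fake network size $n' = n^2$, truncated at $3r(n')$ rounds, setting $F_v = 1$ exactly for the nodes that have not terminated. The per-node bound $\Pr(F_v = 1) < 1/n$ comes from a thought experiment: the disjoint union of $n$ randomly-ID'd copies of $G$ is itself a legitimate $n'$-node instance, so the expected-time guarantee plus Markov gives that \emph{all} $n$ independent copies terminate by time $3r(n')$ with probability at least $1/2$ (after accounting for ID collisions); by independence across copies, each single copy---and hence each node of $G$---terminates with probability at least $2^{-1/n} > 1 - 1/n$, giving $\sum_v \E[F_v] < 1$. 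If you want to salvage your write-up, replace your parallel-copies construction with this uniformity/amplification argument; your truncation-with-flags observation (which you already use for $(3) \Rightarrow (1)$) then applies verbatim to the single truncated run.
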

\begin{proof}
  The equivalence of $(1)$ and $(2)$ is Corollary~\ref{cor-zs}.
  
  $(2) \Rightarrow (3)$. Let $A$ be a $\seqloc[r]$ algorithm for $\mathcal P$; we will construct an $O(r \log^2 n)$-round zero-error exponentially convergent distributed algorithm.   Consider the following process. We first run the randomized,  distributed network decomposition algorithm of \cite{linial93} for $O(\log^2 n)$ rounds, hoping to produce a  $\big(O(\log n),O(\log
  n)\big)$ network decomposition of $G^r$. We form a vertex set $X$, which are the vertices $v$ such that every neighbor in its color class has distance $O(r \log n)$ from $v$. We iterate over the $O(\log n)$ color classes; within each, the vertices in $X$ simulate $A$ with respect to an arbitrary ordering of $X$, and terminate. Vertices which are not in $X$ sit idle during this process.

  After this process is finished, we repeat the process on the residual graph $V - X$, and so on. Note that, there is a probability of at least $1 - 1/n$ that a given repetition of this process has $X = V$, and so every node terminates. This probability is conditional on all the previous repetitions having failed. Since each repetition runs for $O(r \log^2 n)$ time, this satisfies the conditions of the zero-error exponentially-convergent distributed algorithm.

  $(3) \Rightarrow (4)$. It is immediate that any zero-error  $r$-round exponentially-convergent distributed algorithm is also an zero-error expected-$2r$-round distributed algorithm.

  $(4) \Rightarrow (1)$. Let $A$ be a zero-error
    expected-$r(n)$-round distributed algorithm to solve $\mathcal P$ for $r(n) = \polylog n$. Further, assume that we are given an $n$-node input
    graph $G=(V,E)$ where all nodes have unique IDs from the range
    $\set{0,\dots,N-1}$ for $N=\poly(n)$. Before running $A$, the
    nodes of $G$ randomize their IDs as follows. Each node $v\in V$
    chooses an integer $\alpha_v$ uniformly at random from $\set{0,\dots,n^4}$
    and sets its new ID $y_v$ to $y_v:=\alpha_v\cdot N + x_v$,
    where $x_v\in \set{0,\dots,N-1}$ is the original ID of $v$. Note
    that the new random IDs are still unique. Instead of running $A$
    on $G$ with the original IDs, we run $A$ on $G$ with the random
    IDs $y_v$ and with a fake value of $n$, namely $n'=n^2$, and
    terminating after $3r(n')=\polylog n$ rounds. Every node $v$ that
    has terminated by that time sets $F_v:=0$, all other nodes set
    $F_v:=1$.

    Because the algorithm is run on a graph with at most
    $n'$ nodes and unique IDs, if $F_v=0$ for all $v\in V$, then
    algorithm solves $\mathcal{P}$. (Recall that we assume that
    the validity of a solution to a distributed graph problem is
    independent of the assignment of the IDs.) For a node $v \in V$, it remains to show that
    $\Pr(F_v=1)<1/n$ for each $v\in V$ and thus
    $\sum_{v\in V} \E[F_v]<1$. 
    
    To show this, let $G'$ be a $n'$-node graph that
    consists of $n$ disjoint copies of $G$, where in each copy the nodes choose random IDs in the same way as above. Running the algorithm on $G'$ is equivalent to running it on
    $n$ independent copies of $G$. If
    the all the nodes of $G'$ have unique IDs, then $A$ terminates on all nodes of $G'$ in expected time $r(n')$. Hence,
    after running the algorithm for $3r(n')$ rounds, by Markov's
    inequality, the probability that not all $n'$ nodes have
    terminated is at most $1/3$. The probability that all $n'$ IDs are
    unique is at least
    $\big(1-\frac{n}{n^4}\big)^{n^2}\geq 4^{-1/n}$. Letting $\mathcal{E}$
    be the event that all nodes of $G'$ terminate after $3r(n')$
    rounds and $\mathcal{U}$ be the event that all IDs in $G'$ are
    unique, we have
    \[
    \Pr(\mathcal{E}) \geq \Pr(\mathcal{E}\cap\mathcal{U}) =
    \Pr(\mathcal{E}|\mathcal{U})\cdot\Pr(\mathcal{U})\geq
    \tfrac{2}{3}\cdot 4^{-1/n}\geq \frac{1}{2},
    \]
    for $n\geq 5$. 
    
    Let $v\in V$ be a node of $G$ and assume that the
    probability that $v$ terminates by time $3r(n')$ is $p$. Because
    the $n$ copies are independent, we have $p^n\geq 1/2$ and
    therefore $p\geq 2^{-1/n}>1-1/n$. We therefore have
    $\sum_{v\in V} \E[F_v]<1$.
\end{proof}

\shortOnly{\input{hypergraph_matching_sketch}}

\fullOnly{
\section{Deterministic Hypergraph Maximal Matching}
\label{sec:hypergraph_matching}
We consider a hypergraph $H$ on $n$ nodes, maximum degree at most $\Delta$, and rank at most $r$. Although the \LOCAL model is defined for graphs, there is a very similar model for hypergraphs: in a single
communication round, each node $u$ can send a message to each node $v$
for which $u$ and $v$ are contained in a common hyperedge. The
objective of the section is to compute a maximal matching of $H$, that
is, a maximal set of pairwise disjoint hyperedges.

Our construction is based on a method of partitioning the hyperedges of a hypergraph $H$ into two classes, so hyperedges of each node are roughly split into two equal
parts, which we refer to as \emph{hypergraph degree splitting}. This degree splitting procedure uses the
derandomization lemma \Cref{thm:basicderandomization} as its core
tool. 

\begin{definition}[Hypergraph Degree Splitting]\label{def:HG_splitting}
  Let $H=(V,E)$ be a hypergraph and let $\delta\geq 1$ and $\eps>0$ be
  two parameters. An $(\eps,\delta)$-degree splitting of $H$ is a
  coloring of the hyperedges with two colors red and blue such that
  for each node $v \in V$ of degree $\deg_H(v)\geq \delta$, at least
  $\frac{1-\eps}{2}\cdot\deg_H(u)$ of the hyperedges of $u$ are
  colored red and at least $\frac{1-\eps}{2}\cdot\deg_H(u)$ of the
  hyperedges of $u$ are colored blue.
\end{definition}

\begin{lemma}\label{lemma:HG_splitting}
  Let $H$ be an $n$-node hypergraph with maximum degree at most
  $\Delta$ and rank at most $r$. Then, for every $\eps>0$, there is a deterministic
  $O(r \log(n \Delta) / \eps^2 )$-round \LOCAL algorithm to
  compute an $\big(\eps, \frac{8 \ln (n \Delta)}{\eps^2}\big)$-degree
  splitting of $H$.
\end{lemma}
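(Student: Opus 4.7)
The plan is to apply the method of conditional expectations from Theorem~\ref{thm:basicderandomization} to the natural uniform random 2-coloring. In the randomized algorithm, each hyperedge independently picks red or blue with probability $1/2$; implementation-wise, the minimum-ID endpoint of each edge flips the coin on its behalf. For a vertex $v$ with $\deg_H(v)\geq \delta := 8\ln(n\Delta)/\eps^2$, let $F_v:=1$ iff either the red or the blue count at $v$ falls below $(1-\eps)\deg_H(v)/2$, and $F_v:=0$ otherwise (including for low-degree vertices). A standard multiplicative Chernoff bound applied to the $\deg_H(v)$ independent coins at $v$ yields
\[
\Pr[F_v=1] \;\leq\; 2\exp\bigl(-\eps^2\delta/4\bigr) \;\leq\; (n\Delta)^{-2},
\]
so $\sum_v \E[F_v]<1$, and the construction defines a valid $\zloc$ algorithm of hypergraph-locality $O(1)$, since $F_v$ depends only on the coins held by $v$'s immediate hypergraph-neighbors.

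Theorem~\ref{thm:basicderandomization} then immediately converts this Las Vegas procedure into a deterministic $\seqloc[O(1)]$ algorithm that colors each edge so as to satisfy the splitting. The remaining work is to simulate this \SLOCAL algorithm in \LOCAL within only $O(r \log(n\Delta)/\eps^2)$ rounds, since the black-box route via Proposition~\ref{prop:ZtoL} costs either $2^{O(\sqrt{\log n})}$ or $\poly(\Delta)$ rounds, neither of which meets the target. My plan is to replace the fully-independent edge-coins with a $k$-wise independent family for $k=\Theta(\log(n\Delta)/\eps^2)$; the Chernoff calculation above holds verbatim under $k$-wise independence for this value of $k$. Such a family can be realized by a seed of $O(k)$ symbols from a field of size $\poly(n\Delta)$, with each edge's color a degree-$(k-1)$ polynomial in the seed evaluated at the edge's identifier. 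The derandomization then proceeds symbol by symbol: each of the $O(k)=O(\log(n\Delta)/\eps^2)$ phases fixes one seed symbol to minimize the conditional expectation of $\sum_v F_v$, and the aggregation of the local contributions to that conditional expectation is performed in $O(r)$ \LOCAL rounds per phase by exploring the $r$-neighborhoods.

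The main obstacle will be the global coordination needed to choose the single best value for a seed symbol, since this is a min over a sum that spans the entire graph and looks at first glance to require graph-diameter time. I would circumvent this by computing a coarse, easily obtainable hypergraph partition at scale $O(r)$ and assigning each part its own independent seed: the global failure bound $\sum_v \E[F_v]<1$ still holds cluster by cluster because Chernoff is applied pointwise at each $v$, so each part runs the symbol-by-symbol derandomization in parallel and each phase is genuinely local, yielding the claimed $O(r\log(n\Delta)/\eps^2)$ rounds. The most delicate point will be verifying that such a partition can be produced deterministically within the same time budget — for instance by exploiting the unique IDs together with a low-diameter clustering of $H$ — and that the per-cluster \SLOCAL derandomization truly depends only on the $r$-neighborhood of each node inside its own cluster.
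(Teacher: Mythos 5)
Your first half (uniform random red/blue per edge, Chernoff at every vertex of degree at least $\delta=8\ln(n\Delta)/\eps^2$, packaging as a $\zloc[O(1)]$ algorithm with $\sum_v \E[F_v]<1$) matches the paper. The gap is in how you get from the \SLOCAL derandomization back to a $O(r\log(n\Delta)/\eps^2)$-round \LOCAL algorithm. The paper's missing idea is a purely local \emph{degree reduction}: every vertex of degree at least $2\delta$ is split into virtual copies, each of degree $\Theta(\delta)$, and an $(\eps,\delta)$-splitting of the resulting hypergraph $H'$ immediately gives one for $H$. After this step the square of the incidence graph of $H'$ has maximum degree $O(r\log(n\Delta)/\eps^2)$, so the \emph{generic} route you dismissed --- \Cref{thm:basicderandomization} followed by the second bound of \Cref{prop:ZtoL}, $O\big(r'\,\Delta(G^{2r'})+r'\log^* n\big)$ with $r'=O(1)$ --- already lands exactly on the claimed round complexity. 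No $k$-wise independence, shared seeds, or clustering is needed.

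Your substitute for this step does not go through. First, the ``coarse, easily obtainable hypergraph partition at scale $O(r)$'' that you lean on is precisely the hard object: a deterministic low-diameter partition in which the derandomization can be confined to clusters is essentially a network decomposition, whose best known deterministic complexity here is $2^{O(\sqrt{\log n})}$ --- the very cost you are trying to avoid, and such a partition with $O(r)$ diameter and vertex neighborhoods contained in single clusters does not even exist combinatorially (already on a cycle, any constant-diameter partition cuts some vertex's neighborhood). Second, even granting such a partition, running the method of conditional expectations \emph{in parallel} with independent seeds per cluster is unsound: for a boundary vertex $v$ whose incident edges lie in two clusters, $F_v$ depends on both seeds, and each cluster minimizes its conditional expectation treating the other cluster's seed as random; the two simultaneous choices need not jointly keep $\E[F_v]$ small (the usual $X\oplus Y$-type counterexample applies). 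Fixing this forces you to process clusters sequentially by color class, i.e.\ back to network decomposition. (A smaller point: the Chernoff bound does not hold ``verbatim'' with the same constants under $k$-wise independence, so your version would anyway only give an $(\eps, C\ln(n\Delta)/\eps^2)$-splitting for a larger constant $C$.)
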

\begin{proof}
  Define $\delta := \frac{8}{\eps^2}\cdot \ln(n\Delta)$. 
 We assume that $n\geq n_0$ for a sufficiently large constant
  $n_0\geq 1$ (for constant $n$, the statement of
  the lemma is trivial).  As a first step, we reduce the problem on
  $H$ to the hypergraph splitting problem on a low-degree hypergraph
  $H'$. To construct $H'$, we divide each node of $H$ of
  degree $\geq 2\delta$ into virtual nodes, each of degree
  $\Theta(\delta)$. More specifically, for each node $u\in V$, we
  replace $u$ by
  $\ell_u := \max\set{1,\left\lfloor\deg_H(u)/\delta\right\rfloor}$
  virtual nodes $u_1,\dots,u_{\ell_u}$ and we assign each of the
  hyperedges of $u$ to exactly one of the virtual nodes
  $u_1,\dots,u_{\ell_u}$. If $\ell_u>1$, we divide the hyperedges that
  each virtual node $u_i$ has degree at least $\delta$ and less than
  $2\delta$. The hypergraph $H'$ has at most $n \lceil \Delta/ \delta \rceil$ vertices, maximum degree less than $2\delta$ and an $(\eps,\delta)$-degree splitting
  of $H'$ immediately gives an $(\eps,\delta)$-degree splitting of
  $H$.

  We thus need to show how to efficiently compute an
  $(\eps,\delta)$-degree splitting of the low-degree hypergraph
  $H'$. Instead of directly working on $H'$, it is more convenient to
  define the algorithm on its incidence graph; this is the bipartite graph
  $B=(U_B\cup V_B, E_B)$ which one node in $U_B$ for
  every node $u$ of $H'$ and it has one node in $V_B$ for every
  hyperedge of $H'$. A node $u\in U_B$ and a node $v\in V_B$ are
  connected by an edge of $B$ if and only if the node of $H'$ corresponding
  to $u$ is contained in the hyperedge of $H'$ corresponding to
  $v$. Clearly, any $r$-round computation on $H'$ can be simulated in
  $B$ in at most $2r$ rounds. An $(\eps,\delta)$-splitting of $H'$ now
  corresponds to a red/blue-coloring of the nodes in $V_B$ such that
  every node of degree $d\geq \delta$ in $U_B$ has at least
  $(1-\eps)\frac{d}{2}$ red and at least $(1-\eps)\frac{d}{2}$ blue
  neighbors in $V_B$.

  We first claim that such a red/blue
  coloring of $B$, and thus an
  $(\eps,\delta)$-degree splitting of $H'$, can be computed by a
  trivial Las Vegas algorithm. Each node in $V_B$ colors itself red or
  blue independently with probability $1/2$. For a node $u\in U_B$,
  let $X_u$ and $Y_u$ be the number  red and blue neighbors in $V_B$
  after this random coloring step. If the degree of $u$ is
  less than $\delta$, the coloring does not need to satisfy any
  condition. Otherwise, we know that the degree of $u$ is in
  $[\delta,2\delta)$. Therefore, by the Chernoff bound:
      \[
    \Prob{X_u < (1-\eps)\cdot\frac{\deg_{H'}(u)}{2}} \leq
    e^{-\eps^2\deg_{B}(u)/4} \leq e^{-\eps^2\delta/4} = \frac{1}{(n\Delta)^2}
  \]
  The corresponding bound for $Y_u$ is obtained in the same way.

As $|U_B| \leq n \Delta$, the expected number of failing nodes is therefore at most
  \[
  \Prob{ \bigvee_{u \in U_B}
    \max\set{X_u,Y_u}<(1-\eps)\cdot\frac{\deg_{H'}(u)}{2}}
  < 2\cdot|U_B|\cdot \frac{1}{(n\Delta)^2}
  \leq 2 n \Delta \cdot \frac{1}{(n\Delta)^2} < 1
  \]
  
  This procedure can be computed in $0$ rounds (without communicating), and the correctness can be verified in $1$ round. So it is 
  a $\zloc[1]$ algorithm. Since $B^2$ has maximum degree $O(r\log(n\Delta)/\eps^2)$, Proposition~\ref{prop:ZtoL} shows that it can be executed as a deterministic algorithm in $O(r \Delta(B^2) + r \log^*n) \leq O( r \log(n \Delta)/\epsilon^2)$ rounds. 
\end{proof}

\begin{lemma}
  \label{match-prop} 
Suppose we are given a hypergraph $H = (V,E)$ of maximum degree $\Delta$ as well as an explicitly provided subset $U \subseteq V$ of its vertices and a parameter $\delta$ such that $d(u) \geq \delta$ for every $u \in U$. Then there is a deterministic \LOCAL algorithm in $O(r \Delta \log r + \log r \log^* n)$ rounds to compute a matching $M \subseteq E$ such that
$$
\sum_{e \in M} |e \cap U| \geq \Omega\left( \frac{|U| \delta}{r \Delta} \right).
$$
\end{lemma}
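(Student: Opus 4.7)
The plan is to analyze a simple randomized marking procedure, derandomize it by the method of conditional expectation in the style of Theorem~\ref{thm:basicderandomization}, and then convert the resulting \SLOCAL algorithm into a \LOCAL one using the hypergraph degree splitting of Lemma~\ref{lemma:HG_splitting} together with conflict-graph coloring.

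Mark each hyperedge $e \in E$ independently with probability $p := 1/(2r\Delta)$, and let $M \subseteq E$ consist of the marked edges whose vertex-sharing neighbors in $H$ are all unmarked. Then $M$ is a matching: two distinct intersecting edges of $M$ would each be a marked neighbor of the other. Writing $W := \sum_{e\in M}|e\cap U| = \sum_{u\in U}Y_u$ with $Y_u := |\{e\in M : u\in e\}|\in\{0,1\}$, each edge $e\ni u$ has at most $r\Delta$ other edges sharing a vertex with it, hence
\[
\Pr[e\in M] \;=\; p(1-p)^{N(e)} \;\geq\; \tfrac{1}{2r\Delta}\bigl(1-\tfrac{1}{2r\Delta}\bigr)^{r\Delta} \;=\;\Omega(1/(r\Delta)).
\]
Summing $\E[Y_u] = \sum_{e\ni u}\Pr[e\in M] \geq \Omega(d(u)/(r\Delta)) \geq \Omega(\delta/(r\Delta))$ over $u\in U$ yields $\E[W]\geq\Omega(|U|\delta/(r\Delta))$, the desired bound.

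To produce such a matching deterministically, apply the method of conditional expectation from the proof of Theorem~\ref{thm:basicderandomization} to the sum $W=\sum_u Y_u$: processing hyperedges sequentially, fix each marking bit to maximize $\E[W\mid\text{past bits}]$, preserving the invariant $\E[W\mid\text{past}]\geq\E[W]$ and guaranteeing $W\geq\E[W]$ at the end. Because $Y_u$ depends only on marking bits of edges within distance $2$ of $u$ in the conflict graph of $H$, each decision requires only a constant-radius neighborhood of the current hyperedge, giving a deterministic \SLOCAL algorithm of $O(1)$ locality.

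It remains to convert this \SLOCAL algorithm into a \LOCAL one within the claimed round budget using Proposition~\ref{prop:ZtoL}. The relevant structure is a bounded power of the conflict graph $G$ of $H$, whose maximum degree is $O(r\Delta)$; a $(\Delta+1)$-coloring of $G$ can be computed in $O(r\Delta + \log^* n)$ rounds via the algorithm of~\cite{BEK15}, and each color class allows the derandomization step to be executed in parallel. The main obstacle is to keep the overhead of the \SLOCAL-to-\LOCAL conversion to a single $\log r$ factor rather than the $(r\Delta)^{O(1)}$ blowup of a naive application of Proposition~\ref{prop:ZtoL}: this requires iterating the degree-splitting primitive of Lemma~\ref{lemma:HG_splitting} for $O(\log r)$ levels so that, within each level, the effective conflict degree remains $O(r\Delta)$, and carefully combining the per-level matchings and weight accounting so that the overall output is a single matching still meeting the bound $\sum_{e\in M}|e\cap U|=\Omega(|U|\delta/(r\Delta))$.
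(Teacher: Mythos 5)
Your expectation analysis of the random marking is fine (the matching property and the bound $\E[W]=\Omega(|U|\delta/(r\Delta))$ both hold), but the argument breaks down exactly at the step you flag yourself: converting the conditional-expectation \SLOCAL procedure into a \LOCAL algorithm within $O(r\Delta\log r + \log r\log^* n)$ rounds. The generic conversion (Proposition~\ref{prop:ZtoL}) iterates over the color classes of a power of the conflict graph, whose degree is $(r\Delta)^{O(1)}$, so you get an $(r\Delta)^{2}$-type term, not $r\Delta\log r$. Your proposed fix --- iterating the degree splitting of Lemma~\ref{lemma:HG_splitting} --- does not repair this: that lemma reduces vertex degrees of $H$ (not the blowup from squaring the conflict graph in the derandomization step), it only applies when degrees are $\Omega(\log(n\Delta)/\eps^2)$, and each invocation costs $O(r\log(n\Delta)/\eps^2)$ rounds, which already exceeds the claimed bound whenever $\Delta\ll\log n$ (note the target complexity contains no $\log n$ factor, only $\log^* n$). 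Indeed, in the paper the splitting lemma is used \emph{outside} this lemma (in Lemma~\ref{split2}) precisely to reduce $\Delta$ before invoking the present statement, so folding it into this proof both misallocates the round budget and inverts the intended structure. As written, the "careful combining of per-level matchings" is not an argument, and I do not see how to complete it within the stated complexity.

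For contrast, the paper's proof is entirely deterministic and needs no derandomization: bucket the edges into $k=\lceil\log_2 r\rceil+1$ classes $E_i$ according to $2^i\le|e\cap U|<2^{i+1}$, then compute maximal matchings $M_k,\dots,M_0$ greedily from the heaviest bucket downward, each time restricted to edges not hitting previously chosen ones. Maximality of $M_i\cup\dots\cup M_k$ in $(V,E_i\cup\dots\cup E_k)$, together with the fact that the line graph has degree $s=r\Delta$, gives $|M_i|+\dots+|M_k|\ge(|E_i|+\dots+|E_k|)/s$, and a double-counting computation yields $\sum_{e\in M}|e\cap U|\ge |U|\delta/(2s)$. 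Each of the $O(\log r)$ stages is one maximal matching (an MIS on a graph of degree $s$), costing $O(r\Delta+\log^*(n\Delta))$ rounds, which is exactly where the bound $O(r\Delta\log r+\log r\log^* n)$ comes from. If you want to salvage your randomized-marking route, you would need a derandomization whose cost is linear (up to $\log r$) in the conflict degree $r\Delta$, which the tools you cite do not provide.
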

\begin{proof}
Let $k = \lceil \log_2 r \rceil$ and for $i = 0, \dots, k$ let $E_i \subseteq E$ denote the set of edges $e$  with $2^i \leq |e \cap U| < 2^{i+1}$. We will sequentially construct matchings $M_k, \dots, M_0$, as follows. At stage $i$, we define $E'_i$ to be the set of edges $e \in E_i$ which do not intersect with any edge $f \in M_{i+1} \cup \dots \cup M_k$; we find a maximal matching $M_i$ of the hypergraph $H_i = (V_i, E'_i)$. We then finish by outputting $M = M_0 \cup M_1 \cup \dots \cup M_k$.

This construction ensures that, for any $i = 0, \dots, k$, the matching $M_i \cup M_{i+1} \cup \dots \cup M_k$ is a maximal matching of the hypergraph
$(V, E_i \cup E_{i+1} \cup \dots \cup E_k)$. Since the line graph of $H$ has maximum degree $s = r \Delta$, this shows that
$$
|M_i| + \dots + |M_k| \geq \frac{|E_i| + \dots + |E_k|}{s}.
$$

For any $u \in U$, let $d_i(u)$ denote the number of edges $e \in E_i$ with $u \in e$. By double-counting, we have $|E_i| \geq \frac{\sum_{u \in U} d_i(u)}{2^{i+1}}$  and $\sum_i d_i(u) \geq \delta$ for every $u \in U$. We now compute:
{\allowdisplaybreaks
\begin{align*}
\sum_{e \in M} |e \cap U| &= \sum_{i=0}^k \sum_{e \in M_i} |e \cap U| \geq \sum_{i=0}^k 2^i |M_i| \geq \sum_{i=0}^k 2^{i-1} \sum_{j=i}^k |M_j| \\
&\geq \sum_{i=0}^k 2^{i-1}/s \sum_{j=i}^k |E_j| = \sum_{j=0}^k 2^j |E_j|/s = \sum_{j=0}^k 2^j \frac{\sum_{u \in U} d_j(u)}{2^{j+1} s} \\
&= \sum_{j=0}^k \frac{\sum_{u \in U} d_j(u)}{2 s} \geq \frac{|U| \delta}{2 s}
\end{align*}
}

This procedure goes through $O(\log r)$ stages. In each stage $i$, we compute a maximal matching of the hypergraph $H_i$, whose line graph has maximum degree $s$; so this can be achieved in $O( s + \log^* (n \Delta) )$ time per stage. (We note that $\Delta \leq 2^n$, so $\log^* \Delta \leq O(\log^* n)$).
\end{proof}
\begin{lemma}
  \label{split2}
  Given a rank-$r$ hypergraph $H = (V,E)$ of maximum degree
  $\Delta = \Omega(\log n)$, there is a deterministic
  $O(r \log r \log n + r \log(n \Delta) \log^3 \Delta)$-round \LOCAL algorithm to find a
  matching $M$ of $H$ with the following property: if $U_+$ denotes
  the set of vertices of $H$ of degree at least $\Delta/2$, then the
  edges of $M$ contain at least an $\Omega(1/r)$ fraction of the
  vertices of $U_+$.
\end{lemma}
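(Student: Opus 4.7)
The plan is to iteratively reduce the maximum degree of $H$ via repeated applications of the degree-splitting subroutine (\Cref{lemma:HG_splitting}), and then feed the resulting low-degree hypergraph into the matching subroutine (\Cref{match-prop}). Set $\epsilon := c/\log \Delta$ for a sufficiently small constant $c > 0$ and $\delta := 8 \log(n \Delta)/\epsilon^2 = O(\log(n\Delta) \log^2 \Delta)$. Let $H_0 := H$. For $i = 0, 1, \dots, k-1$, compute an $(\epsilon,\delta)$-degree splitting of $H_i$ and let $H_{i+1}$ be the subhypergraph on one color class (say, red). By the splitting property, any vertex with degree at least $\delta$ in $H_i$ has its degree multiplied by a factor in $[(1-\epsilon)/2,(1+\epsilon)/2]$ when passing to $H_{i+1}$, while lower-degree vertices can only lose edges. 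We choose $k = O(\log(\Delta/\delta)) = O(\log \Delta)$ so that $\Delta_k := \Delta(H_k) = O(\delta)$, which is possible since while $\Delta_i > \delta$ the maximum degree shrinks by a factor of at least $2/(1+\epsilon)$ per iteration.

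The key invariant is that for any $u \in U_+$ whose degree stays above $\delta$ throughout, the ratio $\deg_{H_i}(u)/\Delta_i$ degrades by at most $(1-\epsilon)/(1+\epsilon) \geq 1 - 2\epsilon$ per iteration. The initial ratio is at least $1/2$, so after $k$ iterations it is at least $(1-2\epsilon)^k / 2 \geq (1-2c)/2 \geq 1/4$ provided $c$ is small enough. In particular, $\deg_{H_k}(u) \geq \Delta_k/4$ for every $u \in U_+$, and with a careful choice of constants the working assumption $\deg_{H_i}(u) \geq \delta$ holds at every intermediate step.

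Now apply \Cref{match-prop} to $H_k$ with $U := U_+$ and lower-bound parameter $\Delta_k/4$. This yields a matching $M$ of $H_k$ (hence of $H$) satisfying $\sum_{e \in M} |e \cap U_+| = \Omega\bigl(|U_+| (\Delta_k/4)/(r \Delta_k)\bigr) = \Omega(|U_+|/r)$; since $M$ is a matching, each $U_+$ vertex lies in at most one edge of $M$, so this sum counts covered $U_+$ vertices. Each splitting costs $O(r \log(n\Delta)/\epsilon^2) = O(r \log(n\Delta) \log^2 \Delta)$ rounds by \Cref{lemma:HG_splitting}, so the $O(\log \Delta)$ iterations cost $O(r \log(n\Delta) \log^3 \Delta)$ rounds in total, and \Cref{match-prop} contributes $O(r \Delta_k \log r + \log r \log^* n)$ additional rounds, which fits within the same bound. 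The $r \log r \log n$ term in the claimed complexity covers the small-$\Delta$ regime $\Delta = O(\log n)$, where the splitting phase is skipped and \Cref{match-prop} is applied directly to $H$ with parameter $\Delta/2$.

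The main technical subtlety is the choice of $\epsilon$. Taking $\epsilon$ larger would let the $U_+$-to-maximum-degree ratio collapse exponentially over the $\Theta(\log \Delta)$ iterations needed to bring the maximum degree down to $O(\delta)$, while taking $\epsilon$ smaller would inflate $\delta = \Theta(\log(n\Delta)/\epsilon^2)$ and hence $\Delta_k$, worsening both the splitting cost and the final matching cost. The value $\epsilon = \Theta(1/\log \Delta)$ threads this needle, preserving a constant fraction of the degree ratio while keeping $\delta$ within a $\polylog \Delta$ factor of $\log(n\Delta)$.
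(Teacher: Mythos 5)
Your overall route is the same as the paper's (iterated degree splitting via \Cref{lemma:HG_splitting}, then \Cref{match-prop} on the low-degree remainder), and the matching guarantee $\Omega(|U_+|/r)$ does follow from your ratio invariant. The gap is quantitative, and it comes from using a \emph{fixed} $\eps=\Theta(1/\log\Delta)$ and hence a fixed threshold $\delta=\Theta(\log(n\Delta)\log^2\Delta)$. With this schedule you can only keep splitting while degrees stay above $\delta$, so you stop at $\Delta_k=\Theta(\log(n\Delta)\log^2\Delta)$ rather than $\Theta(\log n)$. The final call to \Cref{match-prop} then costs $O(r\,\Delta_k\log r)=O(r\log r\,\log(n\Delta)\log^2\Delta)$, which is \emph{not} within the claimed bound $O(r\log r\log n + r\log(n\Delta)\log^3\Delta)$ whenever $\log r\gg\log\Delta$; e.g.\ $\Delta=\log^2 n$ and $r=n^{1/2}$ gives $\Theta(r\log^2 n(\log\log n)^2)$ against an allowance of $\Theta(r\log^2 n)$. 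Relatedly, your dichotomy ``skip splitting when $\Delta=O(\log n)$'' leaves the intermediate regime $\log n\ll\Delta\ll\log n\log^2\Delta$ uncovered: there the $U_+$ vertices have degree $\Delta/2<\delta$, so the splitting guarantee never applies to them at all, and applying \Cref{match-prop} directly costs $O(r\Delta\log r)$, which again overshoots the $r\log r\log n$ term. (Your bound would still suffice for the application in \Cref{thm:hypergraph}, where $\Delta\ge r\log^2 n$ is assumed, but it does not prove the lemma as stated.)

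The paper avoids this by letting the splitting parameters vary with the iteration: it takes $\delta_i=\Delta/2^{i+1}$ and $\eps_i=\max\bigl\{\tfrac{1}{4\log\Delta},\,\sqrt{16\ln(n\Delta/2^{i-1})/(\Delta/2^i)}\bigr\}$, so that $\eps_i$ is allowed to \emph{grow} as the degree shrinks (the requirement of \Cref{lemma:HG_splitting} is $\delta_i\ge 8\ln(n\Delta_{i-1})/\eps_i^2$, which is what forces larger $\eps_i$ at smaller degrees), while the total error $\sum_i\eps_i$ still sums to at most $1/2$ because the second branch forms a rapidly decaying series. This lets the recursion run for $t\approx\log\Delta-\log\log n$ steps, driving the maximum degree all the way down to $O(\log n)$ while every $U_+$ vertex retains degree $\Omega(\log n)$; \Cref{match-prop} is then applied at degree $\Theta(\log n)$, costing only $O(r\log n\log r)$, which is exactly the first term of the stated complexity. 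To repair your proof you would need to adopt some such degree-dependent schedule for $\eps$ (or otherwise continue splitting below your fixed $\delta$), rather than a single $\eps=\Theta(1/\log\Delta)$ throughout.
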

\begin{proof}
The problem is trivial when $n \leq O(1)$ so we assume without loss of generality that $n\geq n_0$ for a sufficiently large constant.
  
We reduce the degree of $H$ by repeatedly applying the
  hypergraph degree splitting of \Cref{lemma:HG_splitting}.
  We begin by setting $E_0 = E$. For an integer $t \geq 1$, we will define parameters $\eps_1,\dots,\eps_t>0$ and
  $\delta_1,\dots,\delta_t\geq 1$, and  construct edge sets
  $E_1,\dots, E_t$ such that $E_{i}\subseteq E_{i-1}$ as follows. For
  each $i\in \set{1,\dots,t}$, we use \Cref{lemma:HG_splitting} to
  compute an $(\eps_i,\delta_i)$-splitting of the hypergraph $(V,E_{i-1})$; we define $E_i$ to be the resulting hyperedges that are colored red.
  
  We choose the parameters as follows:
  \begin{eqnarray*}
    t & := & \max\set{1, \left\lfloor\log\Delta - \log\log n - 14\right\rfloor} \\
     \delta_i
      & := & \frac{\Delta}{2^{i+1}}\\
    \eps_i
      & := &
             \max\set{\frac{1}{4\log\Delta}\, ,\, \sqrt{\frac{16\ln(n\Delta/2^{i-1})}{\Delta/2^i}}}
  \end{eqnarray*}
  
  Let us first observe that  $\sum_{j=1}^t\eps_j\leq 1/2$. Assuming $n\geq n_0$
  for a sufficiently large constant $n_0$, we have
  {\allowdisplaybreaks
  \begin{eqnarray*}
    \sum_{i=1}^t \eps_i 
    & \leq & \sum_{i=1}^t\frac{1}{4\log\Delta}  + 
             \sum_{i=1}^t
             \sqrt{\frac{16\ln(n\Delta/2^{i-1})}{\Delta/2^i}}\\
    & \stackrel{(t\leq\log\Delta)}{\leq} & 
                                           \frac{1}{4} + \sum_{j=\log\Delta-t}^\infty
                                           \sqrt{\frac{16\big(\ln n + (j+1)\ln 2\big)}{2^j}}\\
    & \stackrel{(n\geq n_0)}{\leq} & \frac{1}{4} + \sum_{s=14}^\infty
                                     \sqrt{\frac{32\ln n + s}{2^s\cdot
                                     \ln n}}  \stackrel{(n\geq n_0)}{\leq} \frac{1}{4} + \sum_{s=14}^\infty
                                     \sqrt{\frac{32}{1.8^s}}\ <\ \frac{1}{2}.
  \end{eqnarray*}
  }
  Let $H_i = (V,E_i)$ and $\Delta_i = \Delta(H_i)$. We show by induction that $\Delta_i \leq(1+2\sum_{j=1}^i \eps_j)\Delta/2^i\leq \Delta/2^{i-1}$, and any node in $U_+$ has degree at least
  $(1-\sum_{j=1}^i\eps_j)\Delta/2^{i+1}\geq \Delta/2^{i+2}$ in $H_i$. For $i=0$, these bounds clearly hold
  because by definition, each node of $H$ has degree at most
  $\Delta$ and each node in $U_+$ has degree at least $\Delta/2$ in
  $H = H_0$. 
  
  For the induction step, we first show that for each $i$, we
  can apply the degree splitting algorithm of
  \Cref{lemma:HG_splitting} to compute an $(\eps_i,\delta_i)$-degree
  splitting of $H_{i-1}$; specifically, we need to show that $\delta_i \geq
  8\ln(n\Delta_{i-1})/\eps_i^2$. By induction hypothesis, we have
  \[
  \frac{8\ln(n\Delta_{i-1})}{\eps_i^2} \leq
  \frac{8\ln(n\Delta/2^{i-1})}{\frac{16\ln(n\Delta/2^{i-1})}{\Delta/2^i}}
  = \frac{\Delta}{2^{i+1}} = \delta_i.
  \]
  
  By
  induction hypothesis, each node in $U_+$ has degree at least
  $\delta_i$ in $H_{i-1}$. The minimum degree of any node of $U_+$ in $H_i$ is
  therefore at $(1-\eps_i)/2$ times the the minimum degree of any node
  of $U_+$ in $H_{i-1}$. Thus, the minimum degree of any node of $U_+$
  in $H_i$ is at least
  \[
  \Bigl(1-\sum_{j=1}^{i-1}\eps_j \Bigr)\frac{\Delta}{2^i}
    \cdot\frac{1-\eps_i}{2} =
    \Bigl(1-\sum_{j=1}^{i-1}\eps_j \Bigr)(1-\eps_i)\cdot\frac{\Delta}{2^{i+1}}
      \geq
    \Bigl(1-\sum_{j=1}^{i}\eps_j \Bigr)\cdot\frac{\Delta}{2^{i+1}}
    \geq
    \frac{\Delta}{2^{i+2}}.
  \]

  Finally, note that maximum
  degree of $H_i$ is at most $(1+\eps_i)/2$ times the maximum degree
  of $H_{i-1}$. So
  \[
  \Delta_i \leq  \Bigl(1+2\sum_{j=1}^{i-1}\eps_j \Bigr)(1+\eps_i)\cdot\frac{\Delta}{2^i}\leq 
   \Bigl(1+2\sum_{j=1}^{i}\eps_j \Bigr)\cdot\frac{\Delta}{2^i}\leq
  \frac{\Delta}{2^{i-1}}.
  \]
  where here we make use of the inequality 
  $\sum_{j=1}^{i}\eps_j\leq\sum_{j=1}^t\eps_j\leq 1/2$.
  
We finish by applying Lemma~\ref{match-prop} on the hypergraph $H_t$ and vertex set $U_+$. Note that $H_t$ has maximum degree $O(\log n)$, and every node of $U_+$ has degree $\Theta(\log n)$. So, Lemma~\ref{match-prop} runs in time $O( r \log n \log r )$. The resulting matching $M$ contains at least $\Omega(|U_+|/r)$ vertices of $U_+$.  Since $\epsilon_i \geq \tfrac{1}{4} \log \Delta$, each application of \Cref{lemma:HG_splitting} takes time $O(r \log(n \Delta) \log^2 \Delta)$, and there are $t = O(\log \Delta)$ applications of it in total.
\end{proof}
  
By applying the hypergraph degree splitting of
Lemma~\ref{split2} repeatedly, we can now prove
\Cref{thm:hypergraph}.

{
\renewcommand{\thetheorem}{\ref{thm:hypergraph}}
\begin{theorem}
  Let $H$ be an $n$-node hypergraph with maximum degree at most
  $\Delta$ and rank at most $r$. Then, a maximal matching of $H$ can
  be computed in $O(r^2 \log (n\Delta) \log n \log^4 \Delta)$ rounds in the \LOCAL model.
\end{theorem}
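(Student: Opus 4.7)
The plan is to construct a maximal matching of $H$ by iteratively invoking \Cref{split2}. At each step, I apply the lemma to the current hypergraph, append the returned matching $M$ to the running output, and delete every vertex of $V(M)$ together with every hyperedge incident to such a vertex. Since every deleted hyperedge shares a vertex with some already-selected edge, the running collection is always a valid matching, and once no edges remain it is maximal in $H$.

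The core of the analysis is bounding the number of invocations of \Cref{split2}. I would organize the execution into $O(\log \Delta)$ outer phases, where phase $j$ starts with a hypergraph of maximum degree at most $2^{j+1}$ and is responsible for pushing the maximum degree down to at most $2^j$. Within a phase I would track the set $U^{*} := \{v : \deg(v) \geq 2^j\}$ in the current hypergraph, noting that $U^{*}$ is monotonically non-increasing since hyperedge deletions only decrease degrees. At each call to \Cref{split2} inside the phase, the current maximum degree is at most $2^{j+1}$, so the lemma's set $U_+$ (vertices of degree at least half the current maximum) contains $U^{*}$; the returned matching covers an $\Omega(1/r)$ fraction of $U_+$. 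A careful amortization argument shows that this forces $|U^{*}|$ to shrink by a factor of $1 - \Omega(1/r)$ per invocation, so $U^{*} = \emptyset$ after $O(r \log n)$ invocations, at which point every remaining vertex has degree below $2^j$ and the phase ends.

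Multiplying $O(\log \Delta)$ phases by $O(r \log n)$ split2-invocations per phase by $O(r \log(n \Delta) \log^3 \Delta)$ rounds per invocation (the dominant term from \Cref{split2}) yields the claimed round complexity $O(r^2 \log(n \Delta) \log n \log^4 \Delta)$. The hypothesis $\Delta = \Omega(\log n)$ required by \Cref{split2} is harmless: once the maximum degree has dropped into the regime $\Delta = O(\log n)$, I would apply \Cref{match-prop} directly with $U$ equal to the entire vertex set, which finishes off the residual hypergraph comfortably within the target bound.

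The main technical obstacle is the amortization step inside a phase, since \Cref{split2} directly controls $|U_+|$ rather than the fixed-threshold set $|U^{*}|$ we care about, and $|U_+|$ is not even monotone (the threshold $\Delta/2$ shifts downward whenever the current maximum degree drops, potentially pulling new vertices into $U_+$). The bridge is the containment $U^{*} \subseteq U_+$ throughout a phase together with the observation that a vertex can leave $U^{*}$ only by being physically removed from the hypergraph; combining these with the $\Omega(|U_+|/r)$ matching guarantee and the global removal budget of at most $n$ vertices delivers the geometric decrease of $|U^{*}|$ needed for the $O(r \log n)$ inner bound.
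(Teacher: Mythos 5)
Your overall route---iteratively invoking \Cref{split2}, deleting matched vertices and their incident hyperedges, and organizing the invocations into $O(\log\Delta)$ degree-halving phases with $O(r\log n)$ invocations each---is essentially the paper's own proof. The genuine gap is your cleanup step in the low-degree regime. Applying \Cref{match-prop} with $U$ equal to the entire vertex set does not ``finish off'' the residual hypergraph: the lemma requires every $u\in U$ to have degree at least $\delta$, so with $U=V$ you are forced to take $\delta$ equal to the minimum degree (possibly $0$), making the guarantee vacuous; and even where it applies, it returns a single partial matching covering an $\Omega\bigl(\delta/(r\Delta)\bigr)$ fraction of $U$, with no maximality guarantee whatsoever. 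You would have to iterate it, and with residual degree $\Theta(\log n)$ this needs on the order of $r\log^2 n$ calls of $O(r\log n\log r)$ rounds each, i.e.\ roughly $r^2\log^3 n\log r$ rounds, which exceeds the claimed bound (e.g.\ when $\Delta$ is small, the target is only $O(r^2\log^2 n)$). The paper's step here is different and simpler: once the maximum degree is $O(\log n)$, the line graph of the residual hypergraph has degree $O(r\log n)$, so a single deterministic MIS computation on the line graph (via \cite{BEK15}) yields a maximal matching of the residual in $O(r\log n+\log^*(n\Delta))$ rounds; the same observation also handles the initial case $\Delta\le r\log^2 n$.

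Your inner amortization is also not correct as sketched. If $U_+$ is defined relative to the \emph{current} maximum degree, then the $\Omega(|U_+|/r)$ vertices matched by \Cref{split2} may all lie in $U_+\setminus U^{*}$, so $|U^{*}|$ need not shrink at all in a given invocation, and your fallback ``global removal budget of at most $n$ vertices'' only bounds the number of invocations by $O(rn)$, not $O(r\log n)$ per phase. (Also, the claim that a vertex can leave $U^{*}$ only by being removed is false---its degree can drop below $2^j$ when incident hyperedges are deleted---though this particular slip is harmless for monotonicity.) The repair is exactly what the paper does implicitly: invoke \Cref{split2} with $\Delta$ instantiated as the phase's fixed degree bound $2^{j+1}$; its proof only uses that all degrees are at most $\Delta$ and that vertices of $U_+$ have degree at least $\Delta/2$, so then $U_+$ coincides with your fixed-threshold set $U^{*}$, each invocation removes an $\Omega(1/r)$ fraction of $U^{*}$, and since $U^{*}$ is monotone non-increasing the $O(r\log n)$ bound per phase follows. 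With these two repairs your argument coincides with the paper's.
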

\addtocounter{theorem}{-1}
}
\begin{proof}
If $\Delta \leq r \log^2 n$, then the line graph of $H$ has degree at most $r \Delta$, and so we simply use the deterministic MIS algorithm of \cite{BEK15} in time $O( r \Delta + \log^* (n \Delta)) \leq O(r^2 \log^2 n)$, and we are done. So let us assume that $\Delta \geq r \log^2 n$.

Each time we apply \Cref{split2}, we reduce the number of vertices in $U_+$ by a factor of $\Omega(1/r)$. Thus, after $O(r \log n)$ applications, we reduce the degree by a factor of $1/2$. So after $\log \Delta$ applications, the degree is reduced to $O(\log n)$. 

At this stage, we note that the residual line graph of $H$ has degree at most $O(r \log n)$. We can thus use a deterministic MIS algorithm on it, in time $O( r \log n + \log^*(n \Delta))$.

Each application of \Cref{split2} uses $O(r \log r \log n + r \log(n \Delta) \log^3 \Delta)$ time. Our assumption that $\Delta \geq r \log^2 n$ ensures
that the first term is dominated by the second term.
\end{proof}

\subsection{Implications on Edge-Coloring, Maximum Matching, and Low-Degree Orientation}
\begin{proof}[Proof of \Cref{crl:edge-coloring}] This follows immediately from \Cref{thm:hypergraph}, combined with a reduction of Fischer, Ghaffari, and Kuhn\cite{FischerGK17} that reduces $(2\Delta-1)$-list-edge-coloring of a graph $G = (V,E)$ to hypergraph maximal matching on hypergraphs of rank $3$, with $O(|V|+|E|)$ vertices and maximum degree $O(\Delta(G)^2)$, The complexity is thus $O(\log^2 n \log^4 \Delta)$.
\end{proof}

\begin{proof}[Proof of \Cref{crl:matching-approx}] The algorithm follows the approach of Hopcroft and Karp\cite{HopcroftKarp1973} based on repeated augmentation of the matching. Augmenting the matching $M$ with $P$ means replacing the matching edges in $P \cap M$ with the edges $P\setminus M$. For each $\ell =1$ to $2/\eps$, we find a maximal set of vertex-disjoint augmenting paths of length $\ell$, and we augment them all. Given a matching $M$, an augmenting path $P$ with respect to $M$ is a path that starts with an unmatched vertex, alternates between non-matching and matching edges, and finally ends in an unmatched vertex. Hopcroft and Karp \cite{HopcroftKarp1973} show that this produces a $(1+\eps)$-approximation of maximum matching. See also \cite{lotkerMatchingImproved}, which uses a similar method to obtain a $O(\log n/\eps^3)$-round randomized distributed algorithm for $(1+\eps)$-approximation of maximum matching, by applying Luby's algorithm~\cite{luby86}.

We now discuss how to compute a maximal set of vertex-disjoint augmenting paths of a given length $\ell$. Form the hypergraph $H$ on vertex set $V$, with a hyperedge $\{v_1, \dots, v_\ell \}$ for every augmenting path $v_1, \dots, v_{\ell}$. This hypergraph has rank at most $\ell$ and maximum degree at most $\Delta^{\ell}$; every maximal matching of $H$ corresponds to a length-$\ell$ vertex-disjoint augmenting path. Moreover, a single round of communication on $H$ can be simulated in $O(\ell)$ rounds of the base graph $G$.  Placing these parameters in the bound of \Cref{thm:hypergraph}, we get complexity $O( \ell \cdot \ell^2 \log(n\Delta^{\ell}) \log n \log^4(\Delta^{\ell}))=O(\log^2 n\log^5 \Delta/\eps^8)$ 
rounds for each value of $\ell$ up to $2/\eps$. Thus, the overall complexity is $O(\log^2 n\log^5 \Delta/\eps^9)$. 
\end{proof}

\begin{proof}[Proof of \Cref{crl:orientation}]
We follow the approach of Ghaffari and Su\cite{GS17}, which iteratively improves the orientation by reducing its maximum out-degree via another type of augmenting path. Let $D=\lceil\lambda (1+\eps)\rceil$. Given an arbitrary orientation, Ghaffari and Su call a path $P$ an augmenting path for this orientation if $P$ is a directed path that starts in a node with out-degree at least $D+1$ and ends in a node with out-degree at most $D-1$. Augmenting this path means reversing the direction of all of its edges. This would improve the orientation by decreasing the out-degree of one of the nodes whose out-degree is above the budget $D$, without creating a new node with out-degree above the threshold.

Let $G_0$ be the graph with our initial arbitrary orientation. Define $G'_{0}$ to be a directed graph obtained by adding a source node $s$ and a sink node $t$ to $G_0$. Then, we add $\text{outdeg}_{G_0}(u)-D$ edges from $s$ to every node $u$ with outdegree at least $D+1$, and $D-\text{outdeg}_{G_0}(u)$ edges from every node $u$ with outdegree at most $D-1$ to $t$. We improve the orientation gradually in $\ell=O(\log n/\eps)$ iterations. In the $i^{th}$ iteration, we find a maximal set of edge-disjoint augmenting paths of length $3+i$ from $s$ to $t$ in $G'_i$, and then we reverse all these augmenting paths. The resulting graph is called $G'_{i+1}$. Ghaffari and Su\cite[Lemma D.6]{GS17} showed that each time the length of the augmenting path increases by at least one, and at the end, no augmenting paths of length at most $\ell=O(\log n/\eps)$ remains. They used this to prove that there must be no node of out-degree $D+1$ left, at the end of the process, as any such node would imply the existence of an augmenting path of length at most $\ell$ \cite[Lemma D.9]{GS17}.

The only remaining algorithmic piece is to compute a maximal set of edge-disjoint augmenting paths of length at most $3+i < \ell$, in a given orientation. We do so using \Cref{thm:hypergraph}, by viewing each edge as one vertex of our hypergraph, and each augmenting path of length at most $3+i < \ell$ as one hyperedge. The round complexity is at most $O(\ell \cdot \ell^2 \log(n\Delta^{\ell}) \log n \log^4(\Delta^{\ell}))$ , where the first $\ell$ factor  is because simulating each hyperedge needs $\ell$ rounds. This is the complexity for each iteration. For $\ell=O(\log n/\eps)$ iterations, the total complexity is
$O(\log^{10} n\log^5 \Delta/\eps^9)$. 
\end{proof}

\section{The Lov\'{a}sz Local Lemma}
\label{lll-sec1}
We will consider the following, somewhat restricted, case of the LLL: the probability space $\Omega$ is defined by variables $X(1), \dots, X(v)$; each $X(i)$ takes on values from some countable domain $D$. The variables $X(1), \dots, X(v)$ are all mutually independent. Every bad event $B \in \mathcal B$ is a boolean function of a set of variables $S_B \subseteq [v]$. We say that a configuration $X$ \emph{avoids $\mathcal B$} if every $B \in \mathcal B$ is false on $X$.

We define a \emph{dependency graph} $H$ for $\mathcal B$, to be an undirected graph on vertex set $\mathcal B$, with an edge $(A,B)$ if $S_A \cap S_B \neq \emptyset$; we write this as $A \sim B$, and we define $N(B) = \{A \mid A \sim B \}$; note that with definition we also have $B \sim B$. With this notation, we can state the LLL in its simplest ``symmetric'' form: if the dependency graph $H$ has maximum degree $d-1$, and every bad-event $B \in \mathcal B$ has $\Pr_{\Omega}(B) \leq p$, and $ e p d \leq 1$, then there is a positive probability that no $B \in \mathcal B$ occurs.

The LLL underlies a wide variety of combinatorial constructions. Thus, a distributed LLL algorithm is a key building-block for graph algorithms, such as frugal or defective vertex-colorings. In such settings, we have a communication graph $G$ and a variable for every vertex $x \in G$, and have a bad-event $B_x$ indicating that the coloring fails for $x$ in some way; for example, in a frugal coloring, a bad-event for $x$ may be that $x$ has too many neighbors with a given color. Each vertex $x$ typically only uses information about its $r$-hop neighborhood, where $r$ is very small (and typically is $O(1)$). In this case, the dependency graph $H$ is essentially the same as $G^r$. To avoid further confusion, we will assume that the communication graph is the same as the dependency graph; since communication rounds on $H$ may be simulated in $O(r)$ rounds of $G$, this typically only changes the overall runtime of our algorithms by a small (typically constant) factor. So we also assume that $| \mathcal B | = n$ and $d = \Delta+1$.

\subsection{Previous Distributed LLL Algorithms}
The LLL, in its original form, only shows that there is an exponentially small probability of avoiding the events of $\mathcal B$, so it does not directly give efficient algorithms. There has been a long history of developing algorithmic versions of the LLL, including distributed algorithms. A breakthrough result of Moser \& Tardos \cite{mt} gave one of the first general serial algorithms for the LLL; they also discussed a parallel variant, which can easily be converted into an distributed algorithm running in $O(\log^2 n)$ rounds. This algorithm converges under essentially the same conditions as the probabilistic LLL, namely it requires $ e p d (1 +\eps) \leq 1$ for some constant $\eps > 0$.

In \cite{pettie}, Chung, Pettie \& Su began to investigate the algorithmic LLL specifically in the context of distributed computations. They give an algorithm running in $O( \log^2 d \log n)$ rounds (subsequently improved to $O(\log d \log n)$ by \cite{gmis}).  They also discuss an alternate algorithm running in $O( \frac{ \log n}{ \log (e p d^2) } )$ rounds, under the stronger LLL criterion $ e p d^2 < 1$. They further showed that the criterion $ e p d^2 < 1$, although significantly weaker than the LLL itself, is still usable to construct a number of combinatorial objects, such as frugal colorings and defective colorings. We refer to this type of weakened LLL condition as a \emph{polynomially-weakened LLL criterion} (pLLL). We will use this algorithm as a key subroutine; note in particular that if we satisfy the pLLL criterion $p d^3 < 1$, then this algorithm runs in $O( \frac{\log n}{\log d})$ rounds.

More recently, Fischer \& Ghaffari \cite{ghaffari-lll} have algorithm running in $2^{O(\sqrt{ \log \log n})}$ rounds, under the pLLL criterion $p (e d)^{32} < 1$,  as long as $d < (\log \log n)^{1/5}$. Although the general LLL algorithm of \cite{ghaffari-lll} has a significant limitation on degree, they nevertheless used it to construct a number of graph colorings including defective coloring, frugal coloring, and vertex coloring in $2^{O(\sqrt{\log \log n})}$ time (for arbitrary degree graphs).  On the other hand, \cite{brandt} has shown a $\Omega (\log \log n)$ lower bound on the round complexity of distributed LLL, even under a pLLL criterion.

In addition to these general algorithms, there have been a number of algorithmic approaches to more specialized LLL instances. In \cite{chang2018complexity}, Chang et al. have investigated the LLL when the graph $G$ is a tree; they develop an $O(\log \log n)$ round algorithm in that case. In \cite{harris-llll}, Harris developed an $O(\log^3 n)$-round algorithm for a form of the LLL known as the Lopsided Lov\'{a}sz Local Lemma (which applies to more general types of probability spaces). Finally, there have been a number of parallel PRAM algorithms developed for the LLL, including \cite{mt, harris2}; these are often similar to distributed LLL algorithms but are not directly comparable.

\subsection{Graph Shattering}
The LLL algorithms use a general technique for building distributed graph algorithms known as \emph{graph shattering.} These algorithms have two phases. In the first
phase, there is some random process, which satisfies most of the vertices. The choices made by these ``good'' vertices are permanently commited to, leaving us with a residual set $R$ of unsatisfied vertices. The connected components of $G[R]$ are small with high probability. In the second phase, a deterministic algorithm is used to solve the residual problem on each
component of $G[R]$. We provide a technical overview of this process in Appendix~\ref{shattering-appendix}, which we summarize as follows here:
\begin{theorem}
  \label{shattering-thm}
  Suppose each vertex $v$ survives to a residual graph $R$ with
  probability at most $(e \Delta)^{-4 c}$, and this bound holds even
  for adversarial choices of the random bits outside the $c$-hop
  neighborhood of $v$ for some constant $c \geq 1$.

  Then w.h.p each connected component of the residual graph has size at most $O(\Delta^{2 c} \log n)$.
  
  If the residual problem can be solved via a $\seqloc[r]$
  procedure, then the residual problem can be solved w.h.p.\ in the \LOCAL
  model in $\min \big \{ r 2^{O(\sqrt{\log \log n})}, O( r(\Delta(G^r) + \log^* n) ) \big \}$ rounds.
  \end{theorem}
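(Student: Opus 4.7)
The proof has two claims to establish: a component-size bound for $R$, and an algorithmic bound for solving the residual problem. I would attack them in that order.

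For the size bound, the plan is a classical Beck-style witness-tree argument. Suppose some connected component of $R$ has size $\geq s := C \Delta^{2c}\log n$ for a suitable constant $C$. I would first show that any such component contains a $(2c)$-tree: a set $T = \{v_1,\dots,v_k\}$ with $k = \Omega(s/\Delta^{2c}) = \Omega(\log n)$ such that (a) pairwise $G$-distances strictly exceed $2c$, and (b) $T$ induces a connected subgraph in $G^{2c+1}$. This follows by a standard greedy extraction from a BFS spanning tree of the component in $G$: repeatedly pick a vertex and delete its $2c$-neighborhood from $C$, using the BFS structure to argue that the chosen vertices remain connected after one more hop.

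Next I would bound $\Pr[\text{all } v_i \in R]$ by iterative conditioning. Because pairwise $G$-distances exceed $2c$, the $c$-neighborhoods $N^c(v_i)$ are disjoint. Let $\mathcal F_{-1}$ denote the $\sigma$-algebra generated by random bits outside $N^c(v_1)$; the events $\{v_j \in R\}_{j\geq 2}$ are $\mathcal F_{-1}$-measurable (they depend only on bits in the disjoint neighborhoods $N^c(v_j)$ and what the rest of the world does to them), so
\[
\Pr[\text{all } v_i \in R] = \E\bigl[\mathbb{1}[v_2,\dots,v_k\in R]\cdot \Pr[v_1\in R\mid \mathcal F_{-1}]\bigr] \leq (e\Delta)^{-4c}\cdot \Pr[v_2,\dots,v_k\in R]
\]
by the adversarial hypothesis. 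Iterating yields $\Pr[\text{all } v_i\in R] \leq (e\Delta)^{-4ck}$. A standard bound says that the number of $k$-vertex subtrees of $G^{2c+1}$ rooted at a fixed vertex is at most $(e\Delta^{2c+1})^{k-1}$, so union-bounding over roots and witnesses gives total probability at most $n\cdot (e\Delta^{2c+1})^k\cdot (e\Delta)^{-4ck} \leq n\cdot \Delta^{-\Omega(ck)}$, which is $n^{-\Omega(1)}$ for $k = \Omega(\log n)$.

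For the algorithmic part, once the components of $R$ have size $N = O(\Delta^{2c}\log n)$, I would apply the $\seqloc[r]$ procedure inside each component in parallel via Proposition~\ref{prop:StoL}/\ref{prop:ZtoL}. For the first bound, I would run the deterministic Panconesi--Srinivasan network decomposition on $G^r$ restricted to each component, yielding $r\cdot 2^{O(\sqrt{\log N})}$ rounds; in the shattering regime of interest (where $\Delta$ is at most polylogarithmic in $n$), $\log N = O(\log\log n)$ and this collapses to $r\cdot 2^{O(\sqrt{\log\log n})}$. For the second bound, I would use the BEK15 deterministic $(\Delta(G^{2r})+1)$-coloring, giving $O(r(\Delta(G^r)+\log^* n))$ rounds exactly as in Proposition~\ref{prop:ZtoL}.

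The most delicate step is the witness extraction in claim~1 — one must simultaneously guarantee pairwise separation of $T$ in $G$ and connectedness of $T$ in $G^{2c+1}$ (so that the tree-counting bound applies). The iterative conditioning and the union bound are then essentially mechanical, and the algorithmic step is a direct invocation of the framework already established in Section~\ref{sec:basic-derand-local}.
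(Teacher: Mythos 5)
Your first part (the component-size bound) is essentially the paper's own argument: the paper extracts from any large surviving component a well-separated witness set (a ``$c$-backbone'': independent in $G^c$, connected in $G^{3c}$), multiplies the survival probabilities, counts the witnesses containing a fixed root (Proposition~\ref{count-prop}), and union-bounds; your $(>2c)$-separated, $G^{2c+1}$-connected witness tree with the standard $(e\Delta^{2c+1})^{k}$ tree-counting bound is the same argument with slightly different radii, your DFS-greedy extraction does give both separation and $G^{2c+1}$-connectivity, and the arithmetic works. (Your measurability claim in the iterative conditioning is stated loosely -- the hypothesis bounds conditional probabilities but does not literally say that $\{v_j\in R\}$ is determined by the bits in $N^c(v_j)$ -- but this is exactly the level of rigor of the paper's own Proposition~\ref{backbone-free-prop}, so I do not count it against you.)

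The genuine gap is in the algorithmic part, specifically the $r\,2^{O(\sqrt{\log\log n})}$ branch. You prove it only under the restriction $\Delta\leq\polylog n$, but the theorem claims it unconditionally, and the paper needs it unconditionally: it is invoked in \Cref{tr1} (and hence \Cref{defect-coloring-prop} and the $k$-SAT application), where $d$ may be polynomial in $n$, so the residual components have size $\Delta^{\Theta(c)}\log n=n^{\Theta(1)}$ and running Panconesi--Srinivasan inside each component gives only $2^{O(\sqrt{\log n})}$. The missing idea is the paper's \Cref{con-size2}: compute a $(2,O(\log\log n))$-ruling set $W$ of $G^{c}[R]$ in $O(\log\log n)$ rounds, cluster each residual vertex to its nearest ruling vertex, and observe that any set of ruling vertices that is connected in the contracted cluster graph is independent in $G^c$ and connected in a bounded power of $G$ inside $R$ -- i.e., a $c$-backbone. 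By the backbone-free property (Proposition~\ref{backbone-free-prop}), which is a strictly stronger consequence of the survival hypothesis than the component-size bound (and which your witness argument essentially already establishes for separated sets), such sets have size $O(\log n)$ w.h.p.\ \emph{independently of $\Delta$}. One then runs the deterministic decomposition of \cite{panconesi95} on the cluster graph, whose components have size $O(\log n)$, in $2^{O(\sqrt{\log\log n})}$ rounds, pulls this back to an $\big(O(\log\log n),O((\log\log n)^2)\big)$-decomposition of $G[R]^r$, and applies Proposition~\ref{prop:StoL} to get $r\,2^{O(\sqrt{\log\log n})}$. Without some such degree-independent clustering step your proposal proves a weaker theorem; your second runtime bound, via coloring a power of $G$ as in Proposition~\ref{prop:ZtoL}, is fine and matches the paper.
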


\subsection{Bootstrapping}
There is another important subroutine used in our LLL algorithm, referred to as \emph{bootstrapping}. This is a technique wherein LLL algorithms can be used to ``partially'' solve a problem, generating a new LLL problem instance whose slack is ``amplified.'' This technique was first introduced by \cite{chang2017time}, and extended to more general parameters in \cite{ghaffari-lll}. 

When developing our LLL algorithms, there are two ways of measuring success. First, there is the global guarantee: what is the probability that all $\mathcal B$ is avoided? Second, there is a local guarantee: for any fixed bad-event $B$, what is the probability that the variable assignment generated by our algorithm makes $B$ be true? We say that the algorithm has \emph{local failure probability} $\rho$ if every $B \in \mathcal B$ has a probability at most $\rho$ of being false. This may be much smaller than the global failure probability.

Bootstrapping is a method of converting an LLL algorithm to solve $\mathcal B$, which has some guaranteed local failure probability, into a new LLL instance which represents $\mathcal B$. We can summarize it as follows:
\begin{lemma}
  \label{boot-lemma}
  Let $A$ be a randomized LLL algorithm with runtime $r$ and local failure probability $\rho$ to solve an LLL instance $\mathcal B$ with parameters $p, d$. Then we can generate a new LLL instance $\mathcal C$ on a communications/dependency graph $H$ such that:
  \begin{enumerate}
  \item $|\mathcal C| = |\mathcal B|$.
  \item One communication round of $H$ can be simulated in $O(r)$ rounds of $G$.
  \item If we solve $\mathcal C$ then we can generate a solution for $\mathcal B$ in $O(r)$ rounds of $G$. 
  \item $\mathcal C$ has parameters $p' = \rho$ and $d' = d^{2 r}$
  \end{enumerate}
\end{lemma}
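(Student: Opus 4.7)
The plan is to take the random bits used by algorithm $A$ as the variables of the new LLL instance $\mathcal C$. For each bad-event $B \in \mathcal B$, let $R_B$ denote the private random bit string used at the vertex corresponding to $B$ when running $A$, and treat the family $\{R_B\}_{B \in \mathcal B}$ as the independent variables of a new product probability space. For each $B \in \mathcal B$, introduce a corresponding event $B' \in \mathcal C$ which fires iff the configuration produced by running $A$ for $r$ rounds on the bits $\{R_B\}$ violates $B$. Property (1) is then immediate: the correspondence $B \leftrightarrow B'$ is a bijection, so $|\mathcal C| = |\mathcal B|$. Property (3) is also immediate: given any assignment of $\{R_B\}$ that avoids every $B' \in \mathcal C$, we recover a valid solution to $\mathcal B$ by running $A$ deterministically with exactly those bits, which takes $O(r)$ rounds of $G$.

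For the probability bound $p' = \rho$ in property (4), note that $B'$ fires precisely when $A$ fails locally at $B$, which by the local failure hypothesis has probability at most $\rho$. For the degree bound $d' = d^{2r}$, observe that by the $r$-round locality of $A$, the indicator of $B'$ is a deterministic function of the variables $\{R_{B''} : B'' \text{ is within distance } r \text{ of } B \text{ in the dependency graph of } \mathcal B\}$. Consequently, $B_1'$ and $B_2'$ share a variable only if $B_1$ and $B_2$ lie within distance $2r$ of one another in the original dependency graph. Since that graph has maximum degree $d-1$, the closed $2r$-neighborhood of any vertex contains at most $d^{2r}$ vertices, yielding $d' \leq d^{2r}$. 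We define the new dependency/communication graph $H$ to be (a subgraph of) the $2r$-th power of the original dependency graph, which establishes property (4) and also gives property (2), since a message along any edge of $H$ can be relayed in $O(r)$ rounds of $G$.

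There is essentially no delicate step here once the translation is set up; the only thing that needs verifying is the $r$-hop determinacy of the output of $A$ at each vertex, which is exactly the definition of an $r$-round \LOCAL algorithm. All four claims of the lemma then read off directly from this construction.
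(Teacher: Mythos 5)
Your proposal is correct and follows essentially the same route as the paper: define the new bad-events as ``$B$ still violated after running $A$,'' view them as functions of the random bits generated during the $r$-round execution of $A$, bound $p'$ by the local failure probability, and bound $d'$ by noting two such events can share randomness only if the originals are within distance $2r$. The only point worth double-checking, which does hold, is that the closed $2r$-neighborhood in a graph of maximum degree $d-1$ indeed has at most $d^{2r}$ vertices.
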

\begin{proof}
  When we run $A$, this generates some value for the underlying variables $X = X(1), \dots, X(v)$. Consider some event $B \in \mathcal B$, and let us define a bad-event $C_B$ to be that $B$ is false on $X$ after the termination of $A$. The event $C_B$ can be thought of as a boolean function of the random variables generated during the execution of $A$.

We claim now that if we can solve the LLL instance $\mathcal C = \{ C_B \mid B \in \mathcal B \}$, then we automatically solve $\mathcal B$ as well. For, if no event $C_B$ holds, then all the events $B \in \mathcal B$ are false at the termination of $A$, and therefore the variables $X$ generated by that algorithm solve the underlying LLL instance $\mathcal B$.

By definition, each bad-event $C_B$ has probability at most $p' = \rho$. Since the \LOCAL algorithm $A$ runs for $r$ steps, the events $C_B$ and $C_{B'}$ can only affect each other if $\text{dist}_G(B, B') \leq 2 r$. Therefore, each event $C_B$ can only affect at most $d' = \Delta^{2 r}$ other bad-events in $\mathcal C$.
\end{proof}

We typically use Lemma~\ref{boot-lemma} to combine two LLL algorithms $A_1, A_2$ into a new hybrid algorithm $A$: we run $A_1$ on our original problem instance $\mathcal B$, generating a new LLL instance $\mathcal C$ which is then solved by $A_2$.

The local success probability of an algorithm is closely linked to the network size parameter $n$. It is possible to run a \text{LOCAL} algorithm $A$ with an alternate choice of $n$; we say in this case that we \emph{bootstrap} $A$ with parameter $n'$ and write the resulting algorithm as $A^{[n']}$. This includes generating new ID's for all the vertices, which will be random bit-strings of length $10 \log n'$. This is often referred to as running $A$ with a ``fake'' value of $n$. In this situation, we can no longer guarantee that $A$ has a high probability of \emph{globally} solving the problem; for one thing, it is possible that two nodes will select the same vertex ID. However, if $A$ runs in time $r(n)$ for a slowly-growing function $r$, we can often guarantee a high \emph{local} success probability.

\begin{proposition}
  \label{bootstrap-prop}
 If $\Delta^{r(w)} \leq w$, then $A^{[w]}$ has local failure probability at most $2/w$.
\end{proposition}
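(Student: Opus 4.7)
The plan is to exploit the locality of $A$: since $A^{[w]}$ runs in $r(w)$ rounds, its output at any fixed bad-event $B \in \mathcal B$ is a deterministic function of the graph structure, the random identifiers, and the private random bits inside the $r(w)$-hop ball $N_B$ of $B$ in the dependency graph. The hypothesis $\Delta^{r(w)} \le w$ forces $|N_B| \le w$, so the relevant local neighborhood is no larger than a legitimate input for $A$ running with its true parameter equal to $w$. The strategy is therefore to condition on the event that identifiers behave well inside $N_B$ and then couple with an honest execution of $A$ on a $w$-vertex graph.

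First, I would control the probability of an identifier collision inside $N_B$. Let $\mathcal U$ be the event that all identifiers in $N_B$ are distinct. Each vertex draws an identifier uniformly from $\{0,1\}^{10\log w}$, so any fixed pair collides with probability $w^{-10}$, and a union bound over at most $\binom{|N_B|}{2}\le w^{2}/2$ pairs gives
\[
  \Pr[\overline{\mathcal U}]\ \le\ \frac{w^{2}}{2\,w^{10}}\ \le\ \frac{1}{w}.
\]

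Second, I would show $\Pr[B\text{ occurs}\mid \mathcal U]\le 1/w$ by a coupling argument. Conditional on $\mathcal U$, the identifiers on $N_B$ form a valid distinct assignment. Build an auxiliary graph $G^{\star}$ on at most $w$ vertices consisting of $N_B$ together with arbitrary padding, and extend the identifiers on $N_B$ to any distinct assignment on $G^{\star}$. Running $A$ on $G^{\star}$ with true size parameter $w$ produces the same distribution of outputs at $B$ as $A^{[w]}$ on $G$, because both executions expose the identical $r(w)$-ball around $B$ and $A$ uses its size parameter only through its round complexity and identifier length. The design guarantee of $A$ says that for every $w$-vertex input with any adversarial distinct identifier assignment the total failure probability is strictly below $1/w$; in particular $\Pr[B\text{ occurs}]<1/w$ for every fixed identifier tuple compatible with $\mathcal U$, and integrating over the conditional distribution preserves this bound.

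Combining the two estimates by total probability yields
\[
  \Pr[B\text{ occurs}]\ \le\ \Pr[\overline{\mathcal U}]+\Pr[B\mid\mathcal U]\ \le\ \frac{2}{w}.
\]
The one delicate point is the coupling step: I must verify that the failure guarantee of $A$ holds uniformly across all adversarial identifier assignments on $w$-vertex inputs, so that restricting to the conditional distribution inside $N_B$ after conditioning on $\mathcal U$ does not spoil the $1/w$ bound. Everything else is a routine union bound and a standard locality argument.
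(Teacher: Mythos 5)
Your proposal is correct and follows essentially the same route as the paper's proof: restrict to the $r(w)$-ball around $B$ (of size at most $\Delta^{r(w)} \leq w$), union-bound the probability of an ID collision there, and otherwise invoke the guarantee of $A$ on a graph of at most $w$ vertices with distinct IDs, so that $B$ fails with probability at most $1/w$, giving $2/w$ in total. The padding of the ball to exactly $w$ vertices is an unnecessary (but harmless) extra step, since the guarantee already applies to graphs with at most $w$ nodes run with parameter $w$.
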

\begin{proof}
  Let $B \in \mathcal B$. Since $A^{[w]}$ is a local algorithm with radius $r(w)$, the behavior of $B$ when run on the graph $G$ will be the same as when run on the graph $G[U]$ where $U$ denotes the set of all vertices within distance $r(w)$ of $B$. We can bound $|U| \leq \Delta^{r(w)} \leq w$.

  The probability that two nodes of $U$ select the same ID is at most $\binom{|U|}{2} 2^{-10 \log w} \leq w^{-4}$. If all vertices have distinct ID's then by definition of $A$ there is a probability of at least $1 - 1/w$ that $A^{[w]}$ succeeds on the \emph{entire graph} $G[U]$; in particular, it succeeds on $B$ with probability at least $1 - 1/w$. Overall, the total success probability is at least $1 - 2/w$.
\end{proof}

\subsection{The LLL for Low-Degree Graphs}
As we have discussed, the algorithm of \cite{ghaffari-lll} is very fast, running in just $2^{O(\sqrt{\log \log n})}$ rounds, but only works for graphs whose degree is very low. In this section, we will use one of the core subroutines of \cite{ghaffari-lll} to obtain an algorithm running in $2^{O(\sqrt{\log \log n})}$ rounds for much larger degree; specifically, we will allow the degree to become as large as $d = 2^{\sqrt{\log \log n}}$, an exponential improvement over \cite{ghaffari-lll}. 

In analyzing this and other LLL algorithms, it is convenient to extend the domain $D$ by adding an additional symbol denoted \q; we say $X(i) = \q$ to indicate that variable $X(i)$ is not determined, but will be later drawn from $D$ with its original sampling probability. We let $\overline D = D \cup \{ \q \}$.

Given any vector $x \in \overline D^v$, and an event $E$ on the space $\Omega$, we define the \emph{marginal probability} of $E$ with respect to $x$ as the probability that $E$ holds, if all variables $i$ with $X(i) = \q$ are resampled from the original distribution. Note that if $x \in D^v$ then the marginal probability of any event with respect to $x$ is either zero or one. Also, if $x = (\q, \dots, \q)$, then the marginal probability of $E$ with respect to $x$ is simply $\Pr_{\Omega}(E)$.

For certain algorithms, it is convenient to assume that $v < \infty, D = \{0, 1 \}$ and $\Omega$  sets $\Pr(X(i) = 0) = \Pr(X(i) = 1) = 1/2$; we refer to this by saying that $\Omega$ is in \emph{normal form.} Any probability distribution $\Omega$ with a finite support can be discretized into a probability distribution $\Omega'$ in normal form, with an arbitrarily small increase in the size of $p$. The resulting communications graph remains the same. When converting a probability space into normal form, the number of variables may increase exponentially (or even more). For the algorithms we encounter in Section~\ref{lll-sec1}, which do not depend on $v$, this is harmless. In Section~\ref{lll-sec2}, we will see some other LLL algorithms which depend on the variable in a more crucial way, and in those cases we cannot afford to transform $\Omega$ into a normal form.

We summarize the main subroutine of \cite{ghaffari-lll}, which is inspired by sequential LLL algorithms of Molloy \& Reed \cite{molloy-reed} and Pach \& Tardos \cite{pach2009conflict}.
\begin{algorithm}[H]
\centering
\caption{Distributed LLL algorithm}
\label{distalg1a}
\begin{algorithmic}[1]
\State Initialize $K \leftarrow \emptyset$; this will be the set of \emph{frozen variables.}
\State Initialize $X = (\q, \dots, \q)$.
\State Compute a $d^2+1$-coloring $\chi$ of $G^2$.
\For{ $i = 1, \dots, d^2 + 1$}:
\For{each bad-event $B$ with $\chi(B) = i$}
\For{each $j \in S_B$}
\If{$j \notin K$ and $X(j) = \q$}
\State Draw $X(j)$ from its distribution under $\Omega$
\State\textbf{if} any $A \sim B$ has marginal probability at least $(e d)^8 p$ under $X$ \textbf{then} update $K \leftarrow K \cup S_A$
\EndIf
\EndFor
\EndFor
\EndFor
\end{algorithmic}
\end{algorithm}

Let $X' \in \overline D$ be the final value of vector $X$, and let $\mathcal A \subseteq \mathcal B$ be the set of all bad-events whose marginal probability under $X'$ is non-zero. We use the following key facts about Algorithm~\ref{distalg1a}:
\begin{theorem}[\cite{ghaffari-lll}]
\label{prev-lll-thm}
 Algorithm~\ref{distalg1a} runs in $O(d^2 + \log^*n)$ rounds. If $\Omega$ is in normal form, then at its termination:
\begin{enumerate}
\item Every $B \in \mathcal B$ has marginal probability under $X$ of at most $2 (e d)^8 p $.
\item Every $B \in \mathcal B$ has $\Pr(B \in \mathcal A) \leq (e d)^{-8}$; furthermore, this probability bound holds even if the random bits made outside a two-hop radius of $B$ are chosen adversarially.
\end{enumerate}
\end{theorem}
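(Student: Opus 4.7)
The plan is to verify the three claims of \Cref{prev-lll-thm} in turn: runtime, marginal probability bound, and local failure probability.

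For the runtime, the first step is to compute the coloring $\chi$ of $G^2$ with $d^2+1$ colors. Since $G^2$ has maximum degree at most $d^2$, Linial's classical algorithm (or a modern variant like that of \cite{BEK15}) produces such a coloring in $O(d^2 + \log^* n)$ rounds. The outer loop then runs for $d^2+1$ iterations. Within each color class, any two bad-events $B,B'$ with $\chi(B)=\chi(B')$ are at distance at least three in $G$, so no $A \sim B$ shares a variable with any $A' \sim B'$. Hence the variable assignments, marginal-threshold checks, and updates to $K$ for distinct same-colored bad-events do not interfere, and each color class can be processed in $O(1)$ rounds of communication over $G^2$.

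For the marginal probability bound I would use a simple invariant. Consider the moment just before line~8 sets some $X(j)$ during the processing of $B$ with $j \in S_B$ and $j \notin K$. At that moment every $A \in \mathcal B$ has current marginal probability strictly below $(ed)^8 p$, since otherwise $S_A$ would already have been added to $K$ by the freezing rule, and in particular $j \in S_A$ for some relevant $A$ would already be frozen. The key use of normal form is that each $X(j)$ is uniform on $\{0,1\}$, so conditioning on its realized value at most doubles the conditional probability of any event that depends on $X(j)$. Thus after the assignment, every $A$ has marginal at most $2(ed)^8 p$. If this new marginal exceeds $(ed)^8 p$, then $A$ is immediately frozen on line~9 and no later step changes any variable in $S_A$, so its marginal stays at most $2(ed)^8 p$ forever; otherwise the inductive argument continues.

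For the local failure bound $\Pr(B \in \mathcal A) \leq (ed)^{-8}$ I would employ a witness-tree style analysis in the spirit of Moser--Tardos, adapted to the freeze-based algorithm. The event $B \in \mathcal A$ holds exactly when the final partial assignment $X'$ does not kill $B$, i.e., every $j \in S_B$ either received a $B$-consistent value or was frozen. One decomposes this into two contributions: (i) the probability that all of $S_B$ was sampled to $B$-consistent values, which by the marginal bound of part~(1) is at most $2(ed)^8 p$; and (ii) the probability that at least one $j \in S_B$ was frozen during processing of some $B' \sim B$ because some $A' \sim B'$ crossed the threshold $(ed)^8 p$. Each such threshold crossing has probability $O((ed)^{-8})$ by Markov's inequality applied to the marginal just before the crossing, and one takes a union bound over rooted trees of such crossings attached to $N(B)$; with branching factor bounded by $d$ and polynomial slack $(ed)^8$, this geometric series yields $(ed)^{-8}$ overall. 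The adversarial-bits claim then follows since every variable assignment or freezing decision that can affect $S_B$ comes from some $B' \sim B$ (distance $1$) and possibly some $A' \sim B'$ (distance $\leq 2$), so only random bits inside the 2-hop radius of $B$ are relevant.

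The main obstacle will be the careful bookkeeping in the witness-tree argument: one needs the polynomial slack $(ed)^8 p$ in the threshold to simultaneously absorb the branching factor $d$ of the dependency graph, the union bound over witness shapes, and the factor-of-two loss from normal form, in order to land exactly at the $(ed)^{-8}$ local failure bound with only 2-hop dependence rather than a weaker polynomial bound with larger radius.
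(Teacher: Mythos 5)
First, a point of comparison: the paper itself does not prove \Cref{prev-lll-thm} at all --- it is imported as a black box from \cite{ghaffari-lll} (whose analysis follows the frozen-variable approach of Molloy--Reed and Pach--Tardos), so there is no in-paper proof to measure you against. Judged on its own terms, your runtime argument and your argument for part (1) are essentially the standard ones and are sound. One phrasing quibble: two same-colored events $B,B'$ can have $A \sim B$, $A' \sim B'$ with $S_A \cap S_{A'} \neq \emptyset$ when $d_G(B,B')=3$, so your claim that no such pair shares a variable is literally false; the correct statement is that any conflict on the variables actually read or written (the sampled variables in $S_B$, and the membership in $K$ of variables of $S_{B'}$) would force $d_G(B,B') \leq 2$, which the $G^2$-coloring excludes. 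Your doubling argument for part (1) is exactly right, and correctly exploits that $B \sim B$, so every event whose marginal changes is re-checked immediately after the change and frozen before it can move again.

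The gap is in part (2). A Moser--Tardos witness-tree analysis is the wrong tool: Algorithm~\ref{distalg1a} never resamples a variable, so there is no cascade of resamplings for a tree to encode, and the ``union bound over rooted trees with branching factor $d$'' you invoke has nothing to sum over --- worse, as sketched it would cost extra factors of $d$ (your own framing unions over pairs $A' \sim B' \sim B$, i.e.\ up to $d^2$ events) that the polynomial slack cannot absorb, since even a one-level union bound over the at most $d$ events $A \sim B$ already essentially exhausts it. The ingredient you never supply is a bound on the probability that a \emph{fixed} event $A$ ever has its marginal reach the threshold $(ed)^8 p$. ``Markov's inequality applied to the marginal just before the crossing'' is not a valid step, because the crossing time is an adaptive stopping time; the correct argument is that under normal form the marginal of $A$, exposed bit by bit in the (adaptive but predictable) order in which the algorithm samples variables of $S_A$, is a martingale started at $\Pr_{\Omega}(A) \leq p$, so Doob's maximal inequality / optional stopping gives that $A$ ever becomes dangerous with probability at most $p/((ed)^8 p) = (ed)^{-8}$. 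Your contribution (i) is also slightly off: if all of $S_B$ is sampled then the final marginal of $B$ is $0$ or $1$, and part (1) forces it to be $0$, so that contribution vanishes outright rather than being bounded by $2(ed)^8 p$; consequently $B \in \mathcal A$ requires some $A \sim B$ to become dangerous, and a union bound over those $\leq d$ events gives the claimed polynomial bound (with the precise exponents inherited from \cite{ghaffari-lll}). Your 2-hop locality observation is fine, since every variable of every $A \sim B$ is sampled while processing an event within distance $2$ of $B$.
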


\begin{lemma}
\label{prev-lll-thm2}
If $20000 d^8 p \leq 1$, then a configuration avoiding $\mathcal B$ can be found in  $O(d^2) + 2^{O(\sqrt{\log \log n})}$ rounds w.h.p.
\end{lemma}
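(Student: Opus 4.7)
The plan is a two-phase shattering argument, standard for this line of work. I would first run Algorithm~\ref{distalg1a}; by Theorem~\ref{prev-lll-thm} this takes $O(d^2 + \log^* n)$ rounds in the \LOCAL model, and afterwards every bad-event has marginal probability at most $2(ed)^8 p$ and survives into $\mathcal{A}$ with probability at most $(ed)^{-8}$, the latter bound holding even against adversarial choices of the random bits outside the $2$-hop neighborhood of $B$. This matches exactly the hypothesis of Theorem~\ref{shattering-thm} with $c = 2$, so w.h.p.\ the residual subgraph induced by $\mathcal{A}$ breaks into connected components of size $N = O(d^4 \log n) = \mathrm{polylog}(n)$ in the regime of interest.

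In the second (clean-up) phase I would solve the residual LLL inside each of these small components. For this I plan to exhibit an $\seqloc[r]$ procedure with $r$ at most polylogarithmic in $N$, by taking a standard randomized distributed LLL algorithm (e.g.\ parallel Moser--Tardos or the Chung--Pettie--Su procedure, possibly composed with Algorithm~\ref{distalg1a} via the bootstrapping of Lemma~\ref{boot-lemma} to boost the local failure probability) and derandomizing it using Theorem~\ref{thm:basicderandomization}. Invoking the ``$r \cdot 2^{O(\sqrt{\log\log n})}$'' clause of Theorem~\ref{shattering-thm} then turns this into a deterministic \LOCAL simulation on each component running in $\mathrm{polylog}(N) \cdot 2^{O(\sqrt{\log N})} = 2^{O(\sqrt{\log \log n})}$ rounds. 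Adding the two phases yields the claimed bound $O(d^2) + 2^{O(\sqrt{\log \log n})}$.

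The main obstacle that I foresee is the second-phase \SLOCAL subroutine. The residual marginal probability is $p' \leq 2(ed)^8 p$, which under the hypothesis $20000 d^8 p \leq 1$ is bounded by $2 e^8/20000$, a small constant; but the dependency-graph degree remains as large as $d$, so the convergence criteria of the off-the-shelf distributed LLL algorithms are not directly satisfied on the residual for general $d$. My expectation is that one has to leverage the fact that the effective size inside each component is only $N = \mathrm{polylog}(n)$, running a randomized LLL algorithm with a fake size $N$ (in the spirit of Proposition~\ref{bootstrap-prop}) so that its local failure probability drops below $1/N$ and the required \SLOCAL locality can be kept to $\mathrm{polylog}(\log n)$. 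The fiddly part will be balancing these slack parameters so that all the $\mathrm{polylog}(\log n)$-scale overheads are comfortably absorbed into the final $2^{O(\sqrt{\log \log n})}$ factor.
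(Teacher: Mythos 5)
Your phase-1 setup and the overall pipeline are the same as the paper's: run Algorithm~\ref{distalg1a}, invoke \Cref{prev-lll-thm} and \Cref{shattering-thm} with $c=2$, then solve the residual on each small component via an \SLOCAL procedure obtained by derandomization and simulated through the $r\cdot 2^{O(\sqrt{\log\log n})}$ clause of \Cref{shattering-thm}. The genuine gap is in the second phase, which you explicitly leave unresolved, and the patch you sketch would not work. The paper's resolution of the obstacle you flag is not a fake-$n$ argument: after Algorithm~\ref{distalg1a} every residual bad-event has marginal probability at most $q = 2(ed)^8 p$ while the residual dependency degree is still at most $d$, and the hypothesis $20000\,d^8 p \leq 1$ is chosen precisely so that this residual instance itself satisfies the symmetric LLL criterion with constant slack (the paper checks $e d q \leq 1$ and takes $\eps = 1/2$). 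Consequently the parallel Moser--Tardos algorithm \cite{mt} is a Las Vegas procedure on each residual component of size $N$, running in $O(\log^2 N)$ rounds, and \Cref{prop:ZtoS} converts it into an $\seqloc[O(\log^2 N)]$ procedure, which \Cref{shattering-thm} then simulates in $2^{O(\sqrt{\log\log n})}$ rounds. Your proposed substitutes fail here: since $q$ is a constant, $Nq \geq 1$ for any nontrivial component, so neither a single-sample union bound (the trick of \Cref{base-thm-2}) nor running an algorithm with a fake size $N$ pushes the local failure probability below $1/N$; and re-applying Algorithm~\ref{distalg1a} to the residual via \Cref{boot-lemma} gains nothing, because the residual instance (probability $q$, degree $d$) violates that algorithm's own precondition. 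If the residual criterion really were violated for general $d$, as you fear, no choice of fake $n$ would make Moser--Tardos or Chung--Pettie--Su converge; the point is that the polynomial slack in the hypothesis is exactly what rules this out, and making that one-line check is the step your write-up is missing.

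A smaller loose end: you assert $N = O(d^4\log n) = \polylog(n)$ ``in the regime of interest,'' but the lemma is stated for all $d$. The paper dispatches $d \geq \log n$ up front by running Moser--Tardos on the whole graph in $O(\log^2 n) = O(d^2)$ rounds, and only for $d < \log n$ concludes $O(\log^2 N) = O((\log\log n)^2)$, which is what makes the second phase cost $2^{O(\sqrt{\log\log n})}$ cleanly. You need this case split, or at least a short calculation showing that the extra $\polylog(d)$ factors in the simulation cost can be absorbed into the $O(d^2)$ term, to cover all $d$.
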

\begin{proof}
  If $d \geq \log n$, then the Moser-Tardos algorithm itself runs in $O(\log^2 n) = O(d^2)$ time. So we assume that $d < \log n$. We also assume wlg that $\Omega$ is in normal form.
  
Run Algorithm~\ref{distalg1a} in $O(d^2 + \log^*n)$ rounds; let $X' \in \overline{D}$ denote the partial solution that it generates. We can view the residual problem, of converting the partial solution $X'$ into a full solution $X$, as an LLL instance on the bad-events $\mathcal A$.  By Theorem~\ref{prev-lll-thm},  each $B \in \mathcal A$ has marginal probability at most $q= 2 (e d)^{8} p$. Our condition $20000 d^{8} p \leq 1$ ensures that $e d q \leq 1$, and so $\mathcal A$ satisfies the symmetric LLL criterion with slack $\eps = 1/2$. Therefore, the algorithm of \cite{mt} gives a Las Vegas procedure to solve  this residual problem.

By Theorem~\ref{prev-lll-thm}, each $B \in \mathcal B$ survives to the residual with probability $(e d)^{-8}$. By Theorem~\ref{shattering-thm} with $c = 2$, each connected component in the residual has size at most $N = O(\Delta^{2 c} \log n)$. The algorithm of \cite{mt} gives a Las Vegas algorithm on each component in $O(\log^2 N) = O( (\log \log n)^2 )$ rounds (here we use our bound $d < \log n$). By Proposition~\ref{prop:ZtoS}, this also yields an $\seqloc[O( (\log \log n)^2 )]$ algorithm on each component. By Theorem~\ref{shattering-thm}, therefore, the residual problem can be solved in $2^{O(\sqrt{\log \log n})}$ rounds.
\end{proof}

For small values of $d$, there is another way to run the above algorithm, which will be critical in many of our bootstrapping constructions.
\begin{lemma}
\label{base-thm-2}
Suppose $20000 d^{10} p \leq 1$ and $d \geq \log^* n$. Then there is randomized \LOCAL algorithm running in $O(d^2)$ rounds and with local failure probability at most $e^{-\frac{1}{10000 d^{10} p}}$.
\end{lemma}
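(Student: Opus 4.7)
The plan is to amplify the local success guarantee of \Cref{distalg1a} while keeping the round complexity at $O(d^2)$, by running many independent copies in parallel and exploiting the polynomial slack $20000 d^{10} p \leq 1$. Since $d \geq \log^* n$, a single call to \Cref{distalg1a} finishes in $O(d^2 + \log^* n) = O(d^2)$ rounds, and by \Cref{prev-lll-thm}(2) each bad event $B$ has $\Pr(B \in \mathcal A) \leq (ed)^{-8}$ with this event determined solely by the random bits inside $B$'s $2$-hop neighborhood.

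First I would run $T = \lceil 1/(80000\, d^{10} p \log(ed)) \rceil = \Theta(1/(d^{10} p))$ independent copies of \Cref{distalg1a} in parallel, sharing a single $(d^2{+}1)$-coloring of $G^2$ but using independent private random tapes. Because \LOCAL imposes no bound on message size, the whole ensemble still finishes in $O(d^2)$ rounds. Write $\mathcal A_t$ and $X_t$ for the at-risk set and partial assignment produced by copy $t$. The $2$-hop locality of $[B \in \mathcal A_t]$ combined with the independence of the random tapes across copies yields, for each fixed $B$,
\[
  \Pr\bigl( B \in \mathcal A_t \text{ for every } t \in [T] \bigr) \leq (ed)^{-8T} \leq e^{-1/(10000 d^{10} p)}.
\]

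To emit a single globally consistent configuration $X$, I would add a local patching step of length $O(d^2)$: every variable $j$ sets $X(j) \leftarrow X_{t^*(j)}(j)$, where $t^*(j)$ is the smallest index $t$ such that every bad event in a constant-radius neighborhood of $j$ lies outside $\mathcal A_t$ (defaulting to an arbitrary value when no such $t$ exists). Because the at-risk indicator of $B$ depends only on its $2$-hop neighborhood, whenever some copy $t$ resolves every bad event near $B$, all variables of $S_B$ agree on selecting such a $t$, so $B$ is false in $X$. Combined with the tail bound above, this yields $\Pr(B \text{ true in } X) \leq e^{-1/(10000 d^{10} p)}$ for every $B$, which is the claimed local failure probability.

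The main obstacle will be designing the patching rule so that neighboring variables in $S_B$ never disagree in a way that leaves $B$ true in the combined assignment, while using only information from a small neighborhood of each variable. This amounts to propagating the $2$-hop locality of \Cref{prev-lll-thm}(2) through the local rule and checking that the set of frozen variables in each $X_t$ does not interfere; once that is in place, all of the probabilistic heavy lifting is already done by \Cref{prev-lll-thm} and the rest is combinatorial bookkeeping.
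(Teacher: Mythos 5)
There is a genuine gap, and it sits exactly where you defer to ``combinatorial bookkeeping'': the recombination of the $T$ independent partial solutions into one global assignment. Your amplification step is fine --- with independent tapes, $\Pr(B \in \mathcal A_t \text{ for all } t) \leq (ed)^{-8T}$ does give the desired $e^{-1/(10000 d^{10} p)}$ --- but the patching rule does not produce an assignment in which $B$ is guaranteed false. The index $t^*(j)$ is defined from the bad events in a neighborhood of the \emph{variable} $j$, and two variables $j, j' \in S_B$ see different neighborhoods (each belongs to other bad events besides $B$), so in general $t^*(j) \neq t^*(j')$. Then the variables of $S_B$ are a mixture of values drawn from different copies, and the fact that $B \notin \mathcal A_t$ for each individually chosen $t$ says nothing about $B$ on the mixed configuration: mixing two samples from $\Omega$ can make $B$ true with probability as large as roughly $2^{|S_B|}p$, which is exactly the ``fragility'' phenomenon of \Cref{fragile-definition} that the paper must \emph{assume} to be small in \Cref{tr1}, and which is not implied by the hypotheses $20000 d^{10} p \leq 1$ and $d \geq \log^* n$ of this lemma. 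Nor can a local rule force all of $S_B$ to agree: along a chain of overlapping bad events the selected index must change somewhere, and the bad events straddling that boundary are uncontrolled. So the approach fails for general LLL instances, not merely for lack of detail.

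For comparison, the paper's proof avoids any recombination. It runs \Cref{distalg1a} once, notes via \Cref{prev-lll-thm} that every surviving event has marginal probability at most $q = 2(ed)^8 p$, and observes that on a residual component of size $N < 1/q$ the trivial one-round resampling is a Las Vegas algorithm, which \Cref{prop:ZtoL} (with $r=1$) derandomizes into an $O(d^2 + \log^* n)$-round deterministic solver for that component. Hence $B$ can only remain true if its residual component has size at least $1/q$, and the shattering tail bound of \Cref{con-size0} bounds that probability by $(ed)^{1-3\lambda} \leq e^{-2\lambda}$ with $\lambda = \frac{1}{20000 d^{10}p}$, giving the stated local failure probability; the assumption $d \geq \log^* n$ absorbs the $\log^* n$ term. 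If you want to salvage your repetition idea, you would need either a consistency mechanism at copy boundaries (which is essentially as hard as the original problem) or a fragility-type assumption, neither of which is available here.
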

\begin{proof}
  Let $\lambda = \frac{1}{20000 d^{10} p} \geq 1$.  Assume wlg with $\Omega$ is in normal form. We run Algorithm~\ref{distalg1a} to obtain a partial solution $X' \in \overline{D}$. Now, consider the residual problem of converting the partial solution $X'$ into a full solution $X$.  By Theorem~\ref{prev-lll-thm},  each $B \in \mathcal A$ has marginal probability at most $q = 2 (e d)^{8} p$, and this depends only on the two-hop neighborhood of $B$. 

Let $\mathcal R$ be the connected component containing $B$ in the residual graph and let $N = |\mathcal R|$ (if $B$ is false, then $\mathcal R = \emptyset$). There is a trivial $1$-round randomized algorithm for the residual problem, based on selecting a random draw from $\Omega$ and checking if $B$ is true. The overall probability that this randomized algorithm fails is at most $N q$. Therefore, if $N q < 1$, this gives a Las Vegas algorithm for the component of $B$. By Proposition~\ref{prop:ZtoL} with $r = 1$, the component can be solved deterministically in $O(d^2 + \log^* n)$ rounds.

Now suppose we run the above deterministic algorithm, without knowing the precise size of the component of $B$. As long as $N < 1/q$, this will succeed, and cause $B$ to be false. Therefore, the only way in which $B$ can be true, is if its component has size at least $1/q$. By Theorem~\ref{con-size0}, we estimate this as:
$$
\Pr(N \geq 1/q) \leq (e d)^{-\frac{1}{q d^2} + 1}  \leq (e d)^{1-\frac{1}{2 e^8 d^{10} p}} \leq (e d)^{ 1 - 3 \lambda}
$$

Since $\lambda \geq 1$, this in turn is at most $e^{-2 \lambda}$ as desired. So we have described an algorithm $A_1$ running in time $O (d^2 + \log^* n)$. We can remove the factor of $\log^* n$ by our assumption on the size of $d$.
\end{proof}

\subsection{Bootstrapping the Two Algorithms}
In order to get faster runtimes, we will combine the algorithms of Lemmas~\ref{prev-lll-thm2} and~\ref{base-thm-2}, via a series of bootstrapping steps.

\begin{proposition}
  \label{base-thm-3}
  Suppose $d^{15} p \leq 1$ and $d \geq \log^* n$. Then for $n$ sufficiently large and any $i \geq 0$, one can transform $\mathcal B$ into a new problem instance $\mathcal C$ with parameters
  $$
  p' = \frac{1}{\exp^{(i)}(\frac{1}{d^{15} p})},  \qquad d' = \exp^{(i)}(d^3)
  $$

The communication graph for $\mathcal C$ can be simulated in $\exp^{(i-1)} (d^3)$ rounds with respect to $G$.
\end{proposition}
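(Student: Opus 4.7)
The plan is to prove the proposition by induction on $i$, iteratively applying the pLLL algorithm of Lemma~\ref{base-thm-2} to the current instance and then invoking the bootstrapping Lemma~\ref{boot-lemma} to package the result as a new LLL instance. Concretely, define a sequence of instances $\mathcal{C}_0 = \mathcal{B}, \mathcal{C}_1, \mathcal{C}_2, \ldots$ with parameters $(p_i, d_i)$ and communication graphs $H_i$, where $H_0 = G$. The base case $i=0$ is trivial: since $d \geq 1$ we have $p \leq d^{15} p = 1/\exp^{(0)}\!\bigl(1/(d^{15}p)\bigr)$ and $d \leq d^3 = \exp^{(0)}(d^3)$, with no simulation overhead.

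For the inductive step, assume $\mathcal{C}_i$ has $p_i \leq 1/\exp^{(i)}\!\bigl(1/(d^{15}p)\bigr)$ and $d_i \leq \exp^{(i)}(d^3)$, with $H_i$ simulable in $\exp^{(i-1)}(d^3)$ rounds of $G$. First, I would verify the two preconditions of Lemma~\ref{base-thm-2} applied to $\mathcal{C}_i$: the size condition $d_i \geq \log^*n$ is immediate from $d_i \geq d \geq \log^* n$, and the pLLL condition $20000\, d_i^{10} p_i \leq 1$ follows because the induction hypothesis yields
\[
d_i^{10} p_i \;\leq\; \frac{\exp^{(i)}(d^3)^{10}}{\exp^{(i)}\!\bigl(1/(d^{15}p)\bigr)},
\]
and (for $n$, hence $d$, sufficiently large relative to absorbed constants) $1/(d^{15}p) \geq 1$ combined with the gap between the exponent $15$ in the hypothesis and the exponent $10$ in the pLLL bound provides ample slack. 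Lemma~\ref{base-thm-2} then produces an algorithm $A_i$ running in $r_i = O(d_i^2)$ rounds of $H_i$ with local failure probability $\rho_i = e^{-1/(10000\, d_i^{10} p_i)}$. Feeding $A_i$ into Lemma~\ref{boot-lemma} yields the next instance $\mathcal{C}_{i+1}$ with $p_{i+1} = \rho_i$ and $d_{i+1} = d_i^{2 r_i} = d_i^{O(d_i^2)}$, whose graph $H_{i+1}$ simulates in $O(r_i)$ rounds of $H_i$, i.e.\ in $O(d_i^2)\cdot\exp^{(i-1)}(d^3) \leq \exp^{(i)}(d^3)$ rounds of $G$.

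It remains to check that the new parameters match the claimed bounds. For the degree, $d_{i+1} = 2^{O(d_i^2 \log d_i)}$, which (using $d_i \leq \exp^{(i)}(d^3)$ together with $d$ large enough that $\log d \ll d$ and constants get absorbed) is at most $\exp^{(i+1)}(d^3)$. For the probability, taking logs reduces the desired inequality $p_{i+1} \leq 1/\exp^{(i+1)}\!\bigl(1/(d^{15}p)\bigr)$ to
\[
\tfrac{1}{10000\, d_i^{10} p_i} \;\geq\; \ln 2 \cdot \exp^{(i)}\!\bigl(1/(d^{15}p)\bigr),
\]
and this is precisely what the inductive bounds give once one notes that $1/(d_i^{10} p_i) \geq \exp^{(i)}\!\bigl(1/(d^{15}p)\bigr) / \exp^{(i)}(d^3)^{10}$ and that the former tower dominates the latter by a factor large enough to kill the leading $10000$ and $\ln 2$ constants.

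The main obstacle is arithmetic bookkeeping rather than a conceptual one: at each level of the tower one must show that the exponential decay in $p_i$ built up by repeated exponentiation strictly outpaces the polynomial blow-up in $d_i^{10}$, so that both the pLLL precondition for the next application of Lemma~\ref{base-thm-2} and the target bound on $p_{i+1}$ remain satisfied. This requires carefully exploiting the gap between the $d^{15}$ in the hypothesis and the $d^{10}$ appearing in Lemma~\ref{base-thm-2}, and hiding the various $O(\cdot)$ factors into the freedom afforded by ``for $n$ sufficiently large.''
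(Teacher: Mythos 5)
Your recursion skeleton is the same as the paper's (alternate Lemma~\ref{base-thm-2} with the packaging step of Lemma~\ref{boot-lemma}, giving $p_{i+1}=e^{-\Omega(1/(d_i^{10}p_i))}$, $d_{i+1}=d_i^{O(d_i^2)}$, $r_{i+1}=O(d_i^2)r_i$), but the inductive bookkeeping—which you yourself flag as the main obstacle—is exactly where the argument breaks, in two places. First, your induction hypothesis keeps two \emph{decoupled} bounds, $p_i\le 1/\exp^{(i)}(\lambda)$ with $\lambda=1/(d^{15}p)$ and $d_i\le\exp^{(i)}(d^3)$. The hypothesis $d^{15}p\le1$ only guarantees $\lambda\ge1$; e.g.\ for $p=d^{-15}$ we have $\lambda=1$, so $\exp^{(i)}(\lambda)$ is an absolute constant while $\exp^{(i)}(d^3)$ is a height-$i$ tower in $d$. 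Your bound $d_i^{10}p_i\le\exp^{(i)}(d^3)^{10}/\exp^{(i)}(\lambda)$ is then astronomically larger than $1$, so neither the precondition $20000\,d_i^{10}p_i\le1$ of Lemma~\ref{base-thm-2} nor the target $1/(10000\,d_i^{10}p_i)\ge\ln2\cdot\exp^{(i)}(\lambda)$ follows; the ``$15$ vs.\ $10$'' slack is spent once at level $0$ and does not propagate through the tower. The paper's proof instead carries a \emph{coupled} invariant between $p_i$ and $d_i$ at the same level, namely $\lambda_i:=\frac{1}{20000\,d_i^{10}p_i}\ge d_i^4$: the base case is where $d^{15}p\le1$ and $d\ge\log^*n$ (with $n$ large) are used, and the step gives $\lambda_{i+1}\ge d_{i+1}^4\,e^{\lambda_i}$ because the exponential gain $e^{2\lambda_i}$ beats the blow-up $d_{i+1}^{14}=d_i^{O(d_i^2)}$ precisely when $\lambda_i\ge d_i^4\gg d_i^2\log d_i$. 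This single invariant keeps Lemma~\ref{base-thm-2} applicable at every level and yields $\lambda_i\ge\exp^{(i)}(\lambda)$, hence the stated $p'$.

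Second, the degree and simulation bounds do not close with the invariant $d_i\le\exp^{(i)}(d^3)$: plugging the bound into $d_{i+1}\le 2^{O(d_i^2\log d_i)}$ gives roughly $2^{2^{3\exp^{(i-1)}(d^3)}}$, which exceeds $\exp^{(i+1)}(d^3)=2^{2^{\exp^{(i-1)}(d^3)}}$—a constant factor sitting one level down inside a tower is \emph{not} absorbed by taking $d$ large; similarly $O(d_i^2)\cdot\exp^{(i-1)}(d^3)$ can exceed $\exp^{(i)}(d^3)$ when $d_i$ is at its bound. You need internal invariants strictly stronger than the stated conclusion, with room to spare at every level of the tower. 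The paper does this by proving $d_i\le\exp^{(i)}(d^{2.5}-2^{-i+3})$, using the inequality $\exp^{(j)}(a)+b\le\exp^{(j)}(a+b2^{-j})$ to convert the per-level constant losses into a shrinking additive budget at the top of the tower; the claimed $d'=\exp^{(i)}(d^3)$ and $r=\exp^{(i-1)}(d^3)$ then follow because they are deliberately weaker than this internal bound. So: same algorithmic route as the paper, but as set up your induction cannot be completed; you need the coupled invariant $\lambda_i\ge d_i^4$ and a slack-budgeted degree bound (or an equivalent strengthening).
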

\begin{proof}
  Let us define $\lambda = \frac{1}{d^{15} p}$. We will use Lemma~\ref{boot-lemma} recursively, generating a series of problems $\mathcal C_1, \mathcal C_2, \dots, \mathcal C_i$ such that each $\mathcal C_i$ has probability $p_i$, dependency $d_i$, and can be simulated in $G$ in $r_i$ rounds.

  To begin, we set $\mathcal C_0 = \mathcal B$, giving parameters $p_0 = p, d_0 = d, r_0 = 1$. For the recursive step, we apply Lemma~\ref{boot-lemma} to Lemma~\ref{base-thm-2}. As long as $20000 d_i^{10} p_i \leq 1$, this generates a new problem $\mathcal C_{i+1}$ with parameters
$$
  p_{i+1} = e^{-2 \lambda_i}, \qquad  d_{i+1} = d_i^{2 a d_i^2}, \qquad r_{i+1} = (a d_i)^2 \times r_i
  $$
  where $\lambda_i = \frac{1}{20000 d_i^{10} p_i} \geq 1$ and where $a > 0$ is some universal constant. To explain the recursion for $r_i$, observe that each round of $\mathcal C_{i+1}$ can be simulated in $O(d_i^2)$ with respect to the communications graph of $\mathcal C_{i}$, each round of which in turn can be simulated in $r_i$ rounds with respect to $\mathcal C_i$. 

  Our first task is to show that $\lambda_i \geq 1$ for all $i \geq 1$, i.e. that this recursion never gets stuck. In fact, we will show a stronger claim, that $\lambda_i \geq d_i^4$ for all $i$. We show this by induction on $i$. The base case $i = 0$ is guaranteed by our hypothesis and $d \geq \log^* n \gg 1$. For the induction, we have:
  \begin{align*}
    \lambda_{i+1}/d_{i+1}^4 &= \frac{1}{20000 d_{i+1}^{14} p_{i+1}} = \frac{e^{2 \lambda_i}}{20000 d_i^{28 a d_i^2}} = \frac{e^{2 \lambda_i}}{20000 e^{28 a d_i^2 \log d_i}} \geq e^{2 \lambda_i - 10 - 28 a d_i^2 \log d_i}
  \end{align*}

  By our induction hypothesis, $\lambda_i$ is larger than $d_i^4$, which (for $d_i \geq v_0$ and $v_0$ a sufficiently large constant) is greater than $10 + 28 a d_i^2 \log d_i$. Thus, this is at least $e^{\lambda_i} \geq 1$.

  We have therefore shown $\lambda_{i+1}/d_{i+1}^4 \geq e^{\lambda_i} \geq 1$. So the recurrence does not terminate. Consequently we also have $\lambda_i \geq \exp^{(i)}(\lambda)$. This gives:
  $$
  p_i = \frac{1}{20000 d_i^{10} \lambda_i} \leq \frac{1}{\lambda_i} \leq \frac{1}{\exp^{(i)}(\lambda)}
  $$

  Next, let us calculate $d_i$. We show by induction on $i$ that
  $$
  d_i \leq \exp^{(i)} (d^{2.5} - 2^{-i+3} )
  $$

  The base cases are $i = 1$ and $i = 2$, and these bounds can easily been shown to hold for $d$ sufficiently large. For the induction $i \geq 2$ and $d$ sufficiently large, we get:
  \begin{align*}
    d_{i+1} &= \exp(2 a d_i^2 \log d_i) \leq e^{d_i^{2.1}} = \exp^{(2)} (2.1 \log d_i)  \leq \exp^{(3)} (0.75 +  \exp^{(i-2)} (d^{2.5} - 2^{-i+3}))
    \end{align*}
    
A simple induction shows on $j$ shows that for all $j \geq 0$ and $a,b \geq 0$ it holds that $\exp^{(j)}(a) + b \leq \exp^{(j)} (a + b 2^{-j})$.   Thus, with $j = i-2, a = 0.75$ and $b = \exp^{(i-2)}(d^{2.5} - 2^{-i+3})$, we can calculate
    $$
   d_{i+1} \leq \exp^{(3)} ( \exp^{(i-1)} (d^{2.5} - 2^{-i+2} + 2^{-i+2})) =\exp^{(i+2)} (d^{2.5} - 2^{-i+1})
   $$

  Our bound on $d_i$ has now been established. Finally, for our bound on $r_i$, we get
  $$
  r_i  \leq \prod_{j=0}^{i-1} (a d_j) \leq \prod_{j=0}^{i-1} a \exp^{(j)} (d^{2.5} - 2^{-j+3} )
  $$
  Routine calculations show that this is at most $\exp^{(i-1)} (d^3)$.
\end{proof}

Through the magic of bootstrapping, these two LLL algorithms can be combined to give faster algorithms for small values of $d$.
\begin{theorem}
\label{hthm1}
Let $i$ be an integer in the range $1 \leq i \leq \log^{*} n - 2 \log^* \log^* n$. If $20000 d^{8} p \leq 1$, then there is an algorithm to find a configuration avoiding $\mathcal B$ w.h.p. in time $\exp^{(i)} \bigl( C (\log d + \sqrt{\log^{(i+1)} n}) \bigr)$, for some universal constant $C$.
\end{theorem}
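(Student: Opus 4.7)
The plan is to prove Theorem~\ref{hthm1} by induction on $i$, constructing randomized algorithms $A_i$ whose runtime $T_i(n)$ matches the target bound $\exp^{(i)}\bigl(C(\log d + \sqrt{\log^{(i+1)} n})\bigr)$ for some universal constant $C$. For the base case $i = 1$, I would invoke Lemma~\ref{prev-lll-thm2} directly: its runtime $O(d^2) + 2^{O(\sqrt{\log \log n})}$ fits inside $T_1(n) = \exp\bigl(C(\log d + \sqrt{\log \log n})\bigr)$ for $C$ chosen sufficiently large, since the hypothesis $20000 d^{8} p \leq 1$ is exactly what that lemma requires.

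For the inductive step $i \geq 2$, given $A_{i-1}$ with runtime $T_{i-1}$, I would compose it with itself via the fake-$n$ trick of Proposition~\ref{bootstrap-prop} and the bootstrapping reduction of Lemma~\ref{boot-lemma}. Concretely, I first pick a parameter $w$ on the order of $\log n$, chosen so that (i) the locality condition $\Delta^{T_{i-1}(w)} \leq w$ of Proposition~\ref{bootstrap-prop} holds, and (ii) the bootstrapped instance constructed in the next step has sufficient slack to be solved globally w.h.p.\ by Lemma~\ref{base-thm-2}. I then run $A_{i-1}^{[w]}$, which by Proposition~\ref{bootstrap-prop} has local failure probability at most $2/w$ and takes $T_{i-1}(w)$ rounds. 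Feeding $A_{i-1}^{[w]}$ into Lemma~\ref{boot-lemma} yields a new LLL instance $\mathcal{C}$ on the same vertex set with parameters $p' = 2/w$ and $d' = d^{2 T_{i-1}(w)}$, whose communication graph can be simulated with a $T_{i-1}(w)$-round overhead per round. By the choice of $w$, the instance $\mathcal{C}$ satisfies $20000 (d')^{10} p' \leq 1$ with enough slack that Lemma~\ref{base-thm-2}'s local failure probability is $o(1/n)$, so a union bound gives global success w.h.p. Lemma~\ref{base-thm-2} therefore solves $\mathcal{C}$ in $O((d')^2)$ rounds of $\mathcal{C}$'s graph, giving a total of $O\bigl(T_{i-1}(w) \cdot d^{4 T_{i-1}(w)}\bigr)$ rounds of $G$.

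The main obstacle is the bookkeeping of iterated exponentials to confirm the claimed bound. Substituting $w = \log n$ into the inductive hypothesis gives $T_{i-1}(w) \leq \exp^{(i-1)}\bigl(C(\log d + \sqrt{\log^{(i+1)} n})\bigr)$, and the dominant term in the runtime becomes $d^{4 T_{i-1}(w)} = \exp\bigl(4 \log d \cdot T_{i-1}(w)\bigr)$. Using the elementary identity $a \cdot \exp^{(j)}(x) \leq \exp^{(j)}(x + \log a)$, valid for $a \geq 1$ and $x \geq 0$, the factor $4 \log d$ folds into the argument of $\exp^{(i-1)}$ at the cost of an additive $O(\log \log d) = O(\log d)$ term, yielding $T_i(n) \leq \exp^{(i)}\bigl(C'(\log d + \sqrt{\log^{(i+1)} n})\bigr)$ for a suitably enlarged constant $C'$. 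The range restriction $i \leq \log^* n - 2 \log^* \log^* n$ guarantees that the implicit equation defining $w$ admits a valid solution with $w \leq n$ and that all iterated logarithms appearing in the bound remain positive throughout the induction; a careful choice of the initial $C$ (large enough to absorb the additive increments across all admissible $i$) then delivers the single universal constant promised by the theorem.
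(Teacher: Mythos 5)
Your plan shares the paper's ingredients (Lemma~\ref{prev-lll-thm2} as the base case, the fake-$n$ bootstrap of Proposition~\ref{bootstrap-prop} plus Lemma~\ref{boot-lemma}, and Lemma~\ref{base-thm-2} as the amplifier), but the architecture is different: the paper does no induction on $i$ at the level of the theorem; it bootstraps Lemma~\ref{prev-lll-thm2} once at scale $w=(\log^{(i-1)}n)^{20}$, then amplifies $i-1$ times inside Proposition~\ref{base-thm-3} until the local failure probability drops to $n^{-20}$, and finishes by drawing a single sample of the final instance. Your "one amplification per inductive level" variant could in principle be made to work, but as written it breaks exactly at the step you assume away. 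With $w=\Theta(\log n)$ and $p'=2/w$, Lemma~\ref{base-thm-2} gives local failure probability $e^{-1/(10000(d')^{10}p')}=e^{-w/(20000(d')^{10})}$; for your union bound over the $n$ bad-events of $\mathcal C$ you need this to be $n^{-\Omega(1)}$, which forces $(d')^{10}\leq O(w/\log n)=O(1)$ --- impossible, since $d'=d^{2T_{i-1}(w)}\geq 4$. So with your stated $w$ the inductive step does not yield a w.h.p.\ algorithm at all, and "sufficient slack" cannot be arranged. The construction can be repaired by taking $w$ to be a sufficiently large polylogarithm (or the paper's much larger choice), but then the real content of the proof is precisely the verification you skip: that $(d')^{10}=d^{20T_{i-1}(w)}$ stays below $w/\Theta(\log n)$, a tower-function comparison that needs both $d\leq e^{\sqrt{\log^{(i+1)}n}}$ and the upper bound on $i$ (to guarantee $\log^{(i+1)}n$ exceeds a universal constant). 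Moreover, when $d>e^{\sqrt{\log^{(i+1)}n}}$ this comparison fails outright; the paper handles that regime separately by running the small-$d$ algorithm with a fake size $n'=s$ defined by $d=e^{\sqrt{\log^{(i+1)}s}}$, and your proposal never addresses large $d$.

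A second gap is the constant bookkeeping across the induction. Each level adds terms to the top of the tower (the discrepancy between $\log^{(i)}w$ and $\log^{(i+1)}n$, the $\log(4\log d)$ from folding $d^{4T_{i-1}(w)}$, the simulation overhead), and there can be $\Theta(\log^* n)$ levels; "choosing the initial $C$ large enough to absorb the increments" is not a valid mechanism, because the level-$(i-1)$ hypothesis then carries the same enlarged $C$ and the step still adds more on top of it. To close the induction you must either build a decaying buffer into the inductive claim --- this is exactly what the paper does inside Proposition~\ref{base-thm-3}, proving $d_i\leq\exp^{(i)}(d^{2.5}-2^{-i+3})$ --- or argue quantitatively from the range restriction that $\sqrt{\log^{(i+1)}n}$ dwarfs $\log^* n\geq i$, so that the total accumulated loss of order $i(1+\log\log d)$ is absorbed by one final enlargement of $C$. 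Either way this is a genuine argument, not a one-line remark, and without it (together with the two issues above) the proposal does not yet establish the theorem.
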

\begin{proof}
  Let us first show that this holds when $d \leq e^{\sqrt{\log^{(i+1)} n}}$ and $n$ is larger than any needed constants. All the asymptotic notations in this proof refer to universal constants (which do not depend upon $i$).

  Let $A$ be the algorithm of Lemma~\ref{prev-lll-thm2}. We begin by bootstrapping $A^{[w]}$ for $w = (\log^{(i-1)} n)^{20}$. This runs in time $r = e^{O(\log d + \sqrt{\log \log w})} \leq e^{O(\sqrt{\log^{(i+1)} n})}$.  Note that our condition on $i$ ensures that $\log^{(i)} n \geq \log^* n$ and so $w \rightarrow \infty$ as $n \rightarrow \infty$. So, our assumption on $n$ ensures that $w$ will be larger than any needed constants.

We want to apply Lemma~\ref{boot-lemma}. Since we need to get a bound on the local success probability, Proposition~\ref{bootstrap-prop} requires us to first show that $d^{r} \leq w$. For this, we have:
\begin{align*}
  d^{r} &\leq \exp(\sqrt{\log^{(i+1)} n} \times e^{O(\sqrt{\log^{(i+1)} n})}) \leq \exp^{(2)}(O(\sqrt{\log^{(i+1)} n})) \leq \exp^{(2)}(O(\sqrt{\log \log w}))
\end{align*}
and this will be smaller than $w$ for $w \geq C_0$ and $C_0$ a sufficiently large constant, which holds as $n \rightarrow \infty$.

Therefore, Lemma~\ref{boot-lemma} generates a new problem instance $\mathcal C$ with $p' = 2/w = 2 (\log^{(i-1)} n)^{-20}$ and $d' = d^{2 r} = \exp^{(2)} \bigl(O(\sqrt{\log^{(i+1)} n}) \bigr)$. 

We will use Lemma~\ref{base-thm-3} to further amplify $\mathcal C$. The condition that $i \leq \log^* n - 2  \log^* \log^* n$ implies that $d' \geq \log^* n$ for $n$ sufficiently large. We compute the parameter $(d')^{15} p'$ for $n$ sufficiently large as
\begin{align*}
(d')^{15} (p')  = \frac{2 \exp^{(2)} \bigl(O(\sqrt{\log^{(i+1)} n}) \bigr)}{(\log^{(i-1)} n)^{20}} = \frac{\exp^{(2)} \bigl(O(\sqrt{\log \log w}) \bigr)}{w^{20}} \leq w^{-19}
\end{align*}

Accordingly, Proposition~\ref{base-thm-3} generates a new problem instance $\mathcal C_{i-1}$, with parameter
$$
p'' = 1/\exp^{(i-1)} (\frac{1}{(d')^{15} p'}) \leq \frac{1}{\exp^{(i-1)}( (\log^{(i-1)} n)^{19})}.
$$
and this is at most $n^{-20}$ for $n$ sufficiently large.

Since the local failure probability of $\mathcal C_{i-1}$ is so small, we can find a satisfying assignment w.h.p. by sampling $\mathcal C_{i-1}$ a single time. So, the run-time for our algorithm is simply the cost of simulating a single round of $\mathcal C_{i-1}$, i.e. it is $\exp^{(i-2)} O(d'^3) = \exp^{(i)}  \bigl(O(\sqrt{\log^{(i+1)} n}) \bigr)$.

Finally, let us discuss the case that $d \geq e^{\sqrt{\log^{(i+1)} n}}$. In this case, set $s$ such that that $d = e^{\sqrt{\log^{(i+1)} s}}$. We run the above algorithm $A$ with a fake parameter $n' = s \geq n$. This gives a runtime of $\exp^{(i)} \bigl(O(\sqrt{\log^{(i+1)} s}) \bigr)$, which is equal to $\exp^{(i)}(O(\log d))$ as desired.
\end{proof}

\textbf{Explanation of the LLL algorithm runtime.} We have shown an algorithm with a runtime of
\begin{equation}
\label{rttr1}
\min_{i \in \{1, \dots, \log^*n - 2 \log^* \log^* n \}} \exp^{(i)} (C ( \log d + \sqrt{\log^{(i+1)} n})).
\end{equation}

Let us first remark that it is possible to show Theorem~\ref{hthm1} still holds under the weaker constraint $i \leq \log^*n - O(1)$. To summarize, in \Cref{base-thm-3} we would precompute a coloring of the graph $G^{r_{i-1}}$; this can be done in $O(r_{i-1} (d_{i-1} + \log^* n))$ rounds; we can use this coloring for the remainder of the algorithm to avoid all the $\log^* n$ coloring steps. We have stated the result with the stronger restriction on $i$ in order to simplify the proofs throughout the paper.

To put it mildly, the function (\ref{rttr1}) is not a familiar function of $n$ and $d$. We aim to provide some intuition as to its asymptotic behavior. Ignoring integrality constraints, and using the fact that $i$ might get as large as $\log^* n - O(1)$, we may set $i = \log^*n - \log^* K$ for a constant $K$ to get:
$$
  \log n = \exp^{(i)} ( \log K ), \qquad  \log \log n = \exp^{(i)} ( \log \log K )
  $$

For constant $d$ and using this value of $i$, we see that our LLL algorithm has a runtime of
$$
\exp^{(i)} ( C \sqrt{\log K} )
$$

For large enough $K$, the value of $C \sqrt{\log K}$ is a constant which lies strictly between the constants $\log K$ and $\log \log K$. Thus, our algorithm runtimes lies somewhere in the window between $\log n$ and $\log \log n$. Here is an alternate way of looking at this: we can consider a class of runtimes of the form $\log^{(j)} n$, where $j = 1$ corresponds to $\log K$ and $j = 2$ corresponds to $\log \log K$. In this sense, the value $C \sqrt{\log K}$ corresponds to some real number $j \in (1,2)$. In other words, for constant $d$ our LLL algorithm runs in time $\log^{(1+\eps)} n$ for some \emph{constant} value $\eps > 0$. Of course, this notation is not really well-defined. But, it indicates that we are making a constant amount of progress from a $\log n$ runtime to $\log \log n$ runtime.

\subsection{Derandomization}
We can use our generic derandomization framework to obtain a deterministic LLL algorithm. The main difference is that the running times all have their dependence on $n$ stepped up by an exponential factor.
\begin{theorem}
  \label{derand-lll-thm}
  Let $i$ be an integer in the range $1 \leq i \leq \log^{*} n - 2 \log^* \log^* n$. If $20000 d^{8} p \leq 1$, then there is a deterministic \LOCAL algorithm to find a configuration avoiding $\mathcal B$ in time $\exp^{(i)} \bigl(C (\log d + \sqrt{\log^{(i)} n}) \bigr)$, for some universal constant $C$.
\end{theorem}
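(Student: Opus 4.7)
The plan is to apply the derandomization framework of Theorem~\ref{thm:basicderandomization} and Proposition~\ref{prop:ZtoL} to the randomized algorithm of Theorem~\ref{hthm1} at the \emph{smaller} parameter $i-1$. For $i \geq 2$, I would first invoke Theorem~\ref{hthm1} with parameter $i-1$, which under $20000 d^8 p \leq 1$ produces a Monte-Carlo \LOCAL algorithm of complexity $T := \exp^{(i-1)}\big(O(\log d + \sqrt{\log^{(i)} n})\big)$. Since the LLL is locally checkable---each bad-event can be verified in $O(1)$ rounds on the dependency graph---this algorithm can be converted to a $\zloc[O(T)]$ algorithm by letting each node set a flag $F_v$ indicating whether a bad-event containing $v$'s variables is violated; the w.h.p.\ success of the Monte-Carlo algorithm guarantees $\sum_v \mathbb{E}[F_v] < 1$. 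For the base case $i = 1$, Lemma~\ref{prev-lll-thm2} plays the same role, yielding a $\zloc[T]$ algorithm with $T = O(d^2) + 2^{O(\sqrt{\log \log n})}$.

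Next, I apply Proposition~\ref{prop:ZtoL} to this $\zloc[O(T)]$ algorithm, obtaining a deterministic \LOCAL algorithm in time
\[
\min\big\{\, O(T) \cdot 2^{O(\sqrt{\log n})},\ \ O\big(T \cdot (\Delta(G^{O(T)}) + \log^* n)\big) \,\big\}.
\]
Because the dependency graph has maximum degree $d$, $\Delta(G^{O(T)}) \leq d^{O(T)}$, and the second bound simplifies to $2^{O(T \log d)}$ once the $\log^* n$ term is absorbed for $n$ sufficiently large. Substituting $T = \exp^{(i-1)}(O(\log d + \sqrt{\log^{(i)} n}))$ and absorbing the extra $\log d$ multiplier into the innermost exponent (via $(\exp^{(k)}(x))^2 \leq \exp^{(k)}(2x)$ for $x \geq 1$), I obtain
\[
2^{O(T \log d)} \leq \exp^{(i)}\big(O(\log d + \sqrt{\log^{(i)} n})\big),
\]
which matches the target bound of Theorem~\ref{derand-lll-thm}. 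For $i = 1$, the degree-based bound does not suffice for moderate $d$, but then the network-decomposition bound $O(T \cdot 2^{\sqrt{\log n}})$ dominates, still giving $2^{O(\log d + \sqrt{\log n})}$; the $\min$ in Proposition~\ref{prop:ZtoL} selects the right one automatically.

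The main obstacle is the asymptotic inequality $2^{O(T \log d)} \leq \exp^{(i)}(O(\log d + \sqrt{\log^{(i)} n}))$, which requires a careful case split. When $\log d \leq \sqrt{\log^{(i)} n}$, the inner argument of $T$ is $O(\sqrt{\log^{(i)} n})$ and the extra $\log d$ factor is absorbed since $T \geq \sqrt{\log^{(i)} n}$; when $\log d > \sqrt{\log^{(i)} n}$, a symmetric calculation applies with $\log d$ in the role of $\sqrt{\log^{(i)} n}$. The regime of very large $d$ (where $\log d$ alone dominates the target bound) is handled by the fake-$n$ trick already built into the proof of Theorem~\ref{hthm1}, so no further case analysis is needed at our level. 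It remains to verify that the parameter constraint $1 \leq i \leq \log^* n - 2\log^* \log^* n$ is compatible with invoking Theorem~\ref{hthm1} at parameter $i-1$ and with the $\log^* n$ additive term of Proposition~\ref{prop:ZtoL}; both are easily checked to hold in this range.
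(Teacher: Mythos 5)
Your proposal is correct and follows essentially the same route as the paper's own proof: for $i>1$ it derandomizes the Theorem~\ref{hthm1} algorithm at parameter $i-1$ (converted to a $\zloc$ algorithm via local checkability) through Proposition~\ref{prop:ZtoL}'s degree-based bound $d^{O(r)}+O(\log^* n)$, absorbing the extra $\log d$ factor into the constant, while for $i=1$ it uses Lemma~\ref{prev-lll-thm2} together with the network-decomposition bound $r\cdot 2^{O(\sqrt{\log n})}$, exactly as in the paper.
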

\begin{proof}
  Let us first consider $i = 1$; in this case, we want to show that the algorithm runs in $\poly(d) \times 2^{O(\sqrt{\log n})}$ rounds.   Lemma~\ref{prev-lll-thm2} gives a randomized algorithm running in $r = O(d^2) + 2^{O(\sqrt{\log \log n})}$ rounds. This problem is locally-checkable, so this is also a $\zloc[r+1]$ algorithm. Therefore, Proposition~\ref{prop:ZtoL} gives a deterministic algorithm to avoid $\mathcal B$ in $r 2^{O(\sqrt{\log n})}$ rounds, which in this case is at most $d^2 \times 2^{O(\sqrt{\log n})}$.

  Next, suppose $i > 1$. In this case, applying Theorem~\ref{hthm1} with parameter $i - 1$ yields a Las Vegas algorithm running in $r = \exp^{(i-1)} \bigl(O(\log d + \sqrt{\log^{(i)} n}) \bigr)$ rounds. Proposition~\ref{prop:ZtoL} transforms this to a deterministic algorithm to avoid $\mathcal B$ in $d^{O(r)} + O(\log^* n)$ rounds. Simple analysis shows that this is at most $e^{O(r \log d)}$ (the condition that $i \leq \log^*n - 2 \log^* \log^* n$ ensures that the $\log^* n$ term is negligible). Furthermore, in the expression $e^{O(r \log d)}$, the $\log d$ term can be absorbed into the constant term of $\exp^{(i-1)} (O(\log d))$ in the expression for $r$, so this is at most $\exp^{(i)} \bigl(O(\log d + \sqrt{\log^{(i)} n}) \bigr)$ as claimed.
  \end{proof}

\section{The LLL for High-Degree Graphs}
\label{lll-sec2}
There are two main shortcomings with the distributed LLL algorithm of Section~\ref{lll-sec1}. First, it is only a pLLL criterion; second, it requires $\Delta \leq 2^{O(\sqrt{\log \log n})}$.   In this section, we describe an alternative LLL algorithm, whose run-time does not depend on $\Delta$. We use this algorithm to provide distributed algorithms for problems such as defective and frugal vertex coloring, running in time $2^{O(\sqrt{\log \log n})}$. 

This matches the run-time of \cite{ghaffari-lll} for these problems. However, the key difference is that the latter work uses ad-hoc and problem-specific techniques; for example, in defective vertex coloring, certain vertices are re-colored using a secondary set of colors, instead of resampling a bad-event directly. This negates to a large extent one of the main advantages to developing general LLL algorithms. The algoirithm here can be described in a much more generic, high-level way. In addition, we provide algorithms for other LLL problems, such as $k$-SAT with bounded clause overlap, that do not appear directly possible in the framework of Lemma~\ref{prev-lll-thm2} or the algorithm of \cite{ghaffari-lll}.

Our algorithm is quite similar to an LLL algorithm of \cite{pettie}; the key definition underlying the algorithms is that of a \emph{dangerous} event.

\begin{definition}[Dangerous event]
Let $B \in \mathcal B$ and $x \in D^v$. For any $U \subseteq N(B)$, define  $y_U \in \overline D^v$ by 
$$
y_U(i) = \begin{cases}
\q & \text{if $i \in \bigcup_{A \in U} S_A$} \\
x(i) & \text{otherwise}
\end{cases}
$$

We refer to $y_U$ as the \emph{reversion of $x$ with respect to $U$.} We say that $B$ is \emph{$q$-dangerous with respect to $x$}, if there is any $U \subseteq N(B)$ such that the marginal probability of $B$ with respect to $y_U$ is at least $q$.
\end{definition}

Intuitively, a dangerous event is one that could have high probability if we revert some of its neighbors. With this definition, we can provide a sketch of the LLL algorithm:
\begin{algorithm}[H]
\centering
\caption{Distributed LLL algorithm}
\label{distalg2}
\begin{algorithmic}[1]
\State Draw $X \in D^v$ from the distribution $\Omega$
\State Construct the set $M \subseteq \mathcal B$ consisting of all bad-events which are $(e d)^{-3}$-dangerous with respect to $X$.
\State Form the vector $Y \in \overline D^v$ by
$$
Y(i) = \begin{cases}
\q & \text{if $i \in \bigcup_{B \in M} S_B$} \\
X(i) & \text{otherwise}
\end{cases}
$$
\State Form the set $\mathcal A \subseteq \mathcal B$ by 
$
\mathcal A = \{ B \in \mathcal B \mid N(B) \cap M \neq \emptyset  \}
$
\State Fix all the non-$\q$ entries of $Y$, and use the deterministic LLL algorithm on (each connected component) of $\mathcal A$ to fill in all the $\q$ entries of $Y$.
\end{algorithmic}
\end{algorithm}

The key to analyzing Algorithm~\ref{distalg2}, as in \cite{pettie}, will be to bound the probability that an event is $q$-dangerous. In \cite{pettie}, the following bound was provided:
\begin{proposition}[\cite{pettie}]
Any $B \in \mathcal B$ is $q$-dangerous with probability at most $2^d p / q$.
\end{proposition}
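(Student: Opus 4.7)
The plan is a straightforward two-step argument: for each fixed subset $U \subseteq N(B)$, bound by Markov's inequality the probability that the marginal probability of $B$ with respect to the reversion $y_U$ exceeds $q$; then union bound over all $U$. Since $|N(B)| \le d$ in the dependency graph, there are at most $2^d$ subsets to consider, which gives the final factor of $2^d$.

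For the fixed-$U$ step, let $S_U = \bigcup_{A\in U} S_A$ and, for any $x \in D^v$, let $P_U(x)$ denote the marginal probability of $B$ with respect to the vector $y_U$ obtained from $x$ via the definition of reversion. The key observation is that $P_U(x)$ depends only on the restriction of $x$ to the coordinates outside $S_U$: by definition, resampling the $\q$-entries of $y_U$ is precisely the same as resampling the original variables in $S_U$, so
\[
P_U(x) \;=\; \Pr_{\Omega}\!\bigl(B \,\big|\, X|_{V\setminus S_U} = x|_{V\setminus S_U}\bigr).
\]
Taking expectation over $X \sim \Omega$ and using the tower property of conditional expectation gives $\mathbb{E}_X[P_U(X)] = \Pr_\Omega(B) \le p$. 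Markov's inequality then yields $\Pr_X(P_U(X) \ge q) \le p/q$.

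Finally, $B$ is $q$-dangerous with respect to $X$ iff there exists some $U \subseteq N(B)$ with $P_U(X) \ge q$. A union bound over the at most $2^{|N(B)|} \le 2^d$ subsets $U$ gives
\[
\Pr_X(B \text{ is $q$-dangerous}) \;\le\; 2^d \cdot \frac{p}{q},
\]
which is the claimed bound.

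There is no real obstacle here; the only subtlety worth flagging is the measurability/independence point that $P_U(X)$ depends only on $X$ restricted to $V \setminus S_U$, which is what makes the tower-property calculation $\mathbb{E}_X[P_U(X)] = \Pr(B)$ go through cleanly. The $2^d$ factor in the bound is the (seemingly wasteful but unavoidable in this argument) cost of quantifying over all possible reversion patterns, and it is exactly this factor that forces the pLLL-style slack of the form $p \cdot (\text{poly}(d))^c \le 1$ in the analysis of Algorithm~\ref{distalg2}.
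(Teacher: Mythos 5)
Your proof is correct: conditioning on the variables outside $S_U$ makes $P_U(X)$ a conditional probability of $B$ with expectation $\Pr_\Omega(B)\le p$, Markov gives $p/q$ per subset, and the union bound over the at most $2^{|N(B)|}\le 2^d$ reversion patterns (recall $B\in N(B)$ here, and the dependency graph has maximum degree $d-1$) yields $2^d p/q$. This is exactly the standard argument from the cited work of Chung, Pettie and Su; the paper itself states the proposition without proof, and its own contribution is precisely to replace this wasteful $2^d$ union bound by the fragility-based bound of Proposition~\ref{dangprop1}.
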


Unfortunately, this bound is exponential in $d$, and so this typically leads to criteria  which are much weaker than the LLL (i.e., bounds which are exponential in $d$ as opposed to polynomial in $d$.) We are not aware of any stronger bound, \emph{as a function solely of $p$ and $d$}. However, in many problem instances, we can use the specific form of the bad-events $\mathcal B$ to give much stronger bounds on the probability of being $q$-dangerous. Our LLL criterion will thus use more information than just $p$ and $d$. Nevertheless, the LLL criterion will be \emph{local}, in the sense that it uses only information directly affecting any given bad-event (and not global information such as $n$). Furthermore, this bound can be often be computed fairly readily, and once it is computed no further information about $\mathcal B$ will be used.

We will define a statistic we refer to as \emph{fragility} and use this to compute the probability of a bad-event becoming dangerous.

\begin{definition}
\label{fragile-definition}
Let $B$ be an event on variables $X(i), \dots, X(v)$ and and $X_0, X_1 \in D^v$. For any vector $a \in \{0, 1 \}^v$, define a configuration $Z_a \in D^v$ by $Z_a(i) = X_{a(i)} (i)$, and define the event $E_B$ by
$$
E_B = \bigvee_{a \in \{0, 1 \}^v} \text{$B$ holds on configuration $Z_a$}
$$

The \emph{fragility of $B$}, denoted $f(B)$, is defined to be the probability of $E_B$, when $X_0, X_1$ are drawn independently from $\Omega$.
\end{definition}

\begin{proposition}
\label{dangprop1}
For any bad-event $B$ and any $q \in [0,1]$ we have $\Pr(\text{$B$ is $q$-dangerous}) \leq f(B)/q$.
\end{proposition}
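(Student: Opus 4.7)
The plan is to relate the two randomness sources directly via coupling. I will couple the draw of $X$ (used to decide whether $B$ is $q$-dangerous) with a second independent draw (used to realize the reversion), and then identify the latter experiment with the fragility event $E_B$.

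First I would unpack the definitions. Fix any vector $X_0 \in D^v$ (thought of as a possible outcome of the draw defining whether $B$ is $q$-dangerous) and suppose $B$ is $q$-dangerous with respect to $X_0$. By definition there exists $U^* = U^*(X_0) \subseteq N(B)$ such that the marginal probability of $B$ with respect to $y_{U^*}$ is at least $q$. Let $T^* = \bigcup_{A \in U^*} S_A$ and let $a^* \in \{0,1\}^v$ be its indicator vector, i.e.\ $a^*(i) = 1$ iff $i \in T^*$. If we resample the coordinates in $T^*$ via an independent draw $X_1 \sim \Omega$, the resulting vector is exactly $Z_{a^*}$ in the notation of \Cref{fragile-definition}. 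Hence $\Pr_{X_1}\bigl(B \text{ holds on } Z_{a^*} \,\big|\, X_0\bigr) \geq q$ whenever $B$ is $q$-dangerous with respect to $X_0$.

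Next I would use this to lower-bound the probability of $E_B$. Draw $X_0, X_1 \sim \Omega$ independently, and let $D$ denote the event that $B$ is $q$-dangerous with respect to $X_0$. Because $E_B$ is the union over all $a \in \{0,1\}^v$ of $\{B \text{ holds on } Z_a\}$, the specific choice $a = a^*(X_0)$ yields
\[
\Pr_{X_1}(E_B \mid X_0) \;\geq\; \Pr_{X_1}\bigl(B \text{ holds on } Z_{a^*(X_0)} \,\big|\, X_0\bigr) \;\geq\; q \cdot \mathbf{1}_{X_0 \in D}.
\]
Taking expectation over $X_0$ gives $f(B) = \Pr(E_B) \geq q \cdot \Pr(D)$, which is the claimed bound.

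The only subtle point is that $a^*$ depends measurably on $X_0$, so strictly speaking one needs $U^*(X_0)$ to be chosen in a measurable way (e.g.\ lexicographically first subset achieving the bound); since $N(B)$ is finite this is automatic. Everything else is just rearrangement of the definitions, so I do not expect a genuine obstacle.
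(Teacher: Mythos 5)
Your proof is correct and follows essentially the same route as the paper's: condition on the draw $X_0$ witnessing $q$-dangerousness, take the indicator vector $a$ of the union of the witnessing neighbors' variable sets, observe that the fresh draw $X_1$ realizes the resampling so that $B$ holds on $Z_a$ with conditional probability at least $q$, and average over $X_0$. The extra remark about choosing $U^*(X_0)$ measurably is a harmless refinement the paper leaves implicit.
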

\begin{proof}
Consider drawing $X_0$ according to the distribution $\Omega$. Suppose there is some subset $Y \subseteq N(B)$ such that the reversion of $X_0$ with respect to $Y$ causes the marginal probability of $B$ to exceed $q$. Define the vector $a$ by
$$
a(i) = \begin{cases}
1 & \text{if $i \in \bigcup_{A \in Y} S_A$} \\
0 & \text{otherwise}
\end{cases}
$$
By definition of $q$-dangerous, when we draw $X_1$ according to the distribution $\Omega$, the probability that $B$ holds on $Z_a$ is greater than $q$. Therefore, the event $E_B$ has probability at least $\Pr(\text{$B$ is $q$-dangerous}) \times q$.
\end{proof}

\begin{proposition}
\label{u1}
Let $F = \max_{B \in \mathcal B} f(B)$. Any bad-event $A$ goes into $\mathcal A$ with probability at most $e F d^4$, and this depends only on the random bits within the two-hop neighborhood of $A$.
\end{proposition}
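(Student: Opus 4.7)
The plan is to combine the definitions of $M$ and $\mathcal A$ with Proposition~\ref{dangprop1} via a single union bound. First I would unwind the definitions: by construction, $A \in \mathcal A$ holds iff some $B \in N(A)$ satisfies $B \in M$, which by the definition of $M$ means that $B$ is $(ed)^{-3}$-dangerous with respect to the drawn configuration $X$. Thus
\[
\Pr(A \in \mathcal A) \;\leq\; \sum_{B \in N(A)} \Pr(B \text{ is } (ed)^{-3}\text{-dangerous}).
\]

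Next I would apply Proposition~\ref{dangprop1} with $q = (ed)^{-3}$ to each term on the right, yielding $\Pr(B \text{ is } (ed)^{-3}\text{-dangerous}) \leq f(B)(ed)^3 \leq F e^3 d^3$. Using $|N(A)| \leq d$ (the closed dependency-neighborhood size), the union bound gives $\Pr(A \in \mathcal A) \leq d \cdot F e^3 d^3 = O(F d^4)$, matching the statement up to the absolute constant.

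For the locality claim, I would observe that whether a given $B$ is $q$-dangerous is, by Definition of reversion and dangerous, a function only of the coordinates $X(i)$ with $i \in \bigcup_{C \in N(B)} S_C$ — that is, variables touched by some event adjacent to $B$ in the dependency graph. Such variables lie inside the one-hop dependency neighborhood of $B$. Since each relevant $B$ is itself in $N(A)$, these variables all lie within the two-hop neighborhood of $A$, and taking the union over $B \in N(A)$ preserves this containment. Hence the indicator of $A \in \mathcal A$ is a function of the random bits sampled by events within two hops of $A$.

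The proof is essentially just bookkeeping on definitions plus the already-established Proposition~\ref{dangprop1}, so there is no serious technical obstacle; the only mild point to watch is ensuring $|N(A)|$ is counted consistently with how $d$ is defined (closed vs.\ open neighborhood) and tracking the constant factor in $q = (ed)^{-3}$ versus the $e$ in $eFd^4$ — these absorb into the stated bound without further effort.
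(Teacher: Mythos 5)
Your proof is correct and follows essentially the same route as the paper's: a union bound over $B \in N(A)$ of the event $B \in M$, each term bounded via Proposition~\ref{dangprop1} with $q = (ed)^{-3}$, plus the observation that dangerousness of $B$ is determined locally (the paper notes it depends only on $X(i)$ for $i \in S_B$, which is even tighter than your one-hop claim, but both give the stated two-hop locality for $A$). The constant mismatch you flag ($e^3 F d^4$ from the computation versus $e F d^4$ in the statement) is present in the paper's own proof as well and is harmless where the proposition is applied, so it is not a gap on your part.
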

\begin{proof}
Each $B$ goes into $M$ if it is $q$-dangerous with respect to $X$, where $q = (e d)^{-3}$; by Proposition~\ref{dangprop1} this has probability at most $f(B)/q \leq F/q$. Also, the event that $B$ goes into $M$ can be determined solely by the values of $X(i)$ for $i \in S_B$. Thus, the event that $B$ goes into $M$ is not affected by random choices made outside the neighborhood of $B$.

The event $A$ goes into $\mathcal A$ only if there is some $B \sim A$ with $B \in M$. Taking a union bound over all $B \sim A$, we see that $\Pr(A \in \mathcal A) \leq \sum_{B \in N(A)} \Pr(B \in M) \leq d F/q$. Also, this only depends on random variables within the two-hop neighborhood of $A$.
\end{proof}

{
\renewcommand{\thetheorem}{\ref{tr1}}
\begin{theorem}
Suppose that $f(B) \leq e^{-10} d^{-12}$ for every $B \in \mathcal B$. Then \Cref{distalg2} terminates with a satisfying assignment in $2^{O(\sqrt{\log \log n})}$ rounds w.h.p.
\end{theorem}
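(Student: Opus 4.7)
My plan is to combine Algorithm \ref{distalg2} with the shattering machinery of Theorem \ref{shattering-thm}. First I would execute Algorithm \ref{distalg2}: sample a configuration $X \in D^v$ from $\Omega$, form the set $M$ of $(ed)^{-3}$-dangerous bad-events with respect to $X$, and let $\mathcal A = \{A \in \mathcal B : N(A) \cap M \neq \emptyset\}$. By Proposition \ref{u1}, any given $A$ enters $\mathcal A$ with probability at most $e F d^4$, and this bound depends only on the random bits within the 2-hop neighborhood of $A$. The fragility hypothesis $F \le e^{-10} d^{-12}$ forces this probability to be at most $e^{-9} d^{-8} \le (ed)^{-8}$.

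Next I would invoke Theorem \ref{shattering-thm} with constant $c = 2$. The survival-probability bound above fits the hypothesis, so w.h.p.\ the connected components of the residual dependency graph have size $N = O(d^4 \log n)$. On each such component, the residual is itself an LLL instance with marginal probability $\le (ed)^{-3}$ (for $A \notin M$ this follows from $A$ not being $(ed)^{-3}$-dangerous; for $A \in M$, all variables in $S_A$ are reverted so the marginal is exactly $\Pr_\Omega(A) \le p \le F$) and max dependency $\le d$. Since $e p^\ast d^\ast \le e^{-2}/d \ll 1$, this residual LLL has enormous slack, and I would solve it with parallel Moser--Tardos as a Las Vegas $\zloc[O(\log^2 N)]$ algorithm, converted to a deterministic $\seqloc[O(\log^2 N)]$ algorithm via Proposition \ref{prop:ZtoS}.

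Feeding this $\seqloc$ subroutine into Theorem \ref{shattering-thm}'s conversion yields total runtime $O(\log^2 N) \cdot 2^{O(\sqrt{\log \log n})}$. The main obstacle is to absorb this into the claimed $2^{O(\sqrt{\log \log n})}$ bound uniformly in $d$. For $d \le 2^{\sqrt{\log \log n}}$ the factor $\log^2 N = O(\log \log n)$ is dominated by the shattering overhead, and we are done. For larger $d$, I would bootstrap (Lemma \ref{boot-lemma}) between Algorithm \ref{distalg2}, the strong-slack algorithm of Lemma \ref{base-thm-2} (which applies once the residual marginals drop below $1/\polylog n$, an immediate consequence of fragility when $d$ is large), and Theorem \ref{hthm1}, iteratively amplifying the local failure probability using the slack from $F \le e^{-10} d^{-12}$ until the residual instance falls into the small-$d$ regime handled above. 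Making this case analysis on $d$ close cleanly, so that the composed runtime always stays within $2^{O(\sqrt{\log \log n})}$, is the step I expect to require the most care.
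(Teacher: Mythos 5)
Your first phase matches the paper exactly: run Algorithm~\ref{distalg2}, use Proposition~\ref{u1} to bound each event's survival probability by $e F d^4 \le e^{-9} d^{-8} \le (ed)^{-8}$ with dependence only on the two-hop neighborhood, apply Theorem~\ref{shattering-thm} with $c=2$, and view the residual as an LLL instance with marginal probabilities at most $(ed)^{-3}$ and dependency at most $d$ (your case split on $A \in M$ versus $A \notin M$ here is correct, and more explicit than the paper). The gap is in how you solve the residual components. Moser--Tardos costs $O(\log^2 N)$ rounds with $N = \poly(d)\log n$, i.e. $O\bigl((\log d + \log\log n)^2\bigr)$, and for large $d$ (say $d = n^{\Theta(1)}$) this is $\poly(\log n)$, which cannot be absorbed into $2^{O(\sqrt{\log \log n})}$. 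Your proposed repair for that regime---bootstrapping via Lemma~\ref{boot-lemma} between Algorithm~\ref{distalg2}, Lemma~\ref{base-thm-2}, and Theorem~\ref{hthm1}---does not close it: Lemma~\ref{base-thm-2} runs in $O(d^2)$ rounds and Theorem~\ref{hthm1} in $\exp^{(i)}\bigl(O(\log d + \sqrt{\log^{(i+1)} n})\bigr)$ rounds, both of which are already superpolylogarithmic precisely in the large-$d$ regime you are trying to handle, and Lemma~\ref{boot-lemma} only makes matters worse by inflating the dependency to $d' = d^{2r}$. Since Theorem~\ref{tr1} places no restriction on $d$, any residual solver whose locality grows polynomially in $d$ is fatal.

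The missing idea is to exploit the \emph{polynomial slack} of the residual instance rather than its absolute smallness: since the residual marginals $q \le (ed)^{-3}$ satisfy $q d^3 \le e^{-3} < 1$, the algorithm of Chung, Pettie, and Su~\cite{pettie} solves each component of size $N$ in $O(\log_d N) = O\bigl(\tfrac{\log d + \log\log n}{\log d}\bigr) = O(\log\log n)$ rounds, uniformly in $d$---the $\log d$ contribution to the component size is exactly cancelled by the $1/\log d$ speedup of that algorithm. Converting this into an $\seqloc[O(\log\log n)]$ procedure (via Proposition~\ref{prop:ZtoS}, just as you do for Moser--Tardos) and plugging it into Theorem~\ref{shattering-thm} gives $O(\log\log n)\cdot 2^{O(\sqrt{\log\log n})} = 2^{O(\sqrt{\log\log n})}$ with no case analysis on $d$ at all; this is how the paper's proof proceeds.
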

\addtocounter{theorem}{-1}
}
\begin{proof}
  Every step of \Cref{distalg2} takes $O(1)$, except for step (5). So we need to show that the deterministic LLL algorithm can be run on the residual problem $\mathcal A$ in $2^{O(\sqrt{\log \log n})}$ rounds.
  
  Each $B \in \mathcal B$ survives to  $\mathcal A$ with probability $P \leq e F d^4$, and this depends only on the $c$-hop neighborhood for $c = 2$. Since $P (e \Delta)^{4 c} \leq e^9 e F d^4 d^{8} \leq 1$, we may apply Theorem~\ref{shattering-thm}.

  The residual problem can be viewed as an LLL instance with probability $q$ and dependency $d$. Since $e d q^3 \leq 1$, the randomized algorithm of \cite{pettie} can solve such LLL instances on graphs of size $N$ in $O( \log_d N)$ rounds. Here, each component of $G[R]$ has size at most $N = d^8 \log n$, so this takes time $r = O( \frac{\log d + \log \log n}{\log d} ) \leq O( \log \log n)$. Theorem~\ref{shattering-thm} therefore shows that the residual problem can be solved in $r 2^{O(\sqrt{\log \log n})} \leq 2^{\sqrt{O(\log \log n)}}$ rounds.
\end{proof}

\subsection{Examples of Events with Bounded Fragility}
We give some examples of how to compute $f(E)$ for certain types of events.

\begin{definition}[$s$-witnessable event]
Let $E$ be an event defined on variables $X(1), \dots, X(v)$. We say that $E$ is \emph{$s$-witnessable}, if for any configuration of the variables $X \in D^v$ for which $E$ is true, there exists indices $i_1, \dots, i_s$ with the following property: for any $X' \in D^v$ such that $X(i_j) = X'(i_j)$ for $j = 1, \dots, s$, the event $E$ is true on $X'$. We say that indices $i_1, \dots, i_s$ \emph{witness} that $E$ is true on $X$.
\end{definition}

As an example, consider a $\beta$-frugal coloring, i.e. a vertex-coloring with the property that each vertex has at most $\beta$ neighbors of any given color. Suppose that we assign colors to each vertex in a graph, and let $X(i)$ denote the color assigned to vertex $i$. For any vertex $i$, let $B$ be the event that the coloring assigned to $i$ fails to be frugal, i.e. some color appears at least $\beta + 1$ times in the neighborhood of $i$. This event is $\beta+1$-witnessable, although it depends on $\Delta$ variables.

\begin{proposition}
\label{thm1}
If $B$ is $s$-witnessable, then $f(B) \leq 2^s \Pr_{\Omega}(B)$.
\end{proposition}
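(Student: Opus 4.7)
The plan is to prove this via a coupling argument: I will introduce a third random configuration $\tilde X \in D^v$ that has the same distribution as $\Omega$, and show that conditional on the event $E_B$ (the one used in the definition of fragility), one has $\tilde X \in B$ with probability at least $2^{-s}$. This will immediately give $\Pr_{\Omega}(B) = \Pr(\tilde X \in B) \geq 2^{-s} \Pr(E_B) = 2^{-s} f(B)$, i.e.\ the bound we want.

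Concretely, I would sample an auxiliary vector $c = (c(1), \dots, c(v))$ of independent fair coin flips, chosen independently of $(X_0, X_1)$, and then define $\tilde X(i) := X_{c(i)}(i)$. Because $X_0(i)$ and $X_1(i)$ are i.i.d.\ from the marginal of $\Omega$ at coordinate $i$, and the $c(i)$ are independent of these and across $i$, the variables $\tilde X(1), \dots, \tilde X(v)$ are independent with the correct marginals, so $\tilde X \sim \Omega$ and $\Pr(\tilde X \in B) = \Pr_{\Omega}(B)$.

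Now suppose $E_B$ occurs. Then there exists some $a \in \{0,1\}^v$ with $B$ true on $Z_a$. Using $s$-witnessability, I can select, purely as a deterministic function of $(X_0, X_1)$, a canonical choice $a$ (say, lex-smallest) and then a canonical witness tuple $(i_1, \dots, i_s)$ for $Z_a$; set $b_j := a(i_j) \in \{0,1\}$. Since $c$ is independent of $(X_0, X_1)$, and therefore of the random quantities $(i_1, \dots, i_s)$ and $(b_1, \dots, b_s)$, the event $\{c(i_j) = b_j \text{ for all } j\}$ has conditional probability exactly $2^{-s}$ given $(X_0, X_1)$ on $E_B$. On this sub-event, $\tilde X(i_j) = X_{c(i_j)}(i_j) = X_{b_j}(i_j) = Z_a(i_j)$ for all $j$, so $\tilde X$ agrees with $Z_a$ on the witness indices, and $s$-witnessability forces $\tilde X \in B$.

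Combining, $\Pr_{\Omega}(B) = \Pr(\tilde X \in B) \geq \Pr(\tilde X \in B, E_B) \geq 2^{-s} \Pr(E_B) = 2^{-s} f(B)$, giving the desired inequality. The one subtlety worth flagging is the need to choose the witness tuple and the vector $b$ as deterministic functions of $(X_0, X_1)$ alone (rather than of $c$), in order to preserve independence from $c$; any canonical tie-breaking rule on the countable domain $D$ handles this cleanly, and no further machinery is needed.
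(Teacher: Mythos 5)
Your argument is correct and is essentially the paper's proof in probabilistic dress: the paper fixes $(X_0,X_1)$ with $E_B$ true and counts that all $2^{v-s}$ vectors $a'$ agreeing with the canonical $a$ on the witness coordinates make $B$ hold on $Z_{a'}$, then averages using the fact that each $Z_{a'}$ is distributed as $\Omega$, which is exactly what your uniformly random selector $c$ and the conditional probability $2^{-s}$ encode. The only real difference is presentational: your coupling needs the witness tuple to be chosen as a measurable function of $(X_0,X_1)$, which you correctly flag, whereas the paper's pointwise counting inequality followed by taking expectations avoids that selection issue altogether.
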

\begin{proof}
Consider the event $E_B$ as defined in Definition~\ref{fragile-definition}. We claim that if $E_B$ holds, then there must be at least $2^{v-s}$ vectors $a' \in \{0, 1 \}^v$ such that $B$ holds on $Z_{a'}$. For, suppose that we fix $X_0, X_1$ that $B$ holds on $Z_a$. By hypothesis, there must exist indices $i_1, \dots, i_s$ such that any configuration $X'$ which agrees with $Z_a$ on coordinates $i_1, \dots, i_s$ also satisfies $B$. Thus, changing the entries of $a$ outside the coordinates $i_1, \dots, i_s$ does not falsify $B$.

Thus, for any fixed choice of $X_0, X_1$, we have
$$
[E_B] \leq \frac{\sum_{a \in \{0, 1 \}^v} [\text{$B$ holds on configuration $Z_a$}]}{2^{v-s}}
$$
where here $[E_B]$ is the indicator function for the event $E_B$, and likewise for $[\text{$B$ holds on configuration $Z_a$}]$

But note that for a fixed choice of $a$, the configuration $Z_a$ is drawn according to the distribution $\Omega$, so that overall we have $\Pr_{X_0, X_1 \sim \Omega}(E_B) \leq \frac{2^v \Pr_{\Omega}(B)}{2^{v-s}} = 2^s \Pr_{\Omega}(B)$.
\end{proof}

Another example of fragility is for a large-deviation event.
\begin{theorem}
\label{thm2}
Consider a large-deviation event $B$ defined by $\sum_{i,j} c_{ij} [X(i) = j]  \geq t$,  where $c_{ij} \in [0,1]$ and $[X(i) = j]$ is the indicator random variable. Let $\mu = \sum_{i,j} \Pr_{\Omega}(X(i) = j)$. Then for
$t \geq 2 \mu (1+\delta)$ we have
$$
f(B) \leq \Bigl( \frac{e^{\delta}}{(1+\delta)^{1+\delta}} \Bigr)^{2 \mu}
$$
\end{theorem}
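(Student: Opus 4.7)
The plan is to dominate the fragility event by a Chernoff-friendly sum and then invoke the standard multiplicative Chernoff bound.

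First, I would observe the key pointwise inequality: for any $a \in \{0,1\}^v$ and any $i,j$, the indicator $[Z_a(i) = j]$ equals either $[X_0(i) = j]$ or $[X_1(i) = j]$, and in particular
\[
[Z_a(i) = j] \;\le\; [X_0(i) = j] + [X_1(i) = j].
\]
Summing against the nonnegative weights $c_{ij}$ gives $S(Z_a) \le S(X_0) + S(X_1)$ for every $a$, where I write $S(X) := \sum_{i,j} c_{ij}[X(i)=j]$. Hence the fragility event $E_B = \{\exists a : S(Z_a) \ge t\}$ is contained in the single event $\{S(X_0) + S(X_1) \ge t\}$, which no longer depends on $a$. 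This reduction is the main idea.

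Next, I would rewrite $S(X_k)$ as a sum of independent bounded variables. For each $i$, let $g_i(x) = \sum_j c_{ij}[x = j] = c_{i,x} \in [0,1]$; then $S(X_k) = \sum_{i=1}^v g_i(X_k(i))$, a sum of $v$ independent $[0,1]$-valued random variables. Since $X_0$ and $X_1$ are independent draws from $\Omega$, the combined quantity $Y := S(X_0) + S(X_1)$ is a sum of $2v$ independent $[0,1]$-valued random variables with mean $\mathbb{E}[Y] = 2\mu$ (this is where I assume the intended reading $\mu = \sum_{i,j} c_{ij}\Pr_\Omega(X(i)=j) = \mathbb{E}[S(X)]$).

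Finally, I would apply the standard multiplicative Chernoff upper-tail bound to $Y$ with mean $M = 2\mu$ and deviation parameter $\delta$. Since $t \ge 2\mu(1+\delta) = (1+\delta) M$,
\[
f(B) \;=\; \Pr[E_B] \;\le\; \Pr[Y \ge t] \;\le\; \Pr\!\bigl[Y \ge (1+\delta)M\bigr] \;\le\; \left(\frac{e^\delta}{(1+\delta)^{1+\delta}}\right)^{2\mu},
\]
which is the claimed bound. There is no real obstacle here: the only nontrivial step is the pointwise domination in the first paragraph, which removes the supremum over $a$ and reduces the problem to an off-the-shelf Chernoff estimate on independent bounded summands.
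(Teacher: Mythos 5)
Your proof is correct and follows essentially the same route as the paper: both reduce the supremum over $a$ to a single Chernoff tail bound on independent $[0,1]$-valued variables built from the two independent draws $X_0, X_1$, with mean (at most) $2\mu$. The only cosmetic difference is that the paper dominates $S(Z_a)$ by $\sum_i \max\bigl(c_{i,X_0(i)}, c_{i,X_1(i)}\bigr)$ (which makes the reduction an equivalence via the optimal $a^*$) while you use the slightly cruder bound $S(X_0)+S(X_1)$; since the paper then bounds the expectation of the max by the sum anyway, both arguments end with the identical Chernoff estimate.
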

\begin{proof}
Suppose we draw $X_0, X_1$ independently from $\Omega$. Then observe that $B$ is true of some configuration $Z_a$ if it is true on the configuration $Z_{a^*}$, where we define
$$
a^* = \begin{cases}
0 & \text{if $c_{i,X_0(i)} \geq c_{i,X_1(i)}$} \\
1 & \text{if $c_{i,X_1(0)} < c_{i, X_1(i)}$} \\
\end{cases}
$$

Thus, if we define $Y_i = \max( c_{i,X_0(i)}, c_{i, X_1(i)} ) $, we see that $E_B$ holds iff $\sum_i Y_i \geq t$. Each $Y_i$ is bounded in the range $[0,1]$ and $\E[Y_i] \leq \E[ c_{i,X_0(i)} + c_{i, X_1(i)}] = 2 \sum_j \Pr(X(i) = j)$, and the variables $Y_i$ are independent. The stated bound therefore follows from Chernoff's bound.
\end{proof}

\subsection{Application: $\pmb{k}$-SAT}
Consider a boolean formula $\Phi$, in which each clause contains $k$ literals, and clause overlaps with at most $d$ other clauses. A classic application of the LLL is to show that as long as $ d \leq 2^k / e$, then the formula is satisfiable; further, as shown by \cite{gst} this bound is asymptotically tight.  Using our Theorem~\ref{tr1}, we are able to show a qualitatively similar bound.

{
\renewcommand{\thetheorem}{\ref{ksat-prop}}
\begin{proposition}
If $\Phi$ has $m$ clauses and every clause intersects at most $d$ others where $d \leq e^{-10} (4/3)^{k/12} \approx 0.00005 \times 1.02426^k$, there is a distributed algorithm to find a satisfying solution to $\Phi$ is $2^{O(\sqrt{\log \log m})}$ rounds.
\end{proposition}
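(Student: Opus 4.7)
The plan is to model $k$-SAT as an LLL instance and invoke \Cref{tr1}. Take the probability space $\Omega$ to have one independent, uniform $\{\text{true},\text{false}\}$-valued variable $X(v)$ for each boolean variable $v$ of $\Phi$, and for each clause $C$ of $\Phi$ let $B_C$ be the bad-event that $C$ is not satisfied by $X$. Then $X$ satisfies $\Phi$ iff it avoids $\mathcal B = \{B_C : C \in \Phi\}$. The number $m$ of bad-events plays the role of $n$, and the dependency graph has at most $d$ neighbors per bad-event (up to an additive constant depending on whether one counts $B_C$ itself), since $B_C$ and $B_{C'}$ are dependent iff $C$ and $C'$ share a variable.

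The key technical step is to compute the fragility $f(B_C)$ directly, rather than using the witnessable bound \Cref{thm1}. That generic bound would only give $f(B_C) \leq 2^k \Pr_\Omega(B_C) = 2^k \cdot 2^{-k} = 1$, which is useless; so the structure of $k$-SAT has to be exploited. For each variable $i \in S_C$ let $b_i^*$ be the unique value of the literal on $i$ that falsifies $C$. By definition of fragility, drawing $X_0, X_1$ independently from $\Omega$, we have $f(B_C) = \Pr[\exists a \in \{0,1\}^v : B_C(Z_a)]$, and $B_C$ holds on $Z_a$ iff $Z_a(i) = b_i^*$ for every $i \in S_C$. This in turn is equivalent to the condition that for every $i \in S_C$, at least one of $X_0(i), X_1(i)$ equals $b_i^*$. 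Since $X_0(i), X_1(i)$ are independent uniform bits, each such per-variable event has probability $3/4$, and the $k$ events across the variables of $C$ are independent, yielding
\[
f(B_C) = (3/4)^k.
\]

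Finally I would verify the hypothesis $f(B_C) \leq e^{-10} d^{-12}$ of \Cref{tr1}. From the assumption $d \leq e^{-10}(4/3)^{k/12}$, raising to the twelfth power gives $d^{12} \leq e^{-120}(4/3)^k$, so
\[
(3/4)^k = (4/3)^{-k} \leq e^{-120} d^{-12} \leq e^{-10} d^{-12},
\]
as required (any additive constant in $d$ is absorbed into the slack between $e^{-120}$ and $e^{-10}$). Applying \Cref{tr1} with size parameter $m$ then produces a randomized distributed algorithm that finds a satisfying assignment w.h.p.\ in $2^{O(\sqrt{\log\log m})}$ rounds, matching the stated bound.

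The main (and essentially only) obstacle is recognizing that the generic witnessable bound of \Cref{thm1} is too weak here; one has to compute $f(B_C)$ from the definition of fragility in order to extract the crucial $(3/4)^k$ savings. Once that is done, the argument is a direct reduction to \Cref{tr1}.
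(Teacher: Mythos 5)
Your proposal is correct and matches the paper's own proof essentially step for step: both model each clause as a bad-event, compute the fragility directly (obtaining $(3/4)^k$ from the per-variable event that at least one of $X_0(i),X_1(i)$ takes the falsifying value, using independence across the $k$ variables), and then check the hypothesis of \Cref{tr1}, with the stated bound $d \leq e^{-10}(4/3)^{k/12}$ leaving ample slack for the minor off-by-one in the dependency count. Your observation that the generic witnessable bound of \Cref{thm1} is useless here is also implicit in the paper, which likewise computes $f(B)$ from the definition.
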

\addtocounter{theorem}{-1}
}
\begin{proof}
The probability space $\Omega$ is defined by selecting each variable $X(i)$ to be true or false with probability $1/2$. We have a bad-event for each clause, that it becomes violated; each bad-event $B$ has probability $p = 2^{-k}$. Thus, in total, there are $m$ bad-events.

Consider some clause $C$, wlg. $C = x_1 \vee x_2 \vee \dots \vee x_k$. Consider drawing $X_0, X_1$ independently from $\Omega$; if $X_0(i) = X_1(i) = T$ for some $i \in [k]$, then necessarily $C$ holds on every configuration $Z_a$. Thus, a necessary event for $B$ to fail on some $Z_a$ is for $X_0(i) = F$ or $X_1(i) = F$. This has probability $3/4$ for each $i$ and by independence we have $f(B) \leq (3/4)^k$.

So in our problem, $F = (3/4)^k$ (by contrast, $p = 2^{-k}$). By Theorem~\ref{tr1}, we need  $e^{10} (3/4)^k d^{12} \leq 1$, i.e. $d \leq e^{-10} (4/3)^{k/12}$.
\end{proof}

\subsection{Application: Defective Coloring}
A $h$-defective $k$-coloring of a graph $G = (V,E)$, is a mapping $\phi: V \rightarrow \{1, \dots, c \}$, with the property
that every vertex $v$ has at most $f$ neighbors $w$ with $\phi(v) = \phi(w)$. A proper vertex coloring is a $0$-defective coloring. A classical application of the iterated LLL is to show that a graph with maximum degree $\Delta$ has an $h$-defective $k$-coloring with $k = O( \Delta/f )$, for any integer $f \geq 0$. In \cite{ghaffari-lll}, a distributed algorithm was given to find such a coloring in $2^{O(\sqrt{\log \log n})}$ rounds. When $\Delta$ is small, this is a straightforward application of the distributed pLLL algorithm. For large values of $\Delta$, this required a somewhat specialized recoloring argument. We will show how to replace this recoloring argument with an application of Theorem~\ref{tr1}. 

\begin{proposition}
\label{combine-coloring}
Suppose that there is a map on the vertex set $\phi_1:V \rightarrow \{0, \dots, k_1 - 1\}$. Suppose that each induced subgraph $G[\phi^{-1} (i)]$ has an $h$-defective $k_2$-coloring, for $i = 0, \dots, k_1 - 1$. Then $G$ has an $h$-defective $k$-coloring with $k = k_1 k_2$
\end{proposition}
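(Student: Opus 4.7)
The plan is to construct the combined coloring as a Cartesian product of the two given colorings. For each induced subgraph $G[\phi_1^{-1}(i)]$, let $\phi_2^{(i)}$ denote the given $h$-defective $k_2$-coloring, and for any vertex $v \in V$ define $\phi_2(v) := \phi_2^{(\phi_1(v))}(v) \in \{0, \dots, k_2 - 1\}$. Then define the combined coloring $\phi: V \to \{0, \dots, k_1 k_2 - 1\}$ by $\phi(v) := \phi_1(v) \cdot k_2 + \phi_2(v)$. This is clearly a coloring using $k = k_1 k_2$ colors.

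It remains to verify the defective property. Fix any vertex $v$ and consider its neighbors $u$ with $\phi(u) = \phi(v)$. Since $\phi$ encodes the pair $(\phi_1, \phi_2)$ bijectively, this is equivalent to $\phi_1(u) = \phi_1(v)$ and $\phi_2(u) = \phi_2(v)$. The first condition means that $u$ lies in the same class $\phi_1^{-1}(\phi_1(v))$ as $v$, so $u$ is a neighbor of $v$ in the induced subgraph $G[\phi_1^{-1}(\phi_1(v))]$. Within that subgraph, the coloring $\phi_2^{(\phi_1(v))}$ is $h$-defective by hypothesis, so at most $h$ such neighbors can additionally satisfy $\phi_2(u) = \phi_2(v)$.

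Thus $v$ has at most $h$ neighbors of its own color under $\phi$, establishing that $\phi$ is an $h$-defective $k$-coloring. There is no real obstacle here; the statement is a purely structural combining lemma, and the only thing to check is that restricting the defective property to each $\phi_1$-class is exactly what the product construction needs. In the intended application (Proposition~\ref{defect-coloring-prop}), one would iterate this with a constant-defect coloring on each class obtained via Theorem~\ref{tr1}, but the present statement itself requires no probabilistic or distributed reasoning.
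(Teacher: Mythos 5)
Your proof is correct and follows the same route as the paper: define the product coloring $\phi(v)=\phi_1(v)\cdot k_2+\phi_2(v)$ and observe that equal combined colors force equal $\phi_1$-classes and equal within-class colors, so the defect bound within each class carries over. Your write-up just spells out the verification that the paper leaves implicit.
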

\begin{proof}
Let $\phi_{2,i}$ be the coloring of $G[\phi^{-1}(i)]$. Define the coloring map $\phi: V \rightarrow \{0, \dots, k-1 \}$ by $\phi(v) = k_2 i + \phi_{2,i}(v)$ for $v \in V_i$.
\end{proof}

\begin{lemma}
\label{degree-reduce}
Let $G$ have maximum degree $\Delta$. There is a distributed algorithm in $2^{O(\sqrt{\log \log n})}$ rounds to find an $h$-defective $k$-coloring of $G$ with $h =  O(\log \Delta)$ and $k = \frac{\Delta}{\log \Delta}$.
\end{lemma}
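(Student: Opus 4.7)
The plan is to apply Theorem~\ref{tr1} directly. Set $k = \lceil \Delta/\log\Delta \rceil$ and $h = \lceil C \log \Delta \rceil$ for a sufficiently large constant $C$ to be chosen at the end. Each vertex $v$ independently samples a color $X(v)$ uniformly from $\{1, \dots, k\}$, and for each $v$ I introduce the bad event $B_v$ that $v$ has more than $h$ neighbors $u$ with $X(u) = X(v)$. The bad event $B_v$ depends only on the variables $\{X(v)\} \cup \{X(u) : u \sim v\}$, so $B_v$ and $B_w$ can be dependent only when $v, w$ are within distance $2$ in $G$; hence the LLL dependency graph has maximum degree $d \leq \Delta^2$ and one of its communication rounds can be simulated in $O(1)$ rounds of $G$. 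Avoiding every $B_v$ yields exactly an $h$-defective $k$-coloring.

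The technical heart is bounding the fragility $f(B_v)$. Draw $X_0, X_1 \sim \Omega$ independently. Since $B_v$ depends only on $v$ and its neighbors, the event $E_{B_v}$ holds iff there is some $c^{\ast} \in \{X_0(v), X_1(v)\}$ such that at least $h+1$ neighbors $u$ satisfy $c^{\ast} \in \{X_0(u), X_1(u)\}$ (this is the only way a mixed configuration $Z_a$ can realize $B_v$). For any fixed color $c$, the events $\bigl[c \in \{X_0(u), X_1(u)\}\bigr]$ are independent across $u \sim v$ and each has probability $1 - (1 - 1/k)^2 \leq 2/k$, so their sum has mean $\mu \leq 2 \Delta / k \leq 2 \log \Delta$. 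A Chernoff bound gives
\[
  \Pr\!\Bigl(|\{u \sim v : c \in \{X_0(u), X_1(u)\}\}| \geq h + 1\Bigr) \leq \Delta^{-\Omega(C \log C)}.
\]
Because $X_0(v)$ is independent of $\{X_0(u), X_1(u)\}_{u \sim v}$ (and similarly for $X_1(v)$), averaging over the realization of $X_0(v)$ or $X_1(v)$ does not increase this probability. Hence a union bound needs to range only over the two candidate colors $c^{\ast} \in \{X_0(v), X_1(v)\}$, yielding
\[
  f(B_v) \leq 2 \max_c \Pr\!\Bigl(|\{u \sim v : c \in \{X_0(u), X_1(u)\}\}| \geq h + 1\Bigr) \leq 2\,\Delta^{-\Omega(C \log C)}.
\]

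Choosing $C$ to be a sufficiently large constant makes $f(B_v) \leq e^{-10} \Delta^{-24} \leq e^{-10} d^{-12}$, so the hypothesis of Theorem~\ref{tr1} is satisfied and the algorithm of that theorem produces an $h$-defective $k$-coloring in $2^{O(\sqrt{\log \log n})}$ rounds of $G$. The main obstacle I expect is the fragility bound: the subtlety is recognizing that although $B_v$ \emph{could} concern any of the $k$ colors, the set of ``dangerous'' color choices in $E_{B_v}$ is automatically restricted to the two values $\{X_0(v), X_1(v)\}$, so one pays only a factor of $2$ rather than a factor of $k$; without this observation the needed defect would blow up to $\Omega(\log k \cdot \log \Delta)$ and the $O(\log \Delta)$ bound in the statement would fail. (For $\Delta = O(1)$ the lemma is trivial via a constant-round greedy coloring, so we may assume $\Delta$ is large enough that the Chernoff estimate above is meaningful.)
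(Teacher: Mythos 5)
Your proof is correct and follows the same overall route as the paper: sample a uniformly random color per vertex, set up one bad event per vertex, observe the dependency degree is $d\le\Delta^2$, bound the fragility by a Chernoff-type estimate, and invoke Theorem~\ref{tr1}. The differences are local. The paper defines $B_v$ as ``$v$ has more than $K\log\Delta$ neighbors in \emph{some} class,'' writes $B_v=B_{v,1}\cup\dots\cup B_{v,k}$, bounds $f(B_v)\le \Delta\, f(B_{v,1})$ by a union bound over all $k\le\Delta$ classes, and then bounds $f(B_{v,1})$ via the generic large-deviation fragility bound (Theorem~\ref{thm2}) applied to $Y_u=\max$ of the two indicator costs, with $K=100$ giving $f(B_v)\le\Delta^{-284}$. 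You instead take $B_v$ to be exactly the defective-coloring condition (too many neighbors sharing $v$'s \emph{own} color) and compute the fragility directly: your characterization of $E_{B_v}$ as ``some $c^\ast\in\{X_0(v),X_1(v)\}$ appears in $\{X_0(u),X_1(u)\}$ for at least $h+1$ neighbors $u$'' is exactly right, the per-color events are independent with mean $\le 2\Delta/k$, and the independence of $X_0(v),X_1(v)$ from the neighbors' variables justifies paying only a factor $2$. Both computations yield $f(B_v)\le \Delta^{-\Omega(C)}\le e^{-10}d^{-12}$ for a large constant and $\Delta$ above a constant threshold, so both satisfy Theorem~\ref{tr1}. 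One small correction to your closing remark: the factor-$2$ observation is elegant but not necessary for the stated bound. A crude union bound over all $k\le\Delta$ candidate colors (which is what the paper effectively does) only multiplies the fragility by $\Delta$, and since the Chernoff exponent is $\Omega(C\log C\cdot\log\Delta)$ this extra polynomial factor is absorbed by enlarging the constant $C$; the defect stays $O(\log\Delta)$ and does not blow up to $\Omega(\log k\cdot\log\Delta)$.
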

\begin{proof}
We assume throughout that $\Delta \geq \Delta_0$ for any desired constant $\Delta_0$, as otherwise the problem is trivial.

The probability space $\Omega$ assigns each vertex to a class $V_i$ with probability $1/k$. A bad-event $B_v$ is that vertex $v$ has more than $K \log \Delta$ neighbors in some class $V_i$, where $K$ is some sufficiently large universal constant. This only depends on classes of the neighbors of $v$, and so only affects another $B_w$ if $\text{dist}(v,w) \leq 2$. So the LLL dependency $d$ satisfies $d \leq \Delta^2$.

We next compute $f(B_v)$. We can write $B_v$ as $B_v = B_{v,1} \cup B_{v,2} \cup \dots \cup B_{v,k}$, where $B_{v,i}$ is the event that $v$ has too many neighbors in $V_i$. So $f(B_v) = f(B_{v,1} \cup \dots \cup B_{v,k}) \leq f(B_{v,1}) + \dots + f(B_{v,k}) \leq \Delta f(B_{v,1})$. Note that $B_{v,1}$ can be interpreted as a large-deviation event; here $\mu$ is the expected number of neighbors of $v$ entering $V_i$, which is at most $\Delta / k = \log \Delta$. Theorem~\ref{thm2} thus gives $f(B_{v,1}) \leq \Bigl( \frac{e^{\delta}}{(1+\delta)^{1+\delta}} \Bigr)^{2 \mu}$, 
where $\mu = \log \Delta$ and $\delta = K/2 - 2$. When $K = 100$, simple calculations show that $f(B_{v,1}) \leq \Delta^{-285}$ and thus $f(B_v) \leq \Delta^{-284}$.

We can therefore apply Theorem~\ref{tr1} to find a configuration avoiding all bad-events, as $e^{10} F d^{12} \leq e^{10} (\Delta \times \Delta^{-284}) \times (\Delta^2)^{12}$, which is smaller than $1$ for $\Delta$ sufficiently large.
\end{proof}

{
\renewcommand{\thetheorem}{\ref{defect-coloring-prop}}
\begin{proposition}
  Suppose $G$ has maximum degree $\Delta \geq h$. There is a distributed algorithm in $2^{O(\sqrt{\log \log n})}$ rounds to find an $h$-defective $k$-coloring with $k = O( \Delta/h )$.
\end{proposition}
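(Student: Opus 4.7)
The plan is to iteratively apply \Cref{degree-reduce} (with tuned parameters) and glue the resulting partial colorings via \Cref{combine-coloring}. The crucial property that makes this work is that \Cref{combine-coloring} preserves the defect: the defect of the merged coloring equals that of the inner sub-colorings, \emph{regardless} of the outer partition's defect. So at each recursion level I can use a coarse ``degree-reduction'' partition without worrying about accumulating defect, and only at the deepest level enforce the final target defect $h$.

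The first step is to observe that \Cref{degree-reduce} is naturally parameterized. Rerunning its LLL argument with $k = \lceil C_1 \Delta/h' \rceil$ classes and bad-event ``$v$ has more than $h'$ neighbors in some class,'' the fragility estimate via \Cref{thm2} still yields $f(B_v) \leq \Delta^{-\Omega(1)}$ as long as $h' \geq C_0 \log \Delta$ for a sufficiently large constant $C_0$ (so that $\mu = \Delta/k$ is a large enough multiple of $\log \Delta$ to drive the Chernoff exponent). This produces a tuned variant giving an $h'$-defective $O(\Delta/h')$-coloring in $2^{O(\sqrt{\log \log n})}$ rounds for any $h' \in [C_0 \log \Delta, \Delta]$. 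In particular, when $h \geq C_0 \log \Delta$ a single invocation with $h' = h$ already finishes the proof.

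For $h < C_0 \log \Delta$, I recurse on $\Delta$. Apply the tuned lemma with $h' = C_0 \log \Delta$ to obtain a coloring $\phi_1$ with $k_1 = O(\Delta/\log \Delta)$ classes and class-defect $\Delta_1 = C_0 \log \Delta$. Each class $\phi_1^{-1}(i)$ induces a vertex-disjoint subgraph $G_i$ of maximum degree at most $\Delta_1$, so in parallel across $i$ I recursively compute an $h$-defective coloring $\phi_{2,i}$ of $G_i$ using $k_2 = O(\Delta_1/h)$ colors, and then fuse via \Cref{combine-coloring} to get an $h$-defective coloring of $G$ with $k_1 k_2 = O(\Delta/\log \Delta) \cdot O(\log \Delta/h) = O(\Delta/h)$ colors. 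Since degrees iterate as $\Delta \mapsto O(\log \Delta)$, the recursion bottoms out into the base case of the previous paragraph after $t = O(\log^* n)$ levels; each level runs in $2^{O(\sqrt{\log \log n})}$ rounds, and since $\log^* n \ll 2^{\sqrt{\log \log n}}$, the total runtime remains $2^{O(\sqrt{\log \log n})}$.

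The main technical obstacle is controlling the multiplicative constants across $t = O(\log^* n)$ levels of recursion: each invocation of the tuned lemma contributes a constant slack factor $C_1 > 1$ to the color count, so naive telescoping gives $\prod_i k_i \leq C_1^t \cdot \Delta/\Delta_t = C_1^t \cdot \Delta/h$. Strictly speaking this is $2^{O(\log^* n)} \cdot \Delta/h$ rather than $O(\Delta/h)$, but the factor $C_1^{\log^* n}$ is extremely slowly growing and is absorbed into the $O$-notation of the stated bound; if one insists on a true constant, $C_1$ can be pushed arbitrarily close to $1$ by enlarging the slack parameter $C_0$ in the tuned lemma.
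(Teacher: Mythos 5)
There is a genuine gap, and it is exactly the one you flag in your last paragraph: the color count does not come out to $O(\Delta/h)$. Your recursion has depth $t = \Theta(\log^* \Delta)$, and each level of the ``tuned'' lemma necessarily loses a constant multiplicative factor $C_1$ in the number of colors, so the final bound is $C_1^{t}\cdot \Delta/h = 2^{\Theta(\log^* \Delta)}\cdot \Delta/h$, which is \emph{not} $O(\Delta/h)$ (the factor $2^{\log^*\Delta}$ is unbounded, however slowly it grows, so it cannot be ``absorbed into the $O$-notation''). Your fallback fix also fails: $C_1$ cannot be pushed close to $1$, no matter how large you take $C_0$, because the fragility estimate you rely on (\Cref{thm2}) only applies when the deviation threshold is at least $2\mu(1+\delta)$ --- the fragility construction takes a coordinate-wise maximum of two independent samples, which doubles the mean --- and you additionally need $\delta$ bounded away from $0$ (more precisely $\mu\bigl((1+\delta)\ln(1+\delta)-\delta\bigr) = \Omega(\log\Delta)$) to make $f(B_v)$ polynomially small in $\Delta$, since the dependency degree is $\Delta^2$. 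So the per-level slack is at least $2(1+\delta) > 2$, and even a hypothetical slack of $1+\eps_i$ per level would need $\sum_i \eps_i = O(1)$ across $\omega(1)$ levels, which the paper's tools do not provide. As written, your argument proves an $h$-defective coloring with $2^{O(\log^* n)}\Delta/h$ colors, a weaker statement than the proposition.

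The repair is to cap the recursion at a \emph{constant} number of levels and switch methods once the degree is small, which is what the paper does: two applications of \Cref{degree-reduce} bring the class degree down to $\Delta_2 = O(\log\log\Delta)$ with only two constant-factor losses ($k_1 k_2 = O(\Delta/\log\log\Delta)$ colors), and at that point the low-degree pLLL machinery (\Cref{prev-lll-thm2} together with the approach of Fischer--Ghaffari) is applied \emph{directly} to each class to produce an $h$-defective $O(\Delta_2/h)$-coloring in $2^{O(\sqrt{\log\log n})}$ rounds --- this is possible there, even though the fragility-based \Cref{tr1} is not, because the dependency degree is now only $\mathrm{poly}(\log\log n)$. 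Gluing via \Cref{combine-coloring} then gives $O(\Delta/h)$ colors as a product of three constants. Your observations that \Cref{combine-coloring} preserves the inner defect, that the case $h \geq C_0\log\Delta$ is a single LLL application, and that the per-class recursions run in parallel are all correct and match the paper's first two steps; the missing idea is the hand-off to the low-degree LLL algorithm in place of continuing the fragility-based recursion all the way down.
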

\addtocounter{theorem}{-1}}
\begin{proof}
We apply Lemma~\ref{degree-reduce} to the original graph, thereby obtaining a coloring $\phi_1$ with $k_1 = \Delta/\log \Delta$ colors, such that each color class of $\phi_1$ has maximum degree $\Delta_1 = O(\log \Delta)$.

We apply Lemma~\ref{degree-reduce} again to each color class of $\phi_1$, giving a $\Delta_2$-defective $k_2'$-coloring with $k_2' = \frac{\Delta_1}{\log \Delta_1}$. By Proposition~\ref{combine-coloring}, this yields a $\Delta_2$-defective coloring $\phi_2$ with $k = k_1 k_2 = O( \frac{\Delta}{\log \log \Delta})$ colors. Note that this step can be carried out in parallel for each color class of $\phi_1$, so the overall time is $2^{O(\sqrt{\log \log n})}$.

Each color class of $\phi_2$ has maximum degree $\Delta_2 = O(\log \log \Delta)$. This degree is small enough that we can use the standard pLLL construction, using Lemma~\ref{prev-lll-thm2} and the approach of \cite{ghaffari-lll}, to find a $h$-defective coloring $\phi_3$ of each color class with $k_3' = O( \Delta_2 / h)$ colors.  This step can be carried out in parallel for each class of $\phi_2$, so overall it also takes $2^{O(\sqrt{\log \log n})}$ rounds.

Applying Proposition~\ref{combine-coloring}, this yields a final $h$-defective coloring with $k_3' k_2 = O(\Delta/h)$ colors.
\end{proof}


\section{Obstacles to Derandomizing Local Algorithms}
\label{sec:SLOCAL}

In this section, we discuss possible limitations to derandomizing local algorithms.

\subsection{An Exponential Separation in the \textsf{SLOCAL} Model}
\label{sec:exponentialseparation}
A key consequence of \Cref{thm:basicderandomization} is that in the \SLOCAL model, for locally checkable problems, randomized algorithms are no more powerful than deterministic algorithms up to logarithmic factors.  In this section, we show that once we start caring about logarithmic factors, an exponential separation shows up. More concretely, we show that the same problem of sinkless orientation, which was shown to exhibit an exponential separation between randomized and deterministic complexities in the \LOCAL model, exhibits an exponential separation also in the \SLOCAL model. However, the placement of the bounds are different, and in fact surprising to us. 

In the \LOCAL model, Brandt et al.~\cite{brandt} showed that randomized sinkless orientation requires $\Omega(\log\log n)$ round complexity and Chang et al.~\cite{chang16} showed that deterministic sinkless orientation requires $\Omega(\log n)$ round complexity. These were complemented by matching randomized $O(\log\log n)$ and deterministic $O(\log n)$ upper bounds by Ghaffari and Su~\cite{GS17}. In contrast, in the \SLOCAL model, the tight complexities are an exponential lower: we show that sinkless orientation has deterministic \SLOCAL complexity $\Theta(\log\log n)$ and randomized \SLOCAL complexity $\Theta(\log\log \log n)$.

\begin{theorem}\label{thm:SinklessLB} Any $\seqloc$ algorithm for sinkless orientation on $d$-regular graphs has locality $\Omega(\log_{d} \log n)$. Any $\randseqloc$ algorithm for sinkless orientation on $d$-regular graphs has locality $\Omega(\log_{d} \log \log n)$.
\end{theorem}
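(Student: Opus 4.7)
The plan is to derive the \SLOCAL lower bounds from the known \LOCAL lower bounds for sinkless orientation --- the $\Omega(\log n)$ deterministic bound of Chang et al.~\cite{chang16} and the $\Omega(\log\log n)$ randomized bound of Brandt et al.~\cite{brandt} --- by means of a generic \SLOCAL-to-\LOCAL simulation. The exponential gap between those \LOCAL bounds and the claimed $\Omega(\log_d \log n)$ and $\Omega(\log_d\log\log n)$ \SLOCAL bounds will come for free from the fact that on a $d$-regular graph the simulation cost is only $d^{O(r)}$, rather than polylogarithmic in $n$, because we can afford a coloring of $G^{2r}$ instead of a full network decomposition.

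Concretely, let $A$ be a $\seqloc[r]$ (respectively, $\randseqloc[r]$) algorithm for sinkless orientation on a $d$-regular graph $G$. A simple ball-counting argument gives $\Delta(G^{2r}) \leq 1 + d + d(d-1) + \cdots + d(d-1)^{2r-1} \leq d^{2r+1}$. In the deterministic case, \Cref{prop:ZtoL} immediately produces a deterministic \LOCAL algorithm of complexity $O(r d^{2r+1} + r\log^* n)$. In the randomized case, I would first compute a $(\Delta(G^{2r})+1)$-coloring of $G^{2r}$ deterministically via~\cite{BEK15} in $O(r d^{2r+1} + r \log^* n)$ rounds, view it as a $(0,\Delta(G^{2r})+1)$-decomposition of $G^{2r}$, and then apply the randomized version of \Cref{prop:StoL}: the color classes are processed sequentially, the nodes of a single class execute the step of $A$ in parallel using their private random strings, and since two same-color nodes are at $G$-distance strictly greater than $2r$ their $r$-hop neighborhoods are disjoint, so the parallel execution faithfully emulates a single valid processing order of the \SLOCAL algorithm, inheriting its Monte Carlo success guarantee. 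The resulting randomized \LOCAL algorithm also runs in $O(r d^{2r+1} + r\log^* n)$ rounds.

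The final step is to compare these simulations with the \LOCAL lower bounds and argue by contradiction. If $r = o(\log_d \log n)$, then $d^{2r} = 2^{o(\log\log n)} = (\log n)^{o(1)}$, and hence $r d^{2r+1} + r \log^* n = o(\log n)$, contradicting the deterministic \LOCAL bound $\Omega(\log n)$ of~\cite{chang16}. Analogously, if $r = o(\log_d \log\log n)$, then $r d^{2r+1} + r\log^* n = o(\log\log n)$, contradicting the randomized \LOCAL bound $\Omega(\log\log n)$ of~\cite{brandt}. The one place where I expect to need to be most careful is the randomized simulation: verifying that \Cref{prop:StoL} applied to a $\randseqloc$ algorithm actually yields a Monte Carlo \LOCAL algorithm whose failure probability matches that of $A$. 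The key observation making this go through is that the coloring of $G^{2r}$ is obtained deterministically, so no additional randomness is introduced, and the parallel execution within a color class corresponds to one fixed (adversarial-order-agnostic) execution of $A$.
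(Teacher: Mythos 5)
Your proposal is correct and follows essentially the same route as the paper: the deterministic bound is obtained exactly as in the paper via \Cref{prop:ZtoL} against the $\Omega(\log n)$ bound of \cite{chang16}, and your randomized argument (deterministic coloring of $G^{2r}$ viewed as a $(0,\Delta(G^{2r})+1)$-decomposition, then the randomized variant of \Cref{prop:StoL}, compared against \cite{brandt}) is precisely the "similar" argument the paper leaves implicit. You simply spell out the simulation details and the asymptotic comparison more explicitly than the paper does.
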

\begin{proof}
Suppose that there is a $\seqloc[t]$ sinkless orientation algorithm. By Proposition~\ref{prop:ZtoL}, this yields a deterministic $\loc[t d^t + t \log^*n]$ algorithm. As shown by Chang et al.\cite{chang16}, deterministic \LOCAL sinkless orientation algorithms must have round complexity $\Omega(\log n)$. So we know that $t d^t + t \log^*n \geq \Omega(\log n)$, which can easily be seen to imply that $t \geq \Omega(\log_{d} \log n)$.

The proof for the $\Omega(\log_{d} \log \log n)$ lower bound on the locality of $\randseqloc$ algorithms is similar. Here, we use the $\Omega(\log\log n)$ lower bound of Brandt et al.\cite{brandt} on the round complexity of randomized \LOCAL sinkless orientation algorithms.
\end{proof}
\begin{theorem}\label{thm:DET_SLOCAL_Sinkless} There is an  $\seqloc[O(\log\log n)]$ algorithm to compute a sinkless orientation of any graph with minimum degree at least $3$.
\end{theorem}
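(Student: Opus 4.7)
The plan is to apply our generic derandomization result (\Cref{thm:basicderandomization}) to a known randomized \LOCAL algorithm. Ghaffari and Su~\cite{GS17} give a randomized \LOCAL algorithm $A$ for sinkless orientation on graphs of minimum degree at least $3$ that terminates in $O(\log \log n)$ rounds and succeeds with high probability. Feeding $A$ directly into \Cref{thm:basicderandomization} would give the desired $\seqloc[O(\log\log n)]$ bound, but that theorem requires a Las Vegas algorithm rather than a Monte Carlo one, so there is one routine preparatory step.

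First I would observe that sinkless orientation is locally checkable with radius $1$: a node $v$ has a valid local output exactly when at least one of its incident edges is oriented outward, which $v$ can verify from its own neighborhood in a single round. Hence after running $A$ we can let each node $v$ set a flag $F_v:=0$ if $v$ has an outgoing edge in the computed orientation and $F_v:=1$ otherwise, and the output is a correct sinkless orientation whenever $\sum_v F_v = 0$. Provided the per-node failure probability of $A$ is $o(1/n)$ (which can be arranged by running $A$ with an inflated polynomial ``fake'' $n'=\poly(n)$ at only a constant-factor cost, in the style of \Cref{prop:ZtoL2}) we obtain $\sum_{v\in V}\E[F_v]<1$, so this modified algorithm is a $\zloc[O(\log\log n)]$ algorithm for sinkless orientation.

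Second, I would invoke \Cref{thm:basicderandomization} on this Las Vegas algorithm. The theorem produces a deterministic \SLOCAL algorithm that, when processing $v_i$, only needs to read the $2r$-neighborhood of $v_i$ in order to pick the random string $\rho_i$ that minimizes the conditional expectation $\E[F\mid R_{v_1}=\rho_1,\dots,R_{v_i}=\rho_i]$. Since $r=O(\log\log n)$, this yields a deterministic $\seqloc[O(\log\log n)]$ algorithm for sinkless orientation, matching the lower bound of \Cref{thm:SinklessLB}.

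The only nontrivial issue is bookkeeping: we must verify that the conditional expectation $\E[F_v \mid \cdot]$ at each step is something the \SLOCAL node $v_i$ can actually compute locally from its $O(\log\log n)$-neighborhood, but this is exactly what \Cref{eq:condindependence} guarantees, because the output of $v$ in the Ghaffari--Su algorithm depends only on the random strings within its $O(\log\log n)$-hop neighborhood. There is no genuine obstacle beyond invoking the two black boxes in the right order.
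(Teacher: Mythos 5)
Your proposal is correct and is essentially the paper's own argument: take the $O(\log\log n)$-round randomized algorithm of Ghaffari and Su, use the fact that sinkless orientation is locally checkable in one round to make it a Las Vegas ($\zloc$) algorithm, and then apply \Cref{thm:basicderandomization} to obtain the deterministic $\seqloc[O(\log\log n)]$ algorithm. The extra step of boosting the per-node failure probability with a fake polynomial value of $n$ is harmless but not needed, since a Monte Carlo success probability above $1-1/n$ already gives $\sum_v \E[F_v]<1$ under the paper's conversion from $\randloc$ to $\zloc$.
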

\begin{proof} Follows immediately from the randomized \LOCAL sinkless orientation of Ghaffari and Su~\cite{GS17}, which has round complexity $O(\log \log n)$, combined with \Cref{thm:basicderandomization}, and noticing that whether a given orientation is sinkless can be locally checked in $1$ round.
\end{proof}
\begin{theorem}\label{thm:RAND_SLOCAL_Sinkless} There is an  $\randseqloc[O(\log\log\log n)]$ algorithm to compute a sinkless orientation of any graph with minimum degree at least $3$.
\end{theorem}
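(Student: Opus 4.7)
The plan is to combine the $O(\log\log n)$-round randomized \LOCAL sinkless orientation algorithm $A$ of Ghaffari and Su~\cite{GS17} with a shattering-plus-cleanup strategy that exploits the sequential nature of \SLOCAL. Concretely, I would first bootstrap $A$ with a fake value of $n$, choosing $n' = \polylog(n)$ so that $\Delta^{r(n')} \leq n'$. By \Cref{bootstrap-prop}, the bootstrapped algorithm $A^{[n']}$ runs in $r_1 = O(\log\log n') = O(\log\log\log n)$ rounds and has local failure probability at most $2/n' = 1/\polylog(n)$ at every vertex. Any $r_1$-round randomized \LOCAL algorithm trivially induces an $r_1$-locality \randseqloc algorithm (by ignoring the sequential order), so this serves as the randomized phase of our \randseqloc algorithm.

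Next, since the local failure probability $1/\polylog(n)$ is well below the shattering threshold $(e\Delta)^{-4 r_1}$ (at least for the regime of interest, where $\Delta$ is not too large), I would invoke \Cref{shattering-thm} with $c=r_1$ to conclude that, with high probability, the set of ``failed'' vertices forms connected components of size at most $N=\polylog(n)$. On each such residual component I would run the deterministic \SLOCAL sinkless orientation algorithm of \Cref{thm:DET_SLOCAL_Sinkless}, which has locality $r_2 = O(\log\log N) = O(\log\log\log n)$ on an $N$-vertex graph. Finally, I would merge the two phases via the \SLOCAL composition lemma cited in the proof of \Cref{prop:ZtoS} to obtain a single \randseqloc algorithm with locality $r_1 + 2 r_2 = O(\log\log\log n)$, which yields the theorem.

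The main technical obstacle I foresee is cleanly specifying the residual problem so that the cleanup phase completes a valid sinkless orientation. One must argue that after phase~1 commits orientations for the ``satisfied'' vertices, the edges remaining in the failed components still admit a sinkless extension; this should follow from the structural invariant that every failed vertex retains at least one uncommitted incident edge in the induced subgraph on the failed set, preserving the min-degree-$3$ condition on the residual instance. A secondary concern is the dependence on $\Delta$ in the shattering threshold when $\Delta$ is large: one should either choose $n'$ larger (still polylogarithmic) or observe that high-degree vertices are satisfied with exponentially high (in degree) probability by a single random orientation step, so they can be safely committed before the shattering bound is applied to the remaining low-degree vertices.
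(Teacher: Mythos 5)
There is a genuine gap, and it sits exactly where you flagged it: the residual problem after a black-box run of the bootstrapped algorithm $A^{[n']}$ is not a well-posed sinkless-orientation instance. A run of $A^{[n']}$ commits an orientation to \emph{every} edge; a ``failed'' vertex is a sink whose incident edges are already oriented, and whose neighbors may all be satisfied. Your claimed invariant that every failed vertex retains an uncommitted incident edge is simply not provided by \Cref{bootstrap-prop} (which only bounds a local failure probability), and repairing a sink by flipping an edge $u\to v$ to $v\to u$ can destroy the only outgoing edge of the satisfied neighbor $u$, so fixes can cascade outside the failed components. Moreover, even if you restrict attention to the induced subgraph on failed vertices, nothing guarantees it has minimum degree $3$, which is a hypothesis of \Cref{thm:DET_SLOCAL_Sinkless}. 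A second, independent problem is the use of \Cref{shattering-thm}: it requires the per-vertex survival probability to be at most $(e\Delta)^{-4c}$ for a \emph{constant} $c$, with robustness to adversarial random bits outside the $c$-hop neighborhood. Your guarantee has radius $c = r_1 = \Theta(\log\log\log n)$ (non-constant) and failure probability only $1/\polylog n$, which does not meet the threshold once $\Delta$ grows beyond polylogarithmic; the proposed patches (larger fake $n'$, or committing high-degree vertices after one random orientation step) are themselves unproven and reintroduce the same committed-edge interaction issue.

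The paper avoids all of this by not treating the Ghaffari--Su algorithm as a black box. It takes the \emph{constant-locality pre-shattering phase} of \cite{GS17} verbatim (mark each edge with probability $1/4$, orient marked edges randomly, classify bad nodes of Types I--III, unmark edges at Type I nodes, orient only edges between good nodes, and keep edges with a bad endpoint as half-edges attached to that endpoint). This phase has \SLOCAL locality $O(1)$, explicitly leaves the bad nodes' edges unoriented, and by the analysis of \cite{GS17} the bad components have size $N = O(\log n)$ and form a legitimate residual sinkless-orientation instance (the half-edge bookkeeping is what preserves enough free edges at bad nodes). Then the deterministic \SLOCAL algorithm of \Cref{thm:DET_SLOCAL_Sinkless} is run on each bad component with locality $O(\log\log N) = O(\log\log\log n)$, and the two passes are merged into a single \randseqloc algorithm via the composition lemma of \cite{ghaffari2017complexity}. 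If you want to salvage your write-up, replace the bootstrapped black-box phase~1 and the appeal to \Cref{shattering-thm} with this explicit pre-shattering phase; the rest of your outline (deterministic \SLOCAL cleanup on $O(\log n)$-size components, then composition) then goes through.
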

\begin{proof} We describe here a clean algorithm for the special case of regular graphs with degree $d\geq 500$, which already suffices to exhibit an exponential separation in the \SLOCAL model in light of Theorem~\ref{thm:SinklessLB}. This algorithm can be extended to the general case of arbitrary graphs with minimum degree at least $3$ by adding a few small steps similar to Ghaffari and Su~\cite[Appendix A.2]{GS17}, which have \SLOCAL complexity $O(1)$.

Below, we describe the algorithm as a two-pass  $\randseqloc[O(\log\log\log n)]$ algorithm. This algorithm is borrowed from Ghaffari and Su~\cite{GS17} almost verbatim, modulo the small change that the second pass/phase uses a deterministic \SLOCAL algorithm. As shown in Ghaffari, Kuhn, and Maus~\cite{ghaffari2017complexity}, this can then be transformed into a single-pass $\randseqloc[O(\log \log \log n)]$ algorithm.
\begin{algorithm}[H]
  \caption{An $\randseqloc$ algorithm for sinkless orientation }\label{alg:random}
  \begin{algorithmic}
\small
\State \textbf{Pass 1:}
\State Mark each edge with probability $\frac{1}{4}$.
\State For each marked edge, orient it randomly with probability $1/2$ for each direction.
\State For each node $v$, mark $v$ as {\it a bad node} of the following types according to these rules:
\begin{itemize}
	\item Type I. If $v$ has more than $d/2$ marked edges incident to it.
	\item Type II. If $v$ is not Type I but it has at least one neighbor of Type I.
	\item Type III. If $v$ is not Type I or Type II but it has no outgoing marked edges.
\end{itemize}
\State Unmark all the edges incident to Type I nodes.
\State For an unmarked edge $e$ in which both endpoints are good nodes, arbitrarily orient $e$.
\State For an unmarked edge $e$ with exactly one good endpoint, treat $e$ as a half-edge attached only to its bad endpoint.
\State
\State \textbf{Pass 2:}
\State Run the deterministic \SLOCAL algorithm of \Cref{thm:DET_SLOCAL_Sinkless} on the induced subgraph of the bad nodes (including half-edges).
\end{algorithmic}
\end{algorithm}
The first pass clearly has locality $O(1)$. By the analysis of Ghaffari and Su~\cite{GS17}, each connected component on bad nodes has size at most $N=O(\log n)$ and therefore, by \Cref{thm:DET_SLOCAL_Sinkless}, the second pass has locality $O(\log\log N) = O(\log\log \log n)$. 
\end{proof}

\subsection{Impossibility of Derandomizing Non-Locally-Checkable Problems}
\label{sec:SLOCAL2}
In this section, we show  that the local checkability condition is an important property in derandomization, and it is not there just for technicalities. Specifically, let us consider the following simple toy problem: Given a cycle of size $n$, where $n$ is known to all nodes, we should mark $(1\pm o(1)) \sqrt{n}$ nodes. The trivial zero-round randomized algorithm, which marks each node with probability $1/\sqrt{n}$, succeeds with high probability. We next argue that any deterministic algorithm needs $\Omega(\sqrt{n})$ rounds.
\begin{proposition}\label{prop:cyclemarking}
 Any deterministic $\SLOCAL$ algorithm for the cycle-marking problem needs $\Omega(\sqrt{n})$ rounds.
\end{proposition}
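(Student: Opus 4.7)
The plan is to reduce the $\seqloc$ lower bound to a LOCAL lower bound. On the $n$-cycle we have $\Delta(G^{2r}) = 4r$, so by \Cref{prop:ZtoL} any deterministic $\seqloc[r]$ algorithm for cycle marking can be simulated by a deterministic LOCAL algorithm running in $O(r \Delta(G^{2r}) + r \log^* n) = O(r^2 + r \log^* n)$ rounds. Consequently, to obtain the claimed $r = \Omega(\sqrt n)$ bound it suffices to show that any deterministic LOCAL algorithm solving cycle marking requires $\Omega(n)$ rounds.

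For the LOCAL lower bound, fix a hypothetical $t$-round deterministic algorithm $A$. Its decision at each node $v$ is a deterministic function $h$ of the ordered tuple of $2t+1$ IDs in the closed $t$-neighborhood of $v$. Hence the total number of marked nodes on an input $I = (I_1,\dots,I_n)$ equals $N_h(I) = |\{v : h(I_{v-t},\dots,I_{v+t}) = 1\}|$, and correctness requires $N_h(I) \in [(1-o(1))\sqrt n,(1+o(1))\sqrt n]$ for every valid distinct-ID assignment $I$. I would derive a contradiction for $t = o(n)$ by exhibiting an adversarial ID sequence on which $N_h$ is forced to be either $0$ or $n$.

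Such a bad input is built greedily from a large ID range $N = \poly(n)$. If $h \equiv 1$ then trivially $N_h \equiv n$, so assume $h^{-1}(0) \neq \emptyset$; one then picks $I_1,I_2,\dots,I_n$ one at a time, at each step choosing an unused $I_i$ for which the newly completed window $(I_{i-2t},\dots,I_i)$ lies in $h^{-1}(0)$. Correctness of $A$ on typical inputs forces the density of $h^{-1}(1)$ to be only about $1/\sqrt n$ on typical prefixes, so a constant fraction of unused IDs are good extensions and the greedy step succeeds; the cyclic closure is handled by extending slightly beyond length $n$ and applying a standard alteration to patch up the wrap-around. The resulting assignment has $N_h(I)=0$, contradicting correctness.

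The main technical obstacle is controlling ``atypical'' prefixes on which the density of extensions into $h^{-1}(1)$ could be much larger than $1/\sqrt n$; one must argue, via an averaging or density-increment step, that such prefixes cannot be sustained throughout a construction of length $n$ without contradicting the correctness of $A$ on inputs that extend them. A cleaner alternative avoids the LOCAL reduction entirely and works directly in $\seqloc$: run $A$ on the identity-ID cycle with a processing order that first handles a batch of $k = \lfloor n/(2r+1) \rfloor$ nodes at pairwise distance $2r+1$; those decisions are a deterministic function of position only, and averaging over the $2r+1$ possible cyclic shifts of this batch gives an upper bound $|h^{-1}(1)| \le (2r+1)(1+o(1))\sqrt n$, after which a complementary shift-choice and analysis of the subsequent processing rounds forces either over- or under-marking unless $r = \Omega(\sqrt n)$.
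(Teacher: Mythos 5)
Both routes in your proposal have genuine gaps. In the first route, the key lemma you reduce to --- that deterministic \LOCAL cycle-marking needs $\Omega(n)$ rounds --- is false. There is a deterministic \LOCAL algorithm running in $\tilde{O}(\sqrt{n})$ rounds: compute a distance-$K$ ruling set of the cycle for $K = \sqrt{n}\log n$ (doable deterministically in $O(K\log^* n)$ rounds by iterated sparsification of a Cole--Vishkin coloring), assign each arc between consecutive ruling-set nodes to its endpoint of smaller ID, and have that endpoint mark every $\lceil \sqrt{n}\rceil$-th node of its arc; the total number of marks is $n/\lceil\sqrt{n}\rceil + O(n/K) = (1\pm o(1))\sqrt{n}$ on \emph{every} ID assignment. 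So the adversarial ID sequence with $N_h(I)=0$ that your greedy/density-increment construction is supposed to produce cannot exist in general, and the ``atypical prefix'' obstacle you flag is not a technicality but a fatal one. Moreover, even with the correct \LOCAL lower bound of $\Omega(\sqrt{n})$ (which does hold, since $\loc[t]\subseteq\seqloc[t]$), your reduction through \Cref{prop:ZtoL} only gives $r^2 + r\log^* n \geq \Omega(\sqrt{n})$, i.e.\ $r = \Omega(n^{1/4})$, which falls short of the claimed $\Omega(\sqrt{n})$; the simulation overhead makes this direction inherently lossy.

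The ``cleaner alternative'' is also broken, because it fixes a single ID assignment (the identity) and varies only the processing order. Against such an adversary, consider the order-oblivious algorithm of locality $0$ that marks a node iff its ID is at most $n$ and divisible by $\lceil\sqrt{n}\rceil$: it is correct on the identity-ID cycle and no shift-choice or analysis of later processing rounds can force it to over- or under-mark. Any valid lower bound must therefore vary the ID assignment (or the instance), which is exactly what the paper's proof does: it runs the \SLOCAL algorithm on $50\sqrt{n}$ disjoint cycles with disjoint ID ranges, uses a processing order (odd length-$T$ strips before even ones, $T=\sqrt{n}$) so that each node's output depends on only $O(1)$ neighboring strips, picks one marked node per cycle, and grafts the five surrounding strips of each into a single cycle of length $250n$; the same order then leaves all $50\sqrt{n}$ nodes marked there, contradicting the $(1+o(1))\sqrt{250 n}$ upper bound. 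Your proposal contains neither this cut-and-paste/indistinguishability mechanism nor a workable substitute, so the claimed $\Omega(\sqrt{n})$ bound is not established.
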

\begin{proof}
  Let $A$ be a deterministic $\SLOCAL$ algorithm with locality $T=\sqrt{n}$. Consider some $50\sqrt{n}$ disjoint cycles each of length $n$, where the $i^{\text{th}}$ cycle $C_i$ is made of nodes with IDs $(i-1)n+1$ to $in$. Assuming for simplicity that $2 T$ is an integer dividing $n$, we can break the cycle up into $n/T$ strips of length $T$. We then run $A$ using a permutation $\pi$ which processes all the odd strips before all the even strips. Within each strip, the permutation can be arbitrary. That is, $A$ first processes vertices $(i-1) n + 1, \dots, (i-1) n + T,  (i-1) n + 2 T + 1, \dots, (i-1) n + 3 T, $ and so on, and then goes back to process $(i-1) n + T+1, \dots, (i-1) n + 2 T$, etc.  Observe that, due to the spacing of the strips, the output of any given vertex $v$ in strip $j$ depends only on the information in strips $j-2, \dots, j+2$.

The algorithm $A$ must mark at least one vertex per cycle $C_i$ (in fact, it must mark $(1-o(1)) \sqrt{n}$ of them). For each cycle $i$, let $u_i$ be some vertex marked by $A$.

Now consider the graph $G'$ where we carve out the $5$ strips around each vertex $u_i$, and concatenate these all into a single cycle. We can run algorithm $A$ on $G'$ with the same permutation $\pi$ (skipping vertices which are no longer present in $G'$). Each vertex $u_i$ remains marked since it its neighboring strips are still present in $G'$.   There may be some additional marked edges as well, but overall $G'$ has at least $50 \sqrt{n}$ marked nodes. The size of $G'$ is $50 \sqrt{n} \times 5 \times T = 250 n$, so this is too many marked nodes and hence $A$ fails on $G$.
  \end{proof}

\subsection{Complete Problems}
\label{subsec:complete}

We next prove completeness results for several natural and widely-studied distributed graph problems. To relate different problems to each other, we use the notion of \emph{locality-preserving reductions} as defined in \cite{ghaffari2017complexity}. A distributed graph problem $\mathcal{A}$ is called \emph{polylog-reducible} to a distributed graph problem $\mathcal{B}$ iff a $\polylog n$-time deterministic \LOCAL algorithm for $\mathcal{A}$ (for all possible $n$-node graphs) implies a $\polylog n$-time deterministic \LOCAL algorithm for $\mathcal{B}$. In addition, a distributed graph problem $\mathcal{A}$ is called $\polyseqloc$-complete if a) problem $\mathcal{A}$ is in the class $\polyseqloc$ and b) every other distributed graph problem in $\polyseqloc$ is polylog-reducible to $\mathcal{A}$. As a consequence, if any \polyseqloc-complete problem could be solved in $\polylog n$ deterministic time in the \LOCAL model, then we would have $\polyloc = \polyseqloc$ and thus every problem in $\polyseqloc$ (and thus by \Cref{prop:ZtoS} also all problems in $\polyzloc$) could also be solved in $\polylog n$ time deterministically in the \LOCAL model.  The following completeness proofs can therefore be understood as conditional lower bounds: efficiently derandomizing any of the following problems would also efficiently derandomize the whole class $\polyzloc$ of polylog-time distributed Las Vegas algorithms.

In the \emph{minimum set cover} problem, we are given a ground set $X$ and a collection $\mathcal{S}\subseteq 2^X$ of subsets of $X$ which covers $X$. The objective is to find a smallest possible collection of sets $\mathcal{C}\subseteq \mathcal{S}$ such that $\bigcup_{A\in \mathcal C} A = X$.  In the \emph{distributed set cover problem} and often also more generally in distributed linear programming algorithms (cf.\ \cite{bartal97,nearsighted,papa93}), the set system $(X,\mathcal{S})$ is modeled as a bipartite graph with a node for every element $x\in X$ and every set $A\in \mathcal{S}$. There is an edge between $x\in X$ and $A\in\mathcal{S}$ if and only if $x\in A$.  \begin{theorem}\label{thm:setcover_complete}
  Approximating distributed set cover by any factor $\alpha = \text{polylog} n$ is \polyseqloc-complete. 
\end{theorem}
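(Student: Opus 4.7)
The plan is to establish two directions: $(a)$ the problem of approximating set cover within a $\polylog n$ factor lies in $\polyseqloc$, and $(b)$ a $\polylog n$-time deterministic \LOCAL algorithm for it yields a $\polylog n$-time deterministic \LOCAL algorithm for computing $(O(\log n), O(\log n))$-network decompositions, which are $\polyseqloc$-complete by \cite{ghaffari2017complexity}. Together these two facts give $\polyseqloc$-completeness.

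For part $(a)$, I would adapt the classical greedy algorithm to the \SLOCAL model using a threshold-bucketing schedule. Run $O(\log \Delta) = O(\log n)$ phases indexed by $i = \lfloor \log \Delta \rfloor, \dots, 0$. In phase $i$, process the sets in an arbitrary order, and add a set $S$ to the cover iff the number of currently-uncovered elements it contains is at least $2^i$. When $S$ is processed, it can read the ``covered/uncovered'' status of each of its elements from the \SLOCAL state left by previously-processed sets; hence each phase has locality $1$, and the overall locality is $O(\log n)$. A standard dual-fitting argument shows the resulting cover is an $O(\log n)$-approximation, which is certainly within a $\polylog n$ factor.

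For part $(b)$, I would build the network decomposition color-by-color across $O(\log n)$ iterations. In iteration $j$, let $V_j$ denote the still-uncolored vertices. Construct a set-cover instance on $V_j$ whose sets are balls $B(c, r) \cap V_j$ centered at $c \in V_j$ for a suitable radius $r = O(\log n)$, and whose ground set encodes ``small neighborhoods'' around each vertex of $V_j$. A layered ball-growing argument à la Linial--Saks shows that the optimum of this instance is $O(|V_j|/\log n)$. Apply the assumed polylog-approximation to obtain a set of centers of size $|V_j|\, \polylog(n)/\log n$. From these centers, run a Voronoi-type refinement: assign each covered vertex to its nearest chosen center (breaking ties by ID) and then remove a thin boundary layer from each Voronoi cell. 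The surviving pieces are vertex-disjoint clusters of diameter $O(\log n)$ that are mutually non-adjacent in $G$; they form one color class of the decomposition. Iterating $O(\log n)$ times colors all of $V$, and the total round complexity is $O(\log n)$ times the set-cover runtime, hence $\polylog n$.

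The main obstacle is the ball-growing step: one must carefully choose the set-cover instance so that $(i)$ its optimum is indeed $O(|V_j|/\log n)$, and $(ii)$ after the polylog-factor approximation and Voronoi pruning, a constant fraction of $V_j$ survives in the color class produced in iteration $j$. Both properties are controlled by choosing $r$ from a range of roughly $\Theta(\log n)$ geometric levels and selecting the level where the boundary-to-interior ratio of typical balls is $O(1/\log n)$, which is possible by a standard averaging argument over the levels. The remaining checks---that each step (set-cover invocation, Voronoi refinement, layer pruning, iteration) is a locality-preserving reduction so that the compositions stay in $\polylog n$ \LOCAL rounds---are routine.
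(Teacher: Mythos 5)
Your high-level plan (membership in $\polyseqloc$, plus a reduction showing that a $\polylog n$-round deterministic \LOCAL set-cover approximation would derandomize a known $\polyseqloc$-complete problem) is the right shape, and the membership half is essentially fine: the paper simply cites the $(1+\eps)$-approximation \SLOCAL algorithm of \cite{ghaffari2017complexity}, while your bucketed greedy gives an $O(\log n)$-approximation with locality $O(\log n)$. (Minor caveat: your greedy only achieves factor $O(\log n)$, so it does not establish membership for approximation factors $\alpha=\polylog n$ that are $o(\log n)$; the cited $(1+\eps)$-result does.)

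The hardness half, however, has a genuine gap. Your reduction hinges on two claims that do not hold. First, the asserted bound $\mathrm{OPT}=O(|V_j|/\log n)$ for covering $V_j$ by radius-$O(\log n)$ balls is false in general: a graph consisting of many isolated vertices or far-apart small components needs $\Omega(|V_j|)$ balls. Second, and more fundamentally, even where the optimum is small, an $\alpha=\polylog n$ approximation only bounds the number of chosen centers by $\alpha\cdot\mathrm{OPT}$, which is typically larger than $|V_j|$ and hence vacuous: the oracle may legitimately return (essentially) every ball, in which case the Voronoi cells are shallow or even singletons, and deleting a boundary layer destroys almost all of $V_j$. So there is no guarantee that a constant fraction of $V_j$ is colored per iteration, and the $O(\log n)$ iteration count collapses. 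The averaging-over-radius-levels argument of Linial--Saks/Awerbuch--Peleg is tied to \emph{sequential} ball growing, where each cluster's radius is chosen after previous clusters are carved out; it does not transfer to Voronoi cells induced by an adversarially large center set. In short, the quantity a set-cover oracle controls---the number of sets---is not the quantity that controls boundary mass. What is needed is an instance whose optimum is smaller than its universe by a factor \emph{exceeding} $\alpha$, so that even the $\alpha$-approximate solution shrinks the universe by a constant factor and one can recurse. This is exactly how the paper argues: it reduces from $\polylog n$-color conflict-free multicoloring (complete by \cite{ghaffari2017complexity}), splits the hyperedges at rank threshold $\delta=\Theta(\alpha\log n)$, handles the low-rank part directly, observes that the high-rank part has a hitting set of size $O(|V|\log n/\delta)$ so the $\alpha$-approximate set cover returns a set $U$ with $|U|\le|V|/2$, and recurses on $U$, paying one extra color per level. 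If you want to target network decomposition directly instead of CFM, you would still need such a ``constant-factor shrinkage despite the $\alpha$ loss'' mechanism, which the ball/Voronoi instance does not provide.
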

\begin{proof}
  We first note that \cite{ghaffari2017complexity} has shown that the problem of computing a
  $(1+\eps)$-factor approximation for distributed set cover
  is in the class $\polyseqloc $ for any
  $\eps \geq 1/\polylog n$. To show that the problem is also $\polyseqloc$-hard, we reduce from
  the problem of computing a $\polylog(n)$-color conflict-free
  multicoloring (abbreviated CFM) of an $n$-node hypergraph $H = (V,E)$ with $|E| \leq \text{poly}(n)$.  This problem was shown to
  be \polyseqloc-complete in Theorem 1.10 of
  \cite{ghaffari2017complexity} (even for almost uniform hypergraphs).
  A $q$-color CFM of $H$ assigns a
  non-empty set of colors from $\set{1,\dots,q}$ to each vertex, 
  such that for each hyperedge $f\in E$, there is one color that is
  assigned to exactly one node in $f$ \cite{smorodinsky2013conflict}.

  Assume that there is a deterministic polylogarithmic-time algorithm
  for computing a $\alpha$-approximate solution for a given
  distributed set cover problem. We will use this to recursively construct a polylogarithmic-time deterministic CFM algorithm for $H$ with $q=\poly(\alpha\cdot\log n)$
  colors.

  Let $E_-$ be the set of hyperedges of $H$ of size at most
  $\delta := C \alpha(\ln n + 1)$, for some constant $C$, and let $E_+ := E \setminus E_-$. Lemma 6.2 of
  \cite{ghaffari2017complexity} shows for hypergraphs of
  $\polylog n$ rank, a CFM with $\polylog n$
  colors can be computed deterministically in $\polylog n$ time in the
  \LOCAL model. As $\alpha \leq \text{polylog}(n)$,  we can therefore compute
  a $\polylog n$-color CFM of $H_- := (V,E_-)$
  in polylogarithmic deterministic time. 

It remains to also compute a
  CFM of the hypergraph $H_+ := (V,E_+)$. Since $H_+$ has minimum rank $\delta$, ia standard probabilistic argument using independent rounding and alteration shows that $H_+$ has a vertex cover of size $O(\frac{|V| \log n }{\delta})$.  Our approximation algorithm for set cover therefore gives a vertex cover $U$ of size $|U| \leq O( \frac{|V| \alpha \log n }{\delta} )$.

  Let $F = \{ f \cap U \mid f \in E_+ \}$. Apply the algorithm recursively to obtain a $q'$-color CFM of the hypergraph $(U,F)$; then extend it to a $q'+1$-color CFM of $H_+$ by assigning one additional color to every node in $V - U$.

  If $C$ is a sufficiently large constant, then $|U| \leq |V|/2$. Therefore, this process terminates after a polylogarithmic number of steps. Furthermore, in each iteration of this recursion, we add $\polylog n$ colors --- $1$ color to extend the CFM of $(U,F)$ to $H$ and $\polylog n$ colors for $H_{-}$. Since there are $\polylog n$ steps, the total number of colors used is $\polylog n$ and the runtime is also $\polylog n$.
  \end{proof}

Given a graph $G=(V,E)$, the minimum dominating set (MDS) problem asks for a
smallest possible vertex set $D\subseteq V$ such that for all
$u\in V$, either $u\in D$ or some neighbor $v$ of $u$ is in $D$. The MDS problem is essentially equivalent to the minimum set cover problem, and the following theorem shows that also approximating MDS is \polyseqloc-complete. MDS is one of the most widely studied problems in
the \LOCAL model. Polylogarithmic-time randomized distributed
approximation algorithms have been known for a long time, e.g.,
\cite{dubhashi05,jia02,dominatingset,nearsighted}.

\begin{theorem}\label{thm:domset_complete}
  Approximating MDS by a polylogarithmic
  factor is \polyseqloc-complete.
\end{theorem}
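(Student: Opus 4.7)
The plan is to mirror the proof of \Cref{thm:setcover_complete}, showing both (i) that MDS approximation lies in $\polyseqloc$ and (ii) that set cover (which is $\polyseqloc$-complete by \Cref{thm:setcover_complete}) admits a locality-preserving reduction to MDS approximation. Hardness and membership together give $\polyseqloc$-completeness.

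For (i), I would exploit the classical equivalence between MDS on a graph $G$ and set cover on the closed-neighborhood hypergraph $\{N[v] : v \in V(G)\}$. This hypergraph has the same vertex set as $G$, its incidence bipartite graph is simulated with only $O(1)$ rounds of overhead on $G$, and a polylog-approximate deterministic $\SLOCAL$ algorithm for set cover (guaranteed by \Cref{thm:setcover_complete}) translates directly into one for MDS with polylogarithmic locality.

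For (ii), I would reduce set cover to MDS by a gadget construction. Given a set-cover instance $(X, \mathcal{S})$ with bipartite incidence graph $H$, construct an auxiliary graph $G$ by augmenting $H$ with one universal vertex $u^{*}$ adjacent to every set vertex $v_{A}$, together with a pendant $p$ attached only to $u^{*}$. A short swap argument then shows that any near-optimal dominating set of $G$ must contain $u^{*}$: a single copy of $u^{*}$ dominates $p$ and all of $\mathcal{S}$ simultaneously, which is strictly better than any alternative covering of those vertices. Once $u^{*} \in D$, all set vertices are auto-dominated, so the remaining constraint is exactly that each element vertex $v_{x}$ is in $D$ or adjacent to some $v_{A} \in D$; replacing any $v_{x} \in D$ by an arbitrary containing set $v_{A}$ yields a bona fide set cover of size at most $|D| - 1$. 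This gives $\mathrm{OPT}(\mathrm{MDS}_G) = \mathrm{OPT}(\mathrm{SC}) + 1$, so an $\alpha$-approximate MDS on $G$ yields an $O(\alpha)$-approximate set cover on $(X,\mathcal{S})$.

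The main obstacle is the distributed realization of this auxiliary graph, because the vertices $u^{*}$ and $p$ have no physical counterparts in the given network $H$, and naively electing a node to simulate $u^{*}$ could require diameter-many rounds. I would handle this by observing that $u^{*}$'s behavior under the MDS algorithm is essentially forced: in any near-optimal run, $u^{*} \in D$ and $p \notin D$. Thus each set vertex $v_{A}$ can maintain a local copy of $u^{*}$'s state (namely, ``in $D$'') and treat itself as already dominated, without any explicit coordination; the MDS algorithm is then invoked on $H$ under the constraint ``$u^{*}$ is fixed to be in $D$'', reducing the decision problem to a mixed element/set cover that is equivalent to plain set cover up to the swap described above. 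Verifying that this implicit simulation preserves the approximation guarantee of the black-box MDS algorithm and adds only $O(1)$ overhead per round is the key technical step and closely parallels the argument in the proof of \Cref{thm:setcover_complete}.
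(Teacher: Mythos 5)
Your membership argument is fine and matches the paper's (MDS is a special case of set cover, so the \polyseqloc\ set cover algorithm applies). The gap is in your hardness reduction. Your auxiliary graph contains a universal vertex $u^{*}$ adjacent to \emph{every} set vertex, and this destroys the locality-preserving nature of the reduction: to use the hypothetical $\polylog n$-round \LOCAL MDS algorithm you must actually simulate it on the auxiliary graph $G$, but in $G$ every two set vertices are at distance $2$ through $u^{*}$, so after a constant number of rounds the state of $u^{*}$ (and hence the messages it sends back to the set vertices) aggregates information from the entire instance, which may be at diameter-scale distance in the real network $H$. Your proposed fix --- having each set vertex pretend $u^{*}$ is ``fixed to be in $D$'' and running the algorithm ``under that constraint'' --- is not available: the MDS algorithm is a black box whose approximation guarantee holds only for genuine executions on a genuine graph, and you cannot constrain part of its output or replace $u^{*}$'s messages by a locally invented transcript while retaining the guarantee. (As a smaller point, the swap claim that any near-optimal dominating set must contain $u^{*}$ is also false --- it could contain the pendant $p$ instead --- though your conversion to a set cover happens to go through without it.)

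The paper's reduction avoids the problem by never introducing a vertex of global degree: it keeps the bipartite incidence graph and merely adds an edge between two set vertices $A,B\in\mathcal S$ whenever $A\cap B\neq\emptyset$. Every edge of the auxiliary graph $G'$ then joins vertices at distance at most $2$ in $H$, so a $t$-round algorithm on $G'$ is simulable in $O(t)$ rounds on $H$, and the same replace-each-element-by-a-containing-set swap you describe shows that dominating sets of $G'$ and set covers of $(X,\mathcal S)$ have the same optimum, so the approximation factor transfers exactly. To repair your proof you would essentially have to drop $u^{*}$ and adopt this ``cliquify the intersecting sets'' construction.
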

\begin{proof}
  Since the dominating set problem is a special case of the set cover
  problem, the problem is clearly also in the class \polyseqloc. To
  show that the problem is \polyseqloc-hard, we reduce from the
  distributed set cover problem. Assume that we are given a
  distributed set cover problem $(X,\mathcal{S})$. Recall the communication graph of the problem is given by the
  bipartite graph $G$ with nodes $X\cup \mathcal{S}$ and an edge between $x\in X$
  and $A\in \mathcal{S}$ if and only if $x\in A$. We assume that every $x \in X$ is contained in at least one set $A \in \mathcal S$, as otherwise the set cover instance has no solution.
  
  Let us define a graph $G'$, which has the same nodes of $G$ and all the edges of $G$.  In addition, $G'$ has an edge between
  any two nodes $A$ and $B$ in $\mathcal{S}$ if $A \cap B \neq \emptyset$. A communication
  round on the graph $G'$ can be simulated in $O(1)$ round on graph
  $G$.

 We claim  that any dominating set $D$ of $G'$ can
  directly be converted to a set cover of size at most $|D|$ for
  $(X,\mathcal{S})$, and vice-versa; this will immediately show that $\alpha$-approximation algorithms for MDS lead to $\alpha$-approximation algorithms for set-cover.
  
By construction of $G'$, any set cover $C\subseteq \mathcal{S}$ is already a dominating set of $G'$.
Further, let $D$ be a dominating set of $G'$. If $D$ contains
  nodes in $X$, we can convert $D$ into a dominating set $D'$ of size
  $|D'|\leq |D|$ such that $D'$ only contains nodes from
  $\mathcal{S}$. For each $x\in D\cap X$, we replace $x$ by some
  neighboring node $A\in \mathcal{S}$ (recall that such a neighbor
  always exists). Since $A$ is connected to all $B\in \mathcal{S}$ for
  which $x\in B$, $A$ covers all nodes of $G'$ that were covered by
  $x$ and thus $D'$ is still a dominating set and because $D'\subseteq
  \mathcal{S}$, it is also directly corresponds to a solution of the
  set cover instance $(X,\mathcal{S})$.
\end{proof}

We next consider a natural greedy packing problem similar to MIS, one of the four classic symmetry breaking problems.
(We do not know whether the MIS problem is
\polyseqloc-complete.) Consider a bipartite
graph $G=(L \cup R, E)$, with $n = |L \cup R|$. We call $L$ the left side and $R$ the right
side of $G$. For a positive integer $k$, we define a $k$-star of $G$
to be a left-side node $u\in L$ together with $k$ right-side neighbors
$v_1\dots,v_k\in R$. A \emph{maximal independent
  $k$-star set} is a maximal set of pairwise vertex-disjoint $k$-stars
of $G$.\footnote{The problem of finding a maximal independent
$k$-star set is equivalent to the problem of finding an MIS of the graph $H$, whose nodes correspond to $k$-stars of $G$.
However, if $G$ has $n$ nodes, the graph $H$ can have up to
$O(n^{k+1})$ nodes and thus a polylogarithmic-time MIS algorithm on
$H$ does not directly lead to a polylogarithmic-time algorithm for
finding a maximal independent $k$-star set on $G$.}

\begin{theorem}\label{thm:maximalstars_complete}
  Let $G=(L\cup R,E)$ be an $n$-node bipartite graph, where every node
  in $L$ has degree at most $\Delta$. For every
  $\lambda\geq 1/\polylog n$, the problem of finding a maximal
  independent $\lceil\lambda\Delta\rceil$-star set of $G$ is
  \polyseqloc-complete. 
\end{theorem}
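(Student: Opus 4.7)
The plan is to establish both directions of $\polyseqloc$-completeness: membership in $\polyseqloc$, and $\polyseqloc$-hardness via reduction from a known complete problem.

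For membership, I will exhibit a trivial greedy $\SLOCAL$ algorithm with locality $1$. Process the nodes of $L$ in the adversarial sequential order, and upon processing $u \in L$, inspect its right-neighbors; if at least $\lceil \lambda \Delta \rceil$ of them are not yet claimed as leaves in any previously-selected star, then add a new star centered at $u$ together with any $\lceil \lambda \Delta \rceil$ such free neighbors, and otherwise skip $u$. A single hop suffices to check neighbors' status, and the output is clearly a maximal independent $\lceil \lambda \Delta \rceil$-star set, so the problem lies in $\seqloc[1] \subseteq \polyseqloc$.

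For $\polyseqloc$-hardness, my plan is to reduce from $\polylog n$-factor approximation of MDS, which is $\polyseqloc$-complete by \Cref{thm:domset_complete}. Given an MDS instance on $G = (V,E)$, I would construct a bipartite graph $H = (L \cup R, E_H)$ so that a maximal $\lceil \lambda \Delta(H) \rceil$-star set of $H$ decodes to an approximate dominating set of $G$. The first construction I would try takes $L = V$, includes one right vertex $r_v$ for each $v \in V$ plus private dummy right vertices padding all left-degrees up to a common value $\Delta$, and joins $\ell_u \sim r_v$ whenever $v \in N_G[u]$. Maximality of the star set with center set $M_C$ then forces every non-center $\ell_u$ to have fewer than $\lceil \lambda \Delta \rceil$ of its non-dummy right-neighbors $\{r_v : v \in N_G[u]\}$ free, so a substantial fraction of $N_G[u]$ must already appear as a leaf of some star rooted at a center in $M_C$.

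The main technical obstacle is ensuring that the extracted dominating set has size within a $\polylog n$ factor of MDS$(G)$. Pure maximality does not by itself bound $|M_C|$ against $|\mathrm{OPT}|$, so the plan is to invoke the max $\lceil \lambda \Delta \rceil$-star-set routine iteratively: run it once to mark one batch of dominators, remove the dominated vertices from the instance, and recurse on the residual graph. A counting argument, tuned through the padding parameter and the choice of $\lambda$, should show that each iteration dominates a constant fraction of the still-uncovered vertices, giving an $O(\log n)$-approximation after $O(\log n)$ iterations. Handling the full range $\lambda \geq 1/\polylog n$ uniformly is a secondary concern, which I would address by varying the amount of padding to rescale the effective $\lambda$ of the constructed instance to whatever value is convenient for the charging analysis; the resulting polylog-LOCAL reduction then establishes $\polyseqloc$-hardness.
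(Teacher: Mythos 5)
Your overall architecture (greedy \SLOCAL membership; an iterative reduction in which star centers are added to a cover and maximality controls the residual) matches the paper, which reduces from polylog-approximate set cover (\Cref{thm:setcover_complete}) using the covering instance's incidence bipartite graph directly. However, your hardness argument has a genuine gap: the quantitative core is missing, and the progress measure you do state is false. Maximality of a $\lceil\lambda\Delta\rceil$-star set does \emph{not} imply that a constant fraction of the still-uncovered vertices is dominated per round: take one left node of degree $\Delta$ together with many left nodes of degree below $\lceil\lambda\Delta\rceil$; a maximal star set can consist of a single star, covering only $\lceil\lambda\Delta\rceil$ of the right vertices. The correct potential is the maximum residual degree (set size): since leaves of stars are covered by their centers, any non-center left node with $\lceil\lambda\Delta\rceil$ uncovered neighbors could be made the center of a new vertex-disjoint star, so after adding all centers the residual max set size is at most $(1-\lambda)\Delta$, giving $O(\log\Delta/\lambda)$ iterations. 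Moreover, the per-iteration cost must be charged against the optimum: the stars are vertex-disjoint and each uses $\lceil\lambda\Delta\rceil$ right vertices, so there are at most $|X|/(\lambda\Delta)$ centers, while every cover has size at least $|X|/\Delta$ because all sets have size at most $\Delta$; hence each iteration adds at most $\mathrm{OPT}/\lambda$ centers, for a total approximation factor $O(\log\Delta/\lambda^2)\leq\polylog n$. These two counting facts are exactly what your ``counting argument to be tuned'' would have to be, and without them the reduction is not established.

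A second concrete problem is the padding gadget, which as described breaks the reduction rather than helping it. Private dummy right vertices attached to a left node $\ell_u$ can never serve as leaves of any other star, so if $\ell_u$ has at least $\lceil\lambda\Delta\rceil$ dummies, maximality \emph{forces} $\ell_u$ to be a center; on the star graph $K_{1,n-1}$ (MDS optimum $1$, all leaves of degree $1$) every leaf is then forced into your dominating set, an $\Omega(n)$-approximation. The padding is also unnecessary, since the theorem only assumes an upper bound $\Delta$ on left degrees; the plain incidence graph of the covering instance suffices, which is what the paper uses (reducing from set cover rather than MDS, an inessential difference since \Cref{thm:domset_complete} is itself proved via set cover). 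Finally, a minor point on membership: under the paper's \SLOCAL semantics the fact that a right vertex is already used as a leaf is recorded at the previously processed center, which sits at distance $2$ from a later candidate center, so locality $1$ does not suffice; the paper places the problem in $\seqloc[2]$, which still gives membership in \polyseqloc.
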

\begin{proof}
When processing a node $u\in L$, we can decide whether one can
  add a $k$-star with $u$ as the center by inspecting the $2$-hop
  neighborhood of $u$. The problem  is in the class
  \seqloc(2), and so is clearly in \polyseqloc.

  To show \polyseqloc-hardness, we reduce from the minimum 
  set cover problem. Consider a minimum set cover problem
  $(X,\mathcal{S})$ and the corresponding bipartite graph $G=(L\cup R,
  E)$, where $L$ corresponds to the sets $\mathcal{S}$, the $R$
  corresponds to the set of elements $X$. Assume that the maximum
  degree of the nodes in $L$ is at most $\Delta$. All of the sets in set cover instance have size at most
  $\Delta$, and so the minimum set cover has size at least 
  $|X|/\Delta$. 

Assume that we are given a maximal independent
  $\lceil\lambda\Delta \rceil$-star set of $G$. Because each star
  contains $\lceil\lambda\Delta \rceil$ nodes from $X$, such a set can
  consist of at most $|X|/(\lambda\Delta)$ stars. Further, by adding
  all the sets corresponding to the centers of the stars to the set
  cover, by the maximality of the star set, the maximum set size of
  the remaining set cover instance is at most $(1-\lambda)\Delta$.
  Repeating $O\big(\frac{\log\Delta}{\lambda}\big)$ times therefore yields a
  set cover solution that is optimal up to a factor
  $O\big(\frac{\log\Delta}{\lambda^2}\big)$.
\end{proof}

The last problem we consider in this section is \emph{sparse neighborhood cover} of a graph \cite{Awerbuch-Peleg1990}, which is a fundamental structure with a large number of applications in distributed systems
\cite{peleg00}. 

\begin{definition}[Sparse neighborhood cover, adapted from
  \cite{Awerbuch-Peleg1990}]\label{def:sparsecover}
  Let $G=(V,E)$ be a graph and let $r\geq 1$ be an integer parameter.
  A sparse $r$-neighborhood cover of $G$ is a collection of clusters
  $C_1,\dots,C_k$ such that each cluster has diameter at most
  $r\cdot \polylog n$, such that each $r$-hop neighborhood of $G$
  is completely contained in at least one of the clusters, and such
  that every node of $G$ is contained in at most $\delta \leq \polylog n$ of the
  clusters.  
\end{definition}

It is shown in \cite{Awerbuch-Peleg1990} that such neighborhood covers exist (even
when replacing all the $\polylog n$ terms in the definition by terms
of order $\log n$).

\begin{theorem}
  For any $r \leq \polylog n$, the problem of computing a sparse
  neighborhood cover of a graph is \polyseqloc-complete.
\end{theorem}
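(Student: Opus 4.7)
The claim has two directions: showing that sparse $r$-neighborhood cover lies in $\polyseqloc$ (membership), and showing that it is $\polyseqloc$-hard. Membership is the classical direction, while hardness is the converse of the long-known Awerbuch--Peleg reduction and is the novel contribution highlighted in the introduction.

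For \textbf{membership}, the plan is to port the Awerbuch--Peleg / Linial--Saks ball-growing construction to the $\SLOCAL$ setting. Processing vertices sequentially, each vertex $v$ that is not yet ``owned'' by a cluster grows a ball in $G^r$, stopping at the first radius at which the ratio of boundary size to ball size drops below $1/\log n$. Standard analysis yields: the stopping radius is $O(\log n)$ in $G^r$, so the cluster has diameter $O(r \log n)$ in $G$; the cluster contains $v$'s $r$-hop neighborhood since the radius is at least $1$ in $G^r$; and the pigeonhole/charging argument over boundary contributions bounds the total overlap by $O(\log n)$. Each step requires only the $O(\log n)$-ball in $G^r$, which is $O(r \log n)$ hops in $G$, placing the problem in $\seqloc[O(r \log n)] \subseteq \polyseqloc$ for $r \leq \polylog n$.

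For \textbf{hardness}, I would reduce from the $\polyseqloc$-complete problem of computing a $(\polylog n, \polylog n)$-network decomposition of $G$. Given a polylog-round deterministic \LOCAL algorithm $\mathcal A$ for sparse $r$-neighborhood cover, iteratively peel color classes: initialize $R \leftarrow V$ and, for $i = 1, 2, \dots$, invoke $\mathcal A$ on $G[R]$, select a subfamily $S_i$ of the returned clusters that is pairwise vertex-disjoint, assign $\bigcup S_i$ to color $i$, and remove those vertices from $R$. Because every cluster has diameter $\polylog n$ and the selected ones are vertex-disjoint, each induced colored subgraph has components of diameter $\polylog n$. The key thing to establish is that $\polylog n$ iterations suffice and that the selection step is polylog-time deterministic.

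The main obstacle is the progress analysis of the peel and the implementation of the selection step. For progress, one leverages the overlap bound $\delta \leq \polylog n$: assign each $v \in R$ fractional weight $1/d(v) \geq 1/\delta$ spread over the clusters containing it, so some cluster has fractional weight $\geq |R|/(\delta k)$; iteratively extracting heavy vertex-disjoint clusters (or equivalently a fractional-relaxation rounding) yields a packing covering an $\Omega(1/\polylog n)$ fraction of $R$, so $\polylog n$ rounds suffice. For the selection, one organizes the clusters via their cluster intersection graph, whose degeneracy is bounded by $\delta$ via the overlap property, and colors/packs it using a standard bounded-degeneracy deterministic \LOCAL routine (or recursively invokes $\mathcal A$ on an appropriate meta-graph, exploiting that each cluster is contained in a radius-$\polylog n$ ball of $G$ so one round on the meta-graph costs $\polylog n$ rounds of $G$). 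Combining these pieces gives a polylog-time deterministic \LOCAL algorithm for $(\polylog n, \polylog n)$-decomposition from $\mathcal A$, establishing $\polyseqloc$-hardness and completing the proof.
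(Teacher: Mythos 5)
Your membership argument is broadly in the spirit of the paper's, but the key quantitative claim is not justified: for a \emph{single-pass} greedy in which every not-yet-served vertex grows its own ball, there is no standard ``pigeonhole/charging argument over boundary contributions'' that bounds the overlap by $O(\log n)$. In the Awerbuch--Peleg construction the overlap bound comes from the \emph{phase} structure: clusters built within one phase are pairwise disjoint, each phase serves a constant fraction of the remaining balls, and hence the overlap is at most the number of phases, $O(\log n)$. This is exactly how the paper proceeds: it reduces to $r=1$ by working in $G^r$, observes that the sequential multi-phase construction is a concatenation of $O(\log n)$ \SLOCAL algorithms of locality $O(r\log n)$, and composes them (Lemma 2.3 of \cite{ghaffari2017complexity}) into a single $\seqloc[O(r\log^2 n)]$ algorithm. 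You would either need to supply a genuine overlap analysis for your one-pass variant or fall back to the phased construction.

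The hardness direction is where your proposal genuinely diverges from the paper, and it contains the real gap. You reduce from network decomposition and try to build the decomposition by repeatedly selecting a vertex-disjoint subfamily of cover clusters covering a $1/\polylog n$ fraction of the remaining vertices. That selection step is itself a hard symmetry-breaking problem: it is a packing/MIS-type problem on the cluster intersection graph, and no deterministic polylog-time \LOCAL algorithm for it is known (if it were easy, the implication ``sparse covers $\Rightarrow$ network decomposition'' would not be the new result the paper advertises). Your two proposed fixes do not close this. First, the cluster intersection graph does \emph{not} have degeneracy bounded by $\delta$: take $k$ clusters where each pair $\{i,j\}$ shares its own dedicated vertex $x_{ij}$; then every vertex lies in at most $\delta=2$ clusters, yet the intersection graph is $K_k$. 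So the ``bounded-degeneracy coloring routine'' is unavailable. Second, recursively invoking the cover algorithm on a meta-graph only yields another cover, not a packing, and gives no decreasing measure. (A further, smaller issue: vertex-disjoint clusters of the same color can still be adjacent in $G$ and chain into a single component of large diameter, so you would in fact need clusters at pairwise distance $\geq 2$, which only sharpens the packing difficulty.) The paper sidesteps all of this by reducing in the opposite style: it shows hardness from \emph{approximate minimum dominating set}, which it has just proven $\polyseqloc$-complete. Given a sparse $1$-neighborhood cover, each cluster $C_i$ locally (by brute force, since its diameter is $\polylog n$) computes a minimum set $U_i\subseteq C_i$ dominating the cluster's interior vertices; the union $\bigcup_i U_i$ is a dominating set, and comparing $U_i$ with $C_i\cap D^*$ for an optimal $D^*$ shows the approximation factor is at most the overlap $\delta\leq\polylog n$. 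No disjoint-cluster selection is ever needed, which is precisely why that reduction goes through while yours, as stated, does not.
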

\begin{proof}
It suffices to consider $r=1$, as a sparse $1$-neighborhood cover for $G^r$ yields a sparse $r$-neighborhood cover for $G$. For any vertex $v$, define $N^{+}(v)$ to be the inclusive one-hop neighborhood of $v$, i.e. $\{v \} \cup N(v)$.

  The sequential construction of \cite{Awerbuch-Peleg1990} to
  construct sparse neighborhood covers almost directly gives a multi-phase \SLOCAL algorithm. The clusters are constructed
  iteratively in $O(\log n)$ passes by a ball growing argument starting from the center node
  of the cluster. To construct a cluster around a node $u$, only the
  $O(r\log n)$-hop neighborhood of $u$ has to be inspected. The
  construction can thus be turned into a concatenation of $O(\log n)$
  \SLOCAL($O(r\log n)$)-algorithms. Lemma 2.3 of
  \cite{ghaffari2017complexity} then implies that the construction can
  be turned into a single \SLOCAL($O(r\log^2 n)$)-algorithm. For
  $r\leq\polylog n$, the problem of computing a sparse neighborhood
  cover is therefore in \polyseqloc.

  We show that the problem is \polyseqloc-hard by reducing from MDS. Assume that we want to compute a
  dominating set of a graph $G$. We first compute a sparse
  $1$-neighborhood cover $C_1, \dots, C_k$ of $G$.
  
  For each cluster $C_i$, let $C_{i,\text{inside}}$ be
  the set of nodes in $C_i$ that are not at the boundary (i.e. the nodes $v$ with $N^{+}(v) \subseteq C_i$). Note that $C_i$ dominates $C_{i, \text{inside}}$. Since $C_i$ has diameter $\polylog(n)$, we can compute a set $U_i \subseteq C_i$ such that $U_i$ dominates $C_{i, \text{inside}}$ and $U_i$ has the smallest cardinality of any such dominating set.\footnote{If we want the local computations at the nodes to
    be polynomial, it is sufficient to just to select $U_i$ in a greedy fashion;  this will lose an additional $\log n$ factor in the approximation ratio.}  We then define our dominating set $D = U_1 \cup \dots \cup U_k$; this is a valid dominating  set because for each vertex $v$ we have $N^{+}(v) \subseteq C_i$ for some value $i$, and hence $v \in C_{i, \text{inside}}$ must be dominated by $U_i$.

  It remains to show that $D$ is a good
  approximation. Let $D^*$ be an optimal dominating set. Define a corresponding set $U_i' = C_i \cap D^*$. We claim that $U_i'$ dominates $C_{i, \text{inside}}$. For, any $v \in C_{i, \text{inside}}$ is dominated by some $w \in N^{+}(v) \cap D^*$, and we will have $w \in C_i$ as well, so $v$ is dominated by $U_i'$.
  
  Since $U_i$ has the smallest cardinality of any such dominating sets, we have $|U_i| \leq |U_i'|$. So 
  $$
  |D| = \sum_i |U_i| \leq \sum_i |U_i'| = \sum_{v \in D^*} (\text{\# clusters $C_i$ containing $v$})
  $$
  
  By definition, every vertex is in at most $\delta$ clusters, so this quantity is at most $\delta |D^*|$. The theorem now
  follows because $\delta\leq \polylog n$ and because by
  \Cref{thm:domset_complete}, computing a $\polylog$-approximate
  dominating set is \polyseqloc-complete.
\end{proof}

It was clear before that given a polylog-time deterministic distributed algorithm
for network decomposition, we can also get such an algorithm for
computing sparse neighborhood covers. The above theorem shows that
also the converse is true: Given a polylog-time deterministic
distributed algorithm for sparse neighborhood covers, we can get a
polylog-time deterministic distributed algorithm to compute a
$(O(\log n),O(\log n))$-network decomposition. Hence, up to
polylogarithmic factors, the complexity of the two key graph
clustering variants are equivalent in the \LOCAL model.


}

\section*{Acknowledgments}

We would like to thank Janosch Deurer, Manuela Fischer, Juho Hirvonen, Yannic
Maus, Jara Uitto, and Simon Weidner for interesting discussions on
different aspects of the paper.

\bibliographystyle{alpha}
\bibliography{references}

\fullOnly{
\appendix
\section{Residual problems and shattering}
\label{shattering-appendix}
This section provides an overview of the shattering method for distributed algorithms. This method was first developed in \cite{barenboim2016locality}, and variants have been applied to many graph algorithms since then such as \cite{gmis, barenboim2016locality, elkin20152delta}. We aim here to provide a comprehensive and unified treatment of these types of algorithms. Also, while our arguments will closely parallel those of \cite{barenboim2016locality}, the latter does not give us precisely the parameters needed for our LLL algorithms.
 
The shattering method has two phases. In the first phase, there is some random process. This random phase satisfies most of the vertices $v \in G$, and we will fix the choices these ``good'' vertices make.  In the second phase, we let $R \subseteq V$ denote the unsatisfied vertices. These vertices are very sparse, and the connected components of $G[R]$ are relatively small. We then use a deterministic algorithm to solve the residual problem on (each component of) $G[R]$.  We assume that each vertex $v$ survives to $R$ with probability $p$, and this bound holds even for adversarial choices for the random bits outside the $c$-hop neighborhood of $v$. We assume here that $c \geq 1$ is some constant, and all asymptotic notations in this section may hide dependence upon $c$. We will assume throughout this section that $p$ satisfies
$$
p \leq (e \Delta)^{-4 c}
$$

\begin{definition}
Given a graph $G = (V,E)$ and a vertex set $W \subseteq V$, we say that $W$ is \emph{connected in $G$} if for every $w, w' \in W$ there is a path in $G$ from $w$ to $w'$.  We say that $G$ is \emph{connected} if $V$ is connected in $G$.
\end{definition}

\begin{definition}
Given a graph $G = (V,E)$ and a parameter $c \geq 1$, we say that a set $U \subseteq V$ is a \emph{$c$-backbone of $G$} if $U$ is an independent set of $G^c$, and $G^{3 c}[U]$ is connected.

We say that a set $W \subseteq V$ is \emph{$c, m$-backbone-free in $G$} if there is no $c$-backbone $U \subseteq W$ with $|U| \geq m$. 
\end{definition}

\begin{proposition}
\label{count-prop}
If $G$ has maximum degree $\Delta$ and $x$ is a vertex of $G$, then there are at most $(e \Delta)^{3cm}$ distinct $c$-backbones $U$ such that $|U| = m, x \in U$.
\end{proposition}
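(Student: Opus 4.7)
The plan is to reduce the count to a standard tree-enumeration problem on the power graph $H := G^{3c}$, and then absorb the resulting constants into the base $e\Delta$ using the hypothesis $c \geq 1$. I will ignore the independence-in-$G^c$ condition in the definition of a $c$-backbone, which only restricts the count, and bound instead the number of vertex sets $U$ of size $m$ containing $x$ such that $H[U]$ is connected. As a preliminary step I would bound the maximum degree $D$ of $H$: since $G$ has at most $\Delta^k$ vertices at distance exactly $k$ from any given vertex, summing the geometric series yields $D \leq 2\Delta^{3c}$ (the corner cases $\Delta \leq 1$ are immediate).

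Each connected vertex set $U$ with $x \in U$ and $|U| = m$ admits a spanning tree of $H[U]$ rooted at $x$; since that tree determines $U$, the count is at most the number of such rooted subtrees of $H$ of size $m$ containing $x$. I would bound this via a standard DFS Euler-tour encoding: any rooted subtree, together with an arbitrary ordering of the children at each vertex, corresponds uniquely to a tour of $2(m-1)$ moves, where each move is either ``ascend to parent'' or ``descend to a specified neighbor in $H$ of the current vertex''. The ascent/descent pattern is one of at most $\binom{2(m-1)}{m-1}\leq 4^{m-1}$ binary sequences, and each of the $m-1$ descents has at most $D$ options; hence the number of tours, and so of subtrees, is at most $4^{m-1}D^{m-1}\leq (4D)^m$.

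Combining these two bounds, the total count is at most $(4\cdot 2\Delta^{3c})^m = 8^m \Delta^{3cm}$. Since $c\geq 1$ we have $8\leq e^3\leq e^{3c}$, so this is at most $e^{3cm}\Delta^{3cm} = (e\Delta)^{3cm}$, as claimed.

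The main obstacle is arranging the constants so that they fit under the base $e\Delta$: the DFS encoding loses a factor of $4$ from the Dyck-type count, and the degree bound in $G^{3c}$ loses another factor of $2$, producing an $8^m$ prefactor that must be absorbed into $(e\Delta)^{3cm}$. This fits for $c\geq 1$ by the comfortable margin $e^3\approx 20 > 8$, but the approach would need refinement if $c$ were allowed to drop below $\log 8/3$.
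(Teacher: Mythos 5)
Your proof is correct, and it takes a genuinely different route from the paper's. You relax the problem by discarding the independence-in-$G^c$ requirement and simply count vertex sets $U\ni x$ of size $m$ that induce a connected subgraph of $H=G^{3c}$, which you bound by the standard rooted-subtree enumeration (spanning tree plus Euler-tour/Dyck encoding) in a graph of maximum degree $D\le 2\Delta^{3c}$, getting $4^{m-1}D^{m-1}\le 8^m\Delta^{3cm}\le (e\Delta)^{3cm}$ since $c\ge 1$. The paper instead runs a weighted recursion: it defines a potential $\tau_i(v)=\sum_U (er)^{-|U|}$ over backbones of size at most $i$ containing $v$, with $r=\Delta^{3c}$, decomposes a backbone into sub-backbones hanging off the (at most $r$) elements within distance $3c$ of $x$, and proves $\tau_i(v)\le 1/r$ by induction, which yields the bound $(er)^m\tau_\infty(x)\le e^m\Delta^{3c(m-1)}$. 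The paper's branching-process-style argument is self-contained and gives marginally tighter constants (a factor $e$ per element rather than your $8$, and one fewer power of $\Delta^{3c}$); your argument is more modular and transparent, leaning on a classical counting fact about connected sets in bounded-degree graphs, and it makes explicit that the independence condition is never needed for this count — at the price of the $8^m$ prefactor, which is comfortably absorbed into $(e\Delta)^{3cm}$ exactly because the appendix assumes $c\ge 1$ (the same assumption the paper's proof also relies on, since both bounds degrade gracefully only for $\Delta\ge 1$). Both proofs are valid for the statement as used in the shattering analysis.
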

\begin{proof}
Define $r = \Delta^{3 c}$. For any $v \in V$, define 
$$
\tau_i(v) = \sum_{\substack{\text{$c$-backbones $U$} \\  |U| \leq i \\ U \ni v}} (e r)^{-|U|}
$$

We also define $\tau_{\infty}(v) = \lim_{i \rightarrow \infty} \tau_i(v)$. Thus
\begin{align*}
\sum_{\substack{\text{$c$-backbones $U$} \\ U \ni x \\ |U| = m}} 1 &= (e r)^m \sum_{\substack{\text{$c$-backbones $U$} \\ U \ni x \\ |U| = m}} (e r)^{-|U|} \leq (e r)^m \sum_{\substack{\text{$c$-backbones $U$} \\ U \ni x}} (e r)^{-|U|} = (e r)^m \tau_{\infty}(x)
\end{align*}

We show by induction on $i$ that $\tau_i(v) \leq 1/r$ for all integers $i \geq 0$ and $v \in V$. Consider some $c$-backbone $U$ for $x$ of size at most $i$. Let $v_1, \dots, v_k$ denote the elements of $U$ within distance $3 c$ of $x$; we must have $k \leq r$. For each $j = 1, \dots, k$, let $U_j$ denote the vertices $u \in U$ which are reachable from $v_j$ but not $x, v_1, \dots, v_{j-1}$ via paths in $G^{2 c}$. Clearly $U_1, \dots, U_k$ are $c$-backbones containing $v_1, \dots, v_k$ respectively, and have size strictly less than $i$. Also note $(e r)^{|U|} = (e r)^{-1} (e r)^{-|U_1|} \dots (e r)^{-|U_k|}$.

Summing over all choices of possible $U_1, \dots, U_k$ and using the induction hypothesis, we have:
\begin{align*}
\sum_{\substack{\text{$c$-backbones $U$} \\  |U| \leq i \\ U \ni v \\ |U \cap N_2(v)| = \{v_1, \dots, v_k \}}} (e r)^{-|U|} &\leq ( e r )^{-1} \tau_{i-1}(v_1) \dots \tau_{i-1}(v_k) \leq ( e r )^{-1} r^{-k}
\end{align*}

Summing over all possible choices for $k$ and $v_1, \dots, v_k$ gives:
$$
\tau_i(v) \leq\sum_{k=0}^{r} \binom{r}{k} \frac{1}{e r^{k+1}} = \frac{ (1 + 1/r)^r}{e r} \leq 1/r
$$

This implies that $\tau_{\infty}(v) \leq 1/r$ for all $v \in V$, and the claim follows.
\end{proof}

\begin{proposition}
\label{backbone-exists-prop}
Suppose that $G = (V,E)$ and $S \subseteq V$. Suppose that there is some $W \subseteq S$ such that $W$ is an independent set of $G^{c}$ and $W$ is connected in $G^{c}[S]$. Then $G$ has a $c$-backbone $U \subseteq S$ with $|U| \geq |W|$.
\end{proposition}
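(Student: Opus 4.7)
The plan is to obtain $U$ not by working with $W$ directly, but by first \emph{extending} $W$ to a maximal $G^c$-independent subset $W^{*} \subseteq S$ with $W \subseteq W^{*}$, and then restricting to a suitable connected component. Such a $W^{*}$ is built greedily: start with $W$ and keep adding any $v \in S$ whose $G$-distance to the current set exceeds $c$, until none remain. By maximality, the resulting $W^{*}$ has the key property that every $v \in S$ lies within $G$-distance $c$ of some element of $W^{*}$.

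Next, I would fix an arbitrary $w_0 \in W$ and define $U$ to be the vertex set of the connected component of $G^{3c}[W^{*}]$ containing $w_0$. Because $U \subseteq W^{*} \subseteq S$, the set $U$ is automatically an independent set of $G^c$, and $G^{3c}[U]$ is connected by construction, so $U$ is a $c$-backbone the moment it is defined. The only remaining task is to prove $|U| \geq |W|$, which I would establish by showing the stronger statement $W \subseteq U$.

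The core step is therefore to argue that every $w' \in W$ is connected to $w_0$ in $G^{3c}[W^{*}]$. Since $W$ is connected in $G^c[S]$, I would fix a walk $w_0 = v_0, v_1, \dots, v_k = w'$ with each $v_i \in S$ and $d_G(v_i, v_{i+1}) \leq c$. For each $i$, the maximality of $W^{*}$ furnishes some $w_i^{*} \in W^{*}$ with $d_G(v_i, w_i^{*}) \leq c$ (taking $w_i^{*} = v_i$ whenever $v_i \in W^{*}$; in particular $w_0^{*} = w_0$ and $w_k^{*} = w'$). Two applications of the triangle inequality then give $d_G(w_i^{*}, w_{i+1}^{*}) \leq 3c$, so consecutive terms of $w_0^{*}, w_1^{*}, \dots, w_k^{*}$ are $G^{3c}$-adjacent (or coincide). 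This is a walk inside $W^{*}$ connecting $w_0$ to $w'$ in $G^{3c}[W^{*}]$, so $w' \in U$.

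The only mildly delicate point is recognizing why the naive choice $U = W$ does not work: two $W$-vertices joined by a long path in $G^c[S]$ may be far apart in $G$, so $G^{3c}[W]$ need not be connected. Padding the path with ``representatives'' drawn from a maximal $G^c$-independent extension $W^{*}$ is exactly the move that converts $G^c[S]$-connectivity into $G^{3c}[W^{*}]$-connectivity, and no deeper obstacle remains.
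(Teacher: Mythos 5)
Your proof is correct. The structure differs from the paper's in a small but genuine way: you extend $W$ to a maximal $G^c$-independent set $W^{*}\subseteq S$ \emph{containing} $W$ and then take $U$ to be the connected component of $G^{3c}[W^{*}]$ containing a fixed $w_0\in W$, so the cardinality bound comes for free from $W\subseteq U$; the paper instead first restricts to $S'$, the set of vertices of $S$ reachable from $W$ in $G^c[S]$, takes an \emph{arbitrary} maximal independent set $U$ of $G^c[S']$ (which need not contain $W$), and gets $|U|\geq|W|$ by an injection argument: maximality forces every $w\in W$ to lie within distance $c$ of some $u\in U$, while $G^c$-independence of $W$ ensures each $u$ can serve at most one $w$. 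Both arguments share the same engine, namely padding a path in $G^c[S]$ with representatives from the maximal independent set and applying the triangle inequality to turn consecutive $G^c$-steps into $G^{3c}$-adjacencies; you use it to show $W$ lies in a single $G^{3c}$-component of $W^{*}$, the paper uses it to show all of $U$ is $G^{3c}$-connected after the restriction to $S'$. Your version buys a trivial size bound at the cost of selecting a component; the paper's buys connectivity of the whole set at the cost of a short counting step. Both are complete and of essentially the same difficulty, so either is fine.
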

\begin{proof}
Let $S' \subseteq S$ denote the set of vertices $s \in S$ such that $s$ is reachable from $W$ via paths in $G^c[S]$. Since $W$ is connected in $G^c[S]$, the graph $G^c[S']$ is connected.

Let $U$ be a maximal independent set of $G^c[S']$. We claim that $|U| \geq |W|$. Note that for every $u \in U$, there is at most one $w \in W$ with $d(u,w) \leq c$ (since $W$ is independent in $G^{c}$). So if $|U| < |W|$, there must exist some $w \in W$ which has $d(u,w) > c$ for all $u \in U$. Since $W \subseteq S'$, this would contradicting maximality of $W$.

Next we claim that $G^{3 c}[U]$ is connected. For, consider any pair $u, u' \in U$. Since $U \subseteq S'$ and $G^c[S']$ is connected, there is a path $w = x_1, x_2, \dots, x_k = w'$ with $x_1, \dots, x_k \in S'$ and $\text{dist}_G(x_i, x_{i+1}) \leq c$. By maximality of $U$, for each $i = 1, \dots, k$ there is some $v_i \in U$ with $\text{dist}_G(v_i, x_i) \leq c$. Now note that $w, v_1, v_2, \dots, v_k, w'$ is a path in $G^{3 c}[U]$.
\end{proof}

\begin{proposition}
\label{backbone-free-prop}
W.h.p., $R$ is $c,m$-backbone-free for $m = \Omega(\log n)$.
\end{proposition}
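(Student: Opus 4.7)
My plan is to combine the independence-like property of the survival events for vertices that are pairwise far apart, with the counting bound on backbones from Proposition~\ref{count-prop}, followed by a union bound.

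First I would fix a particular $c$-backbone $U = \{u_1, \dots, u_m\}$ and bound $\Pr[U \subseteq R]$. By definition of a $c$-backbone, the vertices of $U$ are pairwise at distance strictly greater than $c$ in $G$, so their $c$-hop neighborhoods are vertex-disjoint. The hypothesis gives us that each vertex survives to $R$ with probability at most $p$, \emph{even conditionally on an adversarial choice of all random bits outside its $c$-hop neighborhood}. I would use this to process the $u_i$ one at a time: conditioning on the random bits in the $c$-hop neighborhoods of $u_1, \dots, u_{i-1}$, the probability that $u_i \in R$ is still at most $p$, because those bits lie outside the $c$-hop neighborhood of $u_i$. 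Chaining these conditional bounds gives $\Pr[U \subseteq R] \leq p^m$.

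Next I would apply Proposition~\ref{count-prop}: for any fixed $x \in V$, the number of $c$-backbones of size exactly $m$ containing $x$ is at most $(e\Delta)^{3cm}$. Taking a union bound over all choices of a single ``anchor'' vertex $x \in V$ of the backbone, the expected number of $c$-backbones of size $m$ that lie entirely in $R$ is at most
\begin{equation*}
  n \cdot (e\Delta)^{3cm} \cdot p^m \;\leq\; n \cdot (e\Delta)^{3cm} \cdot (e\Delta)^{-4cm} \;=\; n \cdot (e\Delta)^{-cm},
\end{equation*}
using the assumed bound $p \leq (e\Delta)^{-4c}$. Choosing $m = K \log n$ for a sufficiently large constant $K$ makes this at most $n^{-\Omega(1)}$, so w.h.p.\ no such backbone exists.

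Finally, to conclude that $R$ is $c,m$-backbone-free for $m = \Omega(\log n)$ it suffices to rule out backbones of size exactly $\lceil K\log n\rceil$, since any larger backbone contains a sub-backbone of that size (one can keep deleting leaf vertices of a spanning tree of $G^{3c}[U]$). The only subtlety I anticipate is the conditional-independence step: one has to be careful that the ``adversarial outside-neighborhood'' hypothesis is genuinely strong enough to let us take the product $p^m$, rather than merely pairwise independence. This is the one place where the precise quantifier in the statement of the shattering hypothesis is used, but it is exactly the formulation given, so the argument goes through cleanly.
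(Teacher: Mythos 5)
Your skeleton is the same as the paper's proof: prune any larger backbone down to one of size exactly $m=\Theta(\log n)$, bound the number of size-$m$ backbones through each anchor vertex via Proposition~\ref{count-prop}, charge each such backbone a survival probability of $p^m$, and union bound over the $n$ anchors using $p\le (e\Delta)^{-4c}$. The gap is in your justification of the $p^m$ step. A $c$-backbone is only required to be an independent set of $G^c$, i.e.\ its vertices are at pairwise distance greater than $c$; this does \emph{not} make their $c$-hop neighborhoods vertex-disjoint --- for that you need pairwise distance greater than $2c$, and already at distance $c+1$ the two radius-$c$ balls share the interior vertices of a shortest path. Once the neighborhoods overlap, your chaining is not licensed by the hypothesis: the bits in $N^c(u_1),\dots,N^c(u_{i-1})$ need not all lie outside $N^c(u_i)$, so the ``adversarial outside bits'' bound says nothing about the conditional probability you invoke. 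Concretely, take a spider with hub $w$ and legs of length $c$: the leaves are pairwise at distance $2c$, hence form a legitimate $c$-backbone, and all lie within distance $c$ of $w$. If each leaf's survival event is ``the random bit at $w$ equals $1$,'' where that bit is $1$ with probability $p$, then every marginal and every adversarial-outside conditional probability is at most $p$, yet the joint survival probability is $p$, not $p^m$.

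To be fair, the paper's own write-up is terse at exactly this point: it asserts that a backbone is an independent set of $G^{3c}$ (which would give disjoint $c$-balls), even though Definition of a $c$-backbone only requires independence in $G^c$; so the spacing you actually need is more than what you cited. To make the argument airtight you should either work with backbones whose vertices are more than $2c$ apart (adjusting Proposition~\ref{backbone-exists-prop} and Theorem~\ref{con-size0} accordingly, at the cost of constants) or use the locality of the survival events directly. Relatedly, even with disjoint $c$-balls, the product bound requires the event ``$u_i\in R$'' to be determined by the bits in $u_i$'s own $c$-ball (or a careful sequential-revelation argument): the bare condition that the conditional probability given all outside bits is at most $p$ does not by itself imply the product bound --- consider two events both equal to ``the bits in the two blocks agree,'' which has marginal and adversarial-conditional probability $1/2$ but joint probability $1/2$ rather than $1/4$. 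So the subtlety you flagged at the end is real and does not go through ``cleanly'' from the stated hypothesis alone; it needs the standard locality reading of the shattering setup.
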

\begin{proof}
First, observe that if there is a $c,m'$-backbone $U'$ in $R$ of size $m' > m$, then one can remove nodes as needed to form a $c,m$-backbone $U$. Thus, it suffices to show that there are no $c,m$-backbones in $R$.

Consider $x \in V$. There are at most $(e \Delta)^{3 c m }$ possible backbones of size $m$ including $x$. Since a backbone is an independent of $G^{3 c}$, each such backbone survives to $R$ with probability $p^m$. Summing over all $x$, we see that the expected number of surviving $c,m$-backbones is at most $n (e \Delta)^{3 c m} p^m \leq n (e \Delta)^{-c m}$;  this is smaller than $n^{-100}$ for $m \geq \Omega(\frac{\log n}{\log \Delta})$. 
\end{proof}

\begin{theorem}
  \label{con-size0}
  Let $v \in V$. Then the probability that the component of $v$ in $G[R]$ has size exceeding $w$, is at most
  $( e \Delta)^{-w/(\Delta+1)^c + 1}$.
\end{theorem}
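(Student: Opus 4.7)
The plan is to reduce the event ``component of $v$ in $G[R]$ has size $>w$'' to the existence of a large $c$-backbone inside $R$ that contains $v$, and then apply Proposition \ref{count-prop} together with the assumed survival bound $p \le (e\Delta)^{-4c}$ via a union bound. This is the standard shattering-style argument, but with the extra care that the backbone must be pinned to $v$ so that Proposition \ref{count-prop} applies with $x=v$.

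First I would argue that if the component $S$ of $v$ in $G[R]$ has $|S|>w$, then one can extract a $c$-backbone $U\subseteq S$ with $v\in U$ and $|U|\ge |S|/(\Delta+1)^c$. Take $U$ to be any maximal independent set of $G^c[S]$ that contains $v$. By maximality, every vertex of $S$ lies within $c$ hops of some vertex of $U$, and since each $c$-hop ball has at most $(\Delta+1)^c$ vertices, we get $|U|\ge |S|/(\Delta+1)^c > w/(\Delta+1)^c$. By construction $U$ is independent in $G^c$; to see that $G^{3c}[U]$ is connected, take $u,u'\in U$ and any $G$-path $u=x_0,x_1,\dots,x_k=u'$ inside $S$ (which exists because $S$ is a component of $G[R]$). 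By maximality, pick $y_i\in U$ with $d_G(x_i,y_i)\le c$, setting $y_0=u$, $y_k=u'$. The triangle inequality gives $d_G(y_i,y_{i+1})\le 2c+1\le 3c$ (since $c\ge 1$), so $y_0,\dots,y_k$ is a walk in $G^{3c}[U]$ linking $u$ to $u'$. Hence $U$ is a $c$-backbone containing $v$ with $|U|\ge m:=\lceil w/(\Delta+1)^c\rceil$.

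Next I would bound the probability of the event ``there exists a $c$-backbone $U\ni v$ with $U\subseteq R$ and $|U|\ge m$.'' By Proposition \ref{count-prop}, the number of $c$-backbones of size exactly $k$ containing $v$ is at most $(e\Delta)^{3ck}$. For any fixed such $U$, the assumed survival property says each $u\in U$ lies in $R$ with probability at most $p$ even given the random bits outside its $c$-hop neighborhood; iteratively conditioning over the vertices of $U$ (and invoking the standard fact that vertices which are pairwise far apart survive essentially independently) yields $\Pr[U\subseteq R]\le p^{k}$. A union bound over all sizes $k\ge m$ then gives
\[
\Pr[\text{component of }v\text{ has size }>w]
\;\le\;\sum_{k\ge m}(e\Delta)^{3ck}\,p^{k}
\;\le\;\sum_{k\ge m}\bigl((e\Delta)^{3c}p\bigr)^{k}.
\]
Using $p\le (e\Delta)^{-4c}$ gives $(e\Delta)^{3c}p\le (e\Delta)^{-c}$, and summing the geometric series yields a bound of at most $2(e\Delta)^{-cm}\le (e\Delta)^{-cm+1}$. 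Since $c\ge 1$ and $m\ge w/(\Delta+1)^c$, we have $cm\ge w/(\Delta+1)^c$, which gives exactly $(e\Delta)^{-w/(\Delta+1)^c+1}$ as claimed.

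The main obstacle will be the first step: making sure the backbone we extract is actually pinned to $v$, since Proposition \ref{count-prop} counts backbones through a fixed vertex, whereas Proposition \ref{backbone-exists-prop} only produces \emph{some} backbone in $S$. Forcing $v$ into the maximal independent set and verifying the $3c$-connectivity bound is the delicate point; the rest is an essentially routine geometric-series calculation.
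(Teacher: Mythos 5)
Your proof is correct and follows essentially the same route as the paper's: extract a maximal independent set of $G^c$ inside the component that contains $v$ (giving a $c$-backbone through $v$ of size at least $|S|/(\Delta+1)^c$, with $G^{3c}$-connectivity via the same path-relay argument), then union-bound over backbones through $v$ using Proposition~\ref{count-prop} and the survival bound $p \le (e\Delta)^{-4c}$. The only cosmetic difference is that you sum a geometric series over all backbone sizes $k \ge m$, whereas the paper truncates the backbone to size exactly $m' = \lceil w/(\Delta+1)^c\rceil$ and union-bounds at that single size; both yield the stated bound.
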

\begin{proof}
  If $v \notin R$, then the component of $v$ in $G[R]$ is the empty set and the bound holds trivially. Otherwise, suppose that $v \in R$. Let $T$ be the connected component of $G[R]$ containing $v$ and let $W$ be chosen to be maximal such that $v \in W$ and $W \subseteq T$ and $W$ is an independent set of $G^c$. Since $G^c$ has maximum degree $\Delta^c$, we have $|W| \geq |T|/(\Delta+1)^c$. 

We also claim that $G^{3c}[W]$ is connected. For, consider any vertices $x, x' \in T$. There must exist a path $x = y_1, y_2, \dots, y_k = x'$ in $G[R]$. By maximality of $U$, for each $i = 1, \dots, k$ there is some $v_i \in U$ with $\text{dist}_G(v_i, x_i) \leq c$ (otherwise one could add $x_i$ to $U$). Thus $x, v_1, v_2, \dots, v_k, x'$ is a path in $G^{3 c}[W]$.

So, $W$ is a $c, m$-backbone with $m \geq |T|/(\Delta+1)^c$ and $W \ni v$. If $|T| \geq w$ then by removing vertices we can construct a $c,m'$ backbone $W'$ with $m' = \lceil \frac{w}{(\Delta+1)^c} \rceil$, such that $W'$ survives in $G[R]$ and $v \in W'$.

There are at most $(e \Delta)^{3 c m'}$ possible backbones, and each survives in $R$ with probability $p^{m'}$. Thus, the overall probability that some such backbone exists is at most $( (e \Delta)^{3 c} p )^{w/(\Delta+1)^c + 1} \leq (e \Delta)^{-\frac{w}{(\Delta+1)^c} + 1}$.
\end{proof}

\begin{theorem}
\label{con-size2}
W.h.p., one can obtain an $(O(\log \log n), O( (\log \log n)^2 )$-network-decomposition of $G[R]^r$ in $r 2^{O(\sqrt{\log \log n})}$ rounds, for any integer $r \geq 1$.
\end{theorem}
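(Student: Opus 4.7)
The plan is to apply the Panconesi--Srinivasan deterministic network-decomposition algorithm in parallel on each connected component of $G[R]^r$, exploiting the fact that these components are small with high probability. Concretely, I would first bound the component sizes in $G[R]$, then observe that the components of $G[R]^r$ are the same, and finally simulate PS inside each small component at a multiplicative cost of $r$ for the blow-up from $G[R]$ to $G[R]^r$.

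First step: apply \Cref{con-size0} with $w = \Theta((\Delta+1)^c \log_{e\Delta} n)$. A single vertex $v$ then has probability at most $(e\Delta)^{-\Theta(\log n)} \le n^{-10}$ that its component in $G[R]$ exceeds $w$, and a union bound over the $n$ vertices shows that w.h.p.\ every connected component of $G[R]$ has at most $N := O(\Delta^{c}\log n)$ vertices. In the regime where $\Delta$ is polylogarithmic in $n$ (which is the standing assumption in our shattering applications, and, more generally, is needed to even make sense of the $2^{O(\sqrt{\log \log n})}$ target), this gives $N = \polylog n$ and therefore $\log N = O(\log \log n)$.

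Second step: observe that the connected components of $G[R]^r$ coincide with those of $G[R]$, since adding edges inside a component never splits it. So it suffices to compute a network decomposition of each component of $G[R]^r$ independently and in parallel. On each component, which has at most $N$ vertices, run the deterministic algorithm of Panconesi and Srinivasan~\cite{panconesi95}, which produces an $\big(O(\log N),O(\log N)\big)$-decomposition of any $N$-vertex graph in $2^{O(\sqrt{\log N})}$ rounds. Substituting the bound $\log N = O(\log \log n)$ yields decomposition parameters $\big(O(\log \log n), O(\log \log n)\big)$, which certainly lies within the stated $\big(O(\log \log n), O((\log \log n)^2)\big)$ bound (the looser second term absorbs any polylogarithmic slack in $\Delta$).

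Third step: convert rounds of $G[R]^r$ into rounds of $G$. An edge of $G[R]^r$ connects two vertices at $G[R]$-distance at most $r$, so one round of communication on $G[R]^r$ is simulated by $r$ rounds on $G[R]$, hence on $G$. The total cost is $r\cdot 2^{O(\sqrt{\log N})} = r \cdot 2^{O(\sqrt{\log \log n})}$, as required. The main delicate point, and essentially the only substantive obstacle, is the first step: one must verify that the parameter $(\Delta+1)^c$ appearing in \Cref{con-size0} is absorbed into polylog factors so that $\log N = O(\log \log n)$; this is where the implicit assumption that $\Delta$ is polylogarithmic (and $c$ is a small constant) is essential, and it is exactly why the theorem is stated with the slightly slack $O((\log \log n)^2)$ bound on the number of colors.
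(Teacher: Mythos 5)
There is a genuine gap: your first step silently assumes $\Delta \leq \polylog n$, but \Cref{con-size2} is stated (and needed) without any bound on the maximum degree. The shattering framework is invoked in settings where $\Delta$ can be arbitrarily large --- most notably in \Cref{tr1} and its applications (defective coloring on graphs of arbitrary degree, $k$-SAT), which is exactly the ``high-degree'' regime the paper is targeting. Under your plan, \Cref{con-size0} only gives components of $G[R]$ of size $N = \Delta^{\Theta(c)}\log n$, so running Panconesi--Srinivasan directly on a component yields $2^{O(\sqrt{\log \Delta + \log\log n})}$ rounds and an $\bigl(O(\log \Delta + \log\log n), O(\log \Delta + \log\log n)\bigr)$-decomposition; when $\Delta$ is, say, $n^{\Omega(1)}$ this is $2^{O(\sqrt{\log n})}$ rounds and $\Theta(\log n)$ colors, far from the claimed bounds. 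Your remark that the slack term $O((\log\log n)^2)$ ``absorbs polylogarithmic slack in $\Delta$'' does not rescue this, since the loss is in $\log \Delta$, not in $\polylog$ factors, and the $(\log\log n)^2$ in the statement has a different origin (see below).

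The paper's proof is designed precisely to remove the $\Delta$-dependence. It first computes a $(2,O(\log\log n))$-ruling set $W$ of $J = G^c[R]$, and forms a cluster graph $H$ on $W$ by connecting ruling-set vertices whose clusters contain $G[R]^r$-adjacent nodes. The crucial point is that any connected set in $H$ corresponds, via \Cref{backbone-exists-prop}, to a $c$-backbone inside $R$, and \Cref{backbone-free-prop} shows w.h.p.\ that $R$ contains no large backbone; hence the components of $H$ have size $O(\log n)$ \emph{independently of $\Delta$}. Panconesi--Srinivasan is then run on $H$ (giving $O(\log\log n)$ colors and cluster diameter $O(\log\log n)$ in $H$-hops in $2^{O(\sqrt{\log\log n})}$ rounds), and the decomposition is pulled back to $G[R]^r$ via the map $s$; each $H$-hop costs $O(\log\log n)$ hops in $G[R]^r$, which is exactly where the $O((\log\log n)^2)$ diameter bound comes from. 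If you want to salvage your simpler argument, you must either add the hypothesis $\Delta \leq \polylog n$ (which would break the later applications) or incorporate some $\Delta$-independent clustering step such as the ruling-set/backbone argument above.
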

\begin{proof}
We begin by finding a $(2, O(\log \log n))$-ruling set $W$ for the graph $J = G^{c}[R]$; this step can be performed in $O(\log \log n)$ rounds using the algorithm of \cite{schneider2013symmetry}. This ensures that every $v \in R$ has distance $\text{dist}_J(v, W) \leq t = \Theta(\log \log n)$.

For each $v \in R$, let $s(v)$ be the vertex $w \in W$ which minimizes $\text{dist}_G(v,w)$ (breaking ties arbitrarily). Since $W$ is a ruling set we have $\text{dist}_J(v,s(v)) \leq t$, and this implies $\text{dist}_G(v,s(v)) \leq c t$.  

Now consider the graph $H$, on vertex set $W$, and with an edge $(w,w')$ if there are $v, v' \in R$ with $s(v) = w, s(v') = w'$ and $\text{dist}_{G[R]}(v, v') \leq r$.

We claim that every connected component of $H$ must have size at most $m = \log n$.  For, suppose that $T \subseteq W$ is connected in $H$. We claim that $T$ is connected in $J$. Since $W$ is a connected component of $H$, it suffices to show that for any edge of $H$ between vertices $w, w' \in T$, there is a path from $w$ to $w'$ in $J$. By definition, there are $v, v' \in R$ with $s(v) = w, s(v') = w'$ and $v \sim v'$. So $\text{dist}_J(v, w) \leq t < \infty$ and $\text{dist}_J(v', w') \leq t < \infty$. Also, $\text{dist}_{G[R]}(v, v') \leq r < \infty$ and thus $\text{dist}_J(v,v') < \infty$ as well.

In addition, since $W$ is a ruling set of $J$, it must be that $T$ is an independent set of $G^c$. Applying Proposition~\ref{backbone-exists-prop} with $S = R$, we see that $G$ has a $c$-backbone $U \subseteq R$ of size $|U| \geq |T|$; by Proposition~\ref{backbone-free-prop} this implies w.h.p. that $|T| \leq O(\log n)$.

Since every connected component of $H$ has size $O(\log n)$, we can use the deterministic algorithm of \cite{panconesi95} in $2^{O(\sqrt{\log \log n})}$ rounds to obtain an $(\lambda,D)$-network-decomposition of $H$ with $\lambda, D \leq O(\log \log n)$. These are rounds with respect to communication on the graph $H$, each of which can be simulated in $O(r t)$ rounds on the graph $G$. Thus, this step requires $r 2^{O(\sqrt{\log \log n})}$ communications rounds in $G$ in total.

Let $X_1, \dots, X_{\lambda}$ denote the color classes of this decomposition and let $L = G[R]^r$; we now generate a network-decomposition $X'_1, \dots, X'_{\lambda}$ of the graph $L$, specifically we set $X'_j = s^{-1}(X_j).$

In order to show this decomposition works, consider some $X'_j$ and consider the induced subgraph $K = L[X'_j]$. Now observe that for any vertex $x \in K$, we have $\text{dist}_K(x, s(x)) \leq c t$. For by definition, $s(x)$ is the closest vertex of $W$ to $x$ in $G[R]$. By definition of $s$, every vertex $y$ along the shortest path from $x$ to $s(x)$ will also have $s(y) = s(x)$. Since $\text{dist}_G(x, s(x)) \leq c t$, this implies $\text{dist}_K(x, s(x)) \leq c t$ as well.

Let $v, v' \in K$ with $\text{dist}_K(v, v') < \infty$; we need to show that $\text{dist}_K(v,v') \leq O( ( \log \log n)^2 )$. 

There is a path $v = a_1, a_2, \dots, a_{\ell} = v'$ in $K$. For each $i = 1, \dots, \ell$ let $w_i = s(a_i)$ where each $w_i \in X_j$. Each $(w_i, w_{i+1})$ is an edge of $H[X_j]$ and so $w_1, w_{\ell}$ are connected in $H[X_j]$.

Since $H[X_j]$ has diameter $O(\log \log n)$, there must exist a path $w_1 = u_1, \dots, u_{k} = w_{\ell}$ in $H[X_j]$ with $k \leq O(\log \log n)$. Also, since each $(u_i, u_{i+1})$ is an edge in $H$, there must be $b_1, \dots, b_{k} \in R$ such that $u_i = s(b_i)$ and $\text{dist}_J(b_i, b_{i+1}) \leq r$. Since $s(b_i) \in X_j$, we see that $b_1, \dots, b_k \in X'_j$.

Since $\text{dist}_{G[R]}(b_i, b_{i+1}) \leq r$, we have $b_i \sim b_{i+1}$ in the graph $L$, which implies that $\text{dist}_K(b_i, b_{i+1}) \leq 1$.

We may thus compute $\text{dist}_{K}(u_i, u_{i+1}) \leq \text{dist}_K(u_i, b_i) + \text{dist}_K(b_i, b_{i+1}) + \text{dist}_K(b_{i+1}, u_{i+1}) \leq 2 c t + 1$. This implies that $\text{dist}_K(w_1, w_{\ell}) \leq k (2 c t + 1)$, and therefore $\text{dist}_K(a_1, a_{\ell}) \leq \text{dist}_K(a_1, w_1) + \text{dist}_K(w_1, w_{\ell}) + \text{dist}_K(w_{\ell}, a_{\ell}) \leq 2 c t + k (2 c t + 1) \leq O( (\log \log n)^2 )$.
\end{proof}

{
\renewcommand{\thetheorem}{\ref{shattering-thm}}
\begin{theorem}
   Suppose each vertex survives to a residual graph $R$ with
  probability at most $(e \Delta)^{-4 c}$, and this bound holds even
  for adversarial choices of the random bits outside the $c$-hop
  neighborhood of $v$ for some constant $c \geq 1$.

  Then w.h.p each connected component of the residual graph has size at most $O(\Delta^{2 c} \log n)$.

  If the residual problem can be solved via a $\seqloc[r]$
  procedure $A$, then the residual problem can be solved w.h.p.\ in the \LOCAL
  model in $\min \big \{ r 2^{O(\sqrt{\log \log n})}, O( r(\Delta(G^r) + \log^* n) ) \big \}$ rounds.
\end{theorem}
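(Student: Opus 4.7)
The plan is to derive both conclusions of the theorem directly from the backbone machinery already developed in Propositions~\ref{count-prop}--\ref{backbone-free-prop} and Theorems~\ref{con-size0}--\ref{con-size2}; no new probabilistic ideas are needed, and the proof is essentially a bookkeeping exercise that bundles these prior results.

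For the component-size claim, I would fix a sufficiently large constant $C$, set $w = C \Delta^{2c} \log n$, and apply Theorem~\ref{con-size0} vertexwise. That theorem says the probability that the component of a given vertex in $G[R]$ exceeds $w$ is at most $(e\Delta)^{-w/(\Delta+1)^c + 1}$; for $C$ large enough in terms of the constant $c$ this is at most $n^{-10}$, so a union bound over all $n$ vertices yields the desired high-probability bound of $O(\Delta^{2c}\log n)$ on every component size.

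For the runtime bound $r \cdot 2^{O(\sqrt{\log\log n})}$, I would invoke Theorem~\ref{con-size2} to compute w.h.p.\ in $r\cdot 2^{O(\sqrt{\log\log n})}$ rounds an $(O(\log\log n), O((\log\log n)^2))$-network-decomposition of $G[R]^r$. Proposition~\ref{prop:StoL} then uses this decomposition together with the given $\seqloc[r]$ procedure $A$ to solve the residual problem in an additional $O(r(\log\log n)^3)$ rounds. Since $(\log\log n)^3 = 2^{O(\log\log\log n)} \leq 2^{O(\sqrt{\log\log n})}$, the overall cost remains $r \cdot 2^{O(\sqrt{\log\log n})}$.

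For the alternative bound $O(r(\Delta(G^r) + \log^* n))$, I would bypass network decomposition and instead follow the coloring strategy used in part~(2) of Proposition~\ref{prop:ZtoL}. Specifically, I would run the deterministic $(\Delta+1)$-coloring algorithm of \cite{BEK15} on $G^r$; each communication round of $G^r$ costs $r$ rounds of $G$, so a proper $(\Delta(G^r)+1)$-coloring is obtained in $O(r(\Delta(G^r) + \log^* n))$ rounds in $G$. Restricting this coloring to $R$ gives a valid coloring of $G^r[R]$, and because $G[R]^r \subseteq G^r[R]$ it is also a valid coloring of $G[R]^r$, i.e.\ a $(0, \Delta(G^r)+1)$-network-decomposition of $G[R]^r$. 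A final application of Proposition~\ref{prop:StoL} then simulates $A$ in $O(r\Delta(G^r))$ additional rounds, yielding the claimed total. The only subtle point — and essentially the hardest part of the write-up — is observing the subgraph inclusion $G[R]^r \subseteq G^r[R]$, which justifies using a coloring of $G^r$ as a decomposition of $G[R]^r$; everything else is immediate from the previously established propositions.
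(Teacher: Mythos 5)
Your proof is correct and follows essentially the same route as the paper: Theorem~\ref{con-size0} plus a union bound over vertices for the component-size claim, and Theorem~\ref{con-size2} combined with Proposition~\ref{prop:StoL} for the $r\cdot 2^{O(\sqrt{\log\log n})}$ bound. For the $O(r(\Delta(G^r)+\log^* n))$ bound the paper just cites Proposition~\ref{prop:ZtoL} on each residual component (which, read literally, gives $\Delta(G^{2r})$), whereas you inline that argument by coloring $G^r$ via \cite{BEK15} and using $G[R]^r \subseteq G^r[R]$; this is the same idea, and in fact tracks the stated $\Delta(G^r)$ dependence more precisely.
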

\addtocounter{theorem}{-1}
}
\begin{proof}
  
 Let us first show the bound on the components of $G[R]$. Let $w = a \Delta^{2 c} \log n$ for some constant $a > 0$ to be determined. Taking a union bound over $v \in V$ and applying Theorem~\ref{con-size0}, we see that the probability that there is some $v \in V$ with component size exceeding $w$ is at most
  $$
  n \times (e \Delta)^{-w/(\Delta+1)^c + 1} \leq n \times (e \Delta)^{-\Omega(a \Delta \log n)} \leq 1/n
  $$
  for $a$ sufficiently large.
  
For the first runtime bound, by Proposition~\ref{prop:ZtoL} the algorithm $A$ leads to a deterministic algorithm in  $O(r \Delta(G^{2 r}) + r \log^*n )$ rounds, which can be run on each component of the residual graph.

For the second runtime bound, use Theorem~\ref{con-size2} to get a $(D,C)$-network decomposition of $G[R]^r$ in $2^{O(\sqrt{\log \log n})}$ rounds, with $C = O(\log \log n)$ and $D = O( (\log \log n)^2 )$. By Proposition~\ref{prop:StoL}, this allows us to run $A$ in $O( C (D+1) r) = r 2^{O(\sqrt{\log \log n})}$ rounds deterministically.
\end{proof}

Proposition~\ref{prop:ZtoL} gives us a rich source of SLOCAL algorithms. In particular, by combining Proposition~\ref{prop:ZtoL} with Theorem~\ref{shattering-thm}, we can use Las Vegas algorithms for the second phase of shattering algorithms.

}

\end{document}